\newcommand{\blind}{1}
\newcommand{\bolding}{1}
\newcommand{%
	
	\import{plots/}{.pdf_tex}
}[1]{%
	
	\import{plots/}{#1.pdf_tex}
}
\newcommand{\myquad}[1][1]{\hspace*{#1em}\ignorespaces}
\newcommand{\pushright}[1]{\ifmeasuring@#1\else\omit\hfill$\displaystyle#1$\fi\ignorespaces}
\newcommand{\pushleft}[1]{\ifmeasuring@#1\else\omit$\displaystyle#1$\hfill\fi\ignorespaces}
\begin{document}
	
\renewcommand{\T}{\mathrm{\scriptscriptstyle T} }
\newcommand{\prm}{^{\prime}} 
\newcommand{\dprm}{^{\prime\prime}} 
\newcommand{\fim}[1]{\trans{\bX}#1\bX} 
\newcommand{\dd}[1]{\mfrac {\mathrm{d}}{\mathrm{d}#1}} 
\newcommand{\del}[2]{\frac{\partial#1}{\partial #2}} 
\newcommand{\deltwo}[1]{\mfrac{\partial^2}{\partial #1^2}} 
\newcommand\scalemath[2]{\scalebox{#1}{\mbox{\ensuremath{\displaystyle #2}}}}
\newcommand{\beginsupplement}{%
        \setcounter{equation}{0}
        \renewcommand{\theequation}{S\arabic{equation}}
        \setcounter{section}{0}
        \renewcommand{\thesection}{S\arabic{section}}
        \setcounter{table}{0}
        \renewcommand{\thetable}{S\arabic{table}}%
        \setcounter{figure}{0}
        \renewcommand{\thefigure}{S\arabic{figure}}%
     }
\newcommand{\E}{\mathbb{E}}
\newcommand{\Cov}{{\rm Cov}}
\newcommand{\Var}{{\rm Var}}
\newcommand{\vecc}{{\rm vec}} 
\newcommand{\vech}{{\rm vech}} 
\newcommand{\diag}{{\rm diag}} 
\newcommand{\tildeK}{\widetilde{K}} 
\newcommand{\nablaa}{\widetilde{\nabla}} 
\newcommand{\Partials}{\if1\bolding{\widetilde{\bm{\partial}_\bs}}\fi\if0\bolding{\widetilde{\partial}_\bs}\fi}
\newcommand{\Sigmaa}{\widetilde{\Sigma}} 
\newcommand{\given}{\, | \;} 
\newcommand{\mstar}{m^*}
\def\A{\mathscr{A}}
\def\C{\mathscr{C}}
\def\G{\if1\bolding{\bm{\mathscr{G}}}\fi\if0\bolding{\mathscr{G}}\fi}
\def\calK{\if1\bolding{\bm{\mathcal{K}}}\fi\if0\bolding{\mathcal{K}}\fi}
\def\L{\if1\bolding{\bm{\mathcal{L}}}\fi\if0\bolding{\mathcal{L}}\fi}
\def\M{\mathcal{M}}
\def\N{\mathcal{N}}
\def\P{\mathcal{P}}
\def\D{\mathcal{D}}
\def\Y{\if1\bolding{\bm{\mathcal{Y}}}\fi\if0\bolding{\mathcal{Y}}\fi}
\def\Z{\if1\bolding{\bm{\mathcal{Z}}}\fi\if0\bolding{\mathcal{Z}}\fi}
\def\R1{\mbox{$\mathfrak{R}$}}
\def\R2{\mbox{$\mathfrak{R}^2$}}
\def\R3{\mbox{$\mathfrak{R}^3$}}
\def\Rd{\mbox{$\mathfrak{R}^{d}$}}
\def\Rq{\mbox{$\mathfrak{R}^{q}$}}
\newtheorem{definition}{Definition}
\newtheorem{theorem}{Theorem}
\newtheorem{result}{Result}
\newtheorem{condition}{Condition}
\newtheorem{lemma}{Lemma}
\newtheorem{corollary}{Corollary}
\newtheorem{proposition}{Proposition}
\newtheorem{assumption}{Assumption}
\newtheorem{remark}{Remark}
\def\bnabla{\if1\bolding{\bm{\nabla}}\fi\if0\bolding{\nabla}\fi}
\def\bpartial{\if1\bolding{\bm{\partial}}\fi\if0\bolding{\partial}\fi}
\def\0{\if1\bolding{\bm{0}}\fi\if0\bolding{0}\fi}
\def\1{\if1\bolding{\bm{1}}\fi\if0\bolding{1}\fi}
\def\ba{\if1\bolding{\bm{a}}\fi\if0\bolding{a}\fi}
\def\be{\if1\bolding{\bm{e}}\fi\if0\bolding{e}\fi}
\def\bu{\if1\bolding{\bm{u}}\fi\if0\bolding{u}\fi}
\def\bv{\if1\bolding{\bm{v}}\fi\if0\bolding{v}\fi}
\def\bw{\if1\bolding{\bm{w}}\fi\if0\bolding{w}\fi}
\def\bW{\if1\bolding{\bm{W}}\fi\if0\bolding{W}\fi}
\def\bx{\if1\bolding{\bm{x}}\fi\if0\bolding{x}\fi}
\def\bs{\if1\bolding{\bm{s}}\fi\if0\bolding{s}\fi}
\def\bDelta{\if1\bolding{\bm{\Delta}}\fi\if0\bolding{\Delta}\fi}
\def\bP{\if1\bolding{\bm{\mathrm{P}}}\fi\if0\bolding{P}\fi} 
\def\bmu{\if1\bolding{\bm{\mu}}\fi\if0\bolding{\mu}\fi}
\def\btheta{\if1\bolding{\bm{\theta}}\fi\if0\bolding{\theta}\fi}
\def\bbeta{\if1\bolding{\bm{\beta}}\fi\if0\bolding{\beta}\fi}
\def\bgamma{\if1\bolding{\bm{\gamma}}\fi\if0\bolding{\gamma}\fi}
\def\bL{\if1\bolding{\bm{L}}\fi\if0\bolding{L}\fi}
\def\bV{\if1\bolding{\bm{\mathrm{V}}}\fi\if0\bolding{V}\fi} 
\def\bn{\if1\bolding{\bm{n}}\fi\if0\bolding{n}\fi}
\def\m{\if1\bolding{\bm{m}}\fi\if0\bolding{m}\fi}
\def\bH{\if1\bolding{\bm{\mathrm{H}}}\fi\if0\bolding{H}\fi} 
\def\bN{\if1\bolding{\bm{\mathrm{N}}}\fi\if0\bolding{N}\fi} 
\def\bGamma{\if1\bolding{\bm{\Gamma}}\fi\if0\bolding{\Gamma}\fi}
\def\bF{\if1\bolding{\bm{F}}\fi\if0\bolding{F}\fi}
\def\bA{\if1\bolding{\bm{\mathrm{A}}}\fi\if0\bolding{A}\fi} 
\def\bB{\if1\bolding{\bm{\mathrm{B}}}\fi\if0\bolding{B}\fi} 
\def\bK{\if1\bolding{\bm{\mathrm{K}}}\fi\if0\bolding{K}\fi} 
\def\bG{\if1\bolding{\bm{\mathrm{G}}}\fi\if0\bolding{G}\fi} 
\def\bI{\if1\bolding{\bm{\mathrm{I}}}\fi\if0\bolding{I}\fi} 
\def\bX{\if1\bolding{\bm{\mathrm{X}}}\fi\if0\bolding{X}\fi} 
\def\bR{\if1\bolding{\bm{\mathrm{R}}}\fi\if0\bolding{R}\fi} 
\def\bE{\if1\bolding{\bm{\mathrm{E}}}\fi\if0\bolding{E}\fi} 
\def\bO{\if1\bolding{\bm{\mathrm{O}}}\fi\if0\bolding{O}\fi} 
\def\bU{\if1\bolding{\bm{\mathrm{U}}}\fi\if0\bolding{U}\fi}
\def\bM{\if1\bolding{\bm{\mathrm{M}}}\fi\if0\bolding{M}\fi}
	
	\def\spacingset#1{\renewcommand{\baselinestretch}%
		{#1}\small\normalsize} \spacingset{1}
	
	\if1\blind
	{
		\title{\bf Bayesian Spatiotemporal Wombling}
		\author{Aritra Halder$^{a}$, Didong Li$^{b}$ and Sudipto Banerjee$^{c}$ 
			\\
			$^{a}$Department of Epidemiology \& Biostatistics,\\
			Drexel University,
			Philadelphia, PA, USA.\\
			$^{b}$Department of Biostatistics,\\
			University of North Carolina,
			Chapel Hill, North Carolina, USA.\\
			$^{c}$Department of Biostatistics,\\
			University of California,
			Los Angeles, California, USA.\\
		}
		\date{}
		\maketitle
	} \fi

	\if0\blind
	{
		\bigskip
		\bigskip
		\bigskip
		\begin{center}
			{\LARGE\bf Bayesian Spatiotemporal Wombling}
		\end{center}  
		\medskip
	} \fi
	
	\begin{abstract}
		\noindent Stochastic process models for spatiotemporal data underlying random fields find substantial utility in a range of scientific disciplines. Subsequent to predictive inference on the values of the random field (or spatial surface indexed continuously over time) at arbitrary space-time coordinates, scientific interest often turns to gleaning information regarding zones of rapid spatial-temporal change. We develop Bayesian modeling and inference for directional rates of change along a given surface. These surfaces, which demarcate regions of rapid change, are referred to as \emph{``wombling'' surface boundaries}. Existing methods for studying such changes have often been associated with curves and are not easily extendable to surfaces resulting from curves evolving over time. Our current contribution devises a fully model-based inferential framework for analyzing differential behavior in spatiotemporal responses by formalizing the notion of a ``wombling'' surface boundary using conventional multi-linear vector analytic frameworks and geometry followed by posterior predictive computations using triangulated surface approximations. We illustrate our methodology with comprehensive simulation experiments followed by multiple applications in environmental and climate science; pollutant analysis in environmental health; and brain imaging. 
	\end{abstract}
	
	\begin{keywords}
		Gaussian processes, Parametric surfaces, Rates of change, Spatiotemporal boundary analysis, Spatiotemporal processes, Wombling.
	\end{keywords}
	
	\spacingset{1.9} 
	
	\section{Introduction}
	Statistical inference for rates of change in spatial random fields is instrumental in data-driven scientific discoveries. Such inference is pursued in diverse scientific investigations and can be traced back to the origins of evolutionary biology that have resulted in the discovery of lurking inexplicable differences \citep[see, e.g.,][on the ``The Wallace, Weber and Lydekker Lines"]{mayr1944wallace}. 
	More generally, the study of spatial rates of change in the context of scientific experiments is often referred to as ``wombling'' \citep{womble1951differential, gleyze2001wombling}. ``Wombling'' broadly refers to delineating zones of rapid change on spatial surfaces usually defined in terms of locations with statistically significant directional gradients \citep[]{banerjee2003directional}, but also by testing paths or boundaries for rapid change on the spatial field \citep{banerjee2006bayesian, fitzpatrick2010ecological, qu2021boundary}. 

	The methods employed for ``wombling'' depend on whether the data features a continuous or discrete spatial indexing. For recent approaches to discrete (areal) data, see \cite{wu2026model}, and references therein.
	{H}ere we will treat space and time as continuous when modeling the random field. We consider point-referenced spatiotemporal data, where variables of interest are mapped at locations within an Euclidean coordinate frame (e.g., using a planar map projection) and indexed over time as a realization of a smoothly varying stochastic process. 
	Probabilistic inference on higher order rates of change relies on the smoothness of the parent process \citep[see, e.g.,][]{adler1981geometry, greenwood1984unified, kent1989continuity, morris1993bayesian, mardia1996kriging, stein1999interpolation, banerjee2003smoothness, guindani2006smoothness}. Formal inference on rates of change over spatial random fields enjoys a significant literature in diverse 
	scenarios 
	\citep[see, e.g.,][]{banerjee2003directional, majumdar2006gradients, liang2009bayesian, gabriel2011, heaton2014wombling, terres2015using, wang2016estimating, terres2016spatial, wang2018process, qu2021boundary}, including recent investigations into optimal plug-in models for gradients \citep[see, e.g.,][]{liu_optimal_2026} and inference on spatial curvature processes \citep[see, e.g.,][]{halder2024bayesian}. 
	
	This manuscript develops statistical inference for ``wombling'' boundaries within a hierarchical spatiotemporal modeling framework \citep{cressie2015statistics}. The literature on rates of change on spatiotemporal random fields is considerably sparser than for spatial random fields. \cite{quick2013modeling} explored temporal gradients in the context of dynamic Markov random fields, while \cite{quick2015bayesian} developed inference restricted to first-order and mixed spatial-temporal gradients on spatiotemporal random fields. More recently \cite{yu2023bayesian} discussed the utility of temporal rates of change to conduct inference on stationary points in Gaussian process regression models. 
	
	Wombling can be contrasted and compared with popular machine learning approaches to detecting clusters \citep[see, e.g.,][]{neal_markov_2000,birant_stdbscan_2007}, image segmentation \citep[see, e.g.,][]{ronneberger_unet_2015}, 
	hot-spot detection \citep[see, e.g,][]{bivand_comparing_2018, di2018spatiotemporal} { and level set methods \citep[see, e.g.,][]{osher_level_2001,lie2006binary,boledi_level-set_2022}}. Wombling seeks predictive inference over curves (and surfaces) offering fully model-based uncertainty quantification leveraging rates of change. Clustering aims to find homogeneous regions in the reference domain, unable to offer inference over curves separating such regions. Image segmentation aims to locate boundaries in the interpolated surface images. It often requires careful and expensive training \citep[see, e.g,][]{akkus_deep_2017} and seldom provides uncertainty quantification. 
	Hot-spot detection techniques for point-referenced data are only able to track regions of rapid change. { Level set approaches leverage finite differences to detect boundaries but are unable to offer any uncertainty quantification on them.} Wombling yields statistical inference over arbitrary space-time coordinates as well as, on geographic features that may or may not be delineating zones of rapid change (e.g., mountains, rivers, administrative boundaries). 
	
	Within a spatiotemporal setting, curves tracking rapid spatial change evolve over time to produce surfaces posing challenges for wombling which has been typically associated to curves. Furthermore, the extent of change may not be captured by first-order derivatives. 
	For example, the rate of change in neural activity may show substantial increments over time---{  persistently increasing patterns in event related potentials (ERPs) over time}, requiring inference on higher-order gradient processes like the spatiotemporal curvature. Here, we construct these processes by jointly modeling the parent Gaussian process and its derivatives to perform inference on higher-order rates of change for spatiotemporal fields. More pertinently for ``wombling'', 
	we seek to evaluate average higher-order rates of change not only at arbitrary space-time coordinates in a posterior predictive fashion, but we also extend inference to integrals over surfaces representing 
	temporally evolving zones that are posited to be harboring rapid change, for e.g., large pockets of smog traversing a geographic space. In this context, spatiotemporal random fields offer significant challenges over purely spatial settings as we develop inference for integrals over surfaces resulting from evolving boundaries in contrast to boundaries that remain static.
	
	The article evolves as follows. Section~\ref{sec:STDP} sets the stage for our methodological framework with an outline of the calculus of spatial-temporal derivative processes and the related distribution theory for legitimate probabilistic inference. Section~\ref{sec:spt-curv-womb} 
	develops measures of higher-order rates of change and their averages over { evolving boundaries} (Section~\ref{sec:spt-womb}). Section~\ref{sec:bi&c} develops a Bayesian inferential framework underscoring computation using surface triangulation to avoid high-dimensional quadrature. Sections~\ref{sec:sim}~and~\ref{sec:app} provide 
	simulation experiments and spatiotemporal data analysis to illustrate inference on rates of change at points and over surfaces arising from neuroimaging as ERPs in electroencephalography (EEG) sessions to detect predisposition to alcoholism. Applications featuring analysis of precipitation in northern California and PM\textsubscript{2.5} levels during the Canada wildfires are in the Supplement. Section~\ref{sec:diss} concludes with a discussion.
	
	
	\section{Spatiotemporal Derivative Processes}\label{sec:STDP} 
	\subsection{Calculus for spatiotemporal derivatives}\label{subsec:calc-stdp}
	Let $\{Z(\bs,t):(\bs,t)\in \mathfrak{R}^d\times \mathfrak{R}^{+}\}$ be a spatial-temporal process, where $\bs \in \mathfrak{R}^d$ (usually $d=2$) denotes a spatial location and $t$ is the temporal reference on $\mathfrak{R}^+=[0,\infty)$. We require formal spatiotemporal derivative processes induced by $Z(\bs,t)$. Such derivatives can be purely spatial, temporal or jointly spatial and temporal underscoring the need for higher-order derivative processes to achieve probabilistic inference. The dimensions for such processes increase with the order of derivatives---the $k$th-order derivative is a $k$-fold tensor array---so we use conventional multi-linear algebra in developing higher-order derivatives
	.
	
	Our subsequent developments rely heavily upon the gradient operator applied to $Z(\bs,t)$ to produce the $(d+1)\times 1$ vector of partial derivatives with respect to each coordinate, i.e., $\bnabla Z(\bs,t) := \left(\partial_{1} Z(\bs,t), \partial_{2} Z(\bs,t),\ldots,\partial_d Z(\bs,t), \partial_t Z(\bs,t)\right)^{\T} = \left(\bpartial_\bs Z(\bs,t)^{\T}, \partial_t Z(\bs,t)\right)^{\T}$, where $\partial_j$ denotes the partial derivative with respect to the $j$-th coordinate of $\bs$, $\bpartial_\bs Z(\bs,t)$ is the $d\times 1$ spatial gradient and $\partial_t$ is the partial derivative with respect to time. Following \cite{banerjee2003directional}, one can obtain spatial-temporal derivatives as limits of finite difference processes. We define the shift operator $D_{h,\bu,v}Z(\bs,t) = Z(\bs+h \bu,t + h v)$, where $\bu\in \mathfrak{R}^d$ is a unit vector representing a displacement in spatial location, $v\in [0,1]$ is a scaled temporal shift and $h$ is a scalar denoting the absolute value of the displacements. Then, $\bnabla := \lim_{h\to 0} h^{-1}\left((D_{h,\be_1,0} - 1),\ldots,(D_{h,\be_d,0} - 1),(D_{h,0,1}-1)\right)^{\T}$, where $\{\be_1,\ldots,\be_d\}$ are the $d$ orthonormal vectors in $\mathfrak{R}^d$. We will specifically use the tensor product $\bnabla^{\otimes 2} := \bnabla \otimes \bnabla$, which yields the $(d+1)^2\times 1$ vector $\bnabla \otimes (\bnabla Z(\bs,t))$ by applying $\bnabla$ to each element of $\bnabla Z(\bs,t)$. Higher-orders are defined recursively as $\bnabla^{\otimes r} := \bnabla \otimes (\bnabla^{\otimes (r-1)})$. The $(d+1)^r$ vector $\bnabla^{\otimes r}Z(\bs,t)$ contains all partial derivatives of order $r$ in both space and time. Section~S1 provides explicit constructions for these tensor products.      
	
	The process $Z(\bs,t)$ is {  $k$-th} order mean-squared differentiable if the elements of $\bnabla^{\otimes r}Z(\bs,t)$ are continuously differentiable for $r<k$, $\bnabla^{\otimes k}Z(\bs,t)$ are continuous and, for any $h\in \mathfrak{R}^+$, $Z(\bs+h \bu,t+h v)=\sum\limits_{r=0}^{k}\frac{h^r}{r!}\;(\bu^{\T}, v)^{\otimes r}\cdot\bnabla^{\otimes r} Z(\bs,t)+o(h^{k+1}||(\bu^{\T},v)^{\T}||^{k+1})$, 
	where $\bu$ is any vector of unit length in $\mathfrak{R}^{d}$, $v\in [0,1]$ and $(\bu^{\T}, v)^{\otimes r} = (\bu^{\T}, v) \otimes \cdots \otimes (\bu^{\T}, v)$ (Kronecker product with itself $r$ times) is $1\times (d+1)^r$. We refer to $Z(\bs,t)$ as the parent process and $\bnabla^{\otimes r}Z(\bs,t)$ as the { $r$-th} order derivative process induced by the parent. {\em Directional differential processes} are of interest when the change in $Z(\bs,t)$ is measured along a particular direction. For $r=1$ and $\bu\in \mathfrak{R}^{d}$, $\lim_{h\to 0} h^{-1}\left((D_{h,\bu,0} - 1) Z(\bs,t),(D_{h,0,v}-1) Z(\bs,t)\right)^{\T} = (\bu^{\T}\oplus v)\bnabla Z(\bs,t) = (\bu^{\T}\bpartial_\bs Z(\bs,t), v\partial_tZ(\bs,t))^{\T}$, where $\oplus$ is the direct sum operator and $\bu^{\T}\oplus v = \left(\begin{smallmatrix} \bu^{\T} & 0\\ \0_d^{\T} & v\end{smallmatrix}\right)$. More generally, the { $r$-th} order directional derivative process is $(\bu^{\T}\oplus v)^{\otimes r}\bnabla^{\otimes r} Z(\bs,t)$. 
	
	\subsection{Distribution Theory for Spatiotemporal Derivative processes}
	
	We assume that $Z(\bs,t)$ is a weakly stationary spatiotemporal process that is zero centered with finite second moment \citep{gneiting2010handbook}. We write $Z(\bs,t) \sim GP(0, K(\cdot,\cdot;\btheta))$ to denote a zero-centered Gaussian process (GP), where $K(\bDelta,\delta;\btheta) = \Cov(Z(\bs,t),Z(\bs',t'))$ is a stationary spatiotemporal covariance function indexed by process parameters $\btheta$ with $\bDelta=\bs-\bs'$ and $\delta=t-t'$. To ease notation, we suppress the dependence on $\btheta$ and simply write $ K(\cdot,\cdot;\btheta)= K(\cdot,\cdot)$. The covariance function satisfies $\Var\left\{\sum_{i=1}^{N} a_i Z( \bs_i,t_i)\right\}=\sum_{i=1}^{N}\sum_{j=1}^{N} a_i a_j K(\bDelta_{ij},\delta_{ij}) >0$ for any finite collection of spatiotemporal coordinates $\{(\bs_i,t_i) : i=1,\ldots,N\}$, where $\bDelta_{ij}=\bs_i-\bs_j$, $\delta_{ij}=t_i-t_j$ and $a_i,  a_j\in \mathfrak{R}$, $i,j=1,2,\ldots,N$ are not all zero. The process $Z(\bs,t)$ is mean-squared continuous at $(\bs,t)$ if $E\left[Z(\bs+\bDelta,t+\delta)-Z(\bs,t)\right]^2\to 0$ as $(\bDelta,\delta)\to \0_{d+1}$. Under isotropy, we have $K = K(\bDelta,\delta)=\tildeK(||\bDelta||,|\delta|)=\tildeK$. 
	
	Our subsequent developments rely upon joint processes consisting of the parent and derivatives of different orders. To that end, let $\bL_r:{\mathfrak R} \to {\mathfrak R}^{m_r}$ be the {\em differential operator} of order $r$, where $m_r = \sum\limits_{i=1}^{r}(d+1)^i$, which maps $Z(\bs,t)$ to the derivative processes {\em up to} order $r$. 
	Thus, $\bL_r Z(\bs,t)=\left(\bnabla^{\otimes 1}Z(\bs,t)^{\T},\ldots,\bnabla^{\otimes r}Z(\bs,t)^{\T}\right)^{\T}$ is $m_r\times 1$. For $\bL_r Z(\bs,t)$ to be a valid process we require unique pure and partial derivative processes. 
	{  In general, $\bL_r Z(\bs,t)$ includes redundancies. For example, $\bL_2Z(\bs,t)$ includes $\frac{\partial^2}{\partial \bs\partial t}$ and $\frac{\partial^2}{\partial t\partial \bs}$, while $\bL_4Z(\bs,t)$ includes $\frac{\partial^4}{\partial \bs^3\partial t}$ which is not required.} We use elimination and permutation matrices to transform $\bL_rZ(\bs,t)$ to $\L_r Z(\bs,t):{\mathfrak R} \to {\mathfrak R}^{m_r^*}$, $m_r^*=\sum\limits_{i=0}^{r}\binom{i+d}{i}$, comprised of unique partial and pure derivatives. 
	
	Let $\bpartial_\bs^{r} = \partial^{r}/\partial_1^{i_1}\ldots\partial_d^{i_d}$ with $\sum_{k=1}^{d}i_k=r$ and $i_k\geq0$. If $\partial_t^{j_1+j_2}\bpartial_\bs^{r_1 + r_2-j_1-j_2} K(\bDelta, \delta)$ exists,
	then $\Cov\left(\partial_t^{j_1}\bpartial_\bs^{r_1-j_1}Z(\bs,t), \partial_t^{j_2}\bpartial_\bs^{r_2-j_2}Z(\bs + \bDelta,t + \delta)\right)=(-1)^{r_1}\partial_t^{j_1+j_2}\bpartial_\bs^{r_1 + r_2-j_1-j_2}K(\bDelta, \delta)$ is the covariance between the processes, $\partial_t^{j_1}\bpartial_\bs^{r_1-j_1}Z(\bs,t)$ and $\partial_t^{j_2}\bpartial_\bs^{r_2-j_2}Z(s+\bDelta,t+\delta)$ (see Section~S2). The partial derivatives operate element-wise on $K(\cdot,\cdot)$. 
	Also, $\L_rZ(\bs,t)$ is a valid GP if $\partial_t^{2j}\bpartial_\bs^{2(r-j)}K(\0_d,0)$ exists for every $j = 0,1,\ldots, r$.   
	In particular, the case $j=r-j=2$ requires $\partial^4_t\bpartial_\bs^4K(\0_d,0)$ to exist. A GP assumption on $Z(\bs,t)$ has immediate consequences. If $Z_1(\bs,t)\sim GP(0, K_1(\cdot,\cdot))$ and $Z_2(\bs,t) \sim GP(0, K_2(\cdot,\cdot))$ independently, then $\L_{r_1} Z_1(\bs,t)$ and $\L_{r_2} Z_2(\bs,t)$ are independent. For $a_1, a_2\in \mathfrak{R}$, $\L_r (a_1Z_1(\bs,t)+a_2Z_2(\bs,t))=a_1\L_r Z_1(\bs,t)+a_2\L_rZ_2(\bs,t)$ is a GP. Any sub-vector of $\L_r Z(\bs,t)$ is a GP. 
	This is useful if selected derivative processes are required for inference as is seen in the next paragraph. In addition, inference extends to differential geometric operators like the divergence and the Laplacian (see Section~S3).
	
	We estimate $\L^* Z(\bs,t)=\left(\L_\bs Z(\bs,t)^{\T}, \partial_tZ(\bs,t),\partial_t\L_\bs Z(\bs,t)^{\T}, \partial_t^2Z(\bs,t),\partial_t^2\L_\bs Z(\bs,t)^{\T}\right)^{\T}$, where $\L_\bs Z(\bs,t) = \left(\bpartial_\bs Z(\bs,t)^{\T},\Partials^2Z(\bs,t)^{\T}\right)^{\T}$ and $\Partials^2$ is the operator producing the $\binom{d}{2}+d$ vector of unique second order spatial derivatives. The elements of $\L^* Z(\bs,t)$ capture spatial and temporal rates of change in $Z(\bs,t)$. The process $\L_\bs Z(\bs,t)$ captures spatial gradients and curvature in $Z(\bs,t)$ while, $\partial_t \L_\bs Z(\bs,t)$ and $\partial_t^2 \L_\bs Z(\bs,t)$ capture temporal gradients and curvature in $\L_\bs Z(\bs,t)$ respectively. Casting $\L^*$ as a spatiotemporal differential operator, $\L^*:\mathfrak{R}\to \mathfrak{R}^{\mstar}$, $\mstar=3\left(1+d+\frac{d(d+1)}{2}\right)-1$. $\left(\begin{smallmatrix}
		Z(\bs,t)\\\L^*Z(\bs,t)
	\end{smallmatrix}\right)$ is a zero-centered stationary process. Its cross-covariance function $\bV_{Z, \L^* Z}(\bDelta,\delta) = \Cov\left\{\left(\begin{smallmatrix}
		Z(\bs,t)\\\L^*Z(\bs,t)
	\end{smallmatrix}\right), \left(\begin{smallmatrix}
		Z(\bs',t')\\\L^*Z(\bs',t')
	\end{smallmatrix}\right)\right\} 
	$ is a matrix-valued function from $\mathfrak{R}^{d}\times \mathfrak{R}$ to $\mathfrak{R}^{(\mstar + 1)\times (\mstar + 1)}$ with $(l,l')$-th element being the covariance between the $l$-th element of $\left(\begin{smallmatrix}
		Z(\bs,t)\\\L^*Z(\bs,t)
	\end{smallmatrix}\right)$ and $l'$-th element of $\left(\begin{smallmatrix}
		Z(\bs',t')\\\L^*Z(\bs',t')
	\end{smallmatrix}\right)$. The cross-covariance $\bV_{Z,\L^*Z}(\bDelta,\delta)$ is expressible as a block matrix in terms of $K(\bDelta,\delta)$ and its derivatives with blocks 
	$   \left(\begin{smallmatrix}
		\partial_t^{i-1}\partial_t^{j-1}K(\bDelta,\delta) & \partial_t^{i-1}\partial_t^{j-1}\bpartial_\bs K(\bDelta,\delta)^{\T} & \partial_t^{i-1}\partial_t^{j-1}\Partials^2K(\bDelta,\delta)^{\T}\\
		-\partial_t^{i-1}\partial_t^{j-1}\bpartial_\bs K(\bDelta,\delta) & -\partial_t^{i-1}\partial_t^{j-1}\bH_K^{11}(\bDelta,\delta) & -\partial_t^{i-1}\partial_t^{j-1}\bH_K^{12}(\bDelta,\delta)\\ 
		\partial_t^{i-1}\partial_t^{j-1}\Partials^2K(\bDelta,\delta) & \partial_t^{i-1}\partial_t^{j-1}\bH_K^{21}(\bDelta,\delta) & \partial_t^{i-1}\partial_t^{j-1}\bH_K^{22}(\bDelta,\delta)
	\end{smallmatrix}\right)$, $i,j=1,2,3$.
	where $\partial_t$ acts element-wise on matrices. The entries $\bH_K^{11}(\bDelta,\delta)= \Var(\bpartial_\bs Z(\bs,t))$, $\bH_K^{12}(\bDelta,\delta)=\Cov(\bpartial_\bs Z(\bs,t),\Partials^2 Z(\bs',t')) = {\bH_K^{21}(\bDelta,\delta)}^{\T}$ and $\bH_K^{22}(\bDelta,\delta) = \Var(\Partials^2 Z(\bs,t))$ (see Section~S4).
	
	
		Under isotropy, we have some simplifications when obtaining the entries of $\bV_{Z,\L^*Z}(\bDelta,\delta)$. We write the selected entries as follows: $\bpartial_\bs  K = \left(\frac{d \tildeK}{d||\bDelta||}\right)  \frac{\bDelta}{||\bDelta||}$, $\partial_t K =\left(\frac{d \tildeK}{d|\delta|}\right) \frac{\delta}{|\delta|}$ and $\partial_t^2\Partials^2 K= \left(\frac{d^2}{d|\delta|^2}\frac{d}{d||\bDelta||} \tildeK\right)\frac{\bP_{2,1}}{||\bDelta||}+\left(\frac{d^2}{d|\delta|^2}\frac{d^2}{d||\bDelta||^2}\tildeK-\frac{1}{||\bDelta||}\;\frac{d^2}{d|\delta|^2}\frac{d}{d||\bDelta||}\tildeK\right)\frac{\bP_{2,2}}{||\bDelta||^2}$ where 
		$\bP_{\cdot,\cdot}$ are permutation matrices (see Section~S5.1 for details). Simplifications owing to isotropy also arise in the expressions for the covariance of directional spatiotemporal derivative processes. For instance, the variance for the directional spatiotemporal curvature process has the following expression: $\frac{3}{||\bDelta||^2}\left(1-5\frac{(\bu^{\T}\bDelta)^2}{||\bDelta||^2}\right)\left(1-\frac{(\bu^{\T}\bDelta)^2}{||\bDelta||^2}\right)\left(\frac{d^4}{d|\delta|^4}\frac{d^2}{d||\bDelta||^2}\tildeK-\frac{1}{||\bDelta||}\frac{d^4}{d|\delta|^4}\frac{d}{d||\bDelta||}\tildeK\right)+\frac{6}{||\bDelta||}\frac{(\bu^{\T}\bDelta)^2}{||\bDelta||^2}\left(1-\frac{(\bu^{\T}\bDelta)^2}{||\bDelta||^2}\right)\left(\frac{d^4}{d|\delta|^4}\frac{d^3}{d||\bDelta||^3}\tildeK\right)+\left(\frac{\bu^{\T}\bDelta}{||\bDelta||}\right)^4\left(\frac{d^4}{d|\delta|^4}\frac{d^4}{d||\bDelta||^4}\tildeK\right)$.
		Similar results can be obtained for all spatiotemporal directional derivatives (see Section~S5.2). 
		We observe that $\L^* Z(\bs,t)$ is no longer isotropic. 
	
	We use spectral theory to characterize covariance functions that admit such processes. The existence of fourth-order spectral moments for space and time ensures admittance of the differential processes within $\L^*Z(\bs,t)$. We adopt non-separable spatiotemporal covariance functions of the form, $\displaystyle K(\bDelta,\delta)=\tildeK(||\bDelta||, |\delta|)=\frac{\sigma^2}{\psi(|\delta|^2)^{d/2}}\varphi\left(\frac{||\bDelta||^2}{\psi(|\delta|^2)}\right)$, where $\varphi(x)$ and $\psi(x)$ are complete monotone, and positive with complete monotone derivative on $[0,\infty)$, respectively \citep[see, e.g.][]{gneiting2002nonseparable}. Specifically, we use the Mat\'ern kernel, $\varphi(x)=\left(2^{\nu-1}\Gamma(\nu)\right)^{-1}(\phi_sx^{1/2})^{\nu}K_\nu(\phi_s x^{1/2})$, $\phi_s>0, \nu>0$, where $K_{\nu}(\cdot)$ is the modified Bessel function of the second kind, and $\psi(x)=(\phi_t^2x^\alpha+1)^\beta$, $a>0$, $0<\alpha\leq 1$, $0\leq\beta\leq1$. We fix $\alpha=\beta=1$, producing the covariance kernel, 
	
	\begin{equation}\label{eq:cov-matern}
		\tildeK(||\bDelta||,|\delta|;\btheta)=\frac{\sigma^2}{2^{\nu-1}\Gamma(\nu)A_t^{d/2}}\left(\frac{\phi_s||\bDelta||}{A_t^{1/2}}\right)^\nu K_{\nu}\left(\frac{\phi_s||\bDelta||}{A_t^{1/2}}\right), \quad A_t = \phi_t^2|\delta|^2+1.
	\end{equation}
	Existence of the fourth spectral moment requires the fractal parameter, $\nu>2$. Although irregularities exist in smoothness for separable covariance functions \citep[see, e.g.,][]{stein2005space}, characterizing smoothness for such functions is relatively simpler. Section~S6 provides further details. Establishing the validity of $\L^* Z(\bs,t)$, requires no further assumptions.
	
	Inference on $\L^*Z(\bs_0,t_0)$ at an arbitrary space-time location $(\bs_0,t_0)$ is sought through the joint distribution, $P[\Z, \L^*Z(\bs_0,t_0)]$, where $\Z = (Z(\bs_1,t_1), \ldots, Z(\bs_N, t_N))^{\T}$ is an $N\times 1$ vector of space-time measurements. Let $\Sigma_{\Z}= [\Cov\{Z(\bs_i,t_i), Z(\bs_{i'},t_{i'})\}] = [K(\bDelta_{ii'},\delta_{ii'})]$ be the $N\times N$ covariance matrix associated with our measurements where, $\bDelta_{ii'}=\bs_i-\bs_{i'}$, $\delta_{ii'}=t_i-t_{i'}$, $i,i' = 1,\ldots, N$. Let $\calK_0 = \Cov(\Z,\L^* Z(\bs_0,t_0)) = \left(\begin{smallmatrix}
		\calK_{10}(\bDelta_{10},\delta_{10})^{\T}\\\vdots\\\calK_{N0}(\bDelta_{N0},\delta_{N0})^{\T}
	\end{smallmatrix}\right)$ be a $N\times \mstar$ matrix with columns $\calK_{i0}(\bDelta_{i0},\delta_{i0}) = \Cov\{Z(\bs_i,t_i),\L^*Z(\bs_0,t_0)\}$ for $i = 1,\ldots, N$. The 
	joint distribution is 
	
	\begin{equation}\label{eq:diffp-full-post}
		\left(\begin{array}{c}\Z\\\L^*Z(\bs_0,t_0)\end{array}\right) \sim \N_{N+\mstar}\left(\0_{N+\mstar}, \left(\begin{array}{cc}
			\Sigma_{\Z} &  {\calK}_0\\
			{{\calK}_0}^{\T} & \bV_{\L^* Z}(\0_d,0)
		\end{array}\right)\right),
	\end{equation}
	where $\bV_{\L^* Z}(\0_d,0)$ is the cross-covariance matrix of $\L^* Z(\bs,t)$ evaluated at $(\0_d,0)$. \Cref{sec:bi&c} develops posterior predictive inference for $\L^* Z(\bs_0,t_0)$.
	
	\section{Wombling on Spatiotemporal Surfaces}\label{sec:spt-curv-womb}
	
	Bayesian spatiotemporal wombling requires inference for integrals over surfaces that enclose regions of rapid change in space-time coordinates. Such surfaces {  are defined} from planar curves tracking rapid change in space and across time. In traditional wombling, which hitherto has been purely spatial, the curves remain fixed over time (e.g., geographic features like rivers, mountain ranges or administrative boundaries) and we seek to measure change along these curves. 
	Alternatively, the regions of interest can vary over time, e.g., {  scalp regions with ERPs over a certain threshold} or, tracking smog arising from wildfires (see the Supplement, S15.2) in the interest of public health, regions evolve with changing wind patterns. This will be the focus of our discussion. 
	
	We focus on real valued processes, $Z(\bs,t)$, over $\mathfrak{H}^3=\mathfrak{R}^2\times \mathfrak{R}^+\subset \mathfrak{R}^3$ using familiar concepts in differential geometry \citep[see, e.g.,][]{spivak1979comprehensive} for the ensuing developments. We explore spatiotemporal derivatives and curvatures over smooth and regular surfaces with boundaries. This extends the development of derivatives and curvatures over planar curves required for curvilinear wombling within a purely spatial context \citep[see, e.g.,][]{banerjee2006bayesian, halder2024bayesian} to surfaces in space-time.
	
	\subsection{Wombling Measures for Surfaces}\label{sec:param-surf}
	In general, we assume that $\C$ is a surface in $\mathfrak{H}^3$. Our aim is to statistically estimate 
	the average directional gradient of $Z(\bs,t)$ along $\C$ from the realized data. We define such measures of interest on parametric surfaces. Let $\C=\left\{(s_x(\omega,\upsilon),s_y(\omega,\upsilon),t(\upsilon)):(\omega,\upsilon) \in \mathcal{D}_{\omega}\times \mathcal{D}_\upsilon\right\}$ be a smooth and regular surface 
	with domain, $\mathcal{D}_{\omega}\times \mathcal{D}_\upsilon\subset \mathfrak{H}^2 = \mathfrak{R}\times \mathfrak{R}^+$ and let $t(\upsilon) = \upsilon$ for convenience. The \emph{outward} pointing normal and unit normal to $\C$ at $(\omega,\upsilon)$ are
	
	\begin{equation}\label{eq:normal}
		\overline{\bn}(\bs(\omega,\upsilon), \upsilon)=\left(\frac{\partial s_y(\omega,\upsilon)}{\partial \omega},-\frac{\partial s_x(\omega,\upsilon)}{\partial \omega}, \left|\frac{\partial(s_x, s_y)}{\partial(\omega,\upsilon)}\right|\right)^{\T}, \quad \bn(\bs(\omega,\upsilon),\upsilon) = \frac{\overline{\bn}(\bs(\omega,\upsilon), \upsilon)}{||\overline{\bn}(\bs(\omega,\upsilon), \upsilon)||}\;,
	\end{equation}
	respectively, where $\left|\frac{\partial(s_x, s_y)}{\partial(\omega,\upsilon)}\right| = \frac{\partial s_x(\omega,\upsilon)}{\partial\omega}\frac{\partial s_y(\omega,\upsilon)}{\partial\upsilon}-\frac{\partial s_y(\omega,\upsilon)}{\partial\omega}\frac{\partial s_x(\omega,\upsilon)}{\partial\upsilon}$. Let $\overline{\bn}_\bs(\omega,\upsilon) = \left(\frac{\partial s_y(\omega,\upsilon)}{\partial \omega},-\frac{\partial s_x(\omega,\upsilon)}{\partial \omega}\right)^{\T}$ and $\overline{n}_t(\omega,\upsilon) = \left|\frac{\partial(s_x, s_y)}{\partial(\omega,\upsilon)}\right|$. If $s_x$, $s_y$ are injective functions and all partial derivatives in \cref{eq:normal} exist, then the surface $\C$ is smooth. Also, $\C$ is regular if $||\overline{\bn}_\bs(\omega,\upsilon)||\ne0$ and the unit normal $\bn(\bs(\omega,\upsilon),\upsilon)$ in \cref{eq:normal} is well-defined. 
	We denote the surface area of $\C$, as $\A(\C)$ given by
	
	\begin{equation}\label{eq:fff} 
		\iint\limits_{\mathcal{D}_\upsilon\times\mathcal{D}_\omega}||\overline{\bn}(\bs(\omega,\upsilon),\upsilon)||\;d\omega\;d\upsilon =  \iint\limits_{\mathcal{D}_\upsilon\times\mathcal{D}_\omega} \left\{\left(\frac{\partial s_y(\omega,\upsilon)}{\partial \omega}\right)^2+\left(\frac{\partial s_x(\omega,\upsilon)}{\partial \omega}\right)^2 + \left|\frac{\partial(s_x, s_y)}{\partial(\omega,\upsilon)}\right|^2\right\}^{\frac{1}{2}}\;d\omega\;d\upsilon.
	\end{equation}
	If $\mathcal{D}_\upsilon=[\upsilon_0,\upsilon_1]$ and $\D_{\omega}=[\omega_0,\omega_1]$, then $\A(\C)=\A_{\omega_0,\upsilon_0}(\omega_1,\upsilon_1)=\int\limits_{\upsilon_0}^{\upsilon_1}\int\limits_{\omega_0}^{\omega_1}||\overline{\bn}(\bs(\omega,\upsilon),\upsilon)||\;d\omega\;d\upsilon$ and, hence, ${\rm d}\A_{\upsilon_0,\omega_0}(\omega,\upsilon)=||\overline{\bn}(\bs(\omega,\upsilon),\upsilon)||\;d\omega\;d\upsilon$. We denote, $n_t(\omega,\upsilon)=n_t$ and $\bn_\bs(\omega,\upsilon)=\bn_\bs$ for brevity of ensuing expressions. For any point $(\bs(\omega,\upsilon),\upsilon)$ on $\C$, let $\L^*_{n_t,\bn_\bs}Z(\bs(\omega,\upsilon),\upsilon) = \bN_{\bs t}\L^*Z(\bs(\omega,\upsilon),\upsilon)$, be the $8\times 1$ directional differential process along the \emph{normal direction matrix}, $\bN_{\bs t}= \left(\widetilde{\bn}_{\bs,-1} \oplus n_t \widetilde{\bn}_\bs \oplus n_t^2 \widetilde{\bn}_\bs\right)$, where $\widetilde{\bn}_\bs = 1 \oplus \widetilde{\bn}_{\bs,-1}$, $\widetilde{\bn}_{\bs,-1}=\bn_\bs^{\T}\oplus \{\widetilde{\bn_\bs}^{\otimes 2}\}^{\T}$ and $\widetilde{\bn_\bs}^{\otimes 2} = \left(\begin{smallmatrix}
		1 & 0 & 0 & 0\\0 & 1& 1& 0\\0 & 0 & 0 & 1
	\end{smallmatrix}\right)\bn_\bs^{\otimes 2}$. We define \emph{wombling measures} as the total and average differential over $\C$,  
	
	\begin{equation}\label{eq:st-womb-measure}
		\bGamma(\C) 
		= \iint_{\C}\L^*_{n_t,\bn_\bs}Z(\bs(\omega,\upsilon),\upsilon)\;||\overline{\bn}(\bs(\omega,\upsilon),\upsilon)||\;d\omega\;d\upsilon 
		\quad \mbox{and}\quad \overline{\bGamma}(\C) = \frac{\bGamma(\C)}{\A(\C)}\;, 
	\end{equation} 
	respectively, where the double integral acts element-wise
	. 
	Therefore, $\bGamma$ and $\overline{\bGamma}$ are $8\times 1$ vectors. A large value for any entry of $\bGamma(\C)$ (or $\overline{\bGamma}(\C)$) indicates that $\C$ tracks a zone of rapid spatiotemporal change. More generally, $\L^*_{n_t,\bn_\bs} Z(\bs,t)$ can be replaced with, $g\left\{\L^* Z(\bs,t)\right\}$, where $g(\cdot)$ is any linear functional. Section~S7 offers more details on parameterized surfaces. 
	
	Specifically, we are interested in surfaces with a boundary (in the topological sense). In the subsequent examples, the boundary for $\C$ are open or closed parametric planar curves, $C_{\upsilon_i}=\{(s_x(\omega,\upsilon_i),s_y(\omega,\upsilon_i)):\omega\in[\omega_0,\omega_1]\}$, $i=0,1$. If, in particular, $C_{\upsilon_0}=C_{\upsilon_1}=C_\upsilon=C$, for every $\upsilon\in[\upsilon_0,\upsilon_1]$, the curve stays fixed over time so $s_x$ and $s_y$ are functions of $\omega$ only. 
	From \cref{eq:fff}, $\A(\C) =  (\upsilon_1-\upsilon_0)\int\limits_{\omega_0}^{\omega_1} \left\{\left(\frac{\partial s_y(\omega)}{\partial \omega}\right)^2+\left(\frac{\partial s_x(\omega)}{\partial \omega}\right)^2\right\}^{\frac{1}{2}}\;d\omega=(\upsilon_1-\upsilon_0)\;\ell(C)$, where $\ell(C)$ is the arc-length of the parametric curve $C=\{(s_x(\omega),s_y(\omega)):\omega\in[\omega_0,\omega_1]\}$. 
	
	\subsection{Curves on surfaces}\label{sec:curves}
	Curves of interest on $\C$ are \emph{planar curves}. They mark spatial regions of rapid change at time, $\upsilon=\upsilon'$. Let $C_{\upsilon'}=\{(s_x(\omega,\upsilon'),s_y(\omega,\upsilon')):\omega\in[\omega_0,\omega_1], \omega_0\ne\omega_1\}$ denote an open parametric planar curve. The wombling measures in \cref{eq:st-womb-measure} are defined with respect to the normal, $\bn_\bs(\omega,\upsilon')$, to $C_{\upsilon'}$. One could, in principle, consider defining wombling measures with respect to the tangential direction to $C_{\upsilon'}$. The tangent is $\overline{\bu}(\omega,\upsilon')=\left(\frac{\partial s_x(\omega,\upsilon')}{\partial\omega},\frac{\partial s_y(\omega,\upsilon')}{\partial\omega}\right)^{\T}$, with corresponding unit tangent, $\bu(\omega,\upsilon')=\frac{\overline{\bu}(\omega,\upsilon')}{||\overline{\bu}(\omega,\upsilon')||}$. However, wombling boundaries are, in general, representative of geographic features that delineate zones of rapid change. This suggests that such boundaries should be identified based upon rates of change orthogonal, and not tangential, to the curve. Furthermore, replacing normals with tangents in \cref{eq:st-womb-measure} for closed curves yields the value of $0$ in each element of $\bGamma(\C)$. For example, if $C_{\upsilon'}$ is an administrative or natural boundary of a closed region (e.g., zip-code, county, boundary of water body), then the corresponding wombling measures along the tangential direction will be zero and, hence, lack interpretation. On the other hand, the wombling measures defined with normals, as in \cref{eq:st-womb-measure}, for closed curves correspond to the notion of \emph{spatiotemporal flux} associated with the enclosed region \citep[see, e.g.,][for spatial flux]{banerjee2006bayesian}.    
	
	In particular, if for every $\upsilon\in [\upsilon_0,\upsilon_1]$, $C_{\upsilon}$ is a closed curve, then the surface $\C$ encloses a topologically compact region $W$ with a \emph{piece-wise smooth boundary} $\partial W$. We denote the integral over the boundary by $\oiintctrclockwise_{\partial W}$. The total wombling measure is $\bGamma(\C)=\oiintctrclockwise_{\partial W}\L^*_{n_t,\bn_\bs}Z(\bs,t)\;{\rm d}\A$. For such volumes the spatiotemporal flux and rate of change in spatiotemporal flux represent rates of change in the spatiotemporal process enclosed by $W$. If $Z(\bs,t)$ is three times differentiable, then the closed surface integral can be expressed as a volume integral over $W$ without requiring explicit parametric representation. Let $\bF_1(\bs,t)=\left(\begin{smallmatrix}\bpartial_\bs Z(\bs,t)\\\partial_tZ(\bs,t)\end{smallmatrix}\right)$ and $\bF_2(\bs,t)=\left(\begin{smallmatrix}
		F_{21}\\\vdots\\F_{26}
	\end{smallmatrix}\right)=\bnabla^{\otimes 2}Z(\bs,t)$. The spatiotemporal flux and the rate of change in spatiotemporal flux satisfy the identities, $\oiintctrclockwise_{\C=\partial W} \bn(\bs,t)^{\T} \bF_1(\bs,t)\;\mathrm{d}\A=\iiint\limits_{W}{\rm div}\; \bF_1\;\mathrm{d}W$ and $\oiintctrclockwise_{\C=\partial W} {\bn(\bs,t)^{\otimes 2}}^{\T} \bF_2(\bs,t)\;\mathrm{d}\A=\iiint\limits_{W}\left\{{\rm div}\; n_{s_x} \left(\begin{smallmatrix}
		F_{21}\\F_{22}\\F_{23}
	\end{smallmatrix}\right) + {\rm div}\; n_{s_y} \left(\begin{smallmatrix}
		F_{22}\\F_{24}\\F_{25}
	\end{smallmatrix}\right) + {\rm div}\; n_t \left(\begin{smallmatrix}
		F_{23}\\F_{25}\\F_{26}
	\end{smallmatrix}\right)\right\}\;\mathrm{d}W$, where $\bn(\bs,t)=\left(\begin{smallmatrix}
		n_{s_x}\\n_{s_y}\\n_t
	\end{smallmatrix}\right)$ and $\mathrm{d}W=ds_x\;ds_y\;dt$ is the volume element. The integrand on the right consists of elements from $\bnabla^{\otimes 3}Z(\bs,t)$. The identities are obtained using the Divergence Theorem due to Gauss \citep[see, e.g.,][Vol. 4, Ch. 7, p. 192]{spivak1979comprehensive}
	. The first identity concerning spatiotemporal flux requires $Z(\bs,t)$ to be twice differentiable (see Section~S8). 
	
	Turning to practical computation of the spatiotemporal flux (or its rate of change) $\iiint_W$ can be evaluated using a customary Riemann approximation (with cubes inside the region $W$), which, while simple in concept, can be demanding on resources and requires smoother process specifications. On the other hand, evaluating $\oiintctrclockwise_{\C=\partial W}$ is conceptually harder and benefits from the triangulation procedure discussed in Section~\ref{sec:tri-surf}. Finally, we remark that one could possibly use Stokes' theorem on surfaces $\C=\left\{(s_x(\omega(\upsilon)),s_y(\omega(\upsilon)),\upsilon):\upsilon \in \mathcal{D}_\upsilon\right\}$. This is a parameterization from $\mathfrak{R}^+\to\mathfrak{R}^3$. Wombling measures would then be line integrals defined on curves bounding a surface $\oint_{C} \bn(\bs,t)^{\T} \bF_1(\bs,t)\;\mathrm{d}\ell=\iint_{\C}{\rm curl}\; \bF_1\;\mathrm{d}\A$. These would then be defined on curves and be incongruous with our aim of associating them with surfaces. We do not pursue such developments here.

	{
		\subsection{ Spatiotemporal Wombling}\label{sec:spt-womb} } Our scientific applications include, for example, { temporally evolving wombling measures for neural activity over regions of the human scalp}. 
	{ We consider temporally indexed curves, say $C_t$, which generate surfaces. For example, a line segment is described as $C_t = \{\bs_t + \omega\bu_t:\omega\in[0,1]\}$ on $t\in[t_0,t_1]$. We write $C_0=\{\bs_0+\omega \bu :\omega\in[0,1]\}$ and $C_1=\{\bs_0 + \bv + \omega (\bw-\bv):\omega\in[0,1]\}$ for $t=t_0$ and $t=t_1$, respectively, where $\bs_{t_0} = \bs_0$, $\bu_{t_0} = \bu$, $\bv = \bs_{t_1} - \bs_0$ is the displacement vector for $\bs_0$ and $\bw = \bu_{t_1} + \bv$ is the corresponding displacement of $\bu_{t_1}$. Hence, $C_{t_0}$ evolves continuously 
		to $C_{t_1}$ generating planes from two sets of non-collinear points, $\left\{\bigl(\begin{smallmatrix}
			\bs_0\\ t_0
		\end{smallmatrix}\bigr),\bigl(\begin{smallmatrix}
			\bs_0 + \bu\\ t_0
		\end{smallmatrix}\bigr),\bigl(\begin{smallmatrix}
			\bs_0 + \bv\\ t_1
		\end{smallmatrix}\bigr)\right\}$ and $\left\{\bigl(\begin{smallmatrix}
			\bs_0 + \bu\\ t_0
		\end{smallmatrix}\bigr),\bigl(\begin{smallmatrix}
			\bs_0 + \bv\\ t_1
		\end{smallmatrix}\bigr), \bigl(\begin{smallmatrix}
			\bs_0 + \bw\\ t_1
		\end{smallmatrix}\bigr)\right\}$.} The resulting surface, $\C$, is a union of two triangular planes given by $\C_0=\left\{\bigl(\begin{smallmatrix}
		\bs_0 \\ t_0
	\end{smallmatrix}\bigr)+\omega\;\bigl(\begin{smallmatrix}
		\bu \\ 0
	\end{smallmatrix}\bigr)+\upsilon\;\bigl(\begin{smallmatrix}
		\bv \\ 1
	\end{smallmatrix}\bigr):(\omega,\upsilon)\in T\right\}$ and $\C_1=\left\{\bigl(\begin{smallmatrix}
		\bs_0 + \bw \\ t_1
	\end{smallmatrix}\bigr)+\omega\;\bigl(\begin{smallmatrix}
		\bv - \bw \\ 0
	\end{smallmatrix}\bigr)+\upsilon\;\bigl(\begin{smallmatrix}
		\bu - \bw \\ -1
	\end{smallmatrix}\bigr):(\omega,\upsilon)\in T\right\}$, where $T=\{(\omega,\upsilon):\omega,\upsilon\in [0,1],\omega+\upsilon\leq 1\}$ is a triangular region in $\mathfrak{R}^2$ { (left panel of \Cref{fig:examples})}. { The unit normals (independent of $(\omega,\upsilon)$) are $\bn_i = \frac{\overline{\bn}_i}{||\overline{\bn}_i||}$ for $i=0,1$, where $\overline{\bn}_0=\left(\begin{smallmatrix}\bv\\1\end{smallmatrix}\right)\times\left(\begin{smallmatrix}\bu\\0\end{smallmatrix}\right)=\left(\begin{smallmatrix}\overline{\bn}_{\bs,0}\\\overline{n}_{t,0}\end{smallmatrix}\right)$ and $\overline{\bn}_1=\left(\begin{smallmatrix}\bv-\bw\\0\end{smallmatrix}\right)\times\left(\begin{smallmatrix}\bw-\bu\\1\end{smallmatrix}\right)=\left(\begin{smallmatrix}\overline{\bn}_{\bs,1}\\\overline{n}_{t,1}\end{smallmatrix}\right)$, respectively, and $\times$ is the vector product. The }
	areas are $\A(\C_i)=\iint_{T}||\overline{\bn}_i||\;d\omega\;d\upsilon=\frac{||\overline{\bn}_i||}{2}$, and wombling measures in \cref{eq:st-womb-measure} are $\bGamma(\C_i)=||\overline{\bn}_i||\iint_T\L^*_{n_{t,i},\bn_{\bs,i}}Z(\bs(\omega,\upsilon),\upsilon)\;d\omega\;d\upsilon$, for $i=0,1$. The total and average wombling measures over the plane $\C = \C_0\cup\C_1$ with area $\A(\C)=\A(\C_0)+\A(\C_1)$ are $\bGamma(\C)= \bGamma(\C_0)+\bGamma(\C_1)$ and $\overline{\bGamma}(\C)= \frac{\bGamma(\C)}{\A(\C)}$, respectively. 
	
	The right panel of \Cref{fig:examples} displays another example: a surface generated from arcs of parametric semi-circles, $\C=\left\{\left(r\upsilon\cos \omega, r \upsilon\sin \omega,\upsilon\right):\omega\in[0,\pi],\upsilon\in[t_0,t_0+1],r>0\right\}$. The radii of the arcs increase linearly with time. The normal to $\C$ is $\overline{\bn}(\bs(\omega,\upsilon),\upsilon)= \left(r\upsilon\cos\omega,r\upsilon\sin\omega,-r^2\upsilon\right)^{\T}$. The unit normal is $\bn(\bs(\omega,\upsilon),\upsilon)=\frac{1}{\sqrt{1+r^2}}\left(\cos\omega,\sin\omega,-r\right)^{\T}$ and the area is $\A(\C)=\int\limits_{t_0}^{t_0+1}\int\limits_{0}^{\pi}||\overline{\bn}(\bs(\omega,\upsilon),\upsilon)||\;d\omega\; d\upsilon = \left(t_0+\frac{1}{2}\right)\pi r\sqrt{1+r^2}$. The average wombling measure is $\overline{\bGamma}(\C) = \frac{1}{\left(t_0+\frac{1}{2}\right)\pi}\int\limits_{t_0}^{t_0+1}\int\limits_{0}^{\pi}\L^*_{n_t(\upsilon),\bn_\bs(\omega,\upsilon)}Z(\bs(\omega,\upsilon),\upsilon)\;\upsilon\;d\omega\;d\upsilon$. { Sec. S14 discusses static wombling, where the curve stays fixed over time.}
	
	\begin{figure}[t]
		\centering
		\includegraphics[scale = 0.24]{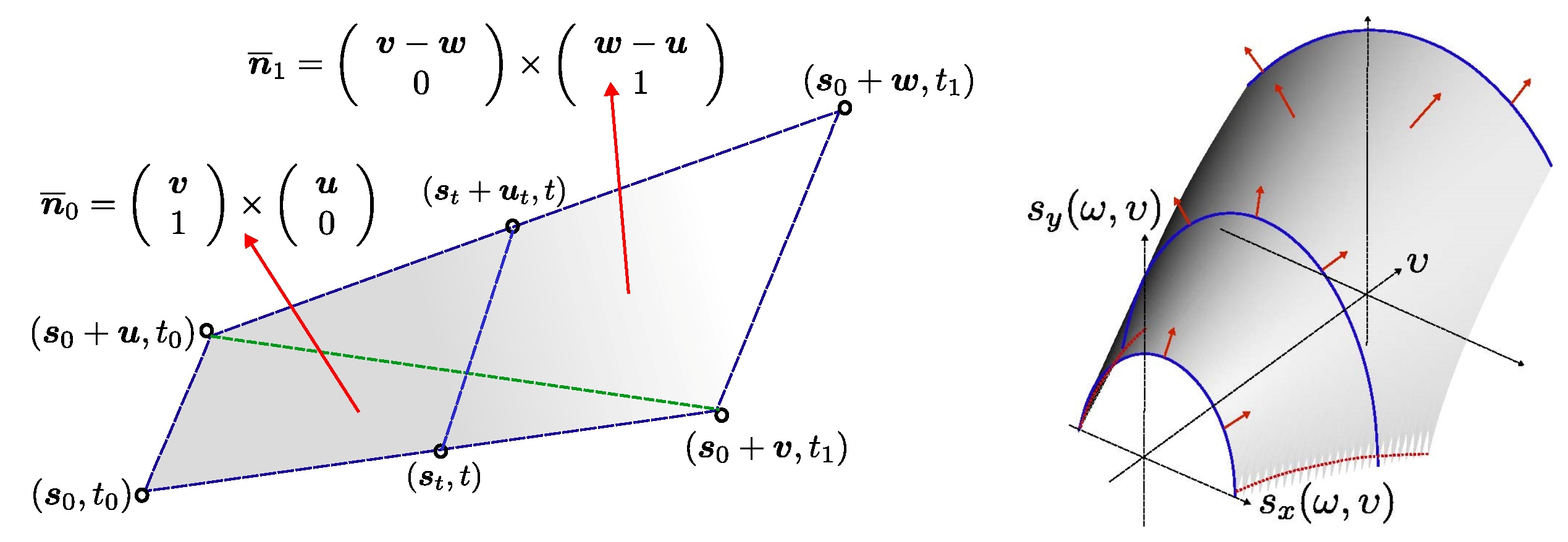}
		\caption{Illustrative examples for wombling: (left) parametric planes (right) parametric circles. Normals to the surface are indicated using red arrows and $\times$ denotes the vector product.}
		\label{fig:examples}
	\end{figure}
	
	\subsection{Distribution Theory for Wombling Measures}\label{sec:dist-womb}
	We extend the distribution theory in \Cref{sec:STDP} to derive the joint distribution for the $N\times 1$ process $\Z$ and the wombling measures $\bGamma(\C)$ in \cref{eq:st-womb-measure} for conducting predictive inference. Suppose $\C$ is generated over intervals, $\mathcal{D}_{\upsilon_0}=[0,\upsilon_0]$ and $\mathcal{D}_{\omega_0}=[0,\omega_0]$. For any $\upsilon^*\in \mathcal{D}_{\upsilon_0}$ and $\omega^*\in \mathcal{D}_{\omega_0}$, let $\C_{*}$ denote the surface restricted to $\D_*=\D_{\upsilon^*}\times \D_{\omega^*}=[0,\upsilon^*]\times [0,\omega^*]$. The wombling measures in \cref{eq:st-womb-measure} are $\bGamma(\C_{*})=\iint_{\D_*}\L^*_{n_t,\bn_\bs}Z(\bs(\omega,\upsilon),\upsilon)\;||\overline{\bn}(\bs(\omega,\upsilon),\upsilon)||\;d\omega\; d\upsilon$. The directional differential processes, $\L^*_{n_t,\bn_\bs} Z(\bs,t)$, are valid zero-mean Gaussian processes (GPs). Hence, $\bGamma(\C_{*})\sim \N_8\left(\0_8,\bK_{\bGamma}(\C_{*},\C_{*})\right)$, where 
	
	\begin{equation}\label{eq:cov-womb-measures}
		\begin{split}
			\bK_\bGamma(\C_{*},\C_{*})&=\idotsint\limits_{\D_*\times\D_*} \bN_{\bs t}\;\bV_{\L^*Z}(\bDelta(\omega_1,\omega_2,\upsilon_1,\upsilon_2),\delta(\upsilon_1,\upsilon_2))\;\bN_{\bs t}^{\T}\;\cdot \\
			&\myquad[6]||\overline{\bn}(\bs(\omega_1,\upsilon_1),\upsilon_1)||\cdot||\overline{\bn}(\bs(\omega_2,\upsilon_2),\upsilon_2)||\;d\omega_1\;d\upsilon_1\;d\omega_2\;d\upsilon_2,
		\end{split}
	\end{equation}
	$\bDelta(\omega_1,\omega_2,\upsilon_1,\upsilon_2) = s(\omega_2,\upsilon_2)-s(\omega_1,\upsilon_1)$ and $\delta(\upsilon_1,\upsilon_2) = \upsilon_2-\upsilon_1$. The repeated integral acts element-wise on matrices. $\bGamma(\C_{*})$ is a \emph{dependent GP} on $\D_{\upsilon_0}\times\D_{\omega_0}$---for any two points, $(\omega_1^*,\upsilon_1^*)$ and $(\omega_2^*,\upsilon_2^*)$, the dependence is specified through, $\left(\begin{smallmatrix}
		\bGamma(\C_{1*})\\\bGamma(\C_{2*})
	\end{smallmatrix}\right)\sim \N_{16}\left[\0_{16},\left(\begin{smallmatrix}
		\bK_\bGamma(\C_{1*},\C_{1*}) & \bK_\bGamma(\C_{1*},\C_{2*})\\
		\bK_\bGamma(\C_{2*},\C_{1*}) & \bK_\bGamma(\C_{2*},\C_{2*})
	\end{smallmatrix}\right)\right]$, 
	where $\bGamma(\C_{i*})$ is the wombling measure computed over $[0,\omega_i^*]\times[0,\upsilon_i^*]$ and each $\bK_\bGamma(\C_{i*},\C_{j*})$ is $8\times 8$ obtained from \cref{eq:cov-womb-measures} for $i,j=1,2$. $\bGamma(\C_*)$ is mean-square continuous. Although $Z(\bs,t)$ is stationary, $\bGamma(\C_*)$ need not be. Let $\bG(\C_{*})= \Cov\{\Z,\bGamma(\C_*)\} = \left(\begin{smallmatrix}\bgamma_1(\C_{*})^{\T}\\ \vdots \\ \bgamma_N(\C_{*})^{\T}\end{smallmatrix}\right)$ as the $N\times 8$ matrix, where $\bgamma_i(\C_{*})=\Cov(Z(\bs_i,t_i), \bGamma(\C_*))$, $\bDelta_i(\omega,\upsilon)=\bs_i-\bs(\omega,\upsilon)$, $\delta_i(\upsilon)=t_i-\upsilon$,
	
	\begin{equation}\label{eq:cross-cov}
		\bgamma_i(\C_{*})=\iint_{\D_*}\bN_{\bs t}\calK_{i}(\bDelta_{i}(\omega,\upsilon),\delta_{i}(\upsilon))\;||\overline{\bn}(\bs(\omega,\upsilon),\upsilon)||\;d\omega\;d\upsilon, \quad i=1,\ldots,N\;,
	\end{equation}
	and the integrals act element-wise on the vector. 
	The desired distribution $P[\Z, \bGamma(\C_*)]$ is 
	
	\begin{equation}\label{eq:joint-dist-womb-measure}
		\left(\begin{array}{c}
			\Z\\
			\bGamma(\C_*)
		\end{array}\right)\sim \N_{N+8}\left[\0_{N+8}, \left(\begin{array}{cc}
			\Sigma_\Z  & \bG(\C_{*})\\
			\bG^{\T}(\C_{*}) & \bK_\bGamma(\C_{*},\C_{*})
		\end{array}\right)\right].
	\end{equation}
	
	\section{Bayesian Inference and Computation}\label{sec:bi&c}
	\subsection{Spatiotemporal hierarchical model}\label{sec:spt_hier_model}
	
	Let $\Y = (y(\bs_1,t_1),\ldots, y(\bs_N,t_N))^{\T}$ be the $N\times 1$ vector of observations 
	over spatiotemporal coordinates $\{(\bs_1,t_1),\ldots,(\bs_N,t_N)\}$. We construct the spatiotemporal hierarchical model,
	
	\begin{equation}\label{eq:hier-spt-model}
		y(\bs_i,t_i) = \mu(\bs_i,t_i) + Z(\bs_i,t_i) + \epsilon(\bs_i,t_i),
	\end{equation}
	where $\mu(\bs_i,t_i)=\bx(\bs_i,t_i)^{\T}\bbeta$, each $\bx(\bs_i,t_i)$ is a $p\times 1$ vector of spatiotemporal covariates, {  $Z(\bs_i,t_i)$s} are realizations of the process $Z(\bs,t)\sim GP(0,K(\cdot,\cdot,\btheta))$ and $\epsilon(\bs_i,t_i)\overset{iid}{\sim} N(0,\tau^2)$ accounts for white noise. The process parameters are $\Theta = \{\btheta, \tau^2,\bbeta\}$, where $\btheta = \{\sigma^2, \phi_s,\phi_t\}$. 
	With customary prior specifications we obtain the posterior,
	
	\begin{equation}\label{eq:post-full}
		P(\Theta\given \Y) \propto \pi(\btheta,\tau^2)\times {  \N}(\bbeta\given \bmu_{\bbeta},\Sigma_{\bbeta})\times 
		\mathcal{N}_N\left(\Y\given \bX(\bs,t)\bbeta, \Sigma_\Z + \tau^2\bI_N\right),
	\end{equation}
	where $\pi(\btheta,\tau^2) = U(\phi_s\given a_{\phi_s},b_{\phi_s})\times U(\phi_t \given a_{\phi_t},b_{\phi_t}) \times IG(\sigma^2 \given a_{\sigma},b_{\sigma})\times IG(\tau^2\given a_{\tau},b_{\tau})$, $\Sigma_{\Z} = \sigma^2\bR(\phi_s,\phi_t)$ is the $N\times N$ spatiotemporal covariance matrix constructed from the Mat\'ern covariance kernel in \cref{eq:cov-matern}, $IG$ denotes the inverse-gamma distribution, {  $\N$ denotes univariate Gaussian and $\N_N$ denotes $N$-variate Gaussian distribution}. 
	
	We compute $P(\Theta\given \Y)$ by marginalizing out $\Z$, thereby producing a ``collapsed", more efficient alternative
	. In a more general setting, if learning smoothness of the process from available data is of interest, then $\nu\sim U(a_\nu,b_\nu)$. Depending on the posterior estimate for $\nu$, the spatiotemporal differential processes are sampled using closed-form Mat\'ern kernels with half-integer values for $\nu$ as follows, $\nu = 1+\frac{1}{2}$ or, $2+\frac{1}{2}$. For instance, if $2\leq\nu\leq 3$, then a Mat\'ern with $\nu=\frac{5}{2}$ is used to generate posterior samples for the differential processes. This features in our implementation. {  Alternatively, models can be fit simultaneously with $\nu=\frac{1}{2},\frac{3}{2}$ and $\frac{5}{2}$ and a model selection criteria can be used to select $\nu$ for the best fit.} The posterior distribution in \cref{eq:post-full} is sampled using a combination of Gibbs and 
	Metropolis-Hastings (MH) steps. We first sample $\btheta$ and $\tau^2$ using MH steps, 
	followed by 
	Gibbs samples for $\bbeta$ and $\Z$, one-for-one corresponding to samples of $\btheta$ and $\tau^2$. 
	
	Predictive inference on spatiotemporal differential processes and wombling measures for the latent surface is achieved by sampling from $P(\L^* Z(\bs_0,t_0)\given\Y)$ and $P(\bGamma(\C)\given\Y)$, respectively, 
	where $\C$ is a surface of interest. The posterior predictive distribution of the derivative processes 
	is $ P\left(\L^* Z(\bs_0,t_0)\given\Y\right) = \int P\left(\L^*Z(\bs_0,t_0)\given\Z,\Theta\right)\;P(\Z\given\Y,\Theta)\;P(\Theta\given\Y)\;d\Theta$.
	Posterior inference for any sub-vector, say, $\partial_t^2\Partials^2Z(\bs_0,t_0)$ is obtained as
	$P\left(\partial_t^2\Partials^2Z(\bs_0,t_0)\given \Y\right) = \idotsint P\left(\partial_t^2\Partials^2Z(\bs_0,t_0)\given \widetilde{\L}^{*}Z(\bs_0,t_0),\Z,\Theta\right)P\left(\widetilde{\L}^{*}Z(\bs_0,t_0)\given{\cal Z},\Theta\right)P(\Z\given\Y,\Theta)P(\Theta\given\Y)d\Theta\; d\widetilde{\L}^{*}Z$, where $\widetilde{\L}^{*}Z$ is the vector of spatiotemporal derivative processes excluding $\partial_t^2\Partials^2Z(\bs_0,t_0)$.
	Posterior sampling from the above conditionals is done one-for-one using \cref{eq:diffp-full-post}. One instance of $\L^*Z(\bs_0,t_0)$ is sampled for each posterior sample of $\theta$ obtained from the posterior, $P\left(\btheta\given \Z\right)$. The conditional predictive distribution of the spatiotemporal differential process, $\L^*Z(\bs_0,t_0)\given{\cal Z},\btheta\sim\N_{m-1}\left(\bmu_\L,\Sigma_\L\right)$, where $\bmu_\L=-{\cal K}_0\Sigma_{\Z}^{-1}\Z$  
	and $\Sigma_\L=\bV_{\L^*  Z}(\0_d,0)-{\calK}_0\Sigma_{\cal Z}^{-1}{{\calK}_0}^{\T}$. Similar arguments follow for any element or sub-vector of $\L^*Z(\bs_0,t_0)$. 
	
	Posterior samples for the wombling measures are obtained similarly.  The posterior, $P(\bGamma(\C)\given\Y) = \int P(\bGamma(\C)\given\Z,\Theta)\;P(\Z\given\Y,\Theta)\;P(\Theta\given\Y)\;d\Theta\;d\Z$ is used to sample wombling measures. For each posterior sample of $\Z$ and $\btheta$, we draw $\bGamma(\C)\given\Z,\btheta\sim \N_8\big(-\bG(\C)^{\T}\Sigma_\Z^{-1}\Z,\bK_\bGamma(\C,\C)-\bG(\C)^{\T}\Sigma_\Z^{-1}\bG(\C)\big)$, where, $\bK_\bGamma(\C,\C)$, $\bG(\C)$ are computed using \cref{eq:cov-womb-measures,eq:cross-cov} using an appropriate parameterization (see \Cref{sec:tri-surf}). 
	Posterior samples for $\Z$, $\L^* Z$ and $\bGamma(\C)$ are generated together to avoid repeated computation of $\Sigma_{\Z}^{-1}$. We leverage parallel computation to improve algorithmic complexity. 
	
	
	\begin{figure}[t]
		\centering
		\includegraphics[scale=0.5, page=1]{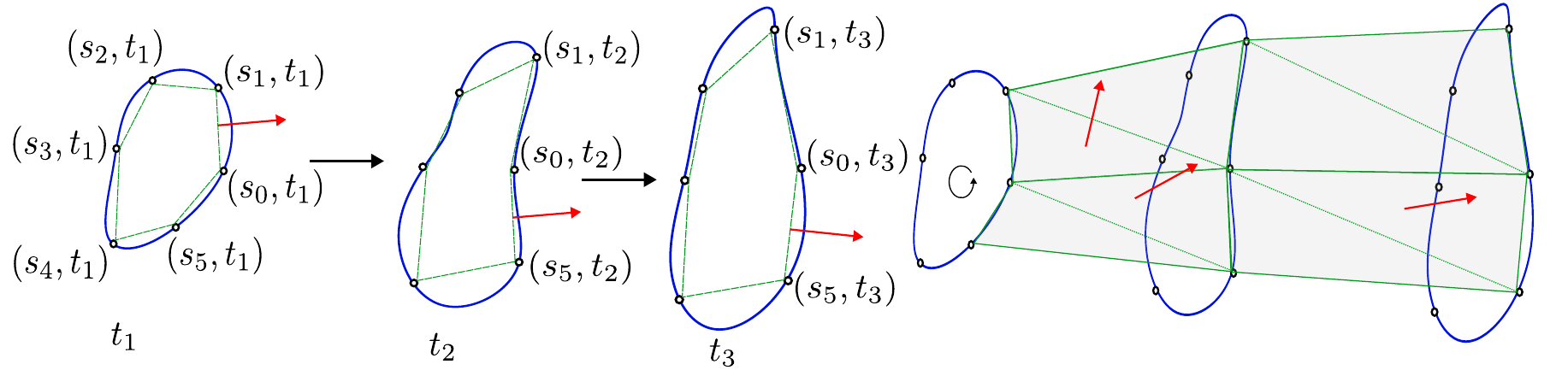}
		\caption{Illustration of spatiotemporal wombling using three time points. At each time point, for a closed planar curve, a partition consisting of six points is chosen to triangulate the surface. An anti-clockwise orientation is used for outward pointing normals which are marked with red arrows.}\label{fig:st-wombling}
	\end{figure}
	
	\subsection{Triangulation of Surfaces}\label{sec:tri-surf} 
	{  For a surface (compact and connected in the topological sense) to be identified as a wombling boundary, posterior sampling of wombling measures requires a parametric form. We choose to perform spatiotemporal wombling using} simple parametric regions, specifically triangular planes as shown in \Cref{fig:examples} (left). The spatiotemporal wombling measures in \cref{eq:st-womb-measure} are computed on each region followed by an aggregation to produce the wombling measure for the surface. The surface $\C$ is triangulated using a partition. We partition $\D_\omega=[\omega_0,\omega_1]$ using $\omega_0=\omega'_1<\cdots<\omega'_{n_\omega}=\omega_1$ and $\D_\upsilon=[t_j,t_{j+1}]$ using $t_j=\upsilon_0=\upsilon'_1<\cdots<\upsilon'_{n_\upsilon}=\upsilon_1=t_{j+1}$. Denoting the partition as $\mathscr{P}$, its norm is $|\mathscr{P}|_A=\max\limits_{\substack{1\leq i \leq n_\omega-1\\1\leq j \leq n_\upsilon-1}}(\omega'_{i+1}-\omega'_{i})\;(\upsilon'_{j+1}-\upsilon'_j)$, which is a function of the area of the largest parallelogram. Let $C_{t_j}$ be {  a planar curve} on $\C$. The curves have a polygonal approximation $\widetilde{C}_{t_j}= \bigcup\limits_{i=1}^{n_\omega-1}\widetilde{C}_{ij}$ with $\widetilde{C}_{ij} = \{\bs(\omega'_{i},t_j)+\omega\;\bu_{ij}: \bu_{ij} = \bs(\omega'_{i+1},t_j)-\bs(\omega'_{i},t_j),\omega\in[0,1]\}$. Considering $\widetilde{C}_{ij}$ and $\widetilde{C}_{ij+1}$, which are rectilinear segments at adjacent time points $t_j$ and $t_{j+1}$, respectively. The triangulation of $\C$ restricted to the rectangle $[\omega'_{i},\omega'_{i+1}]\times [t_j,t_{j+1}]$ follows the first example in \Cref{sec:spt-womb}. The triangular planes are generated using the set, $\left\{\bigl(\begin{smallmatrix}
		\bs(\omega'_i,t_j)\\ t_j \end{smallmatrix}\bigr),\bigl(\begin{smallmatrix}
		\bs(\omega'_{i+1},t_j)\\ t_j \end{smallmatrix}\bigr),\bigl(\begin{smallmatrix}
		\bs(\omega'_i,t_{j+1})\\ t_{j+1} \end{smallmatrix}\bigr),\bigl(\begin{smallmatrix}
		\bs(\omega'_{i+1},t_{j+1})\\ t_{j+1} \end{smallmatrix}\bigr)\right\}$, of four points which are not co-planar. They are 
	$\C_{ij1}= \left\{\bigl(\begin{smallmatrix}
		\bs(\omega'_i,t_j)\\ t_j \end{smallmatrix}\bigr)+\omega\;\bigl(\begin{smallmatrix}
		\bu_{ij}\\0 \end{smallmatrix}\bigr)+\upsilon\;\bigl(\begin{smallmatrix}
		\bv_{ij+1}\\t_{j+1}-t_{j} \end{smallmatrix}\bigr):(\omega,\upsilon)\in T\right\}$ and $\C_{ij2} = \left\{\bigl(\begin{smallmatrix}
		\bs(\omega'_{i+1},t_{j+1})\\ t_{j+1} \end{smallmatrix}\bigr)+\omega\;\bigl(\begin{smallmatrix}
		\bu_{ij+1}\\ 0 \end{smallmatrix}\bigr)+\upsilon\;\bigl(\begin{smallmatrix}
		\bv_{i+1j}\\ t_j-t_{j+1} \end{smallmatrix}\bigr):(\omega,\upsilon)\in T\right\}$, where, $\bv_{ij+1}=\bs(\omega'_{i},t_{j+1})-\bs(\omega'_{i},t_{j})$, $\bv_{i+1j}=\bs(\omega'_{i+1},t_j)-\bs(\omega'_{i+1},t_{j+1})$ and $T$ is a parametric triangle in $\mathfrak{R}^2$ (see \Cref{sec:spt-womb}). 
	
	The normals to the planes are $\overline{\bn}_{ij1} = \left(\begin{smallmatrix} \bv_{ij+1}\\t_{j+1}-t_j\end{smallmatrix}\right)\times\left(\begin{smallmatrix} \bu_{ij}\\0\end{smallmatrix}\right)$ and $\overline{\bn}_{ij2} = \left(\begin{smallmatrix} -\bu_{ij+1}\\0\end{smallmatrix}\right) \times \left(\begin{smallmatrix} -\bv_{i+1j}\\t_{j+1}-t_j\end{smallmatrix}\right)$, respectively, where $\times$ is the vector product. These quantities are free of $(\omega,\upsilon)$. The corresponding unit normals are $\bn_{ijk}=\frac{\overline{\bn}_{ijk}}{||\overline{\bn}_{ijk}||}=\left(\begin{smallmatrix}\bn_{ijk,\bs}\\\bn_{ijk,t}\end{smallmatrix}\right)$, $k=1,2$. Iteration over successive intervals triangulates $\C$ as $\widetilde{\C}=\bigcup\limits_{i=1}^{n_\omega-1}\bigcup\limits_{j=1}^{n_\upsilon-1}\bigcup\limits_{k=1}^{2}\C_{ijk}$. \Cref{fig:st-wombling} illustrates this procedure. The wombling measure in \cref{eq:st-womb-measure} is computed on $\C_{ijk}$ as, $\bGamma(\C_{ijk})=\iint_T \bN_{ijk,\bs t}\L^* Z(\bs(\omega,\upsilon), \upsilon)\;||\overline{\bn}_{ijk}||\;d\omega\; d\upsilon=\iint_T\L^*_{n_{ijk,t},\bn_{ijk,\bs}} Z(\bs(\omega,\upsilon), \upsilon)\;||\overline{\bn}_{ijk}||\;d\omega\; d\upsilon$, for $k=1,2$. Aggregating, we obtain, $\bGamma(\widetilde{\C})=\sum\limits_{i=1}^{n_\omega-1}\sum\limits_{j=1}^{n_\upsilon-1}\sum\limits_{k=1}^{2}\bGamma(\C_{ijk})$. As $|\mathscr{P}|_A\to0$, $\bGamma(\widetilde{\C})\stackrel{a.s.}{\to}\bGamma(\C)$. Hence, $\bGamma(\widetilde{\C})$ is a strongly consistent estimator for $\bGamma(\C$). In general, if we consider $g(\L^*_{n_t,\bn_\bs} Z(\bs,t))$, where $g(\cdot)$ is uniformly continuous  over $[\omega_0,\omega_1]\times [t_1,t_N]$, then the corresponding estimator, $\bGamma_g(\C_{ijk})$, converges almost surely to $\bGamma_g(\C)$ (see Theorem~2 in Section~S9).
	
	Posterior samples of $\bGamma(\C_{ijk})$ are obtained from \cref{eq:joint-dist-womb-measure}. From \cref{eq:cov-womb-measures}, $\bK_\bGamma(\C_{ijk},\C_{ijk})=\idotsint\limits_{T\times T}\bN_{ijk,\bs t}\;\bV_{\L^* Z}(\bDelta_k(\omega_1,\omega_2,\upsilon_1,\upsilon_2),\delta_k(\upsilon_1,\upsilon_2))\;\bN_{ijk,\bs t}^{\T}\; ||\overline{\bn}_{ijk}||^2\;d\omega_1\;d\upsilon_1\;d\omega_2\;d\upsilon_2$, with $\delta_1(\upsilon_2,\upsilon_1)=(\upsilon_2-\upsilon_1)(\upsilon'_{j+1}-\upsilon'_j)$, $\delta_2(\upsilon_2,\upsilon_1)=-\delta_1(\upsilon_2,\upsilon_1)$, $\bDelta_1(\omega_1,\omega_2,\upsilon_1,\upsilon_2)=(\omega_2-\omega_1)\bu_{ij}+(\upsilon_2-\upsilon_1)\bv_{ij+1}$ and $\bDelta_2(\omega_1,\omega_2,\upsilon_1,\upsilon_2)=(\omega_2-\omega_1)\bu_{ij+1}+(\upsilon_2-\upsilon_1)\bv_{ij}$. From \cref{eq:cross-cov}, $\bG(\C_{ijk}) = \left(\begin{smallmatrix}
		\bgamma_1(\C_{ijk})^{\T}\\\vdots\\\bgamma_N(\C_{ijk})^{\T}
	\end{smallmatrix}\right)$, where for any $(\bs_{i'},t_{i'})$, $\bgamma_{i'}(\C_{ijk})=\iint\limits_{T}\bN_{ijk,\bs t}\calK_{i'}(\bDelta_{i'k}(\omega,\upsilon),\delta_{i'k}(\upsilon))\;||\overline{\bn}_{ijk}||\;d\omega\;d\upsilon$, where $\bDelta_{i'1}(\omega,\upsilon)=\bDelta_{i'ij}-\omega \bu_{ij}-\upsilon \bv_{ij+1}$, $\bDelta_{i'2}(\omega,\upsilon)=\bDelta_{i'i+1j+1}-\omega \bu_{ij+1}-\upsilon \bv_{ij}$, $\delta_{i'1}(\upsilon)=\delta_{i'j}-\upsilon\;\delta_{j+1j}$ and $\delta_{i'2}(\upsilon)=\delta_{i'j+1}+\upsilon\;\delta_{j+1j}$ for $i'=1,\ldots, N$, where $\bDelta_{i'ij}=\bs_{i'}-\bs(\omega'_i,t_j)$ and $\delta_{i'j}=t_{i'}-t_j$. Analytically evaluating $\bK_\bGamma(\C_{ijk},\C_{ijk})$ using {  high-dimensional} quadrature is tedious. 
	
	\subsection{Cross-covariance of Wombling Measures}\label{sec:var-womb}
	Our choice of simple parametric planes allows us to avoid high dimensional quadrature in \cref{eq:cov-womb-measures} required for posterior sampling of wombling measures. In the integrand for $\bK_\bGamma(\C_{ijk},\C_{ijk})$, $\bDelta_k(\omega_1,\omega_2,\upsilon_1,\upsilon_2)=\bDelta_k(\omega_2-\omega_1,\upsilon_2-\upsilon_1)$, $\delta_k(\upsilon_1,\upsilon_2)=\delta_k(\upsilon_2-\upsilon_1)$, $k=1,2$ are functions of $\omega_2-\omega_1$ and $\upsilon_2-\upsilon_1$. A simple change of variable results in $\bK_\bGamma(\C_{ijk},\C_{ijk})=\frac{||\overline{\bn}_{ijk}||^2}{2}\int\limits_{-1}^{1}\int\limits_{-1-x_\upsilon}^{1-x_\upsilon}\bN_{ijk,\bs t}\;\bV_{\L^* Z}(\bDelta_k(x_\omega,x_\upsilon),\delta_k(x_\upsilon))\;\bN_{ijk,\bs t}^{\T}\;dx_\omega\;dx_\upsilon$, where $x_\omega = \omega_2-\omega_1$ and $x_\upsilon = \upsilon_2-\upsilon_1$. This reduces a four-dimensional integral to a double integral. 
	Bivariate quadrature is performed to obtain the cross-covariance. Similar changes can be effected in the purely spatial case
	---for the Gaussian kernel we have closed forms available (see Section~S10).
	
	\section{Simulation Experiments}\label{sec:sim}
	We devise simulation experiments to calibrate inference for spatiotemporal differential processes and wombling measures. Algorithms for posterior sampling of the differential processes and the wombling measures 
	are available in Section~S11. The required sub-routines are developed for implementation in the \texttt{R}-statistical environment { \citep[][]{r_core_team_2021}}. The experiments are performed on a Linux OS powered by a 12\textsuperscript{th} Generation Intel\textsuperscript{\textregistered} Core\textsuperscript{\texttrademark}~i9-12.9K CPU with 24 cores and 64GB of RAM. Vignettes are available in the GitHub repository \if1\blind\href{https://github.com/arh926/sptwombling}{https://github.com/arh926/sptwombling}\fi \if0\blind {\url{(REDACTED)}}\fi.
	
	\paragraph{Data generation} Gradients and curvatures in realizations are not observed, but are induced by a parent process. We set up simulation experiments with known true values of differential processes and wombling measures. We consider locations $(s_x,s_y,t)\in [0,1]\times[0,1]\times{\cal T}\subseteq\mathfrak{R}^2\times\mathfrak{R}^+$ over the unit square across nine time points, ${\cal T} = \{1,2,\ldots, 9\}$.  We generate synthetic data from four distributions, $y(\bs,t)\sim {  \N}(\mu_i(\bs,t), \tau^2)$, $i = 1,\ldots,4$ and $\tau^2=1$ using the following specifications: (a) Pattern 1: $\mu_1=10\left\{\sin(3\pi s_x)+\cos(3\pi s_y)\cos\left(\frac{\pi t}{7}\right)\right\}$; (b) Pattern 2: $\mu_2 = 10\sin(3\pi s_x)\cos(3\pi s_y)\cos\left(\frac{\pi t}{7}\right)$. 
	\Cref{fig:true-plot-a,fig:true-plot-b} present interpolated plots for the spatiotemporal data generated from these patterns, respectively. These data generating patterns are motivated by smooth spatiotemporal data that do not arise from \cref{eq:post-full}, where the true spatiotemporal differential processes and wombling measures are computed in closed form. This allows us to calibrate the proposed inferential framework. The patterns also represent scenarios that manifest in real-time 
	data. The pattern in \Cref{fig:true-plot-a} displays regions enclosing peaks and troughs that relocate spatially over time. The pattern in \Cref{fig:true-plot-b} displays regions enclosing peaks (troughs) that change to troughs (peaks) over time. 
	Pattern 1 has a zero mixed spatial derivative---changes in space along the $x$-axis does not affect changes along $y$-axis, while Pattern 2 has all non-zero derivatives. 
	
	\begin{figure}[t]
		\centering
		\begin{subfigure}{.5\textwidth}
			\centering
			\hspace*{-0.6cm}    
			\includegraphics[scale = 0.3]{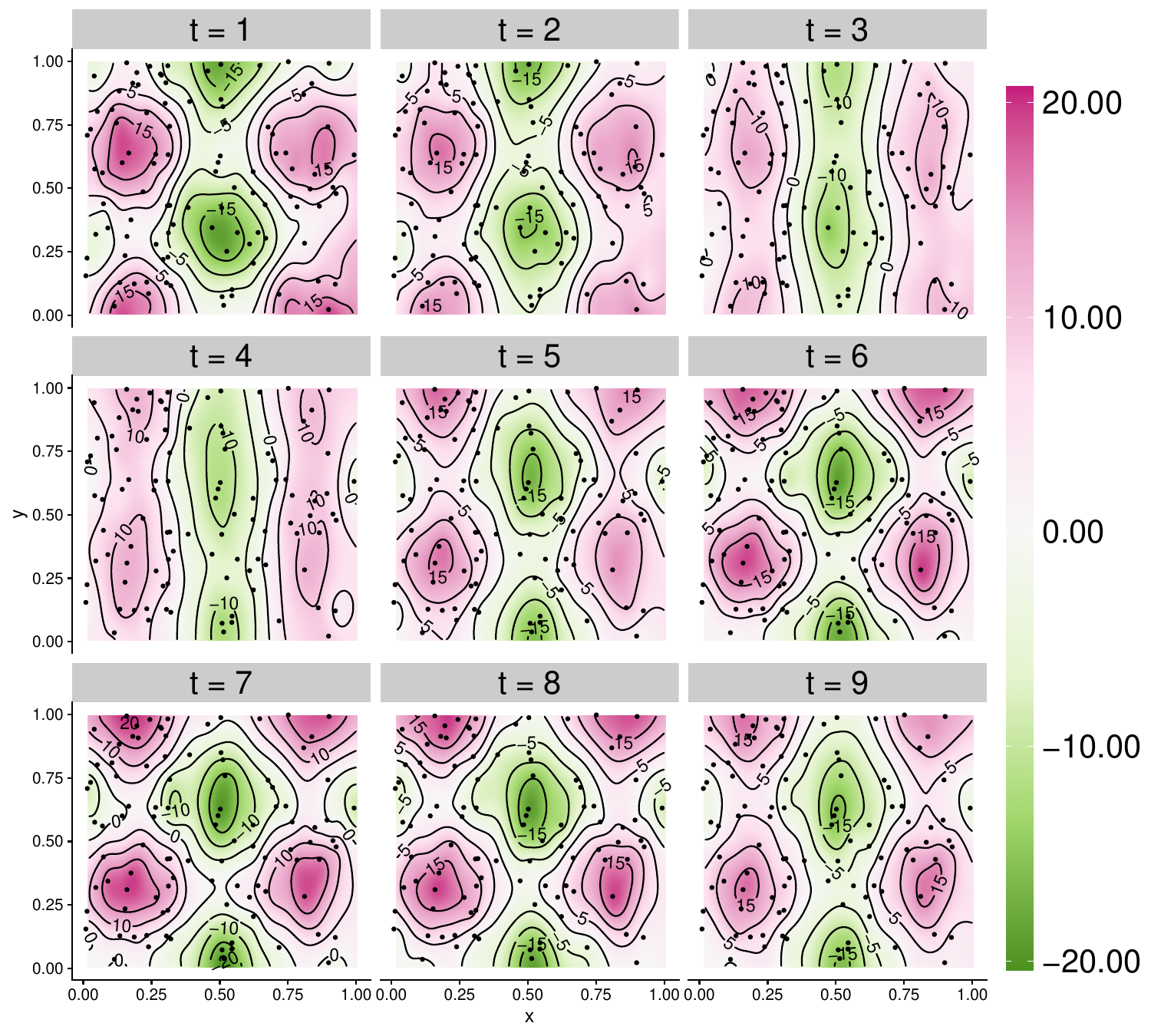}
			\caption{Pattern 1}\label{fig:true-plot-a}
		\end{subfigure}%
		\begin{subfigure}{.5\textwidth}
			\centering
			\includegraphics[scale = 0.3]{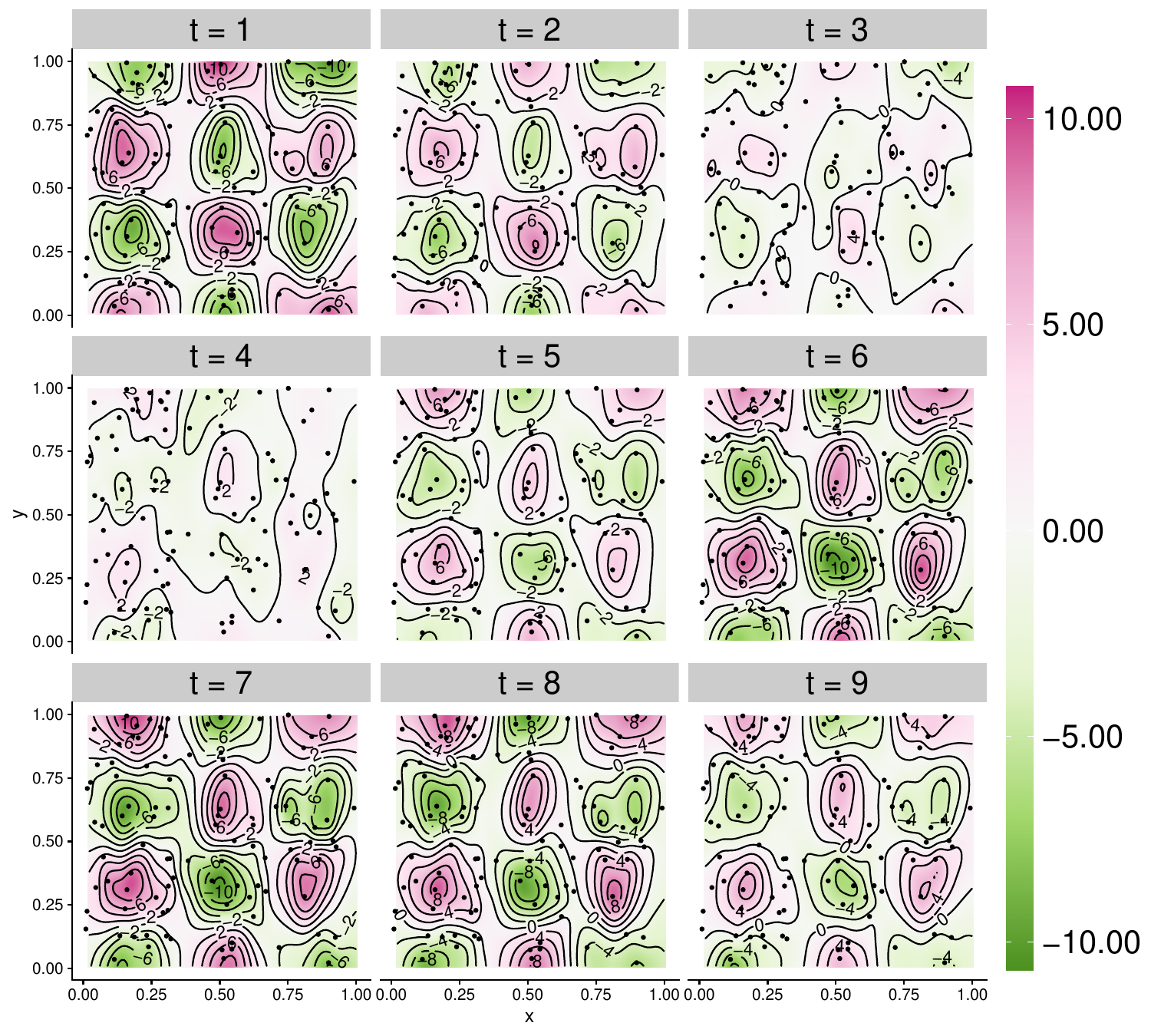}
			\caption{Pattern 2}\label{fig:true-plot-b}
		\end{subfigure}
		\caption{Spatiotemporal surfaces of patterned data used for experiments.}
	\end{figure}
	
	The true values for the spatiotemporal differential processes induced by the patterns are as follows. We list the spatial gradients and curvature followed by their temporal gradients and curvature. $\bpartial_\bs\mu_1(\bs,t) = 30\pi\left(\cos(3\pi s_x),-\sin(3\pi s_y)\cos\left(\frac{\pi t}{7}\right)\right)^{\T}$ and $\bpartial_\bs^2\mu_1(\bs,t) = -90\pi^2{\rm diag}\left\{\sin(3\pi s_x),\;\cos(3\pi s_y)\cos\left(\frac{\pi t}{7}\right)\right\}$. The true temporal derivatives are $\partial_t\mu_1(\bs,t)= -\frac{10\pi}{7}\cos(3\pi s_y)\sin\left(\frac{\pi t}{7}\right)$, $\partial_t\partial_{s_x}\mu_1(\bs,t)= 0$, $\partial_t\partial_{s_y}\mu_1(\bs,t) = \frac{30\pi^2}{7}\sin(3\pi s_y)\sin\left(\frac{\pi t}{7}\right)$, $\partial_t\partial^2_{s_x}\mu_1(\bs,t)= \partial_t\partial^2_{s_xs_y}\mu_1(\bs,t) = 0$ and  $\partial_t\partial^2_{s_y}\mu_1(\bs,t)=\frac{90\pi^3}{7}\cos(3\pi s_y)\sin\left(\frac{\pi t}{7}\right)$. The true temporal curvatures are $\partial_t^2\mu_1(\bs,t) = -\frac{10\pi^2}{49}\cos(3\pi s_y)\cos\left(\frac{\pi t}{7}\right)$, $\partial_t^2\partial_{s_x}\mu_1(\bs,t)=0$, $\partial_t^2\partial_{s_y}\mu_1(\bs,t)=\frac{30\pi^3}{49}\sin(3\pi s_y)\cos\left(\frac{\pi t}{7}\right)$, $\partial_t^2\partial^2_{s_x}\mu_1(\bs,t) = \partial_t^2\partial^2_{s_xs_y}\mu_1(\bs,t) = 0$, $\partial_t^2\partial^2_{s_y}\mu_1(\bs,t)=\frac{90\pi^4}{49}\cos(3\pi s_y)\cos\left(\frac{\pi t}{7}\right)$. Similarly, for pattern 2, $\bpartial_\bs\mu_2(\bs,t) = 30\pi\cos\left(\frac{\pi t}{7}\right)\bA_1(\bs)$, $\bpartial_\bs^2\mu_2(\bs,t) = -90\pi^2\cos\left(\frac{\pi t}{7}\right)\bA_2(\bs)$. The true temporal derivatives are $\partial_t\mu_2(\bs,t) = -\frac{10\pi}{7}\sin(3\pi s_x)\cos(3\pi s_y)\sin\left(\frac{\pi t}{7}\right)$, $\partial_t\bpartial_\bs\mu_2(\bs,t)= -\frac{30\pi^2}{7}\sin\left(\frac{\pi t}{7}\right)\bA_1(\bs)$, $\partial_t\bpartial_\bs^2\mu_2(\bs,t) = \frac{90\pi^3}{7}\sin\left(\frac{\pi t}{7}\right)\bA_2(\bs)$. The true temporal curvatures are as follows, $\partial_t^2\mu_2(\bs,t) = -\frac{10\pi^2}{49}\sin(3\pi s_x)\cos(3\pi s_y)\cos\left(\frac{\pi t}{7}\right)$, $\partial_t^2\bpartial_\bs\mu_2(\bs,t)=-\frac{30\pi^3}{49}\cos\left(\frac{\pi t}{7}\right)\bA_1(\bs)$ and $\partial_t^2\bpartial_\bs^2\mu_2(\bs,t) =\frac{90\pi^4}{49}\cos\left(\frac{\pi t}{7}\right)\bA_2(\bs)$, where the terms
	$\bA_1(s)=\left(\begin{smallmatrix}\cos(3\pi s_x)\cos(3\pi s_y)\\-\sin(3\pi s_x)\sin(3\pi s_y)\end{smallmatrix}\right)$ and $\bA_2(s)=\left(\begin{smallmatrix}\sin(3\pi s_x)\cos(3\pi s_y) & \cos(3\pi s_x)\sin(3\pi s_y)\\\cos(3\pi s_x)\sin(3\pi s_y) & \sin(3\pi s_x)\cos(3\pi s_y)\end{smallmatrix}\right)$. Setting up experiments, we vary $N_s=\{30, 50, 100\}$ and $N_t = \{3,6,9\}$ with 10 replications for each setting. These settings are chosen to record the effect of varying spatial and temporal samples on the quality of inference. 
	
	\paragraph{Bayesian model fitting}\label{sec:bayes-mod-fit}
	
	We fit the model in \cref{eq:hier-spt-model} with an intercept only to learn the functional spatiotemporal patterns in the synthetic response. The following hyper-parameters are used: $a_{\phi_s}=a_{\phi_t}=0.01$, $b_{\phi_s}=b_{\phi_t}=30$, $a_\sigma = a_\tau = 2$, $b_\sigma=1$, $b_\tau=0.1$, $\bmu_\bbeta = \0_p$, $\Sigma_\bbeta=10^6\bI_p$ and $a_\nu=0.01$, $b_\nu=10$. These choices produce weakly informative priors. Generally, a Uniform$(2,3)$ prior can be placed on $\nu$ and this features in our implementation. However, the scales for our data generating patterns are chosen using model fit criteria. {  Patterns 1 and 2 are favorable to a Mat\'ern with $\nu = \frac{5}{2}$ or a Gaussian kernel. 
		Figs.~S10~\&~S11 compare model fit for the differential processes for Pattern 2 at $t=3$, a time point of considerable change in $y(\bs,t)$. Tables~S5--S8 present numerical results.}
	
	A grid spanning the unit square is overlaid for every $t$. Posterior predictive inference on the spatiotemporal differential processes is performed at each grid location following developments in \Cref{sec:STDP}. Medians and 95\% highest posterior density (HPD) intervals \citep[][]{coda} are used to summarize inference. 
	For each replicate in the simulation setting, the quality of inference on local geometry of the space-time surface is evaluated using coverage probabilities (CP) and root mean squared errors (RMSE). 
	
	\paragraph{Spatiotemporal Surface Wombling} {  We perform wombling on spatiotemporal surfaces by specifying planar curves in advance that serve the triangulation process}. We aim to locate the curves that track rapid spatial change at each time point on the simulated surfaces. For instance, consider the space-time surfaces generated using Pattern 1. We begin with curves as they allow us to triangulate the wombling surface. Subsequently, we proceed to evaluate the posterior distribution of total or average wombling measure for the surface and assess their significance. In the absence of pre-specified curves they can be selected using \emph{level curves}: $C_{y_0^{(t)}} = \left\{(\bs,t):Y(\bs,t) = y_0^{(t)}\right\}$, for $t = 1,\ldots,N$. \Cref{fig:womb-surfaces} demonstrates this on surfaces generated from Pattern 1. Alternatively, working with \emph{annotated curves}, points selected by the investigator are used to generate a smooth curve, for example, a Be\'zier spline, specifying regions that possibly contain wombling boundaries within the space-time domain. We consider three level curves for our purposes: closed curves enclosing (A) troughs, (B) peaks and (C) an open curve; see \Cref{fig:womb-curve-1}.
	
	\begin{figure}[t]
		\centering
		\begin{subfigure}{.3\textwidth} 
			\centering
			\includegraphics[scale = 0.27]{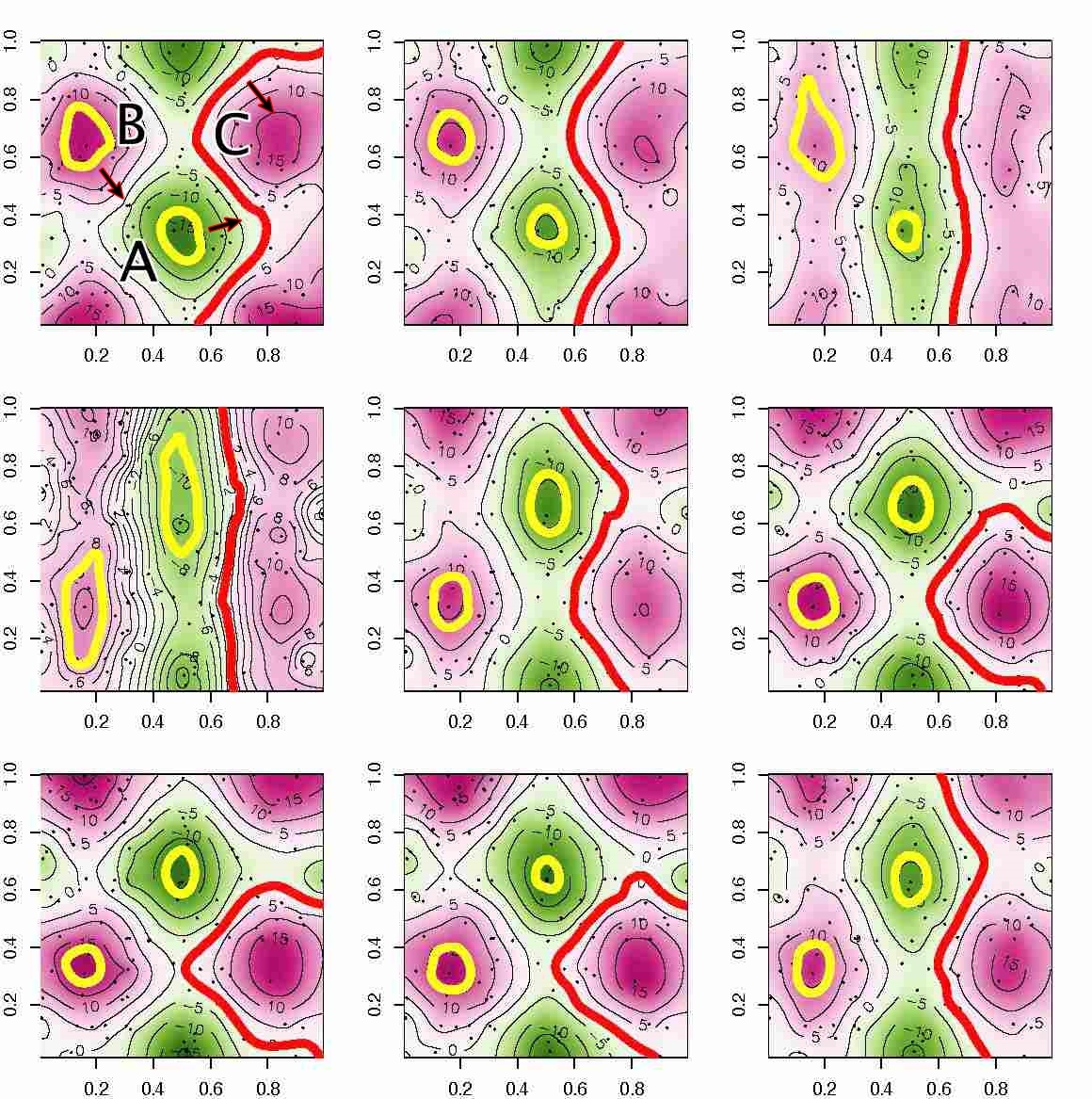}
			\caption{}\label{fig:womb-curve-1}
		\end{subfigure}%
		\begin{subfigure}{.3\textwidth}
			\centering
			\includegraphics[scale = 0.3]{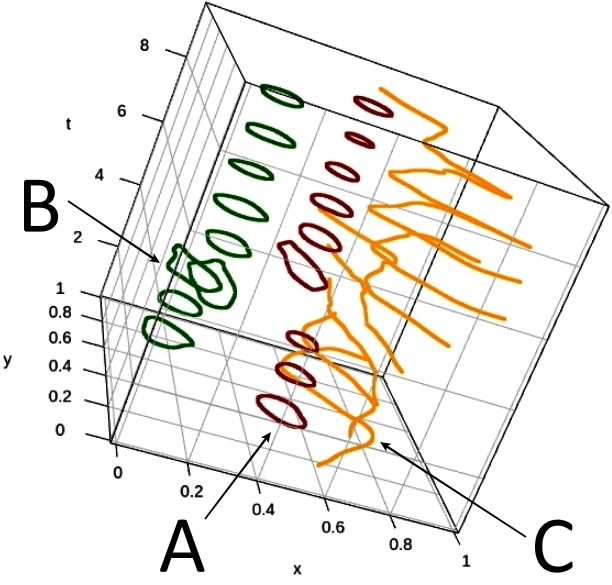}
			\caption{}\label{fig:womb-curve-2}
		\end{subfigure}%
		\begin{subfigure}{.3\textwidth} 
			\centering
			\includegraphics[scale = 0.3]{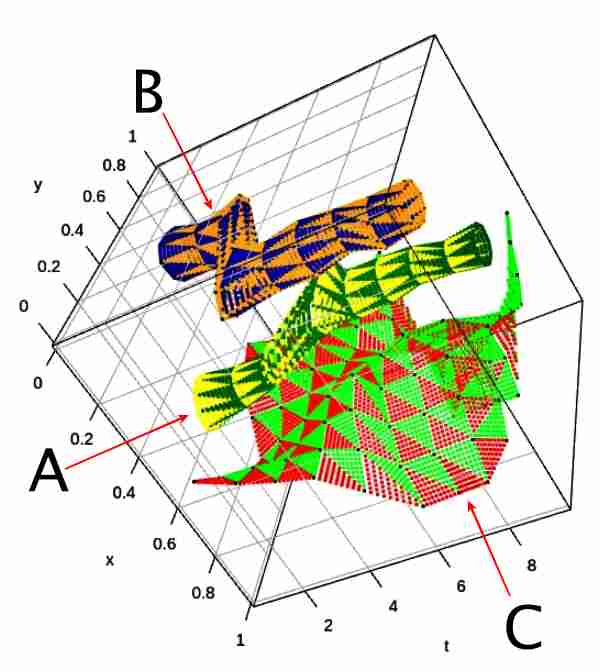}
			\caption{}\label{fig:womb-partition}
		\end{subfigure}%
		\caption{(a) Contours selected--``A" and ``B", marked in yellow, are closed curves and ``C", marked in red, is an open curve. (b) 3--D plot of the planar curves collected over time. (c) Triangulation of the \emph{wombling surfaces} associated with the curves using $n_\omega=n_\upsilon=10$.}\label{fig:womb-surfaces}
	\end{figure}
	
	Spatiotemporal surface wombling is pursued following the methods described in \Cref{sec:spt-womb}. We use a partition with $n_\omega = n_\upsilon = 10$ to triangulate the wombling surface. For each pair of triangular regions, we compute areas and outward pointing normals (see \Cref{sec:tri-surf}). The collection of all pairs represents the wombling surface, $\widetilde{\C}$. Algorithm~4 in the Supplement formulates %
	efficient computation for $\bGamma(\widetilde{\C})$ using two-dimensional quadrature (see \cref{eq:cross-cov,sec:var-womb}). Posterior samples of $\bGamma(\widetilde{\C})$ are obtained using \cref{eq:joint-dist-womb-measure}. For surfaces $A$, $B$ and $C$ the chosen partitions yield sufficiently small norms ($1.5\times 10^{-3}$, $1.4\times 10^{-3}$ and $2.2\times 10^{-3}$ respectively) thereby guaranteeing the accuracy of wombling measures (see Thm.~2 in the Supplement). The median (HPD interval) of posterior samples for $\bGamma(\widetilde{\C})$ serves as our point (interval) estimate. The design of our experiment allows us to compute, $\bGamma_{\rm true}(\C)=\iint_{\C}\L^*_{n_t,\bn_\bs}\mu_1(\bs,t)\;{\rm d}\A$. Coverage of HPD intervals and significance is assessed at triangular region, time-interval and overall levels.  
	\begin{table}[t]
		\centering
		\caption{True and estimated average wombling measures for surfaces A, B and C (see \Cref{fig:womb-surfaces}) accompanied by HPD intervals. Statistically significant estimates are marked in bold.}\label{tab:wm-fit}
		\resizebox{\linewidth}{!}{
			\begin{tabular}{l|@{\extracolsep{45pt}}*{6}{>{\bfseries}c}@{}}
				\hline\hline
				\multirow{2}{*}{$\overline{\bGamma}(\C^*)$} & \multicolumn{2}{c}{A} & \multicolumn{2}{c}{B} & \multicolumn{2}{c}{C} \\ 
				\cline{2-3}\cline{4-5}\cline{6-7}
				& \normalfont{Estimate} & \normalfont{True} &  \normalfont{Estimate} & \normalfont{True}  &  \normalfont{Estimate} & \normalfont{True}  \\ 
				\hline
				\multirow{2}{*}{$\bpartial_\bs$} & 2.23 & \multirow{2}{*}{2.25} & -2.26 & \multirow{2}{*}{-2.21} & 0.98 & \multirow{2}{*}{1.04} \\ 
				& (1.65, 2.82) &  & (-2.80, -1.73) &  & (0.48, 1.49) \\ 
				\multirow{2}{*}{$\bpartial_\bs^2$} & 40.58 & \multirow{2}{*}{41.09} & -36.66 & \multirow{2}{*}{-36.26} & \normalfont{ 
					-0.14} & \multirow{2}{*}{\normalfont{-1.43}} \\
				&  (35.19, 46.01) & & (-42.43, -30.77) &  & \normalfont{(-4.48, 4.28)} &  \\\hline
				\multirow{2}{*}{$\partial_t$} & 0.16 & \multirow{2}{*}{0.15} & 0.33 & \multirow{2}{*}{0.32} & -0.19 & \multirow{2}{*}{-0.19} \\
				& (0.03, 0.30) &  & (0.20, 0.47) &  & (-0.29, -0.08) &  \\ 
				\multirow{2}{*}{$\partial_t\bpartial_\bs$} & 1.45 & \multirow{2}{*}{1.43} & 0.59 & \multirow{2}{*}{0.60} & -0.59 & \multirow{2}{*}{-0.64} \\
				& (0.94, 1.94) &  & (0.11, 1.04) &  & (-1.03, -0.18) &  \\
				\multirow{2}{*}{$\partial_t\bpartial_\bs^2$} & \normalfont{ -7.92} & \multirow{2}{*}{\normalfont{-7.00}} & -13.56 & \multirow{2}{*}{-12.07} & 11.07 & \multirow{2}{*}{10.56} \\ 
				& \normalfont{(-16.73, 0.95)} &  & (-23.38, -3.88) &  & (3.63, 18.68) &  \\\hline
				\multirow{2}{*}{$\partial_t^2$} & 0.79 & \multirow{2}{*}{0.74} & \normalfont{ -0.29} & \multirow{2}{*}{\normalfont{-0.70}} & \normalfont{ 0.34} & \multirow{2}{*}{\normalfont{0.16}} \\ 
				& (0.27, 1.34) &  & \normalfont{(-0.85, 0.29)} &  & \normalfont{(-0.17, 0.85)} &  \\
				\multirow{2}{*}{$\partial_t^2\bpartial_\bs$} & \normalfont{ -3.90} & \multirow{2}{*}{\normalfont{-4.23}} & \normalfont{\normalfont{3.62}} & \multirow{2}{*}{\normalfont{4.53}} & \normalfont{-0.22} & \multirow{2}{*}{\normalfont{-0.82}} \\ 
				& \normalfont{(-9.12, 1.35)} &  & \normalfont{(-2.15, 9.28)} &  & \normalfont{(-4.81, 4.51)} &  \\
				\multirow{2}{*}{$\partial_t^2\bpartial_\bs^2$} & \normalfont{ -21.61} & \multirow{2}{*}{\normalfont{-23.44}} & \normalfont{ 3.60} & \multirow{2}{*}{\normalfont{25.75}} & \normalfont{-37.27} & \multirow{2}{*}{\normalfont{-6.72}} \\ 
				& \normalfont{(-76.30, 36.11)} &  & \normalfont{(-65.64, 72.73)} &  & \normalfont{(-96.01, 19.40)} &  \\
				\hline\hline
			\end{tabular}
		}
	\end{table}
	Surface $A$ ($B$) encloses evolving troughs (peaks) and local minimums (maximums) while surface $C$ delineates a separating region between troughs and peaks (see \Cref{fig:womb-curve-1})
	. For surfaces $A$ and $C$ a positive spatial directional gradient is expected along the normal, while $B$ is expected to house negative 
	gradients. They 
	yield significant wombling measures. 
	
	Surfaces $A$ and $B$ lie in formative regions (troughs and peaks) associated with evolving curvature, thereby expected to produce significant wombling measures associated with curvature, while surface $C$ lies in a comparatively flat region. 
	With regard to temporal gradients---surface $A$ encloses troughs which flatten by $t=4$, deepen over $t=6$ before flattening again. A trough flattening over time implies positive temporal gradients; surface $B$ shows a similar trend with peaks flattening and reforming; surface $C$ encloses regions where peaks mostly flatten yielding negative temporal gradients. For surfaces $A$ and $B$ there is an overall decrease in spatial curvature over time, while for surface $C$, specifically referring to $t=6,7$ and $8$, the spatial curvature increases over time. These interpretations align with findings presented in \Cref{tab:wm-fit}, which assesses the quality of estimation for spatiotemporal wombling measures. Analysis for the overall temporal curvature is similar. A more detailed analysis follows from time-intervals or triangular-region specific considerations (see Section~S15). Fig.~S14 shows 3-D plots displaying significance at a triangular-region level for the wombling surfaces we consider. The magnitude and algebraic sign of the wombling measure helps in discerning the nature and differentiating between boundaries. For example, the rate of curvature 
	decreases at a slower rate on surface $A$ compared to $B$. Similar conclusions follow from Tables~S9~and~S10. 
	
	\paragraph{Additional Experiments} The online supplement contains auxiliary experiments and results. Tables~S5--S8 present goodness of fit measures for assessing spatiotemporal differential processes using RMSE, standard deviations and CPs for 95\% HPD intervals addressing Patterns 1 \& 2. Tables~S5~\&~S6 use \emph{non-separable} kernels while S7~\&~S8 use \emph{separable} kernels. Increased spatial (temporal) locations result in decreased RMSE for spatial (temporal) derivatives. For instance, comparing results for $N_s \times N_t=50\times 6$ with $N_s \times N_t=100\times 3$, we find lower (higher) RMSEs for spatial (temporal) differential processes in the latter. Separable covariance kernels provide fast inference on rates of change and wombling, albeit their several shortcomings \citep[][]{stein2005space}. Figs.~S10~\&~S11 compare true and estimated differential processes for Patterns 2 using a Mat\'ern $(\nu=5/2)$ kernel at $t=3$. 
	{  We are able to learn the induced spatiotemporal differential processes under varying smoothness in process specification.}
	Figs.~S12~\&~S13 demonstrate inference for differential geometric quantities--Gaussian curvature, divergence and Laplacian, that capture change. 
	
	\section{Application: Neuroimaging}\label{sec:app}
	
	\begin{figure}[t]
		\centering
		\includegraphics[width = \linewidth, page = 1]{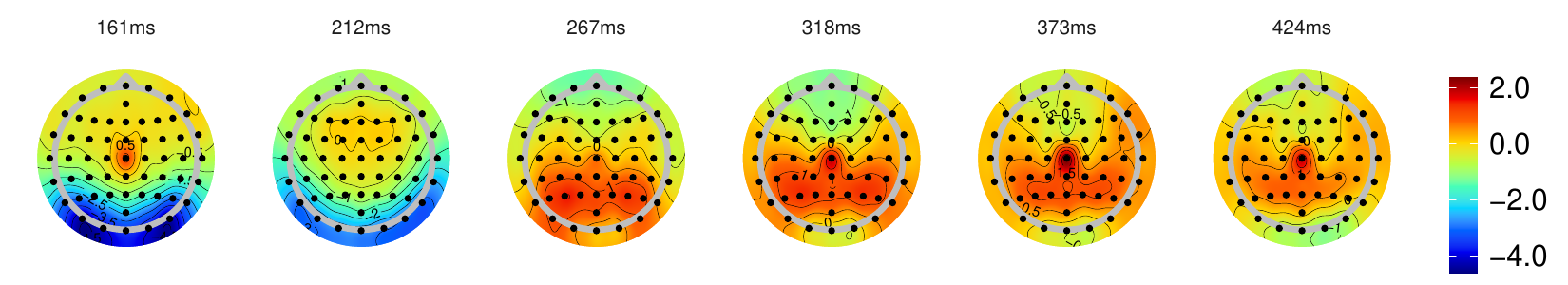}
		
		\vspace*{-.5cm}
		
		\includegraphics[width = \linewidth, page = 2]{plots/eeg_response.pdf}
		\caption{ERP (measured in \si{\micro\volt}) surfaces for alcoholic (top row) and control (bottom row) subjects.}\label{fig:eeg}
	\end{figure}
	
	\begin{table}[t]
		\caption{Posterior estimates accompanied by HPD intervals for $\Theta$ from the ERP analyses.}\label{tab:erp-model}
		\begin{minipage}{.5\linewidth}
			\resizebox{\linewidth}{!}{
				\centering
				\begin{tabular}{l|@{\extracolsep{80pt}}cc@{}}
					\hline\hline
					\multirow{2}{*}{Parameter} & \multicolumn{2}{c}{Estimate} \\
					\cline{2-3}
					& Alcoholic & Control\\
					\hline
					\multirow{2}{*}{$\sigma^2$} & 5.55 & 14.55 \\
					& (1.57, 13.22) & (5.90, 30.43) \\ 
					\multirow{2}{*}{$\tau^2$} & 0.10 & 0.07 \\
					& (0.08, 0.11) & (0.05, 0.08) \\ 
					\multirow{2}{*}{$\phi_s$} & 1.25 & 1.53 \\
					& (0.86, 1.63) & (1.27, 1.90) \\
					\hline
					\hline
				\end{tabular}
			}
		\end{minipage}%
		\begin{minipage}{.5\linewidth}
			\resizebox{\linewidth}{!}{
				\centering
				\begin{tabular}{l|@{\extracolsep{80pt}}cc@{}}
					\hline\hline
					\multirow{2}{*}{Parameter} & \multicolumn{2}{c}{Estimate} \\
					\cline{2-3}
					& Alcoholic & Control\\
					\hline
					\multirow{2}{*}{$\phi_t$} & 0.42 & 0.36 \\
					& (0.34, 0.55) & (0.30, 0.43)\\
					\multirow{2}{*}{$\beta_0$} & -0.23 & 0.82 \\
					& (-0.27, -0.20) & (0.80, 0.85) \\
					&  & \\
					&  & \\
					\hline
					\hline
				\end{tabular}
			}
		\end{minipage}
	\end{table}
	
	\begin{figure}[t]
		\centering
		\includegraphics[width = \linewidth]{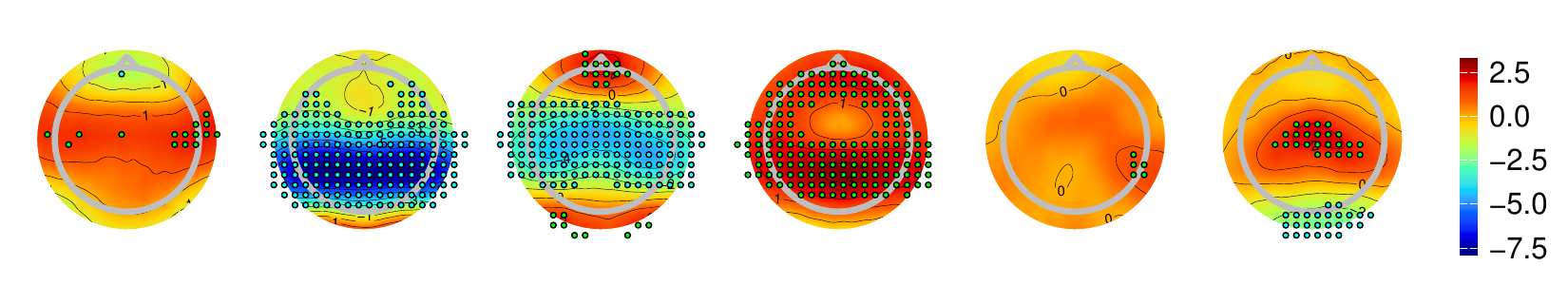}
		
		\vspace*{-.5cm}
		
		\includegraphics[width = \linewidth]{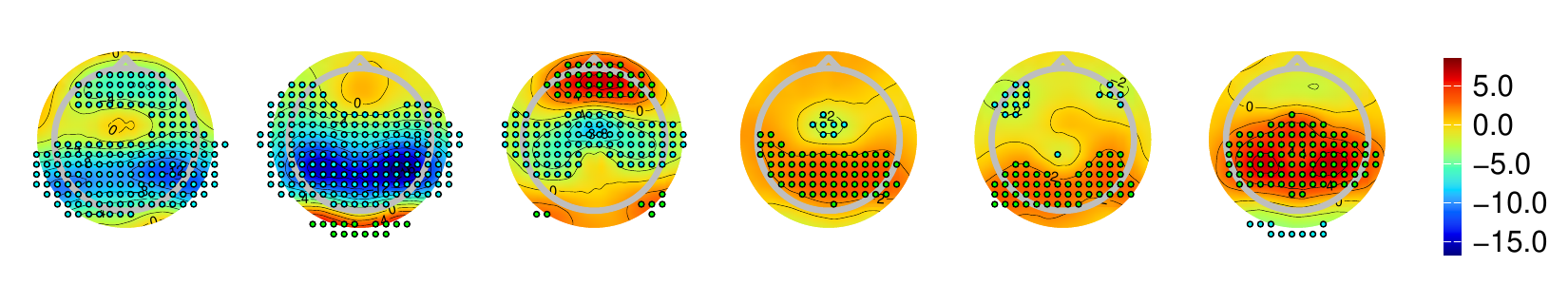}
		
		\vspace*{-.5cm}
		
		\includegraphics[width = \linewidth,]{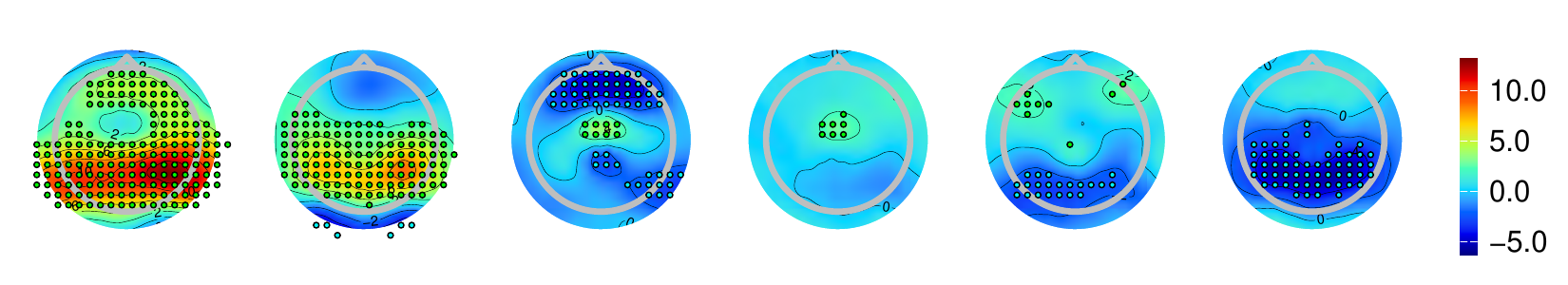}
		\caption{Comparing {\color{purple} surfaces generated from spatial point estimates of} $\partial_t\partial_{s_{y}}$ in ERPs for alcoholic (first row) versus control (second row) and their difference (third row). Significant locations are color-coded: positive/negative (\texttt{green}/\texttt{cyan}).}\label{fig:gradsyt}
	\end{figure}
	
	
	We study neural activity from EEG sessions designed to test genetic predisposition to alcoholism \citep[][]{misc_eeg_database_121}. Subjects are classified into two groups: alcoholic and control. A standard set of pictures 
	serve as stimulus generating neural activity recorded by the EEG cap through ERPs. \Cref{fig:eeg} shows interpolated spatiotemporal surfaces for average ERPs over six time points.
	{ Heavy alcohol users show reduced activity in the occipital and temporal regions \citep[see, e.g.,][]{lew2020altered}. These regions are demarcated by curves on the scalp (see Fig. S15). We are interested in comparing neural activity in these regions between the two groups. The entirety of these regions are not constantly active throughout the measurement period (see \Cref{fig:eeg}). 
		Therefore, a space-time surface displaying regions that exhibit significant changes in neural activity for the two groups is of scientific interest \citep[see, e.g.][]{lenartowicz2014electroencephalography}. Spatiotemporal wombling can answer these questions.}
	
	\emph{Projected locations} of 57 electrodes on the EEG cap are used as coordinates (marked with dots in \Cref{fig:eeg}) to model the ERP using \cref{eq:hier-spt-model}. We use a Mat\'ern kernel with fractal parameter, $\nu=5/2$ that ensures the existence and validity of all spatiotemporal processes within $\L^*Z(\bs,t)$. Weakly informative priors with hyper-parameter settings outlined in \Cref{sec:bi&c} are used to conduct Bayesian inference. The resulting posterior estimates are shown in \Cref{tab:erp-model}. A larger value of the variance ratio, $\frac{\sigma^2}{\sigma^2+\tau^2}$ indicates a larger proportion of the total variance being attributed to spatiotemporal change. This implies the presence of strong {\em spatiotemporal {  variation}} in the data. It is stronger in the control subjects compared to the alcoholic subjects, given $\frac{\widehat{\sigma^2}}{\widehat{\sigma^2}+\widehat{\tau^2}} = \frac{14.55}{14.55 + 0.07}=99.52\%$ and $\frac{\widehat{\sigma^2}}{\widehat{\sigma^2}+\widehat{\tau^2}} = \frac{5.55}{5.55 + 0.10} = 98.23\%$ of the variation being spatiotemporal respectively. 
	\Cref{fig:gradsyt} compares the {  surfaces generated from spatial point estimates of} mixed spatial-temporal gradient along the $y$-axis for the two groups and their difference. We see significant differences at times 1, 2, 5 and 6. { For a location $\bs$, $\partial_t\partial_{s_y}Z(\bs) < 0$ implies increased spatial change over time towards the occipital region. This is seen in the control group (second row, \Cref{fig:gradsyt}) for time points $t=1,2$. The third row tests the difference of ERPs between alcoholic and control groups. A significant positive value implies a higher and quicker rate of change in neural activation of the occipital lobe for the control subjects.} Evidently, spatiotemporal gradient analysis aids in discerning differences between groups. 
	
	\begin{figure}[t]
		\centering
		\includegraphics[width = \linewidth]{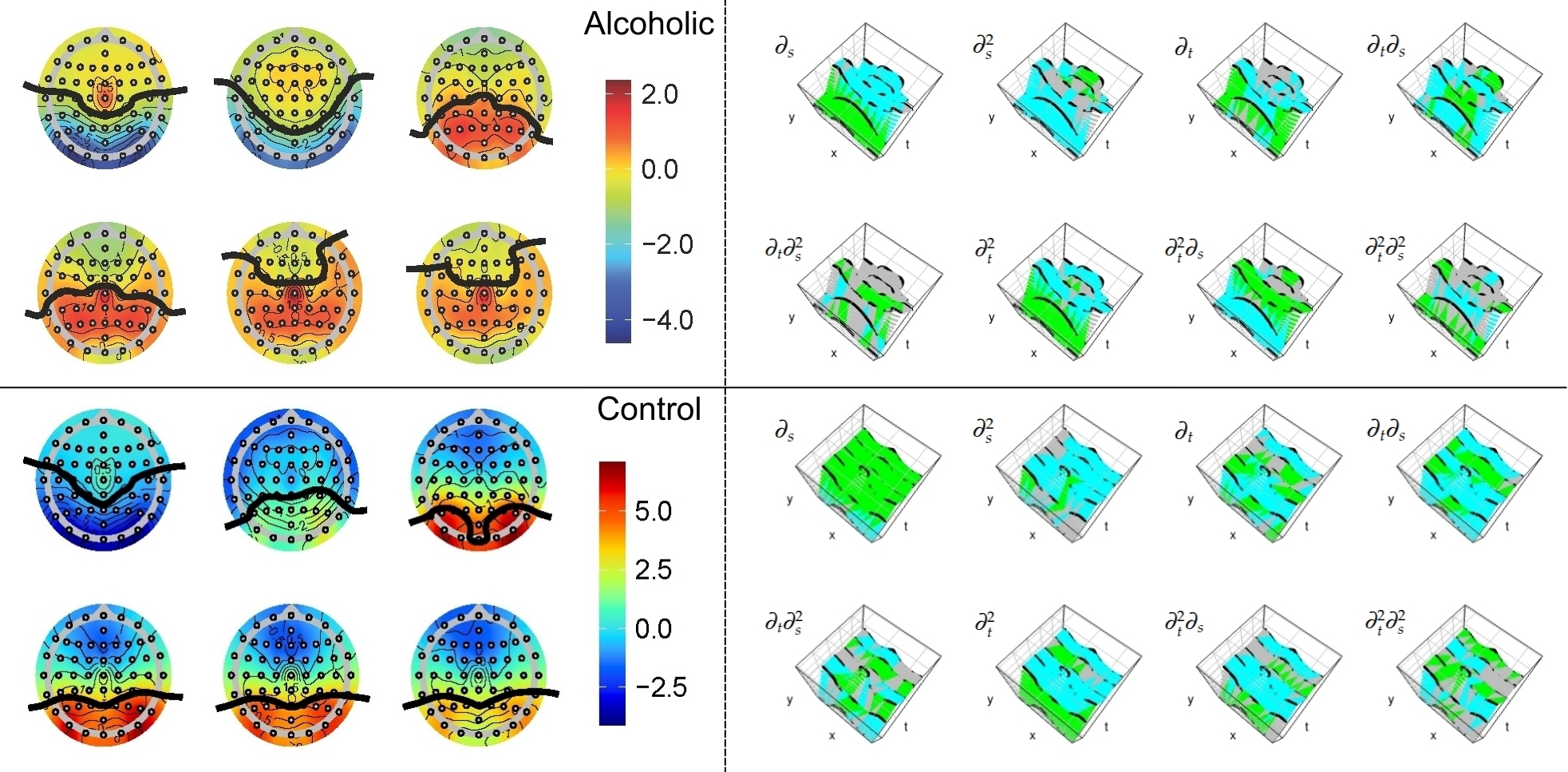}
		\caption{Curves and resulting wombling surfaces for ERPs from alcoholic and control subjects. Significant triangular regions are color-coded: positive/negative (\texttt{green}/\texttt{cyan}), not significant (\texttt{grey}).}\label{fig:wm-eeg}%
		
		\vspace*{-0.3cm}
		
		\centering
		\includegraphics[width=\linewidth, page = 1]{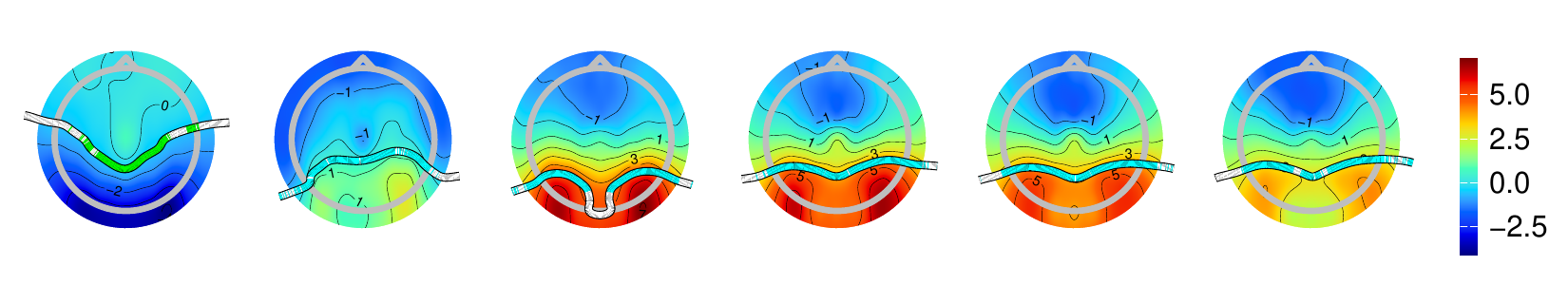}
		
		\vspace*{-0.8cm}
		
		\includegraphics[width=\linewidth, page = 2]{plots/SIG_SPWOMB_CON_EEG.pdf}
		
		\vspace*{-0.5cm}
		
		\caption{Significance assessments for (top row) spatial gradients ($\bpartial_\bs$) (bottom row) spatial curvature ($\bpartial_\bs^2$) over curves used for the control group performing purely spatial wombling. Significant regions are color coded, while white regions indicate no significance.}\label{fig:significance-sp-wombling}
	\end{figure}
	
	
	Spatiotemporal wombling is executed on the residual surfaces, $Z(\bs,t)$, {  which is not a requirement, but a better choice as it displays variation that is free from the effects explained by covariates, if present}. { To generate the surface we select the 0 contour. It acts as a separator between regions of high and low activity. The regions enclosed by the curves overlap the occipital and temporal regions.}
	{ The resulting wombling surface}, generated by rotating the topographic head-plots anti-clockwise, enclose spatiotemporally active locations. Triangulation (see \Cref{sec:tri-surf}) yields the wombling surface approximation. { The top and bottom right quadrant of \Cref{fig:wm-eeg} shows the triangulated approximation of the wombling surface enclosing active regions generated for each group}. It also shows the triangular region level significance for wombling measures. Comparing the top and bottom right quadrants of \Cref{fig:wm-eeg} reveals a larger proportion of triangular regions with statistically significant wombling measures for the control group than the alcoholic group for all eight measures, except perhaps for $\bpartial_\bs$. Comparing average wombling measures in \Cref{tab:wm-eeg}, the alcoholic subjects show significant wombling measures associated only with $\bpartial_\bs^2$ and $\partial_t\bpartial_\bs$. The control group reveal significant measures, with the exception of $\partial_t\bpartial_\bs^2$ and $\partial_t^2\bpartial_\bs^2$. {  This implies there were significant spatial and temporal changes in the neural activity, however the  overall temporal rates of spatial change, measured by wombling measures corresponding to $\partial_t\bpartial_\bs^2$ and $\partial_t^2\bpartial_\bs^2$ (see \Cref{tab:wm-eeg}), stayed relatively constant over time. This is a scientific insight of interest in EEG and alcoholism \citep[see, for e.g.,][]{durazzo_regional_2024}.} A topographic shift occurs after $t=2$ in both groups, 
	as seen from the time-interval level analysis in Tables~S12~\&~S13, in the Supplement. { The spatiotemporal wombling exercise provides concrete statistical evidence of higher neural activity in the occipital and temporal regions for the control group.}
	
	\begin{table}[t]
		\caption{Average wombling measures (WMs) for wombling surfaces selected for the two groups.}\label{tab:wm-eeg}
		\resizebox{\linewidth}{!}{
			\begin{tabular}{c|@{\extracolsep{2pt}}*{8}{>{\bfseries}c}@{}}
				\hline\hline
				\multirow{2}{*}{Group} &\multicolumn{8}{c}{$\overline{\bGamma}(\C)\times 10^2$}\\
				\cline{2-9}
				& $\bpartial_\bs$ & $\bpartial_\bs^2$ & $\partial_t$ & $\partial_t\bpartial_\bs$ & $\partial_t \bpartial_\bs^2$ & $\partial_t^2$ & $\partial_t^2\bpartial_\bs$ & $\partial_t^2\bpartial_\bs^2$ \\ 
				\hline
				\multirow{2}{*}{Alcoholic}& \normalfont{-7.90} & -105.94 & \normalfont{0.09} & -1.52 & \normalfont{1.68} & \normalfont{-0.01} & \normalfont{-0.17} & \normalfont{-1.36} \\ 
				& \normalfont{(-17.93, 2.64)} & (-170.77, -40.15) & \normalfont{(-0.35, 0.52)} & (-2.74, -0.32) & \normalfont{(-7.30, 11.07)} & \normalfont{(-0.15, 0.13)} & \normalfont{(-0.78, 0.42)} & \normalfont{(-4.81, 2.45)} \\
				\hline
				\multirow{2}{*}{Control} &  151.42  & -307.12 & -0.86 & -7.43 & \normalfont{-8.57} & 0.39 & -0.83 & {\normalfont -1.56}\\
				& (135.52, 167.30) & (-419.33, -192.83) & (-1.23, -0.47) & (-8.51, -6.33) & \normalfont{(-20.32, 2.23)} & (0.26, 0.54) & (-1.32, -0.34) & {\normalfont (-5.01, 2.13)} \\ 
				\hline\hline
			\end{tabular}
		}
		
		\medskip
		
		\centering
		\caption{Average spatial WMs for gradient, $\Gamma_1(C_t)$ and curvature, $\Gamma_2(C_t)$ for $C_t$ in \Cref{fig:significance-sp-wombling}.}\label{tab:wm-sp-con-eeg}
		\resizebox{\linewidth}{!}{
			\begin{tabular}{c|@{\extracolsep{2pt}}*{6}{c}@{}}
				\hline\hline
				Wombling & \multicolumn{6}{c}{Time}\\ 
				\cline{2-7}
				Measures & $t=1$ & $t=2$ & $t = 3$ & $t=4$ & $t=5$ & $t=6$\\
				\hline
				\multirow{2}{*}{$\Gamma_1(C_t)$} & {\bf 2.21}  & 0.05  & -1.81 & {\bf -3.74}  & {\bf -3.80}  & {\bf -3.41}\\ 
				& {\bf (0.41, 4.18)} & (-2.40, 2.49) &  (-4.82, 1.03) & {\bf (-7.31, -0.83)} & {\bf (-6.67, -0.67)} & {\bf (-5.80, -1.02)}\\
				\multirow{2}{*}{$\Gamma_2(C_t)$} & -7.86  & -23.30   & -23.11  & -1.05  & 1.58   & 4.16 \\
				& (-25.51, 9.79) & (-46.73, 2.43) & (-48.72, 1.27) & (-33.43, 32.20) & (-28.05, 30.80) & (-19.94, 27.40)\\
				\hline\hline
			\end{tabular}
		}
	\end{table}     
	We also compared spatiotemporal wombling with purely spatial wombling. Owing to higher neural activity we consider the control group for our comparison. Estimates and HPD intervals for average wombling measures for spatial gradient $(\Gamma_1(C_t))$ and curvature $(\Gamma_2(C_t))$ are shown in \Cref{tab:wm-sp-con-eeg}. Wombling is performed on the same set of planar curves as seen in the bottom left quadrant of \Cref{fig:wm-eeg}. Statistically significant curve segments are shown in \Cref{fig:significance-sp-wombling}. Purely spatial wombling does not yield a significant spatial curvature which is in contrast with the findings from its spatiotemporal counterpart. 
	The proportion of significant regions on the wombling surface for spatial gradients 
	is also larger compared to the same for curve segments. 
	The temporal 
	rates of change are unaccounted for in purely spatial wombling, for example, the increase in $\Gamma_1(C_t)$, from $t=1$ to $t=2$. Thus, in the presence of temporal variation in the data, purely spatial wombling fails to account for the extent of variability, thereby undermining the quality of inference. 
	
	\section{Discussion}\label{sec:diss}
	
	We have developed a fully model-based Bayesian framework for inference on differential processes that quantify rates of change in spatiotemporal data. These processes are used to formalize wombling surfaces---a landmark in the existing literature for wombling. The resulting difference surface boundaries and measures aid in discerning zones of rapid spatiotemporal change. 
	GPs prove crucial for calibrating inference for differential processes and wombling. { Section~S13 in the supplement presents detailed comparisons of wombling with related approaches such as clustering, image segmentation, hot-spot detection and level set modeling}. 
	Wombling surfaces are 2-dimensional compact Riemannian manifolds which can be triangulated in multiple non-unique ways (cf. Rad{\'o}'s theorem). As a result, they can also be constructed using space curves that are interpreted as temporal wombling boundaries which evolve in space.
	
	Future work can proceed along several directions. 
	Extending Bayesian wombling beyond Euclidean domains using GPs on manifolds \citep[see, e.g.,][]{li2023manifold,li_statistical_2026} is especially relevant for modern neuroimaging studies. 
	Spatiotemporal energetics is an emerging area related to streaming data from wearable devices that holds significant promise \citep[][]{alaimo2023bayesian}. Exploring differential activity can provide scientific insights into human movement and energy patterns. 
	Open-source statistical software development can expand on \cite{halder2025nimblewomble}. Finally, embedding wombling within the predictive stacking inferential framework \citep[see, for e.g.,][]{zhang_bayesian_2025} has several theoretical and computational advantages.
	
	\section{Acknowledgments and Funding}
	AH acknowledges funding from the National Science Foundation, Division of Mathematical Sciences (DMS 2515897), the National Institutes of Health, National Institute of Environmental Health Sciences (R56 ES036250), and institutional funds at the Dornsife School of Public Health and the Urban Health Collaborative at Drexel University. DL acknowledges funding from the National Science Foundation, Division of Mathematical Sciences (DMS 2515899), and the National Institutes of Health (P42 ES031007, P50 CA058223, P30 ES010126, R01 LM014407). SB acknowledges funding from the National Science Foundation, Division of Mathematical Sciences (DMS 2515898), and the National Institute of Health, National Institute of General Medical Sciences (R01 GM148761).
	
	\section{Conflict of Interest Disclosure Statement}
	The authors declare no conflicts of interest at this time.

	\title{\bf  Supplementary Materials for\\ {\em ``Bayesian Spatiotemporal Wombling"}}
	\date{}
	
	\maketitle
	
	\beginsupplement
	
	\section{Tensor reshaping: Vectorization}\label{sec::vec-tensor}
	
	We supplement the discussion in Section 2.1 of the manuscript with details concerning reshaping of tensor differential operators. We are interested in the vectorization of tensors--reshaping tensors to vectors. The order of listing for elements in the tensor operator $\bpartial^{\otimes r}$ operating on $f:\mathfrak{R}^{d+1}\to\mathfrak{R}$ needs specification. We begin with an explicit definition of $\bpartial^{\otimes r}$ in terms of its coordinates.
	\begin{equation}\label{eq:nablaor}
		\bpartial^{\otimes r}f(\bx,t) = \left(\begin{smallmatrix}
			\frac{\partial}{\partial x_1}\\
			\vdots\\
			\frac{\partial}{\partial x_d}\\
			\frac{\partial}{\partial t}
		\end{smallmatrix}\right)^{\otimes r}f(\bx,t),
	\end{equation}
	where $\bx = (x_1,\ldots, x_d)^{\T}$. We illustrate the operation using $d=2$ and $r = 1, 2, 3$ and 4. For $r=1$, $\bpartial^{\otimes 1}f(\bx,t) = \left(\begin{smallmatrix}
		\frac{\partial}{\partial x_1}\\
		\frac{\partial}{\partial x_2}\\
		\frac{\partial}{\partial t}
	\end{smallmatrix}\right)f(\bx,t) = \left(\begin{smallmatrix}
		\frac{\partial}{\partial x_1}f(\bx,t)\\
		\frac{\partial}{\partial x_2}f(\bx,t)\\
		\frac{\partial}{\partial t}f(\bx,t)
	\end{smallmatrix}\right) = \bpartial f(\bx,t)$. For $r = 2$,
	\begin{equation}\label{eq::vechess}
		\begin{split}
			\bpartial^{\otimes 2}f(\bx,t) &= \left(\begin{smallmatrix}
				\frac{\partial}{\partial x_1}\\
				\frac{\partial}{\partial x_2}\\
				\frac{\partial}{\partial t}
			\end{smallmatrix}\right)^{\otimes 2}f(\bx,t) = \left(\begin{smallmatrix}
				\frac{\partial}{\partial x_1}\\
				\frac{\partial}{\partial x_2}\\
				\frac{\partial}{\partial t}
			\end{smallmatrix}\right) \otimes \bpartial f(\bx,t)\\
			& = \left(
			\frac{\partial^2}{\partial x_1^2},
			\frac{\partial^2}{\partial x_1\partial x_2},
			\frac{\partial^2}{\partial x_1\partial t},
			\frac{\partial^2}{\partial x_2\partial x_1},
			\frac{\partial^2}{\partial x_2^2},
			\frac{\partial^2}{\partial x_2\partial t},
			\frac{\partial^2}{\partial t\partial x_1},
			\frac{\partial^2}{\partial t\partial x_2},
			\frac{\partial^2}{\partial t^2}
			\right)^{\T}f(\bx,t)\\
			&= \left(
			\frac{\partial^2}{\partial x_1^2}f(\bx,t),
			\frac{\partial^2}{\partial x_1\partial x_2}f(\bx,t),
			\frac{\partial^2}{\partial x_1\partial t}f(\bx,t),
			\frac{\partial^2}{\partial x_2\partial x_1}f(\bx,t),
			\frac{\partial^2}{\partial x_2^2}f(\bx,t),\right.\\&\myquad[4]\left.
			\frac{\partial^2}{\partial x_2\partial t}f(\bx,t),
			\frac{\partial^2}{\partial t\partial x_1}f(\bx,t),
			\frac{\partial^2}{\partial t\partial x_2}f(\bx,t),
			\frac{\partial^2}{\partial t^2}f(\bx,t)
			\right)^{\T},
		\end{split}
	\end{equation}
	This is the $9 \times 1$ vectorized Hessian. { For $r= 3$, $ \bpartial^{\otimes 3} f(\bx,t)=\left(\begin{smallmatrix}
			\frac{\partial}{\partial x_1}\\
			\frac{\partial}{\partial x_2}\\
			\frac{\partial}{\partial t}
		\end{smallmatrix}\right)^{\otimes 3}f(\bx,t) = \left(\begin{smallmatrix}
			\frac{\partial}{\partial x_1}\\
			\frac{\partial}{\partial x_2}\\
			\frac{\partial}{\partial t}
		\end{smallmatrix}\right) \otimes \bpartial^{\otimes 2}f(\bx,t)\\$ produces the vectorized 3-order tensor containing all pure and partial derivatives. Similar calculations follow for $r=4$.}

	\section{Details for calculating the covariance between derivative processes, \texorpdfstring{$\partial_t^{j_1}\bpartial_\bs^{r_1-j_1}Z(\bs,t)$}{Zr1} and \texorpdfstring{$\partial_t^{j_2}\bpartial_\bs^{r_2-j_2}Z(\bs,t)$}{Zr2}}
	
	For ease of notation, we write $\frac{\partial^r}{\partial_t^{j}\partial x_1^{i_1}\ldots\partial x_d^{i_d}}Z(\bs,t)$, $\sum\limits_{l=1}^{d}i_l=r-j$ as $\partial_t^{j}\bpartial_\bs^{r-j}Z(\bs,t)$, for $j = 0,\ldots, r$. We aim to derive the covariance between two spatiotemporal derivative processes of possibly different orders.
	
	\begin{proposition}\label{prop:1}
		Let $\partial_t^{j_1}\bpartial_\bs^{r_1-j_1}Z(\bs,t)$ and $\partial_t^{j_2}\bpartial_\bs^{r_2-j_2}Z(\bs,t)$ be two spatiotemporal derivative processes of order $r_1$ and $r_2$. Then,
		
		\begin{equation*}
			\Cov\left(\partial_t^{j_1}\bpartial_\bs^{r_1-j_1}Z(\bs,t), \partial_t^{j_2}\bpartial_\bs^{r_2-j_2}Z(\bs +\bDelta,t + \delta)\right)=(-1)^{r_1}\partial_t^{j_1+j_2}\bpartial_\bs^{r_1+r_2-(j_1+j_2)}\;K(\bDelta,\delta),
		\end{equation*}
		for every $j_1 = 0,\ldots, r_1$ and $j_2=0,\ldots,r_2$.
	\end{proposition}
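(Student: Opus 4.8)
The plan is to reduce the statement to the single analytic fact that, for a mean-square differentiable process, taking a mean-square derivative of the process and then forming a covariance coincides with forming the covariance first and then differentiating the kernel. First I would record the stationarity identity $\Cov(Z(s,t),Z(s',t'))=K(s-s',t-t')$ and, specializing to $s'=s+\Delta$, $t'=t+\delta$, observe that
\[
\Cov\bigl(Z(s,t),Z(s+\Delta,t+\delta)\bigr)=K(-\Delta,-\delta)=K(\Delta,\delta),
\]
where the last equality is the evenness of a real stationary covariance, which is immediate under the isotropy $K(\Delta,\delta)=\tildeK(\|\Delta\|,|\delta|)$ assumed in the paper. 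This evenness is the fact that will ultimately reconcile the sign on the right-hand side.

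Next I would establish the interchange of differentiation and covariance. Writing each $\partial_s$ and $\partial_t$ as a mean-square limit of the finite-difference operators $D_{h,u,v}$ of \Cref{subsec:calc-stdp}, I would argue by induction on $r_1+r_2$. The base case is the classical statement of mean-square calculus \citep[see, e.g.,][]{adler1981geometry, banerjee2003directional} that a process is mean-square differentiable along a coordinate precisely when the corresponding second mixed derivative of $K$ exists at the origin, in which case the derivative process has covariance equal to that mixed partial of $K$; the inductive step applies the same limit argument one coordinate at a time. The hypothesis that $\partial_t^{j_1+j_2}\partial_s^{r_1+r_2-j_1-j_2}K(\Delta,\delta)$ exists is exactly what licenses every interchange. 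The outcome is the operator identity
\[
\Cov\bigl(\partial_t^{j_1}\partial_s^{r_1-j_1}Z(s,t),\,\partial_t^{j_2}\partial_s^{r_2-j_2}Z(s',t')\bigr)=\partial_t^{j_1}\partial_s^{r_1-j_1}\big|_{(s,t)}\;\partial_{t'}^{j_2}\partial_{s'}^{r_2-j_2}\big|_{(s',t')}\;K(s-s',t-t').
\]

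The rest is sign bookkeeping via the chain rule. Each derivative with respect to the first argument $(s,t)$ acts on the $(\Delta,\delta)$ slot of $K$ with coefficient $+1$, while each of the $r_2$ derivatives with respect to the second argument $(s',t')$ contributes a factor $-1$, producing an overall $(-1)^{r_2}$. After setting $s'=s+\Delta$, $t'=t+\delta$, this leaves $(-1)^{r_2}$ times the order-$(r_1+r_2)$ partial of $K$ evaluated at $(-\Delta,-\delta)$. I would then transfer the evaluation point back to $(\Delta,\delta)$ using evenness: differentiating the reflection $K(-\Delta,-\delta)=K(\Delta,\delta)$ a total of $r_1+r_2$ times introduces $(-1)^{r_1+r_2}$, so the partial at $(-\Delta,-\delta)$ equals $(-1)^{r_1+r_2}$ times the same partial at $(\Delta,\delta)$. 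Multiplying the two sign factors yields $(-1)^{r_2}(-1)^{r_1+r_2}=(-1)^{r_1}$, and commutativity of the mixed partials (valid wherever the top-order derivative exists and is continuous) collapses the operators into $\partial_t^{j_1+j_2}\partial_s^{r_1+r_2-j_1-j_2}K(\Delta,\delta)$, which is the claim. I expect the main obstacle to be the rigorous justification of the differentiation–covariance interchange, that is, the passage from mean-square difference quotients to the stated kernel derivatives; once that is in place the sign accounting is entirely routine.
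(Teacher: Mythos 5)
Your proof is correct, but it takes a different route from the paper's, which factors the same analytic content differently. The paper gives one self-contained computation: it writes each derivative process as the mean-square limit of finite-difference processes, expands the spatiotemporal shift operators binomially, and works in the lag coordinates $(\Delta,\delta)$ throughout, so the factor $(-1)^{r_1}$ emerges mechanically when the sums are reassembled as reversed differences, i.e.\ $k^{-r_1}$ times step-$(-k)$ differences is rewritten as $(-1)^{r_1}(-k)^{-r_1}(D_{-k,e_l,0}-1)^{i_l}(D_{-k,0,1}-1)^{j_1}$ acting on $K$, the reversal arising because the first process's shifts enter the lag with a negative sign. You instead modularize into (i) a differentiation--covariance interchange for mean-square derivatives, deferred to the classical literature and an induction on $r_1+r_2$ via the same finite-difference limits (this interchange is exactly what the paper's binomial computation establishes by hand, so your ``main obstacle'' is the paper's entire proof), and (ii) elementary sign bookkeeping: $(-1)^{r_2}$ from the chain rule applied to $K(s-s',t-t')$, times $(-1)^{r_1+r_2}$ from reflecting the evaluation point $(-\Delta,-\delta)$ back to $(\Delta,\delta)$, collapsing to $(-1)^{r_1}$. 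Your decomposition makes the provenance of the sign more transparent, while the paper's monolithic argument avoids invoking a separate interchange lemma and simultaneously supplies its rigorous justification. One small correction: you attribute the evenness $K(-\Delta,-\delta)=K(\Delta,\delta)$ to isotropy, but it is automatic for any real stationary covariance (swap the arguments of $\Cov$ and apply stationarity), so your argument covers the general stationary case of the proposition, not merely the isotropic one; note the paper implicitly relies on this same joint-reflection symmetry when it writes the covariance of the shifted terms as $K(\Delta+i'he_{l'}-ike_l,\,\delta+(j'h-jk))$ rather than its reflection.
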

	
	{\allowdisplaybreaks
		\begin{proof}
			We express the covariance between the two processes as a limit of the covariance between its finite difference approximations as,
			
			\begin{equation*}
				\Cov\left(\partial_t^{j_1}\bpartial_\bs^{r_1-j_1}Z(\bs,t), \partial_t^{j_2}\bpartial_\bs^{r_2-j_2}Z(\bs +\bDelta,t + \delta)\right) = \lim_{k\to 0}\lim_{h\to 0}\Cov\left(Z^{(j_1,r_1)}_{k,\bI_d,1}(\bs,t),\;Z^{(j_2,r_2)}_{h,\bI_d,1}(\bs+\bDelta,t+\delta)\right),
			\end{equation*}
			where $Z^{(j_1,r_1)}_{k,\bI_d,1}(\bs,t)=k^{-r_1}\prod\limits_{l=1}^{d}(D_{k,\be_l,0}-1)^{i_l}(D_{k,0,1}-1)^{j_1}Z(\bs,t)$, $Z^{(j_2,r_2)}_{h,\bI_d,1}(\bs+\bDelta,t+\delta)=h^{-r_2}\prod\limits_{l'=1}^{d}(D_{h,\be_{l'},0}-1)^{l'}(D_{h,0,1}-1)^{j_2}Z(\bs+\bDelta,t+\delta)$, $\sum\limits_{l=1}^{d}i_l=r_1-j_1$ and $\sum\limits_{l'=1}^{d}i_{l'}=r_2-j_2$. Using these expressions and performing binomial expansion on the spatiotemporal shift operator we obtain
			
			\begin{align*}
				&\lim_{k\to 0}\lim_{h\to 0}\Cov\left(Z^{(j_1,r_1)}_{k,\bI_d,1}(\bs,t),\;Z^{(j_2,r_2)}_{h,\bI_d,1}(\bs+\bDelta,t+\delta)\right) =\\
				&\lim_{k\to 0}\lim_{h\to 0}\;k^{-r_1}\;h^{-r_2}\;\Cov\left(\;\prod\limits_{l=1}^{d}\;\sum\limits_{i=1}^{i_l}\sum\limits_{j=1}^{j_1}{\binom{i_l}{i}}\;{\binom{j_1}{j}}\;(-1)^{(i_l+j_1)-(i+j)}\;Z(\bs+ik\be_l,t+jk),\right.\\&\myquad[8]\left.\prod\limits_{l'=1}^{d}\;\sum\limits_{i'=1}^{i_{l'}}\sum\limits_{j'=1}^{j_2}{\binom{i_{l'}}{i'}}\;{\binom{j_2}{j'}}\;(-1)^{(i_{l'}+j_2)-(i'+j')}\;Z(\bs+\bDelta+i'h\be_{l'},t+\delta+j'h)\right),\\
				&=(-1)^{r_1}\partial_t^{j_1+j_2}\bpartial_\bs^{r_1+r_2-(j_1+j_2)}\;K(\bDelta,\delta),
			\end{align*}
			for every $j_1 = 0,\ldots, r_1$ and $j_2 = 0,\ldots, r_2$. { The last equality is obtained after some algebra.}
		\end{proof}
	}
	The algebraic sign depends on the order of the first derivative process. \Cref{sec:cross-cov} provides the cross-covariance matrix using the above result for the process, $\left(\begin{smallmatrix}
		Z(\bs,t)\\ \L^*Z(\bs,t)
	\end{smallmatrix}\right)$. Subsequently, \Cref{sec:covfns} provides the required expressions for our kernel choices.
	
	\section{Validity of the process \texorpdfstring{$\L_r$}{Lr}, its consequences and distributions for the divergence and Laplacian}\label{sec:valid}
	
	Let $Z(\bs,t)$ be a zero-centered process with $\Cov(Z(\bs,t),Z(\bs',t'))=K(\bDelta,\delta)$, $\bDelta = \bs-\bs'$ and $\delta=t-t'$. Consider the process $(Z(\bs,t),\L_1Z(\bs,t))^{\T}=(Z(\bs,t),\bpartial_\bs Z(\bs,t)^{\T}, \partial_t Z(\bs,t))^{\T}$. Define the lower triangular matrix $\G_{h}=\left(\begin{smallmatrix}
		1 &  & \\
		-\frac{1}{h}\1_d & \frac{1}{h}\bI_d & \\
		-\frac{1}{h} &  &\frac{1}{h}
	\end{smallmatrix}\right)$ and the $(d+2)$-vector $\bW_{h}=\left(\begin{smallmatrix}
		1\\D_{h,\be_1,0}\\\vdots\\D_{h,\be_d,0}\\D_{h,0,1}
	\end{smallmatrix}\right)$. The associated finite difference process is $\bU^{(1)}_{h}(\bs,t)=\G_{h} \bW_{h}Z(\bs,t)$. The process $\bU^{(1)}_{h}(\bs,t)$ is valid. It is zero-centered, since, $\E \bU^{(1)}_{h}(\bs,t)=\G_{h} \bW_{h}\E Z(\bs,t)=\0_{1+\binom{d+1}{1}}$ and  $\Cov(\bU^{(1)}_{h}(\bs,t),\bU^{(1)}_{h}(\bs',t'))=\G_{h} \bW_{h}\bK(\bDelta,\delta)\left(\G_{h} \bW_{h}\right)^{\T}$ for $h\ne0$. The next lemma generalizes this for any order $r>0$.
	
	\begin{lemma}\label{lemma:1}
		For $h\ne0$ and any $r>0$, let $\widetilde{\G}_{h}^{\otimes r}$ be a matrix containing finite difference coefficients and $\widetilde{\bW}_{h}^{\otimes r}$ be a vector of spatiotemporal shift operators corresponding to unique spatiotemporal derivative processes up to order $r$. The finite difference process, $\bU^{(r)}_{h}(\bs,t)=\widetilde{\G}_{h}^{\otimes r} \widetilde{\bW}_{h}^{\otimes r}Z(\bs,t)$ corresponding to $(Z(\bs,t), \L_rZ(\bs,t)^{\T})^{\T}$ is valid.
	\end{lemma}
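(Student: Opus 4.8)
The plan is to proceed by induction on $r$, exploiting the recursive tensor-product structure of the differential operators together with the single elementary fact that any finite linear combination of translates of a valid zero-mean process is again a valid zero-mean process. The base case $r=1$ is exactly the construction $U^{(1)}_{h}(s,t)=G_{h}W_{h}Z(s,t)$ already verified in the preceding paragraph, so it remains to pass from order $r-1$ to order $r$.

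For the inductive step, I would build $\widetilde{W}^{\otimes r}_{h}$ and $\widetilde{G}^{\otimes r}_{h}$ from their order-$(r-1)$ counterparts using the same Kronecker recursion that defines $\nabla^{\otimes r}=\nabla\otimes\nabla^{\otimes(r-1)}$ in \Cref{subsec:calc-stdp}: namely $W_{h}\otimes\widetilde{W}^{\otimes(r-1)}_{h}$ for the shift operators and the matching Kronecker product of finite-difference coefficients for $\widetilde{G}^{\otimes r}_{h}$, followed by the elimination and permutation matrices of \Cref{sec::elim-mat} that discard redundant (symmetric) partials and retain only the unique derivative processes up to order $r$. The crucial structural observation is that each entry of $\widetilde{W}^{\otimes r}_{h}Z(s,t)$ is a composition of shift operators applied to $Z$, i.e.\ simply $Z$ evaluated at a lattice-shifted coordinate of the form $Z(s+\text{shift},t+\text{shift})$; multiplying by the matrix $\widetilde{G}^{\otimes r}_{h}$, which is constant for fixed $h\ne0$, therefore renders every component of $U^{(r)}_{h}(s,t)$ a finite linear combination of translates of $Z$.

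Validity then follows in two short steps. First, zero-centering is immediate by linearity: $\E\,U^{(r)}_{h}(s,t)=\widetilde{G}^{\otimes r}_{h}\widetilde{W}^{\otimes r}_{h}\,\E\,Z(s,t)=0$, since $\E\,Z\equiv 0$. Second, for any finite collection of space-time coordinates the joint covariance of the stacked vectors $U^{(r)}_{h}$ is a quadratic form $\widetilde{G}^{\otimes r}_{h}\widetilde{W}^{\otimes r}_{h}\,\Sigma\,(\widetilde{G}^{\otimes r}_{h}\widetilde{W}^{\otimes r}_{h})^{\T}$, where $\Sigma$ is the covariance matrix of the finite collection of translates of $Z$ that appear in the $\widetilde{W}^{\otimes r}_{h}Z$ terms. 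Since $K$ is a valid covariance, $\Sigma$ is positive semidefinite, and a congruence transformation preserves positive semidefiniteness, so the induced covariance is legitimate and $U^{(r)}_{h}(s,t)$ is a valid process.

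The main obstacle is not the validity argument itself---which is essentially automatic once the linear-combination structure is recognized---but the combinatorial bookkeeping required to specify $\widetilde{W}^{\otimes r}_{h}$ and $\widetilde{G}^{\otimes r}_{h}$ precisely: one must verify that the Kronecker recursion composed with the elimination and permutation matrices enumerates exactly the unique shift compositions and pairs each with the correct finite-difference coefficient, so that the limit $h\to 0$ later recovers $\L_{r}Z(s,t)$ as anticipated. I would therefore spend most of the effort confirming that this reduced construction is well-defined and consistent across successive orders, while the probabilistic content reduces to the elementary closure of Gaussian processes under finite linear maps.
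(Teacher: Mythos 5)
Your induction scaffolding matches the paper's, but your proposal misses the actual content of the lemma's proof: the non-singularity of $\widetilde{G}_h^{\otimes r}$. You declare validity ``essentially automatic'' because any finite linear map of translates of $Z$ is zero-mean with covariance $\widetilde{G}_h^{\otimes r}\widetilde{W}_h^{\otimes r}\,\Sigma\,(\widetilde{G}_h^{\otimes r}\widetilde{W}_h^{\otimes r})^{\T}$, and a congruence-type transformation preserves positive semidefiniteness. That is true but too weak. In this paper a valid process requires a strictly positive-definite covariance (Section 2.2 demands $\Var\{\sum_i a_i Z(s_i,t_i)\}>0$ for coefficients not all zero), and the downstream machinery inverts these covariance matrices when conditioning (e.g.\ in eq.~(4) and the predictive distributions of Section 4). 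A transformation $A\Sigma A^{\T}$ with a rank-deficient $A$ yields only a degenerate, positive-semidefinite covariance---some components of $U_h^{(r)}$ would be exact linear combinations of others---which is precisely the failure mode the lemma must exclude after the elimination matrices have pruned the redundant mixed partials. Your proposal never rules this out; indeed you relegate the specification of $\widetilde{G}_h^{\otimes r}$ to ``combinatorial bookkeeping'' without proving it has full rank.

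The paper's proof is devoted entirely to this point: it shows by induction that $\widetilde{G}_h^{\otimes r}$ is lower triangular---because the $k$th-order finite-difference processes are constructed from the $(k-1)$th-order ones in $\widetilde{W}_h^{\otimes(k-1)}Z(s,t)$, so no higher-order row involves a later column---and computes the determinant explicitly, ${\rm det}\,\widetilde{G}_h^{\otimes k}={\rm det}\,\widetilde{G}_h^{\otimes(k-1)}\,h^{-k\binom{d+k}{k}}\ne 0$ for $h\ne 0$ (with the count $\binom{d+k}{k}$ coming from the number of nonnegative integer solutions of $\sum_{j=1}^{d+1}i_j=k$). Hence $U_h^{(r)}(s,t)$ is a \emph{non-singular} linear transformation of the distinct shifted values $\widetilde{W}_h^{\otimes r}Z(s,t)$, which is what makes the finite-difference process valid in the paper's sense and keeps the limiting cross-covariance in Theorem~1 non-degenerate. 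To repair your argument you would need to add exactly this step: verify that your Kronecker recursion composed with the elimination and permutation matrices produces a square, lower-triangular coefficient matrix whose diagonal entries are nonzero powers of $h^{-1}$, so that the determinant recursion above holds. Without it, the probabilistic closing move you invoke establishes a weaker statement than the lemma asserts.
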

	\begin{proof}
		The process is valid if for every $r>0$, $\widetilde{\G}_{h}^{\otimes r}$ is non-singular. We rely on an inductive argument. For $r=1$, set $\widetilde{\G}_{h}^{\otimes 1}=\G_h$ and $\widetilde{\bW}_{h}^{\otimes 1}=\bW_h$. Since ${\rm det}\;\G_{h}=h^{-\left\{0\cdot 1+1\cdot\binom{d+1}{1}\right\}}\ne0$, this results in a non-singular linear transformation of $\bW_{h}Z(\bs,t)$ for every $h\ne 0$. For $r=2$, $(Z(\bs,t), \L_2 Z(\bs,t)^{\T})^{\T}=(Z(\bs,t),\bpartial_\bs Z(\bs,t)^{\T}, \partial_t Z(\bs,t), \Partials^2Z(\bs,t)^{\T}, \partial_t\bpartial_\bs Z(\bs,t)^{\T}, \partial_t^2Z(\bs,t))^{\T}$. The associated finite difference process is $\bU^{(2)}_{h}(\bs,t)=\widetilde{\G}_{h}^{\otimes 2} \widetilde{\bW}_{h}^{\otimes 2}Z(\bs,t)$, where $\widetilde{\bW}_{h}^{\otimes 2}$ contains only unique elements from $\bW_{h}^{\otimes 2}$ and $\widetilde{\G}_{h}^{\otimes 2}$ is constructed from $\G_{h}^{\otimes 2}$ containing finite differences coefficients corresponding to the elements in $\widetilde{\bW}_{h}^{\otimes 2}$. The vector $\widetilde{\bW}_{h}^{\otimes 2}$ is comprised of $\bW_h$ and $\left\{D_{h,0,1}^{i_{d+1}}\prod\limits_{j=1}^{d}D_{h,\be_j,0}^{i_j}\right\}$, $\sum\limits_{j=1}^{d+1}i_j=2$, $i_j\geq0$. The number of integer solutions to the equation $\sum\limits_{j=1}^{d+1}i_j=2$, $i_j\geq0$ is $\binom{d+2}{2}$. The matrix $G_h^{\otimes 2}$ (and $\widetilde{\G_h}^{\otimes 2}$) is lower triangular since $\G_h$ is lower triangular. The second order finite difference processes in $\widetilde{\bW}_h^{\otimes 2}Z(\bs,t)$ are constructed from first order finite difference processes in $\bW_hZ(\bs,t)$. Also, $\widetilde{\G}_h^{\otimes 2}$ is non-singular since ${\rm det}\;\widetilde{\G_h}^{\otimes 2}=h^{-\left\{0\cdot1+1\cdot\binom{d+1}{1}+2\cdot\binom{d+2}{2}\right\}}\ne 0$. Hence, $\bU_{h}^{(2)}(\bs,t)$ is a non-singular linear transformation of $\widetilde{\bW_{h}}^{\otimes 2} Z(\bs,t)$. For example, if $d=2$, then 
		
		\begin{equation*}
			\begin{split}
				\bW_{h}^{\otimes 2}&=\left(\bW_h^{\T},\; D_{h,\be_1,0},\; D^2_{h,\be_1,0},\; D_{h,\be_1,0}D_{h,\be_2,0},\; D_{h,\be_1,0}D_{h,0,1},\; D_{h,\be_2,0},\; D_{h,\be_2,0}D_{h,\be_1,0},\right.\\& \myquad[6]\left.D_{h,\be_2,0}^2,\; D_{h,0,1}D_{h,\be_2,0},\; D_{h,0,1},\; D_{h,0,1}D_{h,\be_1,0},\; D_{h,0,1}D_{h,\be_2,0},\; D_{h,0,1}^2\right)^{\T},
			\end{split}
		\end{equation*}
		and $\widetilde{\bW_{h}}^{\otimes 2}=\left(
		\bW_h^{\T}, D^2_{h,\be_1,0}, D_{h,\be_1,0}D_{h,\be_2,0}, D_{h,\be_1,0}D_{h,0,1}, D_{h,\be_2,0}^2, D_{h,0,1}D_{h,\be_2,0}, D_{h,0,1}^2
		\right)^{\T}$
		contains only the unique spatiotemporal shift operators. Next, $\G_h^{\otimes 2}=\left(\begin{smallmatrix}
			\G_h & & & \\
			-\frac{1}{h}\G_h &  \frac{1}{h}\G_h &  & \\
			-\frac{1}{h}\G_h &  & \frac{1}{h}\G_h & \\
			-\frac{1}{h}\G_h &  &  & \frac{1}{h}\G_h\\
		\end{smallmatrix}\right)$. Using column operations followed by row-column elimination matrices on $\G_h^{\otimes 2}$ produces
		
		\begin{equation*}
			\widetilde{\G_h}^{\otimes 2}=\left(\begin{smallmatrix}
				\G_h & &  & \\
				\G_{21} & \frac{1}{h^2}\bI_3 &  & \\
				\G_{31} &  & \frac{1}{h^2}\bI_2 & \\
				\G_{41} & & & \frac{1}{h^2}\\
			\end{smallmatrix}\right), \quad \G_{21}=\left(\begin{smallmatrix}
				\frac{1}{h^2} & -\frac{2}{h^2} & 0 & 0\\
				\frac{1}{h^2} & -\frac{1}{h^2} & -\frac{1}{h^2} & 0\\
				\frac{1}{h^2} & -\frac{1}{h^2} & 0 & -\frac{1}{h^2}\\
			\end{smallmatrix}\right), \quad \G_{31} = \left(\begin{smallmatrix}
				\frac{1}{h^2} & 0 & -\frac{2}{h^2} & 0\\
				\frac{1}{h^2} & 0 & -\frac{1}{h^2} & -\frac{1}{h^2}\\
			\end{smallmatrix}\right) \mbox{ and }  \G_{41} = \left(\begin{smallmatrix}
				\frac{1}{h^2}\\ 0\\ 0\\ -\frac{2}{h^2}
			\end{smallmatrix}\right)^{\T},
		\end{equation*}
		which is lower triangular. The row and column elimination matrices are $\bE_{r2}=\bI_3\oplus \bO_1 \oplus \bI_2 \oplus \bO_2 \oplus \bI_1$ and $\bE_{c2}=\bE_{r2}^{\T}$ respectively. For $r=k-1$, let $\bU^{(k-1)}_{h}(\bs,t)=\widetilde{\G}_{h}^{\otimes k-1} \widetilde{\bW}_{h}^{\otimes k-1}Z(\bs,t)$ be the finite difference process associated with $\left(Z(\bs,t), \L_{k-1}Z(\bs,t)^{\T}\right)^{\T}$, where $\widetilde{\bW}_{h}^{\otimes k-1}$ is comprised of shift operators required for finite difference processes up to order $k-1$ and $\widetilde{\G}_{h}^{\otimes k-1}$ is a lower-triangular non-singular matrix of finite difference coefficients of order $k-1$ with, ${\rm det}\;\widetilde{\G_h}^{\otimes (k-1)}=h^{-\sum\limits_{i=0}^{k-1}i\binom{d+i}{i}}$. Then, for $r=k$, the process $\left(Z(\bs,t), \L_{k}Z(\bs,t)^{\T}\right)^{\T}$ has the associated finite difference process $\bU^{(k)}_{h}(\bs,t)=\widetilde{\G}_{h}^{\otimes k} \widetilde{\bW}_{h}^{\otimes k}Z(\bs,t)$, where $\widetilde{\bW}_{h}^{\otimes k}$ is composed of $\widetilde{\bW}_{h}^{\otimes k-1}$ and $\left\{D_{h,0,1}^{i_{d+1}}\prod\limits_{j=1}^{d}D_{h,\be_j,0}^{i_j}\right\}$, $\sum\limits_{j=1}^{d+1}i_j=k$, $i_j\geq0$. The number of integer solutions to this equation is $\binom{d+k}{k}$. Observe that $\G_{h}^{\otimes k}$ (and $\widetilde{\G}_{h}^{\otimes k}$) is lower triangular since  $\G_{h}^{\otimes k-1}$ (and $\widetilde{\G}_{h}^{\otimes k-1}$) is lower triangular and the $k$-order finite difference processes are constructed using the $(k-1)$-order processes in $\widetilde{\bW}_{h}^{\otimes k-1}Z(\bs,t)$. The matrix $\widetilde{\G}_{h}^{\otimes k}$ is non-singular since, ${\rm det}\;\widetilde{\G_h}^{\otimes k}= {\rm det}\;\widetilde{\G_h}^{\otimes k-1}\;h^{-k\binom{d+k}{k}}$ which is non-zero for $h\ne0$ implying that $\bU^{(k)}_{h}(\bs,t)$ is a non-singular transformation of $\widetilde{\bW}_{h}^{\otimes k}Z(\bs,t)$. 
	\end{proof}
	
	The process $(Z(\bs,t), \L_rZ(\bs,t)^{\T})^{\T}$ is valid if the associated mean vector and cross-covariance matrix are well-defined. The next theorem elucidates the necessary conditions. 
	\begin{theorem}\label{thm:1}
		For $r>0$, if $\partial_t^{2j}\bpartial_\bs^{2(r-j)}K(\0_d,0)$ exists for every $j=0,\ldots,r$, then $(Z(\bs,t), \L_rZ(\bs,t)^{\T})^{\T}$ is a valid zero-centered process. 
	\end{theorem}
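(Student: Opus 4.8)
The plan is to unpack ``valid zero-centered process'' into two requirements: (i) the mean vector of $(Z(s,t),\L_rZ(s,t)^{\T})^{\T}$ is identically $0$, and (ii) its cross-covariance matrix is finite and positive semi-definite at every pair of space-time coordinates. Requirement (i) is immediate once each component is realized as a mean-square limit, since mean-square convergence preserves expectations and $\E Z(s,t)=0$. The substance lies entirely in (ii), and the natural vehicle is the finite-difference process $U^{(r)}_h(s,t)=\widetilde{G}^{\otimes r}_h\widetilde{W}^{\otimes r}_hZ(s,t)$ furnished by \Cref{lemma:1}, which is a genuine zero-mean process for every $h\ne0$ with covariance $\widetilde{G}^{\otimes r}_h\widetilde{W}^{\otimes r}_hK(\Delta,\delta)(\widetilde{G}^{\otimes r}_h\widetilde{W}^{\otimes r}_h)^{\T}$. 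I would show that $U^{(r)}_h\to (Z,\L_rZ^{\T})^{\T}$ in mean square as $h\to0$ and then transfer validity across the limit.

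First I would treat the top-order components. Each entry of $\nabla^{\otimes r}Z$ is some $\partial_t^{j}\partial_s^{r-j}Z$, whose finite-difference approximant has variance equal to a symmetric second difference of $K$ that, by \Cref{prop:1}, converges to $(-1)^r\partial_t^{2j}\partial_s^{2(r-j)}K(0_d,0)$; note that the variance of a mixed spatial derivative such as $\partial_1\partial_2 Z$ is the \emph{pure even} derivative $\partial_1^2\partial_2^2K(0_d,0)$, so the hypothesis covers every top-order component. The classical mean-square differentiability criterion then applies: existence and finiteness of this pure even derivative at the origin is exactly what makes the approximating sequence Cauchy in $L^2$, hence convergent to a bona fide derivative process $\partial_t^{j}\partial_s^{r-j}Z\in L^2$. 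Because mean-square differentiability of a given order entails mean-square differentiability of every lower order, all components of $\L_rZ$ of order strictly below $r$ exist as well, so the full vector $U^{(r)}_h$ converges componentwise in $L^2$.

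With componentwise $L^2$ convergence in hand, the remaining entries of the cross-covariance matrix---the covariances between derivative processes of possibly different orders---follow from continuity of the $L^2$ inner product together with Cauchy--Schwarz: each such covariance is automatically finite, equals the limit of the corresponding finite-difference covariance, and is given in closed form by \Cref{prop:1} as $(-1)^{r_1}\partial_t^{j_1+j_2}\partial_s^{r_1+r_2-(j_1+j_2)}K(\Delta,\delta)$. Finally I would transfer positive semi-definiteness: for each fixed finite collection of coordinates the covariance matrix of $U^{(r)}_h$ is symmetric and positive semi-definite, and the positive semi-definite cone is closed, so the limiting matrix $V_{\L_rZ}$ inherits both properties. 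Combined with the zero mean, this establishes that $(Z,\L_rZ^{\T})^{\T}$ is a valid zero-centered process.

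The main obstacle is the reduction embedded in the second paragraph: showing that existence of \emph{only} the highest-order pure even derivatives $\partial_t^{2j}\partial_s^{2(r-j)}K(0_d,0)$ is strong enough to force mean-square existence of every derivative process up to order $r$ and, with it, finiteness of every mixed off-diagonal covariance. This genuinely requires the positive-definiteness of $K$---one cannot in general deduce off-diagonal existence from diagonal existence for arbitrary kernels---and I would invoke it through the standard equivalence between $2m$-fold differentiability of a stationary covariance at the origin and $m$-fold mean-square differentiability of the process, applied coordinate-by-coordinate across the $d+1$ space-time indices. I would isolate this equivalence as the key step and then let Cauchy--Schwarz handle all remaining cross-terms.
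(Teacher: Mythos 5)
Your proposal is correct and follows essentially the same route as the paper: it passes to the limit of the finite-difference process $U^{(r)}_{h}(s,t)=\widetilde{G}_{h}^{\otimes r}\widetilde{W}_{h}^{\otimes r}Z(s,t)$ furnished by Lemma~1 and identifies the limiting mean and cross-covariance via Proposition~1, with the hypothesis $\partial_t^{2j}\partial_s^{2(r-j)}K(0_d,0)$ entering exactly where the variances are evaluated at $(\Delta,\delta)\to(0_d,0)$. The only difference is one of rigor rather than of method --- you make explicit the mean-square Cauchy criterion at the top order, the reduction to lower orders, the Cauchy--Schwarz control of the off-diagonal covariances, and the closedness of the positive semi-definite cone, all of which the paper's proof leaves implicit in its element-wise limiting argument.
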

	\begin{proof}
		We use a limiting argument on the finite difference process, $\bU^{(r)}_h(\bs,t)$ described in \Cref{lemma:1}. Observe that $\E \bU^{(r)}_{h}(\bs,t)=\widetilde{\G}_{h}^{\otimes r} \widetilde{\bW}_{h}^{\otimes r}\E Z(\bs,t)=\0_{m}$ and $\Cov\{\bU^{(r)}_{h}(\bs,t), \bU^{(r)}_{h}(\bs',t')\} =  \Cov\{\widetilde{\G}_{h}^{\otimes r} \widetilde{\bW}_{h}^{\otimes r}Z(\bs,t), \widetilde{\G}_{h}^{\otimes r} \widetilde{\bW}_{h}^{\otimes r}Z(\bs',t')\}=\widetilde{\G}_{h}^{\otimes r} \widetilde{\bW}_{h}^{\otimes r}K(\bDelta,\delta)(\widetilde{\G}_{h}^{\otimes r} \widetilde{\bW}_{h}^{\otimes r})^{\T}$, where $m=\sum\limits_{i=0}^{r}\binom{d+i}{i}$, for every $r> 0$ and $h\ne0$. $\bU^{(r)}_{h}(\bs,t)$ is a well-defined zero-centered process. The elements of $\bU^{(r)}_h(\bs,t)$ are $Z(\bs,t)$ and $\left\{\frac{1}{h^{k}}(D_{h,0,1}-1)^{i_{d+1}}\prod\limits_{j=1}^{d}(D_{h,\be_j,0}-1)^{i_j}\right\}Z(\bs,t)$, $\sum\limits_{j=1}^{d+1}i_j=k$, $i_j\geq0$, for every $k=1,\ldots,r$. Taking the limit, $\lim\limits_{h\downarrow0}\frac{1}{h^{k}}(D_{h,0,1}-1)^{i_{d+1}}\prod\limits_{j=1}^{d}(D_{h,\be_j,0}-1)^{i_j}Z(\bs,t)=\partial_t^{i_{d+1}}\bpartial_\bs^{k-i_{d+1}}Z(\bs,t)$, $0\leq i_{d+1}\leq k$. Hence, $\lim\limits_{h\downarrow0}\bU^{(r)}_{h}(\bs,t)=\lim\limits_{h\downarrow0}\widetilde{\G}_{h}^{\otimes r} \widetilde{\bW}_{h}^{\otimes r}Z(\bs,t)=\left(\begin{smallmatrix}
			Z(\bs,t)\\\L_rZ(\bs,t)
		\end{smallmatrix}\right)$, where the limit operates element-wise. Clearly, $\E\left(\begin{smallmatrix} Z(\bs,t)\\\L_rZ(\bs,t) \end{smallmatrix}\right)=\lim\limits_{h\downarrow0}\E \bU^{(r)}_{h}(\bs,t)=\lim\limits_{h\downarrow0}\widetilde{\G}_{h}^{\otimes r} \widetilde{\bW}_{h}^{\otimes r}\E Z(\bs,t)=0_{m}$. Similarly, $\Cov\left\{\left(\begin{smallmatrix} Z(\bs,t)\\\L_rZ(\bs,t) \end{smallmatrix}\right), \left(\begin{smallmatrix} Z(\bs',t')\\\L_rZ(\bs',t') \end{smallmatrix}\right)\right\}=\lim\limits_{h\downarrow0}\widetilde{\G}_{h}^{\otimes r} \widetilde{\bW}_{h}^{\otimes r}K(\bDelta,\delta)\left(\widetilde{\G}_{h}^{\otimes r} \widetilde{\bW}_{h}^{\otimes r}\right)^{\T}=\{(-1)^{r_1}\partial_t^{j_1+j_2}\bpartial_\bs^{r_1+r_2-j_1-j_2}K(\bDelta,\delta)\}$, where $0\leq r_1,r_2\leq r$, $0\leq j_i\leq r_i$, $i=1,2$. The last equality is obtained using \Cref{prop:1} and the limit operates on matrices. The covariance matrix is well-defined if all the above entries exist including the variances, obtained as $(\bDelta,\delta)\to (\0_d,0)$. 
		The existence of $\partial_t^{2j}\bpartial_\bs^{2(r-j)}K(\0_d,0)$, for $0\leq j\leq r$, is necessary.
	\end{proof}
	Constructing $\L^*$, we only need $j=r-j=2$ and hence require the existence of $\partial_t^{4}\bpartial_\bs^{4}K(\0_d,0)$. Here onwards, we assume $Z(\bs,t)\sim GP(0,K)$. The next corollary provides some immediate consequences of this assumption.
	
	\begin{corollary}\label{cor:1}
		If $Z_1(\bs,t)\sim GP(0, K_1(\cdot,\cdot))$ and $Z_2(\bs,t) \sim GP(0, K_2(\cdot,\cdot))$ independently, then (i) $\L_{r_1} Z_1(\bs,t)$ and $\L_{r_2} Z_2(\bs,t)$ are independent GPs, (ii) for $a_1, a_2\in \mathfrak{R}$, $\L_r (a_1Z_1(\bs,t)+a_2Z_2(\bs,t))=a_1\L_r Z_1(\bs,t)+a_2\L_rZ_2(\bs,t)$ is a GP, and (iii) any sub-vector of $\L_r Z(\bs,t)$ is a GP. 
	\end{corollary}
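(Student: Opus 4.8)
The plan is to build all three claims on a single structural fact already established in \Cref{lemma:1} and \Cref{thm:1}: each component of $\L_r Z(s,t)$ is a mean-squared limit of the finite difference process $U_h^{(r)}(s,t) = \widetilde{G}_h^{\otimes r}\widetilde{W}_h^{\otimes r}Z(s,t)$, which for every fixed $h\ne 0$ is a \emph{linear} transformation of $Z$ evaluated at finitely many space-time coordinates. Since finite-dimensional distributions of a GP are multivariate normal, each $U_h^{(r)}(s,t)$ is jointly Gaussian, and the Gaussian family is closed under linear maps and under $L^2$ limits. Thus $\L_r Z(s,t)$ is Gaussian, and any finite collection evaluated at distinct coordinates is jointly Gaussian. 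I would invoke this closure property as the workhorse for (i)--(iii).

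For part (i), first I would note that $U_{h,1}^{(r_1)}(s,t)=\widetilde{G}_h^{\otimes r_1}\widetilde{W}_h^{\otimes r_1}Z_1(s,t)$ depends on $Z_1$ only through finitely many evaluations, and likewise $U_{h,2}^{(r_2)}(s',t')$ depends only on $Z_2$. Because $Z_1 \perp Z_2$ as processes, every such finite evaluation vector of $Z_1$ is independent of every finite evaluation vector of $Z_2$, so $\Cov\{U_{h,1}^{(r_1)}(s,t), U_{h,2}^{(r_2)}(s',t')\}=0$ for all $h\ne 0$. The pair $(U_{h,1}^{(r_1)}, U_{h,2}^{(r_2)})$ is jointly Gaussian, and joint Gaussianity is preserved in the mean-squared limit, so $(\L_{r_1}Z_1, \L_{r_2}Z_2)$ is jointly Gaussian with vanishing cross-covariance; for jointly Gaussian processes this forces independence.

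For part (ii), linearity of the finite difference operator gives $U_h^{(r)}[a_1 Z_1 + a_2 Z_2] = a_1 U_{h,1}^{(r)} + a_2 U_{h,2}^{(r)}$ for every $h\ne 0$; passing to the limit yields $\L_r(a_1 Z_1 + a_2 Z_2)=a_1 \L_r Z_1 + a_2 \L_r Z_2$. Since $(\L_r Z_1, \L_r Z_2)$ is jointly Gaussian (both arise as $L^2$ limits of linear functionals of the jointly Gaussian vector $(Z_1,Z_2)$), the fixed linear combination is again a GP. Part (iii) is then immediate: a sub-vector of $\L_r Z(s,t)$ is the image of $\L_r Z(s,t)$ under a $0/1$ selection matrix, i.e.\ a linear transformation of a Gaussian process, hence a GP.

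The main obstacle is the measure-theoretic justification underlying the workhorse step, namely that mean-squared (equivalently $L^2$) limits of jointly Gaussian families remain jointly Gaussian and that the limiting cross-covariance is the limit of the finite-$h$ cross-covariances; this is what legitimizes transferring independence and linearity from the finite difference approximations to $\L_r Z$. I would discharge it by noting that characteristic functions of the Gaussian approximations converge pointwise (their exponents are continuous in the mean and covariance, which converge by \Cref{thm:1} and \Cref{prop:1}), so the limit is Gaussian with the limiting parameters. Once this closure is in hand, parts (i)--(iii) follow with only bookkeeping.
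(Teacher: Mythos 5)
Your proposal is correct and takes essentially the same route as the paper's proof: both deduce all three parts from the finite-difference representation $U_h^{(r)}(s,t)=\widetilde{G}_h^{\otimes r}\widetilde{W}_h^{\otimes r}Z(s,t)$ of \Cref{lemma:1}, obtaining (i) from the vanishing cross-covariance of the approximations when $Z_1\perp Z_2$, (ii) from linearity of the finite differences passed to the limit, and (iii) from closure of Gaussian processes under selection of components. Your explicit characteristic-function justification that joint Gaussianity survives mean-squared limits makes rigorous a closure property the paper invokes implicitly, so it is a refinement of the same argument rather than a different approach.
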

	
	\begin{proof}
		Observe that from \Cref{thm:1}, $\L_{r_i} Z_i(\bs,t)$ is a zero-centered GP with cross-covariance matrix $\bV_{\L_{r_i}Z}(\bDelta,\delta)=\{(-1)^{k_1}\partial_t^{j_1+j_2}\bpartial_\bs^{k_1+k_2-j_1-j_2}K(\bDelta,\delta)\}$, $0< k_1,k_2\leq r_i$, $0\leq j_i\leq k_i$, for $i=1,2$ and $ \Cov(\L_{r_1} Z_1(\bs,t),\L_{r_2} Z_2(\bs,t))=\lim\limits_{h\downarrow0}\Cov\left(\widetilde{\G}_{h}^{\otimes r_1} \widetilde{\bW}_{h}^{\otimes r_1}Z_1(\bs,t),\widetilde{\G}_{h}^{\otimes r_2} \widetilde{\bW}_{h}^{\otimes r_2}Z_2(\bs,t)\right)=\lim_{h\downarrow0}\widetilde{\G}_{h}^{\otimes r_1} \widetilde{\bW}_{h}^{\otimes r_1}\;\cdot0\;\cdot \left(\widetilde{\G}_{h}^{\otimes r_2} \widetilde{\bW}_{h}^{\otimes r_2}\right)^{\T}=0$
		for $r_1,r_2>0$. Consider $\L_r (a_1Z_1(\bs,t)+a_2Z_2(\bs,t))=\lim\limits_{h\downarrow 0}\widetilde{\G}_{h}^{\otimes r} \widetilde{\bW}_{h}^{\otimes r}(a_1Z_1(\bs,t)+a_2Z_2(\bs,t))=a_1\lim\limits_{h\downarrow 0}\widetilde{\G}_{h}^{\otimes r} \widetilde{\bW}_{h}^{\otimes r}Z_1(\bs,t)+a_2\lim\limits_{h\downarrow 0}\widetilde{\G}_{h}^{\otimes r} \widetilde{\bW}_{h}^{\otimes r}Z_2(\bs,t)=a_1\L_rZ_1(\bs,t)+a_2\L_rZ_2(\bs,t)$. Finally, since $Z_1(\bs,t)$ and $Z_2(\bs,t)$ are independent GPs, $a_1\L_rZ_1(\bs,t)+a_2\L_rZ_2(\bs,t)$ is also a GP with zero mean and cross-covariance $a_1\bV_{\L_{r_1}}(\bDelta,\delta)+a_2\bV_{\L_{r_2}}(\bDelta,\delta)$. Lastly, since $\L_rZ(\bs,t)$ is a GP, any sub-vector is also a zero-centered GP with the cross-covariance being a sub-matrix of $\bV_{\L_r}(\bDelta,\delta)$ \citep[see, e.g.,][]{rasmussen_gaussian_2005}. 
	\end{proof}

	\section{Cross-Covariance Matrix for \texorpdfstring{$(Z, \L^*Z^{\T})^{\T}$}{Lstar}}\label{sec:cross-cov}
	Assuming $d = 2$ and suppressing the dependence on $\bDelta=\bs-\bs'$ and $\delta=t-t'$, the resulting cross-covariance matrix for the process, $(Z(\bs,t),\L^*Z(\bs,t)^{\T})^{\T}$ denoted by $\bV_{Z,\L^* Z}$, is
	\begin{equation}\label{eq:full-cross-cov}
		\scalemath{0.7}{
			\left[
			\begin{array}{*{3}{c};{2pt/2pt}*{3}{c};{2pt/2pt}*{3}{c}}
				\bK & \bpartial_\bs \bK^{\T} & \Partials^2 \bK^{\T} & \partial_t \bK & \partial_t\bpartial_\bs \bK^{\T} & \partial_t\Partials^2 \bK^{\T} & \partial_t^2 \bK & \partial_t^2\bpartial_\bs \bK^{\T} & \partial_t^2\Partials^2 \bK^{\T}\\
				-\bpartial_\bs \bK & {\color{blue}-\bpartial_\bs^2 \bK}_{2\times 2} & -\Partials^3 \bK_{2\times 3} & -\partial_t\bpartial_\bs \bK^{\T}_{2\times 1} & -\partial_t\bpartial_\bs^2 \bK^{\T}_{2\times 2} & -\partial_t\Partials^3 \bK^{\T}_{2\times 3} & -\partial_t^2\bpartial_\bs \bK^{\T}_{2\times 1} & {\color{blue}-\partial_t^2\bpartial_\bs^2 \bK}^{\T}_{2\times 2} & -\partial_t^2\Partials^3 \bK^{\T}_{2\times 3}\\
				\Partials^2 \bK & \Partials^3 \bK_{3\times 2} & {\color{blue}\Partials^4 \bK}_{3\times 3} & \partial_t\Partials^2 \bK_{3\times 1} & \partial_t\Partials^3 \bK^{\T}_{3\times 2} & \partial_t\Partials^4  \bK^{\T}_{3\times 3} & {\color{blue}\partial_t^2\Partials^2 \bK}^{\T}_{3\times 1} & \partial_t^2\Partials^3 \bK^{\T}_{3\times 2} & {\color{blue}\partial_t^2\Partials^4 \bK}^{\T}_{3\times 3}\\\hdashline[2pt/2pt]
				-\partial_tK & -\partial_t\bpartial_\bs \bK_{1\times 2} & -\partial_t\Partials^2 \bK^{\T}_{1\times 3} & {\color{blue} -\partial_t^2 \bK}_{1\times 1} & -\partial_t^2\bpartial_\bs \bK^{\T}_{1\times 2} & {\color{blue}-\partial_t^2\Partials^2 \bK}^{\T}_{1\times 3} & -\partial_t^3 \bK_{1\times 1} & -\partial_t^3\bpartial_\bs \bK^{\T}_{1\times 2} & -\partial_t^3\Partials^2 \bK^{\T}_{1\times 3}\\
				\partial_t\bpartial_\bs \bK & \partial_t\bpartial_\bs^2 \bK_{2\times 2} & \partial_t\Partials^3 \bK_{2\times 3} & \partial_t^2\bpartial_\bs \bK_{2\times 1} & {\color{blue}\partial_t^2\bpartial_\bs^2 \bK}_{2\times 2} &  \partial_t^2\Partials^3 \bK^{\T}_{2\times 3} & \partial_t^3\bpartial_\bs \bK^{\T}_{2\times 1} & \partial_t^3\bpartial_\bs^2 \bK^{\T}_{2\times 2} & \partial_t^3\Partials^3 \bK^{\T}_{2\times 3}\\
				-\partial_t\Partials^2 \bK & -\partial_t\Partials^3 \bK_{3\times 2} & -\partial_t\Partials^4 \bK_{3\times 3} & {\color{blue}-\partial_t^2\Partials^2 \bK}_{3\times 1} & -\partial_t^2\Partials^3 \bK_{3\times 2} & {\color{blue}-\partial_t^2\Partials^4 \bK}_{3\times 3} & -\partial_t^3\Partials^2 \bK_{3\times 1} & -\partial_t^3\Partials^3 \bK^{\T}_{3\times 2} & -\partial_t^3\Partials^4 \bK^{\T}_{3\times 3}\\\hdashline[2pt/2pt]
				\partial_t^2 \bK & \partial_t^2\bpartial_\bs \bK_{1\times 2} & {\color{blue}\partial_t^2\Partials^2 \bK}_{1\times 3} & \partial_t^3 \bK_{1\times 1} & \partial_t^3\bpartial_\bs \bK_{1\times 2} & \partial_t^3\Partials^2 \bK^{\T}_{1\times 3} & {\color{blue}\partial_t^4 \bK}_{1\times 1} & \partial_t^4\bpartial_\bs \bK^{\T}_{1\times 2} & {\color{blue}\partial_t^4\Partials^2 \bK}^{\T}_{1\times 3}\\
				-\partial_t^2\bpartial_\bs \bK & {\color{blue}-\partial_t^2\bpartial_\bs^2 \bK}_{2\times 2} & -\partial_t^2\Partials^3 \bK_{2\times 3} & -\partial_t^3\bpartial_\bs \bK_{2\times 1} & -\partial_t^3\bpartial_\bs^2 \bK_{2\times 2} & -\partial_t^2\Partials^3 \bK_{2\times 3} & -\partial_t^4\bpartial_\bs \bK_{2\times 1} & {\color{blue}-\partial_t^4\bpartial_\bs^2 \bK}_{2\times 2} & -\partial_t^4\Partials^3 \bK^{\T}_{2\times 3}\\
				\partial_t^2\Partials^2 \bK & \partial_t^2\Partials^3 \bK_{3\times 2} & {\color{blue}\partial_t^2\Partials^4 \bK}_{3\times 3} & \partial_t^3\Partials^2 \bK_{3\times 1} & \partial_t^3\Partials^3  \bK_{3\times 2} & \partial_t^3\Partials^4 \bK_{3\times 3} & {\color{blue}\partial_t^4\Partials^2 \bK}_{3\times 1} & \partial_t^4\Partials^3 \bK_{3\times 2} & {\color{blue}\partial_t^4\Partials^4 \bK}_{3\times 3}
			\end{array}
			\right],
		}
	\end{equation}
	where $\Var{\bpartial_\bs Z(\bs,t)} = \Cov\{\bpartial_\bs Z(\bs,t),\bpartial_\bs Z(\bs',t')\}= \bpartial_\bs^2\bK(\bDelta,\delta)=\bH_K^{11}(\bDelta,\delta)$ is the variance of $\bpartial_\bs Z(\bs,t)$ of order $d\times d$, $\Cov\{\bpartial_\bs Z(\bs,t),\Partials^2Z(\bs',t')\} = \Partials^3 \bK(\bDelta,\delta) = \bH_K^{12}(\bDelta,\delta) = {\bH_K^{21}}^{\T}(\bDelta,\delta)$ is of order $d\times \frac{d(d+1)}{2}$ and $\Var{\Partials^2Z(\bs,t)}= \Cov\{\Partials^2Z(\bs,t),\Partials^2Z(\bs',t')\} = \Partials^4 \bK(\bDelta,\delta) = \bH_K^{22}(\bDelta,\delta)$ of order, $\frac{d(d+1)}{2}\times \frac{d(d+1)}{2}$. The entries that remain non-zero as $(\bDelta,\delta)\to(\0_{d},0)$ in the $\left(d+\frac{d(d+1)}{2}\right)\times \left(d+\frac{d(d+1)}{2}\right)$ sub-matrix corresponding to $\Cov\{\L^{-Z}_4Z(\bs,t), \L^{-Z}_4Z(\bs',t')\}$ are marked in \textcolor{blue}{blue}. In comparison to the spatiotemporal gradient theory \citep[see,][p. 577]{quick2015bayesian}, as $(\bDelta,\delta)\to(\0_{d},0)$, the resulting cross-covariance matrix 
	is no longer diagonal.
	
	\paragraph{Spatiotemporal Divergence \& Laplacian} Using \Cref{cor:1} we are able to devise inference for common differential geometric operators that capture change in geometry for spatiotemporal surfaces. We consider two quantities--divergence and Laplacian. We define \emph{spatiotemporal divergence} as ${\rm div} Z(\bs,t)=\left(\1^{\T}\bpartial_\bs Z(\bs,t),\1^{\T}\partial_t\bpartial_\bs Z(\bs,t), \1^{\T}\partial_t^2\bpartial_\bs Z(\bs,t)\right)^{\T}$. Assume $d=2$. First, observe that $\left(\begin{smallmatrix}\bpartial_\bs Z(\bs,t)\\\partial_t\bpartial_\bs Z(\bs,t)\\ \partial_t^2\bpartial_\bs Z(\bs,t)\end{smallmatrix}\right)=(\bI_2\oplus \bO_4 \oplus \bI_2\oplus \bO_4\oplus \bI_2\oplus \bO_3)\L^*Z(\bs,t)=\bE_{\rm div}\L^*Z(\bs,t)$ forms a subset of $\L^*Z(\bs,t)$ and is therefore is a zero-centered GP with cross-covariance, $\bE_{\rm div}\bV_{\L_*}(\bDelta,\delta)\bE_{\rm div}^{\T}$. This is obtained using \Cref{cor:1}, part (iii). Hence, $(\1^{\T}\oplus\1^{\T}\oplus\1^{\T})\bE_{\rm div}\L_*Z(\bs,t)$ is also a zero-centered GP with cross-covariance, $(\1^{\T}\oplus\1^{\T}\oplus\1^{\T})\bE_{\rm div}\bV_{\L_*}(\bDelta,\delta)\left((\1^{\T}\oplus\1^{\T}\oplus\1^{\T})\bE_{\rm div}\right)^{\T}$. We define the \emph{spatiotemporal Laplacian} as, ${\rm Lap} Z(\bs,t)=\left(\sum_{i=1}^{d}\frac{\partial^2}{\partial s_i^2}Z(\bs,t),\sum_{i=1}^{d}\frac{\partial}{\partial t}\frac{\partial^2}{\partial s_i^2}Z(\bs,t), \sum_{i=1}^{d}\frac{\partial^2}{\partial t^2}\frac{\partial^2}{\partial s_i^2}Z(\bs,t)\right)^{\T}$. Similar arguments relying on \Cref{cor:1} (part (iii)) allows ${\rm Lap} Z(\bs,t)$ to be a GP. We show the resulting figures in \Cref{figsec:diff-geo}. Next, we show details for $\bV_{Z, \L ^*Z}(\bDelta,\delta)$.
	
	\section{Simplifications under isotropy}\label{sec:isotropy}
	
	As is evident from \cref{eq:full-cross-cov} inference for spatiotemporal gradients (and directional spatiotemporal gradients) requires the derivatives of $K(\bDelta,\delta)$. Commonly used covariance kernels are isotropic (see \Cref{sec:covfns}), i.e. $K(\bDelta,\delta)=\tildeK(||\bDelta||,|\delta|)$. We derive simplifications assuming isotropy---first considering the resulting expressions for spatiotemporal gradients followed by the same for directional spatiotemporal gradients.
	
	\subsection{Spatiotemporal Gradients}\label{sec:simple-spt-grad}
	
	\begin{proposition}\label{prop:2}
		Under isotropy, $K=K(\bDelta,\delta)=\tildeK(||\bDelta||,|\delta|)=\tildeK$, simplifications arise in the resulting expressions for derivatives of the covariance. They are as follows, 
		
		\begin{align*}
			\bpartial_\bs K &=d_{||\bDelta||}\tildeK \frac{\bP_{1,1}}{||\bDelta||},
			\partial_t K =d_{|\delta|}\tildeK \frac{\delta}{|\delta|},
			\bpartial_\bs^2 K =d_{||\bDelta||}\tildeK\frac{\bP_{2,1}}{||\bDelta||}+\left(d_{||\bDelta||}^2\tildeK -\frac{d_{||\bDelta||}\tildeK}{||\bDelta||}\right)\frac{\bP_{2,2}}{||\bDelta||^2},\\
			\partial_t\bpartial_\bs K &=d_{|\delta|} d_{||\bDelta||}\tildeK \frac{\bP_{1,1}}{||\bDelta||}\frac{\delta}{|\delta|},
			\partial_t^2 K =d_{|\delta|}^2\tildeK,\\
			\partial_t\bpartial_\bs^2 K &=d_{|\delta|}d_{||\bDelta||}\tildeK\frac{\bP_{2,1}}{||\bDelta||}\frac{\delta}{|\delta|}+\left(d_{|\delta|}d_{||\bDelta||}^2\tildeK -\frac{d_{|\delta|}d_{||\bDelta||}\tildeK }{||\bDelta||}\right)\frac{\bP_{2,2}}{||\bDelta||^2}\frac{\delta}{|\delta|},
			\partial_t^2\bpartial_\bs K=d_{|\delta|}^2d_{||\bDelta||}\tildeK\frac{\bP_{1,1}}{||\bDelta||},\\
			\bpartial_\bs^4K&=\left(d_{||\bDelta||}^2\tildeK-\frac{d_{||\bDelta||}\tildeK}{||\bDelta||}\right)\left(-3\frac{\bP_{4,1}}{||\bDelta||^4}+\frac{\bP_{4,2}}{||\bDelta||^2}+15\frac{\bP_{4,3}}{||\bDelta||^6}-3\frac{\bP_{4,4}}{||\bDelta||^4}\right)+\\
			&\myquad[6]d_{||\bDelta||}^3\tildeK\left(\frac{\bP_{4,1}}{||\bDelta||^3}-6\frac{\bP_{4,3}}{||\bDelta||^5}+\frac{\bP_{4,4}}{||\bDelta||^3}\right)+ d_{||\bDelta||}^4\tildeK\frac{\bP_{4,3}}{||\bDelta||^4},\\
			\partial_t^2\bpartial_\bs^2 K&=d_{|\delta|}^2d_{||\bDelta||}\tildeK\frac{\bP_{2,1}}{||\bDelta||}+\left(d_{|\delta|}^2d_{||\bDelta||}^2\tildeK-\frac{d_{|\delta|}^2d_{||\bDelta||}\tildeK}{||\bDelta||}\right)\frac{\bP_{2,2}}{||\bDelta||^2},~\partial_t^4K=d_{|\delta|}^4\tildeK,\\
			\partial_t^2\bpartial_\bs^4K&=\left(d_{|\delta|}^2d_{||\bDelta||}^2\tildeK-\frac{d_{|\delta|}^2d_{||\bDelta||}\tildeK}{||\bDelta||}\right)\left(-3\frac{\bP_{4,1}}{||\bDelta||^4}+\frac{\bP_{4,2}}{||\bDelta||^2}+15\frac{\bP_{4,3}}{||\bDelta||^6}-3\frac{\bP_{4,4}}{||\bDelta||^4}\right)+\\
			&\myquad[6]d_{|\delta|}^2d_{||\bDelta||}^3\tildeK\left(\frac{\bP_{4,1}}{||\bDelta||^3}-6\frac{\bP_{4,3}}{||\bDelta||^5}+\frac{\bP_{4,4}}{||\bDelta||^3}\right)+ d_{|\delta|}^2d_{||\bDelta||}^4\tildeK\frac{\bP_{4,3}}{||\bDelta||^4},\\
			\partial_t^4\bpartial_\bs^2 K&=\left\{d_{|\delta|}^4d_{||\bDelta||}\tildeK\frac{\bP_{2,1}}{||\bDelta||}+\left(d_{|\delta|}^4d_{||\bDelta||}^2\tildeK-\frac{d_{|\delta|}^4d_{||\bDelta||}\tildeK}{||\bDelta||}\right)\frac{\bP_{2,2}}{||\bDelta||^2}\right\},\\
			\partial_t^4\bpartial_\bs^4K&=\left(d_{|\delta|}^4d_{||\bDelta||}^2\tildeK-\frac{d_{|\delta|}^4d_{||\bDelta||}\tildeK}{||\bDelta||}\right)\left(-3\frac{\bP_{4,1}}{||\bDelta||^4}+\frac{\bP_{4,2}}{||\bDelta||^2}+15\frac{\bP_{4,3}}{||\bDelta||^6}-3\frac{\bP_{4,4}}{||\bDelta||^4}\right)+\\
			&\myquad[6]d_{|\delta|}^4 d_{||\bDelta||}^3\tildeK\left(\frac{\bP_{4,1}}{||\bDelta||^3}-6\frac{\bP_{4,3}}{||\bDelta||^5}+\frac{\bP_{4,4}}{||\bDelta||^3}\right)+ d_{|\delta|}^4d_{||\bDelta||}^4\tildeK\frac{\bP_{4,3}}{||\bDelta||^4},
		\end{align*}
		where $\bP_{1,1}=\bDelta$, $\bP_{2,1}=\vecc \bI_d$, $\bP_{2,2}=(\bI_d\otimes\bDelta)\bDelta$, $\bP_{3,1}= \bP_{3,11} + \bP_{3,12} + \bP_{3,13}$, $\bP_{3,11}=\bDelta\otimes\vecc \bI_d$, $\bP_{3,12}= (\bI_d\otimes \bU_2)(\vecc \bI_d\otimes \bI_d)\bU_1\bDelta$, $\bP_{3,13}=\bI_d\otimes(\bI_d\otimes\bDelta)\vecc \bI_d$, $\bP_{3,2}=\bDelta\otimes(\bI_d\otimes\bDelta)\bDelta$, $\bP_{4,1}=\bDelta\otimes \bP_{3,1}$, $\bP_{4,2}=\bP_{4,21}+\bP_{4,22}+\bP_{4,23}$, $\bP_{4,21}=\vecc \bI_d\otimes\vecc \bI_d$, $\bP_{4,22}=\left(\bI_d\otimes(\bI_d\otimes \bU_2)(\vecc \bI_d\otimes \bI_d)\bU_1\right)\vecc \bI_d$, $\bP_{4,23}= (\bI_d\otimes \bU_3)\left\{\left((\bI_d\otimes \bU_2)(\vecc \bI_d\otimes \bI_d)\bU_1\right)\otimes \bI_d\right\}\bU_2\vecc \bI_d,~ \bP_{4,3}=\bDelta\otimes \bP_{3,2}$ and $\bP_{4,4}=[\vecc \bI_d\otimes(\bI_d\otimes\bDelta)+(\bI_d\otimes \bU_3)\left\{\left((\bI_d\otimes \bU_2)(\vecc \bI_d\otimes \bI_d)\bU_1\right)\otimes \bDelta\right\}\bU_1]\bDelta+\left\{\bI_d\otimes(\bDelta\otimes(\bI_d\otimes\bDelta))\right\}\vecc \bI_d$, $\bU_i$ is a permutation matrix of order $d^i\times d^i$, $i=1,2,3$ and $\bP_{i,\cdot}$ is a $d^i\times 1$ vector for $i=1,2,3,4$. The operators, $d^r_{||\bDelta||} = \frac{d^r}{d||\bDelta||^r}$ and $d^r_{|\delta|} = \frac{d^r}{d|\delta|^r}$.
	\end{proposition}
	\begin{proof}
		Note that $\frac{\partial}{\partial\delta}\frac{\delta}{|\delta|}=\left(\frac{1}{|\delta|}-\frac{\delta}{|\delta|^2}\frac{\delta}{|\delta|}\right)=0$, $ \frac{\partial}{\partial\bDelta}\frac{d_{||\bDelta||}^r\tildeK}{||\bDelta||^r}=\left(d_{||\bDelta||}^{r+1}\tildeK-\frac{d_{||\bDelta||}^r\tildeK}{||\bDelta||}\right)\frac{\bDelta}{||\bDelta||^{r+1}}-(r-1)\frac{d_{||\bDelta||}^r\tildeK}{||\bDelta||^r}\frac{\bDelta}{||\bDelta||^2}$, and for $r,r'>0$, $ \frac{\partial}{\partial\bDelta}\left(d_{||\bDelta||}^{r+1}\tildeK-\frac{d_{||\bDelta||}^r\tildeK}{||\bDelta||}\right)\frac{1}{||\bDelta||^{r'}}=\left\{\frac{d_{||\bDelta||}^{r+2}\tildeK}{||\bDelta||^{r'+1}}-\left(d_{||\bDelta||}^{r+1}\tildeK-\frac{d_{||\bDelta||}^r\tildeK}{||\bDelta||}\right)\frac{r'+1}{||\bDelta||^{r'+2}}\right\}\bDelta$. { The resulting expressions under isotropy are obtained via successive differentiation}.

	\end{proof}
	Deriving the expressions above requires computing the derivatives of matrices with respect to vectors.  In the traditional sense, this results in tensors, for example, $\frac{\partial}{\partial\bDelta}\bpartial_\bs^2K(\bDelta,\delta)$ produces a 3-order tensor. We use notation in \cite{vetter1970derivative} \citep[also, see, e.g.,][Chapter 6]{graham2018kronecker} to obtain vector representations for tensor-valued derivatives. The order of listing conforms to the reshaping of tensors discussed in \Cref{sec::vec-tensor}. We rely on the following, $\frac{\partial\bDelta}{\partial\bDelta}=(\be_1^{\T},\be_2^{\T},\ldots,\be_d^{\T})^{\T}=\vecc(\bI_d)$, $\frac{\partial \bA\bB}{\partial\bDelta}=\frac{\partial \bA}{\partial\bDelta}(\1\otimes \bB) + (\bI_d\otimes \bA)\frac{\partial \bB}{\partial\bDelta}$ and $\frac{\partial \bA\otimes \bB}{\partial\bDelta}=\frac{\partial \bA}{\partial\bDelta}\otimes \bB + (\bI_d\otimes \bU_1)\left(\frac{\partial \bB}{\partial \bDelta}\otimes \bA\right) (\1\otimes \bU_2)$. In particular, when deriving $\bpartial_\bs^3 K$, we use $\frac{\partial}{\partial\bDelta}\frac{d_{||\bDelta||}\tildeK}{||\bDelta||}\vecc \bI_d=\bigg(d_{||\bDelta||}^2\tildeK-\frac{d_{||\bDelta||}\tildeK}{||\bDelta||}\bigg)\frac{\bDelta}{||\bDelta||^2}\otimes \vecc \bI_d$. The vectors $\be_i$, $i=1,\ldots,d$ constitute the orthonormal basis for $\mathfrak{R}^d$.
	
	\subsection{Directional Spatiotemporal Gradients}\label{sec:simple-dir-spt-gradients}
	
	Directional spatiotemporal gradients and curvatures  involve a direction vector of interest (see Section~2.1 of the manuscript), $(\bu^{\T},v)^{\T} \in \mathfrak{H}^{d+1}=\mathfrak{R}^d\times\mathfrak{R}^{+}$ along which the gradients and curvatures are evaluated. For convenience, we fix $v=1$. We require the covariance between two directional differential processes of possibly different orders. Let $\bDelta(h_s)=\bDelta+h_s\bu$ and $\delta(h_t)=\delta+h_t$. Suppressing dependence on $(\bDelta,\delta)$, we write $g(h_s,h_t)=K\left(\bDelta(h_s),\delta(h_t)\right)=K(\bDelta+h_s\bu,\delta+h_t)$. The following result derives the required covariance.
	
	\begin{proposition}\label{prop:3}
		Let ${\bu^{\otimes (r_1-j_1)}}^{\T}\partial_t^{j_1}\bpartial_\bs^{r_1-j_1}Z(\bs,t)$ and ${\bu^{\otimes (r_2-j_2)}}^{\T}\partial_t^{j_2}\bpartial_\bs^{r_2-j_2}Z(\bs',t')$ be two directional differential processes of orders $r_1$ and $r_2$ respectively. The covariance,
		
		\begin{equation*}
			\Cov\left({\bu^{\otimes (r_1-j_1)}}^{\T}\partial_t^{j_1}\bpartial_\bs^{r_1-j_1}Z(\bs,t),{\bu^{\otimes (r_2-j_2)}}^{\T}\partial_t^{j_2}\bpartial_\bs^{r_2-j_2}Z(\bs',t')\right)=(-1)^{r_1}\partial_t^{j_1+j_2}\bpartial_\bs^{(r_1+r_2-j_1-j_2)} g(\0_d,0), 
		\end{equation*}
		for every $j_i=0,\ldots,r_i$ and $i=1,2$.
	\end{proposition}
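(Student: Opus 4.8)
The plan is to reduce \Cref{prop:3} to the already-established \Cref{prop:1} by exploiting the fact that each directional differential process is a fixed deterministic contraction of the corresponding vectorized derivative process, followed by a chain-rule identification of the resulting tensor contraction with scalar derivatives of $g$.

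First I would record the elementary chain-rule identity underlying the whole argument. For any order $k$ and any split into $j$ temporal and $k-j$ spatial derivatives, differentiating $g(h_s,h_t)=K(\Delta+h_su,\delta+h_t)$ repeatedly (using $v=1$ for the temporal direction) gives
\[
\partial_{h_t}^{\,j}\partial_{h_s}^{\,k-j}g(h_s,h_t)\big|_{(0,0)}={u^{\otimes(k-j)}}^{\T}\partial_t^{\,j}\partial_s^{\,k-j}K(\Delta,\delta),
\]
where $\partial_s^{\,k-j}K$ is the reshaped derivative tensor in the listing convention fixed in \Cref{sec::vec-tensor}, and where $\partial_{h_s},\partial_{h_t}$ are the scalar derivatives written as $\partial_s,\partial_t$ when applied to $g$ in the statement. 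This is merely the statement that the $(k-j)$-fold directional derivative in the spatial direction $u$ equals the full contraction of the $(k-j)$-th derivative tensor with $u^{\otimes(k-j)}$; I would verify it by induction on $k-j$, the inductive step being a single application of the ordinary chain rule. The only point to check is that the contraction indices match the vectorization order of \Cref{sec::vec-tensor}, which holds because $u^{\otimes(k-j)}$ is built by the same Kronecker recursion as $\partial_s^{\otimes(k-j)}$.

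Next I would invoke bilinearity of covariance. Since $u^{\otimes(r_i-j_i)}$ is deterministic, the identity $\Cov(a^{\T}X,b^{\T}Y)=a^{\T}\Cov(X,Y)\,b$ applies with $a={u^{\otimes(r_1-j_1)}}$, $X=\partial_t^{\,j_1}\partial_s^{\,r_1-j_1}Z(s,t)$, and analogously for $b,Y$, giving
\[
\Cov\!\left({u^{\otimes(r_1-j_1)}}^{\T}\partial_t^{\,j_1}\partial_s^{\,r_1-j_1}Z(s,t),\,{u^{\otimes(r_2-j_2)}}^{\T}\partial_t^{\,j_2}\partial_s^{\,r_2-j_2}Z(s',t')\right)={u^{\otimes(r_1-j_1)}}^{\T}\,M\,{u^{\otimes(r_2-j_2)}},
\]
where $M=\Cov(\partial_t^{\,j_1}\partial_s^{\,r_1-j_1}Z(s,t),\partial_t^{\,j_2}\partial_s^{\,r_2-j_2}Z(s',t'))$. \Cref{prop:1} identifies $M=(-1)^{r_1}\partial_t^{\,j_1+j_2}\partial_s^{\,r_1+r_2-j_1-j_2}K(\Delta,\delta)$. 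Pulling out the scalar $(-1)^{r_1}$, the remaining contraction of ${u^{\otimes(r_1-j_1)}}$ and ${u^{\otimes(r_2-j_2)}}$ against the order-$(r_1+r_2-j_1-j_2)$ spatial tensor coincides with contraction against $u^{\otimes(r_1+r_2-j_1-j_2)}$, so the chain-rule identity above (with $k=r_1+r_2$, $j=j_1+j_2$) rewrites it as $\partial_t^{\,j_1+j_2}\partial_s^{\,r_1+r_2-j_1-j_2}g(0_d,0)$, which is exactly the claimed expression.

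I expect the main obstacle to be purely bookkeeping: confirming that the two separate contractions — $u^{\otimes(r_1-j_1)}$ from the derivative in $s$ and $u^{\otimes(r_2-j_2)}$ from the derivative in $s'$ — merge into a single $u^{\otimes(r_1+r_2-j_1-j_2)}$ contraction compatible with the reshaping of $\partial_s^{\,r_1+r_2-j_1-j_2}K$, given that $K$ depends on $(s,s')$ only through $\Delta=s-s'$. The sign subtleties from differentiating in $s'$ are already absorbed into the $(-1)^{r_1}$ factor of \Cref{prop:1}, so no further sign tracking is required. As an alternative that sidesteps the reshaping altogether, I could mirror the proof of \Cref{prop:1} directly using directional finite differences $h^{-r}(D_{h,u,0}-1)^{r-j}(D_{h,0,1}-1)^{j}Z$, expand the shift operators binomially, and pass to the limit; the computation is heavier but the logic is identical.
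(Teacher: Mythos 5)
Your proof is correct, but it takes a genuinely different route from the paper. The paper proves \Cref{prop:3} from scratch by mirroring the proof of \Cref{prop:1} with directional finite differences: it writes each directional process as the limit of $k^{-r_1}(D_{k,u,0}-1)^{r_1-j_1}(D_{k,0,1}-1)^{j_1}Z$, expands the shift operators binomially, collects the coefficients into shift operators acting on $g(0_d,0)$, and passes to the limit --- exactly the ``heavier'' alternative you sketch in your last sentence. You instead reduce to the already-proven \Cref{prop:1} via the bilinearity identity $\Cov(a^{\T}X,b^{\T}Y)=a^{\T}\Cov(X,Y)\,b$ and then identify the double contraction ${u^{\otimes(r_1-j_1)}}^{\T}M\,u^{\otimes(r_2-j_2)}$ with $\partial_t^{j_1+j_2}\partial_s^{r_1+r_2-j_1-j_2}g(0_d,0)$ by the chain rule. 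This works: applying \Cref{prop:1} entrywise, the $(i,I)$-entry of $M$ is $(-1)^{r_1}$ times the mixed spatial partial of $K$ indexed by the concatenated multi-index $(i,I)$, the two contractions factor into a single contraction against $u^{\otimes(r_1+r_2-j_1-j_2)}$ because mixed partials commute (so the derivative tensor is symmetric and the index ordering is immaterial), and the resulting full contraction is precisely the iterated directional derivative of $g$ at the origin. Your approach buys economy --- no repetition of the binomial-expansion limit argument --- at the cost of two justifications the paper's direct computation avoids: (i) the bilinearity step needs the vectorized derivative processes to exist as mean-square limits with finite second moments, which you should anchor in \Cref{lemma:1} and \Cref{thm:1}; and (ii) the chain-rule identity at total order $r_1+r_2$ needs existence and continuity of the corresponding mixed partials of $K$, the same smoothness the paper tacitly assumes. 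One convention worth pinning down explicitly: \Cref{prop:1} evaluates the covariance with the second process at $(s+\Delta,t+\delta)$, while \Cref{prop:3} (following \Cref{prop:4}) uses $\Delta=s-s'$, $\delta=t-t'$; for odd-order spatial derivatives the orientation of $\Delta$ matters, so your remark that the sign is ``already absorbed'' into $(-1)^{r_1}$ should be checked against a fixed choice of displacement rather than asserted --- though this is a bookkeeping matter on which the paper itself is equally loose.
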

	
	{\allowdisplaybreaks
		\begin{proof}
			Following \Cref{prop:1}, we express the covariance between the directional differential processes as a limit of the covariance between the finite difference process,
			
			\begin{equation*}
				\Cov\left({\bu^{\otimes (r_1-j_1)}}^{\T}\partial_t^{j_1}\bpartial_\bs^{r_1-j_1}Z(\bs,t), {\bu^{\otimes (r_2-j_2)}}^{\T}\partial_t^{j_2}\bpartial_\bs^{r_2-j_2}Z(\bs',t')\right) = \lim_{\substack{k\to 0\\h\to0}}\Cov\left(Z^{(j_1,r_1)}_{k,\bu,1}(\bs,t),\;Z^{(j_2,r_2)}_{h,\bu,1}(\bs',t')\right),
			\end{equation*}
			where $Z^{(j_1,r_1)}_{k,\bu,1}(\bs,t)=k^{-r_1}(D_{k,\bu,0}-1)^{r_1-j_1}(D_{k,0,1}-1)^{j_1}Z(\bs,t)$, $Z^{(j_2,r_2)}_{h,\bu,1}(\bs',t')=h^{-r_2}(D_{h,\bu,0}-1)^{r_2-j_2}(D_{h,0,1}-1)^{j_2}Z(\bs',t')$. Using these expressions and performing binomial expansion 
			
			\begin{align*}
				&\lim_{\substack{k\to 0\\h\to0}}\Cov\left(Z^{(j_1,r_1)}_{k,\bu,1}(\bs,t),\;Z^{(j_2,r_2)}_{h,\bu,1}(\bs',t')\right) =\\
				&\lim_{\substack{k\to 0\\h\to0}}\;k^{-r_1}\;h^{-r_2}\;\Cov\left(\;\sum\limits_{i=1}^{r_1-j_1}\sum\limits_{j=1}^{j_1}{\binom{r_1-j_1}{i}}\;{\binom{j_1}{j}}\;(-1)^{r_1-(i+j)}\;Z(\bs+ik\bu,t+jk),\right.\\&\myquad[10]\left.\;\sum\limits_{i'=1}^{r_2-j_2}\sum\limits_{j'=1}^{j_2}{\binom{r_2-j_2}{i'}}\;{\binom{j_2}{j'}}\;(-1)^{r_2-(i'+j')}\;Z(\bs'+i'h\bu,t'+j'h)\right),\\
				&=(-1)^{r_1}\partial_t^{j_1+j_2}\bpartial_\bs^{r_1+r_2-(j_1+j_2)}\;g(\0_d,0),
			\end{align*}
			for every $j_i = 0,\ldots, r_i$, $i=1,2$. { The last equality is obtained after some algebra.}
		\end{proof}
	}
	We are interested in the directional derivative process required for inference,
	
	\begin{equation*}
		\L^*_{\bu,1} Z(\bs,t)=\left(\L_{\bs,\bu}Z(\bs,t)^{\T}, \partial_tZ(\bs,t), \partial_t\L_{\bs,\bu}Z(\bs,t)^{\T}, \partial_t^2Z(\bs,t), \partial_t^2\L_{\bs,\bu}Z(\bs,t)^{\T}\right)^{\T},
	\end{equation*}
	where $\L_{\bs,\bu}Z(\bs,t)=\left(\bu^{\T}\bpartial_\bs Z(\bs,t), {\bu^{\otimes 2}}^{\T}\bpartial_\bs^2Z(\bs,t)\right)^{\T}$. Evidently, $\partial_t^4\bpartial_\bs^4g(\0_d,0)$ should exist and be well-defined. The next proposition addresses this.
	
	\begin{proposition}\label{prop:4}
		The covariance between ${\bu^{\otimes2}}^{\T}\partial_t^2\bpartial_\bs^2Z(\bs,t)$ and ${\bu^{\otimes2}}^{\T}\partial_t^2\bpartial_\bs^2Z(\bs',t')$ is given by
		
		\begin{equation*}
			\begin{split}
				&\Cov\left({\bu^{\otimes2}}^{\T}\partial_t^2\bpartial_\bs^2Z(\bs,t), {\bu^{\otimes2}}^{\T}\partial_t^2\bpartial_\bs^2Z(\bs',t')\right)\\
				&= \frac{3}{||\bDelta||^2}\left(1-5\frac{(\bu^{\T}\bDelta)^2}{||\bDelta||^2}\right)\left(1-\frac{(\bu^{\T}\bDelta)^2}{||\bDelta||^2}\right)\left(\partial_t^4\bpartial_\bs^2\tildeK(||\bDelta||,|\delta|)-\frac{\partial_t^4\bpartial_\bs\tildeK(||\bDelta||,|\delta|)}{||\bDelta||}\right)+\\
				&\myquad[4]\frac{6}{||\bDelta||}\frac{(\bu^{\T}\bDelta)^2}{||\bDelta||^2}\left(1-\frac{(\bu^{\T}\bDelta)^2}{||\bDelta||^2}\right)\partial_t^4\bpartial_\bs^3\tildeK(||\bDelta||,|\delta|)+\left(\frac{\bu^{\T}\bDelta}{||\bDelta||}\right)^4\partial_t^4\bpartial_\bs^4\tildeK(||\bDelta||,|\delta|),
			\end{split}
		\end{equation*}
		where $\bDelta = \bs-\bs'$ and $\delta=t-t'$.
	\end{proposition}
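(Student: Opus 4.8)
The plan is to reduce the covariance to a single scalar fourth-order derivative via \Cref{prop:3} and then execute an explicit chain-rule expansion. First I would match the two directional processes ${u^{\otimes 2}}^{\T}\partial_t^2\partial_s^2 Z(s,t)$ and ${u^{\otimes 2}}^{\T}\partial_t^2\partial_s^2 Z(s',t')$ to the template of \Cref{prop:3} by taking $r_1=r_2=4$ and $j_1=j_2=2$, so that each has spatial order $r_i-j_i=2$ and temporal order $j_i=2$. Since $(-1)^{r_1}=1$, $j_1+j_2=4$ and $r_1+r_2-j_1-j_2=4$, \Cref{prop:3} collapses the required covariance to $\partial_t^4\partial_s^4 g(0_d,0)$, where $g(h_s,h_t)=\tildeK(||\Delta+h_su||,|\delta+h_t|)$ and $\partial_s,\partial_t$ denote ordinary differentiation in the scalars $h_s,h_t$ at the origin.

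Second, I would separate the two differentiations, which commute because the radial and temporal arguments of $\tildeK$ depend on $h_s$ and $h_t$ separately. For $\delta\neq 0$ the map $h_t\mapsto|\delta+h_t|$ is smooth near $h_t=0$ with derivative $\delta/|\delta|$ and vanishing higher derivatives; because $(\delta/|\delta|)^4=1$, applying $\partial_{h_t}^4$ contributes exactly the factor $\partial_t^4\tildeK(\cdot,|\delta|)$ with no leftover sign. It then remains to differentiate $\Phi(\rho):=\partial_t^4\tildeK(\rho,|\delta|)$ four times in $h_s$ through the radial map $\rho(h_s)=||\Delta+h_su||$.

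Third, I would record the derivatives of $\rho$ at $h_s=0$. Setting $c=(u^{\T}\Delta)/||\Delta||$ and using $||u||=1$, repeated differentiation of $\rho^2=||\Delta||^2+2h_s(u^{\T}\Delta)+h_s^2$ gives $\rho'(0)=c$, $\rho''(0)=(1-c^2)/||\Delta||$, $\rho'''(0)=-3c(1-c^2)/||\Delta||^2$ and $\rho^{(4)}(0)=-3(1-c^2)(1-5c^2)/||\Delta||^3$. Substituting these into the fourth-order composition identity $\tfrac{d^4}{dh_s^4}\Phi(\rho)=\Phi^{(4)}(\rho')^4+6\Phi^{(3)}(\rho')^2\rho''+\Phi^{(2)}\big[3(\rho'')^2+4\rho'\rho'''\big]+\Phi^{(1)}\rho^{(4)}$, with $\Phi^{(k)}=\partial_t^4\partial_s^k\tildeK$, produces the three claimed summands: the term $c^4\partial_t^4\partial_s^4\tildeK$, the term $\tfrac{6c^2(1-c^2)}{||\Delta||}\partial_t^4\partial_s^3\tildeK$, and---upon collecting the $\Phi^{(1)}$ and $\Phi^{(2)}$ contributions, which share the common factor $3(1-c^2)(1-5c^2)/||\Delta||^2$---the bracketed curvature difference $\partial_t^4\partial_s^2\tildeK-\partial_t^4\partial_s\tildeK/||\Delta||$. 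Rewriting $c^2=(u^{\T}\Delta)^2/||\Delta||^2$ throughout yields the stated formula.

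The argument is routine once \Cref{prop:3} is invoked; the only delicate bookkeeping is the fourth-order chain rule together with the four successive derivatives of $\rho$. The key simplification is that $3(\rho'')^2+4\rho'\rho'''$ collapses to $3(1-c^2)(1-5c^2)/||\Delta||^2$ and that this factor equals $-||\Delta||\,\rho^{(4)}(0)$, so the first- and second-order radial derivatives of $\tildeK$ pair into the single curvature difference. I expect this collapse to be the main, if modest, obstacle.
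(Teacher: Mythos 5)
Your proposal is correct and follows essentially the same route as the paper: reduce via \Cref{prop:3} (with $r_1=r_2=4$, $j_1=j_2=2$) to $\partial_t^4\partial_s^4 g(0_d,0)$, note $\rho_t'(0)^4=1$ kills the temporal factor, compute $\rho_s'(0),\ldots,\rho_s^{(iv)}(0)$ exactly as in the paper, and apply the fourth-order chain rule so that the $\Phi^{(1)}$ and $\Phi^{(2)}$ contributions collapse into the curvature difference via $3(\rho'')^2+4\rho'\rho'''=-||\Delta||\,\rho^{(iv)}(0)$. Your only streamlining is to differentiate the isotropic composition $\tildeK(\rho_s(h_s),\rho_t(h_t))$ directly, skipping the paper's intermediate anisotropic tensor expansion of $\partial_t^4\partial_s^4 g$, which is a harmless shortcut since those general expressions are only needed for the lower-order entries elsewhere.
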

	
	\begin{proof}
		For convenience, let $\bDelta(h_s)=\bDelta$, $\delta(h_t)=\delta$, $K\left(\bDelta(h_s),\delta(h_t)\right)=K$ and $g(h_s,h_t)=g$. Successive differentiation yields
		
		{\allowdisplaybreaks
			\begin{align*}
				\bpartial_\bs g&=\bDelta'^{\T}\bpartial_\bs K,~\partial_t g=\partial_t K\delta',\bpartial_\bs^2 g=\bDelta''^{\T}\bpartial_\bs K+{\bDelta'^{\otimes 2}}^{\T}\bpartial_\bs^2K,\\
				\partial_t\bpartial_\bs g&=\bDelta'^{\T}\partial_t\bpartial_\bs K\delta',\partial_t^2 g=\delta''\partial_t K+\delta'^2\partial_t^2 K,\\
				\bpartial_\bs^3g&=\bDelta'''^{\T}\bpartial_\bs K+3\left(\bDelta''\otimes \bDelta'\right)^{\T}\bpartial_\bs^2K + {\bDelta'^{\otimes 3}}^{\T}\bpartial_\bs^3K,\;
				\partial_t\bpartial_\bs^2g=\bDelta''^{\T}\partial_t\bpartial_\bs K\delta'+{\bDelta'^{\otimes 2}}^{\T}\partial_t\bpartial_\bs^2K\delta',\\
				\partial_t^2\bpartial_\bs g&=\bDelta'^{\T}\partial_t\bpartial_\bs K\delta''+\bDelta'^{\T}\partial_t^2\bpartial_\bs K\delta'^2,\;
				\partial_t^3g=\delta'''\partial_tK + 3\delta''\delta'\partial_t^2K+\delta'^3\partial_t^3K,\\
				\bpartial_\bs^4g&={\bDelta^{(iv)}}^{\T}\bpartial_\bs K + 4\left(\bDelta'''\otimes \bDelta'\right)^{\T}\bpartial_\bs^2K + 3{\bDelta''^{\otimes 2}}^{\T}\bpartial_\bs^2K+
				6\left(\bDelta''\otimes\bDelta'^{\otimes 2}\right)^{\T}\bpartial_\bs^3K+{\bDelta'^{\otimes 4}}^{\T}\bpartial_\bs^4K,\\
				\partial_t^4\bpartial_\bs^4g&=\left\{{\bDelta^{(iv)}}^{\T}\partial_t\bpartial_\bs K + 4\left(\bDelta'''\otimes \bDelta'\right)^{\T}\partial_t\bpartial_\bs^2K + 3{\bDelta''^{\otimes 2}}^{\T}\partial_t\bpartial_\bs^2K+
				6\left(\bDelta''\otimes\bDelta'^{\otimes 2}\right)^{\T}\partial_t\bpartial_\bs^3K+\right.\\
				&\myquad[4]\left.{\bDelta'^{\otimes 4}}^{\T}\partial_t\bpartial_\bs^4K\right\}\delta^{(iv)}+\left\{{\bDelta^{(iv)}}^{\T}\partial_t^2\bpartial_\bs K + 4\left(\bDelta'''\otimes \bDelta'\right)^{\T}\partial_t^2\bpartial_\bs^2K +\right.\\&\myquad[4]
				3{\bDelta''^{\otimes 2}}^{\T}\partial_t\bpartial_\bs^2K+
				6\left(\bDelta''\otimes\bDelta'^{\otimes 2}\right)^{\T}\partial_t^2\bpartial_\bs^3K+\left.{\bDelta'^{\otimes 4}}^{\T}\partial_t^2\bpartial_\bs^4K\right\}\left(4\delta'''\delta'+3\delta''^2\right)+\\&\myquad[4]6\left\{{\bDelta^{(iv)}}^{\T}\partial_t^3\bpartial_\bs K + 4\left(\bDelta'''\otimes \bDelta'\right)^{\T}\partial_t^3\bpartial_\bs^2K +
				3{\bDelta''^{\otimes 2}}^{\T}\partial_t^3\bpartial_\bs^2K+
				6\left(\bDelta''\otimes\bDelta'^{\otimes 2}\right)^{\T}\partial_t^3\bpartial_\bs^3K+\right.\\
				&\myquad[4]\left.{\bDelta'^{\otimes 4}}^{\T}\partial_t^3\bpartial_\bs^4K\right\}\delta'^2\delta''+\left\{{\bDelta^{(iv)}}^{\T}\partial_t^4\bpartial_\bs K + 4\left(\bDelta'''\otimes \bDelta'\right)^{\T}\partial_t^4\bpartial_\bs^2K +
				3{\bDelta''^{\otimes 2}}^{\T}\partial_t^4\bpartial_\bs^2K+\right.\\&\myquad[4]\left.
				6\left(\bDelta''\otimes\bDelta'^{\otimes 2}\right)^{\T}\partial_t^4\bpartial_\bs^3K+{\bDelta'^{\otimes 4}}^{\T}\partial_t^4\bpartial_\bs^4K\right\}\delta'^4,
			\end{align*}
		}
		{ obtained through successive differentiation.} We note that $\bDelta(0)=\bDelta$, $\delta(0)=\delta$, $\bDelta'(h_s)=\bu$, $\bDelta''(h_s)=\bDelta'''(h_s)={\bDelta^{(iv)}}(h_s)=0$, $\delta'(h_t)=1$ and $\delta''(h_t)=\delta'''(h_t)=\delta^{(iv)}(h_t)=0$, for all $h_s$ and $h_t$. Substituting these values in the expressions derived above, we get $\partial_t^{4}\bpartial_\bs^{4}g(0_d,0)={\bu^{\otimes 4}}^{\T}\partial_t^{4}\bpartial_\bs^4K(0_d,0)$. Under isotropy, $K(\bDelta(h_s),\delta(h_t))=\tildeK(||\bDelta(h_s)||,|\delta(h_t)|)$. We define $\rho_s(h_s)=||\bDelta(h_s)||$, $\rho_t(h_t)=|\delta(h_t)|$, and let $g(h_s,h_t)=\tildeK(\rho_s(h_s),\rho_t(h_t))$. Hence, $\rho'_s(0)=\frac{\bu^{\T}\bDelta}{||\bDelta||}$, $\rho_s''(0)=\frac{1}{||\bDelta||}\left(1-\frac{(\bu^{\T}\bDelta)^2}{||\bDelta||^2}\right)$, $\rho_s'''(0)=-3\frac{\bu^{\T}\bDelta}{||\bDelta||^3}\left(1-\frac{(\bu^{\T}\bDelta)^2}{||\bDelta||^2}\right)$, and $\rho_s^{(iv)}(0)=-\frac{3}{||\bDelta||^3}\left(1-5\frac{(\bu^{\T}\bDelta)^2}{||\bDelta||^2}\right)\left(1-\frac{(\bu^{\T}\bDelta)^2}{||\bDelta||^2}\right)$. Similarly, $\rho_t'(0)=\frac{\delta}{|\delta|}$, but $\rho_t''(0)=\rho_t'''(0)=\rho_t^{(iv)}(0)=0$. On noting that $\rho_t'(0)^4=1$ we obtain $\partial_t^4\bpartial_\bs^4g(0,0) = \rho_s^{(iv)}(0)\partial_t^4\bpartial_\bs\tildeK+\left(4\rho_s'''(0)\rho_s'(0)+3\rho_s''(0)^2\right)\partial_t^4\bpartial_\bs^2\tildeK+6\rho_s''(0)\rho_s'(0)^2\partial_t^4\bpartial_\bs^3\tildeK+\rho_s'(0)^4\partial_t^4\bpartial_\bs^4\tildeK$,
		which is the required result.
\end{proof}
Expressions for lower order spatiotemporal derivatives in the above equations are obtained via similar substitutions. In the next discussion we examine covariance kernels that admit the spatiotemporal derivative processes of interest, viz., $\L^*Z(\bs,t)$ or $\L^*_{n_t,\bn_s}Z(\bs,t)$. We investigate separable and non-separable spatiotemporal covariance kernels in search of a necessary criterion involving spectral moments that ensures the existence of $\L^*Z(\bs,t)$. 


\section{Spectral densities for spatiotemporal covariance kernels}
We turn to spectral theory to characterize these covariance functions. Using Bochner's theorem \citep[see, e.g.,][]{bochner2005harmonic, rasmussen_gaussian_2005}, $K$ is a positive-definite covariance function on $\mathfrak{R}^d\times \mathfrak{R}$ if and only if it is of the form $K(\bDelta,\delta)=\iint\limits_{\mathfrak{R}^{d+1}}\exp(i\bw_\bs^{\T}\bDelta+iw_t\delta)\;F(d\bw_\bs,dw_t)$, where $F$ is the spectral measure for $Z(\bs,t)$. Additionally, if $K(\bDelta,\delta)$ is integrable then $F$ is absolutely continuous and has a spectral density, $f$, with respect to the Lebesgue measure which is the Fourier transform of $K$, thus allowing  $ K(\bDelta,\delta)=\iint\limits_{\mathfrak{R}^{d+1}}\exp(i\bw_\bs^{\T}\bDelta+iw_t\delta)\;f(\bw_\bs,w_t)\;d\bw_\bs\;dw_t$, where $f$ is a continuous, non-negative, integrable function. If, $f(\bw_\bs,w_t)=f_\bs(\bw_\bs)\;f_t(w_t)$, a separable spatiotemporal covariance arises (see \Cref{sec:covfns} for examples). The next proposition provides the necessary conditions from the perspective of spectral theory that ensure the existence of $\L^*Z(\bs,t)$.
\begin{proposition}\label{prop:5}
	The process $\L^*Z(\bs,t)$ is valid if the associated spatiotemporal spectral density possesses fourth moments in space and time.
\end{proposition}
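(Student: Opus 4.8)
The plan is to reduce the proposition to the existence of a single mixed derivative of the covariance at the origin and then extract that derivative from the spectral representation. By \Cref{thm:1}, the joint process $(Z(s,t),\L_rZ(s,t)^{\T})^{\T}$ is valid exactly when $\partial_t^{2j}\partial_s^{2(r-j)}K(0_d,0)$ exists for every $j=0,\ldots,r$; since $\L^*Z(s,t)$ is built from derivatives up to $\partial_t^2\Partials^2Z$, the binding case is $j=r-j=2$, so it suffices to show that finiteness of the fourth-order spatiotemporal spectral moment forces $\partial_t^4\partial_s^4K(0_d,0)$ to exist and be finite. All lower-order mixed even derivatives will then follow automatically, because $f$ is integrable and $\|w_s\|^{a}|w_t|^{b}\le 1+\|w_s\|^4|w_t|^4$ over the range $a,b\le 4$.

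First I would invoke Bochner's theorem to write $K(\Delta,\delta)=\iint_{\mathfrak{R}^{d+1}}\exp(iw_s^{\T}\Delta+iw_t\delta)\,f(w_s,w_t)\,dw_s\,dw_t$ and differentiate under the integral sign. Each temporal derivative contributes a factor $iw_t$ and each spatial derivative a component $iw_{s,k}$, so $\partial_t^4\partial_s^4$ corresponds to multiplying the integrand by $(iw_t)^4$ times a degree-four monomial $w_s^{\alpha}$, $|\alpha|=4$, in the components of $w_s$. The candidate expression is then $\partial_t^4\partial_s^4K(0_d,0)=\iint (iw_t)^4 w_s^{\alpha}\,f(w_s,w_t)\,dw_s\,dw_t$, and the whole argument turns on legitimising this interchange.

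To justify differentiating under the integral, I would appeal to the Leibniz rule for parameter integrals via dominated convergence: the interchange is valid provided the differentiated integrands admit a fixed $f$-integrable envelope, uniform in $(\Delta,\delta)$. Since $|(iw_t)^4 w_s^{\alpha}\exp(iw_s^{\T}\Delta+iw_t\delta)| = |w_t|^4|w_s^{\alpha}| \le |w_t|^4\|w_s\|^4$, the hypothesis that the spectral density carries fourth moments in space and time, i.e. $\iint |w_t|^4\|w_s\|^4 f(w_s,w_t)\,dw_s\,dw_t<\infty$, furnishes precisely such an envelope. This yields a kernel $K$ with continuous mixed partial derivatives up to fourth order in $t$ and fourth order in $s$, the displayed formula for $\partial_t^4\partial_s^4K$ at every $(\Delta,\delta)$, and in particular the existence and finiteness of $\partial_t^4\partial_s^4K(0_d,0)$; combined with \Cref{thm:1}, this delivers the validity of $\L^*Z(s,t)$.

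The hard part will be the careful justification of the interchange of differentiation and integration: I must check that all intermediate partials---every mixed order up to four in $t$ and four in $s$---exist and are continuous, not merely the top-order derivative at the origin, so that the classical equivalence between the $2k$-th derivative of a Fourier transform at zero and finiteness of the $2k$-th spectral moment applies. Controlling each degree-four spatial monomial $w_s^{\alpha}$ uniformly by $\|w_s\|^4$ is the technical step where the moment bound does its work, and I would state the hypothesis in the joint form $\iint |w_t|^4\|w_s\|^4 f<\infty$ so that one envelope dominates every mixed derivative simultaneously.
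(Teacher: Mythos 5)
Your overall strategy is sound and is in fact close to the paper's: the paper also starts from the Bochner representation (in real cosine form), differentiates under the integral sign with a dominated-convergence justification, and then passes to the origin, with the extra step of checking via the isotropic decomposition of \Cref{prop:2} that the apparently singular factors $\|\Delta\|^{-k}$ multiply quantities that stay bounded as $\|\Delta\|\to 0$. Your reduction through \Cref{thm:1} to the existence of $\partial_t^{2j}\partial_s^{2(r-j)}K(0_d,0)$ is a legitimate and cleaner way to organize this. However, there is a genuine flaw in your domination step. The inequality $\|w_s\|^{a}|w_t|^{b}\le 1+\|w_s\|^4|w_t|^4$ for $a,b\le 4$ is false: take $a=4$, $b=0$, $\|w_s\|=M$, $|w_t|=M^{-1}$, so the left side is $M^4$ while the right side is $2$. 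Hence the single joint envelope $|w_t|^4\|w_s\|^4$ does not dominate the intermediate mixed monomials that the inductive Leibniz/dominated-convergence argument must control, and your claim that ``all lower-order mixed even derivatives follow automatically'' from the joint fourth moment plus integrability of $f$ does not hold.

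The gap is not cosmetic: with only $\iint f<\infty$ and $\iint\|w_s\|^4|w_t|^4 f<\infty$ the conclusion of the proposition can actually fail. Concentrate the mass of $f$ in the region $\{|w_t|\le\|w_s\|^{-2}\}$ with spatial tails heavy enough that $\iint\|w_s\|^4 f=\infty$; then $\iint\|w_s\|^4|w_t|^4 f\le\iint\|w_s\|^{-4}f<\infty$, yet $\partial_s^4K(0_d,0)$ does not exist, so $\Var\{\Partials^2Z(s,t)\}$ is infinite and $\L^*Z(s,t)$ is not valid. Note also that the marginal fourth moments do not imply the joint one, so none of the ``corner'' moments is redundant. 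The fix is to read the hypothesis as finiteness of all mixed spectral moments of order up to four in each argument---equivalently the four corner moments $\iint f$, $\iint\|w_s\|^4 f$, $\iint|w_t|^4 f$ and $\iint\|w_s\|^4|w_t|^4 f$---and to dominate with the product envelope, using $\|w_s\|^{a}|w_t|^{b}\le\bigl(1+\|w_s\|^4\bigr)\bigl(1+|w_t|^4\bigr)$ for all $a,b\le 4$, which is valid since each factor is handled separately. This is effectively what the paper's proof uses (it invokes $\iint|w_s|f<\infty$, $\iint|w_t|f<\infty$ at first order and the joint moment $\iint w_s^4w_t^4\tilde f<\infty$ at top order), and with that envelope your differentiation-under-the-integral argument goes through and yields the proposition somewhat more directly than the paper's radial-limit computation.
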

\begin{proof}
	We consider the non-separable case. The proof for the separable case uses $f(\bw_\bs,w_t)=f_\bs(\bw_\bs)f_t(w_t)$. Note that $f$ is symmetric around the origin,
	
	\begin{equation*}
		K(\bDelta,\delta)=\iint\limits_{\mathfrak{R}^{d+1}}e^{i\left(\bw_\bs^{\T}\bDelta+w_t\delta\right)}f(\bw_\bs,w_t)\;d\bw_\bs\;dw_t=\iint\limits_{\mathfrak{R}^{d+1}}\cos(\bw_\bs^{\T}\bDelta+w_t\delta)f(\bw_\bs,w_t)\;d\bw_\bs\;dw_t,
	\end{equation*}
	Since $f$ is integrable, differentiating with respect to $\bDelta$ and $\delta$ we obtain
	
	\begin{equation*}
		\begin{split}
			\bpartial_\bs K(\bDelta,\delta)&=-\iint\limits_{\mathfrak{R}^{d+1}}\bw_\bs\sin(\bw_\bs^{\T}\bDelta+w_t\delta)f(\bw_\bs,w_t)\;d\bw_\bs\;dw_t,\\
			\partial_tK(\bDelta,\delta)&=-\iint\limits_{\mathfrak{R}^{d+1}}w_t\sin(\bw_\bs^{\T}\bDelta+w_t\delta)f(\bw_\bs,w_t)\;d\bw_\bs\;dw_t,
		\end{split}
	\end{equation*}
	Since, $|\bw_\bs\sin(\bw_\bs^{\T}\bDelta+w_t\delta)f(\bw_\bs,w_t)|\leq |\bw_\bs f(\bw_\bs,w_t)|$, $|w_t\sin(\bw_\bs^{\T}\bDelta+w_t\delta)f(\bw_\bs,w_t)|\leq |w_t f(\bw_\bs,w_t)|$, $\iint\limits_{\mathfrak{R}^{d+1}}|\bw_\bs|f(\bw_\bs,w_t)\;d\bw_\bs\;dw_t<\infty$ and $\iint\limits_{\mathfrak{R}^{d+1}}|w_t|f(\bw_\bs,w_t)\;d\bw_\bs\;dw_t<\infty$ differentiation under the integral sign is valid and we have 
	
	\begin{equation*}
		\partial_t^{j}\bpartial_\bs^{r-j}K(\bDelta,\delta)=\iint_{\mathfrak{R}^{d+1}}\bw_\bs^{\otimes (r-j)}w_t^{j}\cos\left(\frac{r\pi}{2}+\bw_\bs^{\T}\bDelta+w_t\delta\right)f(\bw_\bs,w_t)\;d\bw_\bs\;dw_t,
	\end{equation*}
	where $r>0$ and for $\bw^{\otimes^{r_s}}$ the integration is defined element-wise.
	Under isotropy, noting that the differentiation is with respect to $||\bDelta||$ and $|\delta|$ \citep[see, for e.g,][Theorem 2.5.2, p. 35]{adler1981geometry}, the above result simplifies to
	
	\begin{equation*}
		\partial_t^{j}\bpartial_\bs^{r-j}\tildeK(||\bDelta||,|\delta|)=\iint\limits_{\mathfrak{R}^2}\bw_\bs^{r-j}w_t^{j}\cos\left(\frac{r\pi}{2}+\bw_\bs||\bDelta||+w_t|\delta|\right)\tilde{f}(\bw_\bs,w_t)\;d\bw_\bs\;dw_t.
	\end{equation*}
	We focus on $\partial_t^4\bpartial_\bs^4K(\bDelta,\delta)$ and pass to the limits. We note that the limits exist and the Dominated Convergence Theorem allows for the limit to be interchanged with the integral sign. We make the following observations,
	
	\begin{equation*}
		\begin{split}
			&\lim_{\substack{||\bDelta||\to 0\\|\delta|\to0}}\left(\partial_t^4\bpartial_\bs^2\tildeK-\frac{\partial_t^4\bpartial_\bs\tildeK}{||\bDelta||}\right)\\
			&=-\lim_{||\bDelta||\to 0}\iint\limits_{\mathfrak{R}^2}\left(\cos\left(\bw_\bs||\bDelta||\right)-\frac{\sin\left(\bw_\bs||\bDelta||\right)}{\bw_\bs||\bDelta||}\right)\bw_\bs^2 w_t^4\tilde{f}(\bw_\bs,w_t)\;d\bw_\bs dw_t=0,\\
			&\lim_{\substack{||\bDelta||\to 0\\|\delta|\to0}}\partial_t^4\bpartial_\bs^3\tildeK=\lim_{||\bDelta||\to0}\iint\limits_{\mathfrak{R}^2}\bw_\bs^3 w_t^4\sin(\bw_\bs||\bDelta||)\tilde{f}(\bw_\bs,w_t)\;d\bw_\bs dw_t=0,\\
			&\lim_{\substack{||\bDelta||\to 0\\|\delta|\to0}}\partial_t^4\bpartial_\bs^4\tildeK=\lim_{||\bDelta||\to0}\iint\limits_{\mathfrak{R}^2}\bw_\bs^4 w_t^4\cos(\bw_\bs||\bDelta||)\tilde{f}(\bw_\bs, w_t)\;d\bw_\bs dw_t\\
			&=\iint\limits_{\mathfrak{R}^2}\bw_\bs^4w_t^4\tilde{f}(\bw_\bs,w_t)\;d\bw_\bs dw_t<\infty.
		\end{split}
	\end{equation*}
	Referring to the expressions in \Cref{prop:2}, 
	we observe that the factors corresponding to $\Bigg(d_{|\delta|}^4d_{||\bDelta||}^2\tildeK-\frac{d_{|\delta|}^4d_{||\bDelta||}\tildeK}{||\bDelta||}\Bigg)$ and $d_{|\delta|}^4d_{||\bDelta||}^3\tildeK$ are $\left(-3\frac{\bP_{4,1}}{||\bDelta||^4}+\frac{\bP_{4,2}}{||\bDelta||^2}+15\frac{\bP_{4,3}}{||\bDelta||^6}-3\frac{\bP_{4,4}}{||\bDelta||^4}\right)$ and $\bigg(\frac{\bP_{4,1}}{||\bDelta||^3}-6\frac{\bP_{4,3}}{||\bDelta||^5}+\frac{\bP_{4,4}}{||\bDelta||^3}\bigg)$ respectively, remain bounded as $||\bDelta||\to 0$. For the spatiotemporal derivative process to be valid, the spectral density, $\tilde{f}(\bw_\bs,w_t)$, must possess fourth moments in space and time. 
\end{proof}
Working with non-separable spatiotemporal kernels, verifying the necessary condition in \Cref{prop:5} requires the existence of closed-form Fourier inversion for the spectral density associated with the kernel. In a purely spatial setting and/or a separable space-time scenario, where Fourier transform pairs exist, verifying this condition is straightforward \citep[see, e.g.,][]{wang2018process, halder2024bayesian}. If we assume separability, then covariance kernels satisfying this condition are readily available. However, non-separable spatiotemporal covariance functions that have closed-form spectral distributions are not easily available \citep[see, e.g.,][]{gneiting2002nonseparable,fuentes2008class}. 
The Gneiting class \citep[see, e.g.,][]{gneiting2002nonseparable} of stationary non-separable spatiotemporal covariance functions allows for a simpler verification of the required condition. They are expressible as $K(\bDelta,\delta)=\tildeK(||\bDelta||, |\delta|)=\frac{\sigma^2}{\psi(|\delta|^2)^{d/2}}\varphi\left(\frac{||\bDelta||^2}{\psi(|\delta|^2)}\right)$, where $\varphi(x)$ and $\psi(x)$ are complete monotone, and positive with complete monotone derivative on $[0,\infty)$ respectively. 
Particularly, we are interested in the Mat\'ern kernel, $\varphi(x)=\left(2^{\nu-1}\bGamma(\nu)\right)^{-1}(\phi_sx^{1/2})^{\nu}K_\nu(\phi_s x^{1/2})$, $\phi_s>0, \nu>0$, where $K_{\nu}(\cdot)$ is the modified Bessel function of second kind, and $\psi(x)=(\phi_t^2x^\alpha+1)^\beta$, $a>0$, $0<\alpha\leq 1$, $0\leq\beta\leq1$. We fix $\alpha=\beta=1$, producing the covariance kernel, $ \tildeK(||\bDelta||,|\delta|;\btheta)=\frac{\sigma^2}{2^{\nu-1}\bGamma(\nu)A_t^{d/2}}\left(\frac{\phi_s||\bDelta||}{A_t^{1/2}}\right)^\nu K_{\nu}\left(\frac{\phi_s||\bDelta||}{A_t^{1/2}}\right)$, $A_t = \phi_t^2|\delta|^2+1$.
Observe that for $\delta=0$, $K(||\bDelta||,0;\btheta)$ is the spatial Mat\'ern and, for a fixed $\delta$, $\tildeK(||\bDelta||,|\delta|;\btheta)$ is a re-scaled spatial Mat\'ern, with a spectral density, $\tilde{f}(\bw_s)=C_1(\btheta,|\delta|)\left(C_2(\btheta,|\delta|)+||\bw_s||^2\right)^{-\nu-d/2}$ that belongs to the $t$-family. The fourth spectral moment exists if $\nu>2$. If $||\bDelta||\to0$, observe that $\varphi(0)=1$, hence, $\tildeK(0,|\delta|;\btheta)=\frac{\sigma^2}{(\phi_t^2|\delta|^2+1)^{d/2}}$. This has a spectral density in the Laplace family, $\tilde{f}(w_t)=\frac{\sigma^2}{\phi_t}\sqrt{\frac{\pi}{2}}\exp\left(-\frac{|w_t|}{\phi_t^2}\right)$, for which the fourth moment always exists. In case $||\bDelta||\ne 0$ fixed, and $|\delta|$ close to 0, $\varphi\left(\frac{||\bDelta||^2}{\psi(|\delta|^2)}\right)$ does not interfere with smoothness of $\psi(|\delta|^2)^{d/2}$, for our choices of $\varphi(\cdot)$ and $\psi(\cdot)$, allowing for $K(||\bDelta||,|\delta|)$ to inherit smoothness properties of $K(0,|\delta|)$ \citep[see for e.g.][pp. 311--312]{stein2005space}. We also consider the case, $\varphi(x)$, $\phi_s>0$ and $\nu\to\infty$ resulting in the squared exponential kernel. In such a scenario, if $|\delta|\ne 0$ is kept constant, the spectral distribution, $\tilde{f}(\bw_s)=\left(\frac{\pi}{\phi_s}\right)^{d/2}\exp\left(-\pi^2\frac{\psi(|\delta|^2)}{\phi_s}||\bw_s||^2\right)$, belongs to the Gaussian family having a finite fourth moment \citep[see, e.g.,][Section 4.2.1, pp. 82--83]{rasmussen_gaussian_2005}. Evidently, in the Gneiting class (with $\varphi(x)$ chosen to be Mat\'ern), spatial smoothness of process realizations is governed by the fractal parameter, $\nu>0$, of the Mat\'ern kernel. We consider only half-integer values of $\nu = \nu_0 +\frac{1}{2}$, $\nu_0 =0, 1, 2$ and $\nu\to\infty$, for which we have closed-form expressions of $K_\nu(\cdot)$. Our choice of $\alpha=1$ allows us to study second order temporal derivatives. However, $\alpha\in (0,1]$ can be used to control the temporal smoothness of process realizations \citep[see, e.g.,][]{gneiting2002nonseparable}. The next discussion focuses on surfaces in $\mathfrak{R}^3$ that have \emph{simple} parameterizations.

\section{Parametric Surfaces}

We focus on parametric surfaces $\C=\left\{ (s_x(\omega,\upsilon),s_y(\omega,\upsilon),t(\upsilon)):(\omega,\upsilon) \in \mathcal{D}_{\omega}\times \mathcal{D}_\upsilon\right\}$. The normal to $\C$ at a point $(\bs(\omega,\upsilon),t(\upsilon))$, $\bs(\omega,\upsilon)=(s_x(\omega,\upsilon),s_y(\omega,\upsilon))^{\T}$ is,
\begin{equation*}
	\begin{split}
		\overline{\bn}(\bs(\omega,\upsilon),\upsilon) &= \left(\frac{\partial s_x(\omega,\upsilon)}{\partial\omega}, \frac{\partial s_y(\omega,\upsilon)}{\partial\omega}, 0\right)^{\T} \times \left(\frac{\partial s_x(\omega,\upsilon)}{\partial\upsilon}, \frac{\partial s_y(\omega,\upsilon)}{\partial\upsilon}, \frac{\partial t(\upsilon)}{\partial\upsilon}\right)^{\T}\\&=\left|\begin{smallmatrix}
			i & j & k\\
			\frac{\partial s_x(\omega,\upsilon)}{\partial\omega} & \frac{\partial s_y(\omega,\upsilon)}{\partial\omega} & 0\\
			\frac{\partial s_x(\omega,\upsilon)}{\partial\upsilon} & \frac{\partial s_y(\omega,\upsilon)}{\partial\upsilon} & \frac{\partial t(\upsilon)}{\partial\upsilon}\\
		\end{smallmatrix}\right|
		= \frac{\partial s_y(\omega,\upsilon)}{\partial\omega}\frac{\partial t(\upsilon)}{\partial\upsilon}\;\hat{i}-\frac{\partial s_x(\omega,\upsilon)}{\partial\omega}\frac{\partial t(\upsilon)}{\partial\upsilon}\;\hat{j}+\left|\frac{\partial(s_x,s_y)}{\partial(\omega,\upsilon)}\right|\;\hat{k},
	\end{split}
\end{equation*}
where $\times$ is the vector product. The equation of the tangent plane at $(\bs(\omega_0,\upsilon_0), \upsilon_0)$ is, $\left\{(x,y,z):\overline{\bn}(\bs(\omega_0,\upsilon_0), t(\upsilon_0))^{\T}\left(\begin{smallmatrix}
	x-s_x(\omega_0,\upsilon_0)\\
	y-s_y(\omega_0,\upsilon_0)\\
	z-t(\upsilon_0)
\end{smallmatrix}\right)=0\right\}$. For convenience, we assume $t(\upsilon)=\upsilon$. Note that $\overline{\bn}_s(\omega,\upsilon)=\left(\frac{\partial s_y(\omega,\upsilon)}{\partial\omega}, -\frac{\partial s_x(\omega,\upsilon)}{\partial\omega}\right)^{\T}$ is the normal to the parametric planar curve, $C_\upsilon=\{(s_x(\omega,\upsilon),s_y(\omega,\upsilon)):\omega\in \D_\omega\}$. This connects parametric curvilinear wombling \citep[see, e.g.,][]{banerjee2006bayesian, halder2024bayesian} to surface wombling. It is also the reason behind our choice of parameterization for $\C$. The usual parameterization found in the literature $\C=\left\{ (s_x(\omega,\upsilon),s_y(\omega,\upsilon),t(\omega,\upsilon)):(\omega,\upsilon) \in \mathcal{D}_{\omega}\times \mathcal{D}_\upsilon\right\}$ is not suitable; the parameterization of the time coordinate cannot depend on the parameterization of spatial coordinates. Hence, $t(\omega,\upsilon)=t(\upsilon)$. In that case, $\overline{\bn}(\bs(\omega,\upsilon),\upsilon)=\frac{\partial s_y(\omega,\upsilon)}{\partial\omega}\frac{\partial t(\upsilon)}{\partial\upsilon}\;\hat{i}-\frac{\partial s_x(\omega,\upsilon)}{\partial\omega}\frac{\partial t(\upsilon)}{\partial\upsilon}\;\hat{j}+\left|\frac{\partial(s_x,s_y)}{\partial(\omega,\upsilon)}\right|\;\hat{k}$. Further assuming, $t(\upsilon)=\upsilon$, produces the desired normal. The surface $\C$ is regular if the bottom two rows of the determinant in \cref{eq:normal} are linearly independent. Note that our parameterization results in regular surfaces with the exception where $\frac{\partial s_x(\omega,\upsilon)}{\partial\omega} = \frac{\partial s_y(\omega,\upsilon)}{\partial\omega} = 0$ and $||\bn_\bs||=0$. Hence, we define  $\C$ as regular if $||\bn_\bs||\ne0$. Parametric curves were defined to be regular in \cite{banerjee2006bayesian} using the same condition. Finally, turning to static wombling, $\left|\frac{\partial(s_x,s_y)}{\partial(\omega,\upsilon)}\right|$ can be zero even if the curve is evolving over time---if the vectors $\left(\frac{\partial s_x(\omega,\upsilon)}{\partial\omega}, \frac{\partial s_y(\omega,\upsilon)}{\partial\omega}\right)^{\T}$, $\left(\frac{\partial s_x(\omega,\upsilon)}{\partial\upsilon}, \frac{\partial s_y(\omega,\upsilon)}{\partial\upsilon}\right)^{\T}$ are linearly dependent. We exclude such parameterizations. Eventually, working with parametric triangular planes (see Section~4.2 of the manuscript) this is not a concern.

\section{Curves on Surfaces: Gauss's Divergence Theorem}

We provide proofs for the results concerning \emph{spatiotemporal flux} (STF) and the \emph{rate of change in spatiotemporal flux} (RSTF) in Section~3.2 of the manuscript. We assume that $d=2$.

\begin{proposition}\label{prop:6}
	Let the process $Z(\bs,t)$ be thrice differentiable, $\bF_1(\bs,t)=\left(\begin{smallmatrix}F_{11}\\F_{12}\\F_{13}\end{smallmatrix}\right)=\bnabla^{\otimes 1}Z(\bs,t)=\left(\begin{smallmatrix}\bpartial_\bs Z(\bs,t)\\\partial_tZ(\bs,t)\end{smallmatrix}\right)$ and $\bF_2(\bs,t)=\left(\begin{smallmatrix}
		F_{21}\\\vdots\\F_{26}
	\end{smallmatrix}\right)=\bnabla^{\otimes 2}Z(\bs,t)$. The STF and RSTF satisfy 
	
	\begin{equation*}
		\begin{split}
			\oiintctrclockwise_{\C=\partial W} \bn(\bs,t)^{\T} \bF_1(\bs,t)\;\mathrm{d}\A&=\iiint_{W}{\rm div}\; \bF_1\;\mathrm{d}W,\\
			\oiintctrclockwise_{\C=\partial W} \{\bn(\bs,t)^{\otimes 2}\}^{\T} \bF_2(\bs,t)\;\mathrm{d}\A&=\iiint_{W}\left\{{\rm div}\; n_{s_x} \left(\begin{smallmatrix}
				F_{21}\\F_{22}\\F_{23}
			\end{smallmatrix}\right) + {\rm div}\; n_{s_y} \left(\begin{smallmatrix}
				F_{22}\\F_{24}\\F_{25}
			\end{smallmatrix}\right) + {\rm div}\; n_{t} \left(\begin{smallmatrix}
				F_{23}\\F_{25}\\F_{26}
			\end{smallmatrix}\right)\right\}\;\mathrm{d}W,
		\end{split}
	\end{equation*}
	where $W$ is a simple region enclosed by curves, $C_\upsilon$ which are closed for every $\upsilon\in [\upsilon_0,\upsilon_1]$, $\oiintctrclockwise$ denotes the surface integral using the default counter-clockwise orientation of $\C$ which produces outward pointing normals, $\mathrm{d}W=ds_x\;ds_y\;dt$ is the volume element and ${\rm div}\; n_{s_x} \left(\begin{smallmatrix}
		F_{21}\\F_{22}\\F_{23}
	\end{smallmatrix}\right) + {\rm div}\; n_{s_y} \left(\begin{smallmatrix}
		F_{22}\\F_{24}\\F_{25}
	\end{smallmatrix}\right) + {\rm div}\; n_{t} \left(\begin{smallmatrix}
		F_{23}\\F_{25}\\F_{26}
	\end{smallmatrix}\right)=(\frac{\partial}{\partial s_x}n_{s_x}F_{21} + \frac{\partial}{\partial s_y}n_{s_x}F_{22} +\frac{\partial}{\partial t}n_{s_x}F_{23})+(\frac{\partial}{\partial s_x}n_{s_y}F_{22}+\frac{\partial}{\partial s_y}n_{s_y}F_{24}+\frac{\partial}{\partial t}n_{s_y}F_{25})+(\frac{\partial}{\partial s_x}n_{t}F_{23}+\frac{\partial}{\partial s_y}n_{t}F_{25}+\frac{\partial}{\partial t}n_{t}F_{26})$, consisting of elements from $\bnabla^{\otimes 3}Z(\bs,t)$.
\end{proposition}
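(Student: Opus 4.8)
The plan is to derive both identities from Gauss's Divergence Theorem, reserving the genuine work for the correct choice of vector field in each case.

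First I would settle the spatiotemporal flux identity, which is the direct case. The relevant field is $F_1(s,t)=\nabla Z(s,t)=(\partial_{s_x}Z,\partial_{s_y}Z,\partial_tZ)^{\T}$, a bona fide $C^1$ vector field on $W$ as soon as $Z$ is twice differentiable. The Divergence Theorem then yields $\oiintctrclockwise_{\C=\partial W} n^{\T}F_1\,\mathrm{d}\A=\iiint_{W}{\rm div}\,F_1\,\mathrm{d}W$ verbatim, and one checks that ${\rm div}\,F_1=\partial_{s_x}^2Z+\partial_{s_y}^2Z+\partial_t^2Z=F_{21}+F_{24}+F_{26}$ is the spatiotemporal Laplacian. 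Twice-differentiability is exactly what makes $F_1$ continuously differentiable, so this hypothesis is sharp.

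For the rate-of-change identity I would first reorganize the integrand. Writing the second-order process $F_2=\nabla^{\otimes 2}Z$ as the symmetric spatiotemporal Hessian
\[
H=\begin{pmatrix} F_{21} & F_{22} & F_{23}\\ F_{22} & F_{24} & F_{25}\\ F_{23} & F_{25} & F_{26}\end{pmatrix},
\]
the contraction becomes $\{n^{\otimes 2}\}^{\T}F_2=n^{\T}Hn$. The device is to introduce the auxiliary field $V=Hn$, so that $n^{\T}Hn=V\cdot n$ and the Divergence Theorem applies to $V$. Because each triangular panel produced by the triangulation of Section~4.2 of the manuscript is planar with a constant outward normal $n=(n_{s_x},n_{s_y},n_t)^{\T}$, the components $n_{s_x},n_{s_y},n_t$ act as scalars and $V$ is a $C^1$ field precisely when $Z\in C^3$. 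Applying Gauss and expanding columnwise gives ${\rm div}(Hn)=n_{s_x}{\rm div}\,G_1+n_{s_y}{\rm div}\,G_2+n_t{\rm div}\,G_3$, where $G_1,G_2,G_3$ are the columns of $H$; constancy of the $n_i$ lets each scalar be absorbed, producing ${\rm div}(n_{s_x}G_1)+{\rm div}(n_{s_y}G_2)+{\rm div}(n_tG_3)$, which is the stated integrand. Its entries are third-order derivatives drawn from $\nabla^{\otimes 3}Z$, which is why thrice-differentiability is imposed.

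The hard part will be the quadratic dependence of $n^{\T}Hn$ on the normal: the Divergence Theorem cannot be applied to this integrand directly, and the passage through $V=Hn$ is legitimate only when $n$ is constant, since otherwise the spurious cross terms $(\nabla n_{s_x})\cdot G_1+\cdots$ appear and, worse, $n_{s_x},n_{s_y},n_t$ are not even defined in the interior of $W$. I would therefore prove the identity panel-by-panel over the triangulation, applying Gauss on each planar face with its own constant normal and summing the resulting cell contributions to recover the closed-surface flux on the left and the volume integral on the right; the piecewise-planar structure already built into the wombling computation is exactly what renders the clean form valid. A secondary, purely clerical, task is to confirm the index matching between $F_{21},\dots,F_{26}$ and the Hessian positions, including the off-diagonal entries $F_{22},F_{23},F_{25}$ that each occur in two columns.
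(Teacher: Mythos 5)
Your treatment of the first identity is correct and is, in substance, what the paper does: the paper's proof is simply a by-hand run of Gauss's theorem, decomposing the flux into its three coordinate components and using the top-graph/bottom-graph/vertical-collar structure of the simple region together with the fundamental theorem of calculus, with the observation that twice-differentiability of $Z$ suffices there. The gap is in the second identity, and it begins with a misreading of the statement. The right-hand side applies the divergence to the \emph{products} $n_{s_x}G_1$, $n_{s_y}G_2$, $n_tG_3$ (columns of the Hessian weighted by the normal components, understood as functions $n_i(s_x,s_y,t)$ carried through the argument), so the ``spurious cross terms'' $(\nabla n_{s_x})\cdot G_1+\cdots$ that you introduce the constant-normal panels to avoid are in fact part of the claimed identity, not an obstruction to it; and no constancy of $n$ is needed to pass from ${\rm div}(Hn)$ to $\sum_i{\rm div}(n_iG_i)$ --- that step is just linearity of the divergence. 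Constancy would be required only to pull the $n_i$ \emph{out} of the divergence, which the statement never does.

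More seriously, the panel-by-panel remedy does not yield a proof. A single planar face of a triangulation is not the boundary of any subregion of $W$, so Gauss cannot be applied ``on each planar face'': converting one face's flux into a volume integral forces you to close the face up (say, by coning to an interior point), and the lateral contributions then fail to cancel between adjacent cells because the auxiliary fields $Hn_k$ differ from panel to panel; moreover, for piecewise-constant normals the stated volume integrand $\partial(n_iF_{2j})$ is distributional across panel boundaries, and you supply no limiting argument from the polyhedral approximation back to the smooth surface $\C$ of the proposition. The paper handles the quadratic dependence on the \emph{varying} normal directly: it expands $\{n^{\otimes 2}\}^{\T}F_2$ into the six quadratic terms and absorbs exactly one factor of the normal into the area element, since $\overline{n}=\left(\frac{\partial s_y}{\partial\omega},-\frac{\partial s_x}{\partial\omega},\left|\frac{\partial(s_x,s_y)}{\partial(\omega,\upsilon)}\right|\right)^{\T}$ gives, e.g., $n_t^2F_{26}\,\mathrm{d}\A=n_tF_{26}\left|\frac{\partial(s_x,s_y)}{\partial(\omega,\upsilon)}\right|d\omega\,d\upsilon$, while a mixed term splits as $2n_{s_y}n_tF_{25}\,\mathrm{d}\A=n_{s_y}F_{25}\left|\frac{\partial(s_x,s_y)}{\partial(\omega,\upsilon)}\right|d\omega\,d\upsilon+n_tF_{25}\left|\frac{\partial(s_x,t)}{\partial(\omega,\upsilon)}\right|d\omega\,d\upsilon$. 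Each quadratic term thereby becomes a \emph{linear} flux of $n_iF_{2j}$, to which the first-identity machinery applies in the corresponding coordinate direction ($t$, respectively $s_y$), using that $W$ is simple in each coordinate pair ($s_x$--$s_y$ and $s_x$--$t$), with the remaining normal factor kept inside the integrand through the fundamental-theorem-of-calculus step. Reworking your argument along these lines --- rather than via constant-normal panels --- is what closes the gap.
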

\begin{proof}
	\begin{figure}[t]
		\centering
		\includegraphics[scale=0.6]{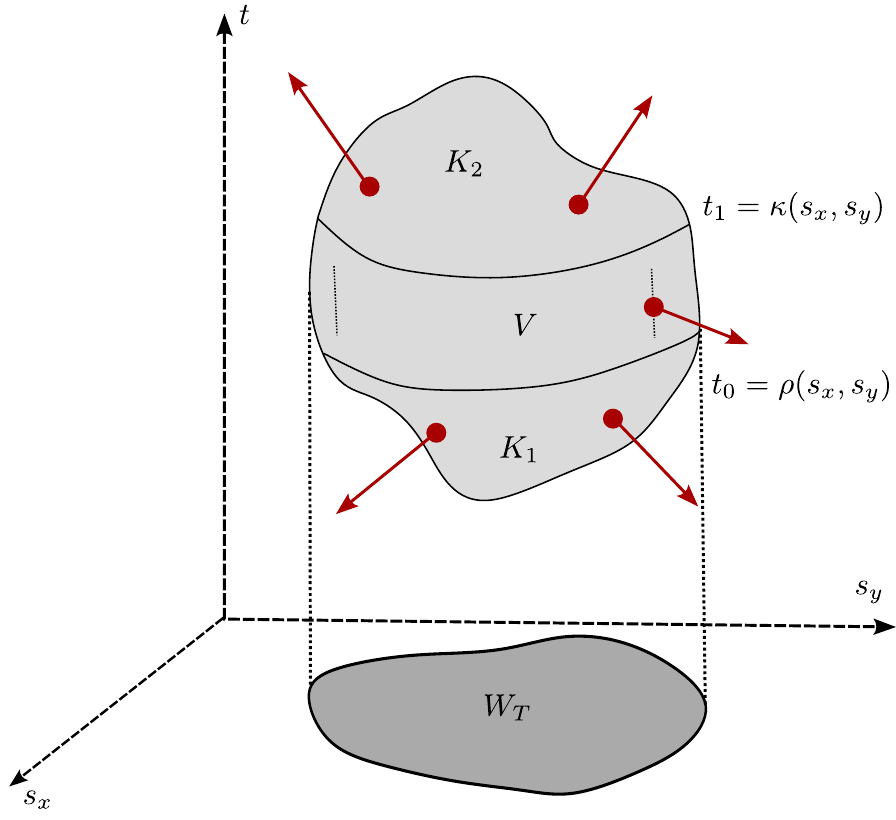}
		\caption{Figure supplementing the proof of \Cref{prop:6}. Normals to the surface are in red.}
		\label{fig:prop-5}
	\end{figure}
	With regard to the first identity for STF, observe that
	
	\begin{equation*}
		\begin{split}
			&\oiintctrclockwise_{\C=\partial W} \bn(\bs,t)^{\T} \bF_1(\bs,t)\;\mathrm{d}\A= \iint\limits_{\C} \frac{\overline{\bn}(\bs(\omega,\upsilon),\upsilon)^{\T}}{||\overline{\bn}(\bs(\omega,\upsilon),\upsilon)||} \bF_1(\bs(\omega,\upsilon),\upsilon)\; ||\overline{\bn}(\bs(\omega,\upsilon),\upsilon)|| \;d\omega\;d\upsilon\\
			&\myquad[2]=\iint\limits_{\C}  F_{11}\frac{\partial s_y(\omega,\upsilon)}{\partial\omega} - F_{12}\frac{\partial s_x(\omega,\upsilon)}{\partial\omega}+ F_{13}\left|\frac{\partial (s_x,s_y)}{\partial(\omega,\upsilon)}\right|\;d\omega\;d\upsilon\\
			&\myquad[2]=\iint\limits_{\C}  F_{11}\frac{\partial s_y(\omega,\upsilon)}{\partial\omega}\;d\omega\;d\upsilon\; + \iint\limits_{\C} F_{12}\left(-\frac{\partial s_x(\omega,\upsilon)}{\partial\omega}\right)\;d\omega\;d\upsilon\; + \iint\limits_{\C}F_{13}\left|\frac{\partial (s_x,s_y)}{\partial(\omega,\upsilon)}\right|\;d\omega\;d\upsilon.
		\end{split}
	\end{equation*}
	Focusing on the third component above, we prove the identity
	
	\begin{equation*}
		\iint\limits_{\C}F_{13}\left|\frac{\partial (s_x,s_y)}{\partial(\omega,\upsilon)}\right|\;d\omega\;d\upsilon=\iiint\limits_{W}\frac{\partial F_{13}}{\partial t}\;ds_x\;ds_y\;dt=\iiint\limits_{W}\frac{\partial F_{13}}{\partial t}\;{\rm d}W.                
	\end{equation*}
	Since $W$ is a simple region ($s_x$-$s_y$ simple), there exists a simple region, $W_T$ in $\mathfrak{R}^2$ and continuous functions $\rho:\mathfrak{R}^2\to \mathfrak{R}$ and $\kappa:\mathfrak{R}^2\to \mathfrak{R}$ such that $\rho(s_x,s_y)<\tau(s_x,s_y)$ for all $(s_x,s_y)\in W_T\setminus\partial W_T$ as seen in \cref{fig:prop-5}. The boundary $\partial W = \C$ for $W$ consists of a top, the graph of $\kappa(s_x,s_y)$, denoted by $K_2$, a bottom, the graph of $\rho(s_x,s_y)$, denoted by $K_1$ and possibly a vertical collar $V$ and $\iint\limits_{\C}F_{13}\left|\frac{\partial (s_x,s_y)}{\partial(\omega,\upsilon)}\right|\;d\omega\;d\upsilon= \iint\limits_{K_1}F_{13}\left|\frac{\partial (s_x,s_y)}{\partial(\omega,\upsilon)}\right|\;d\omega\;d\upsilon+ \iint\limits_{V}F_{13}\left|\frac{\partial (s_x,s_y)}{\partial(\omega,\upsilon)}\right|\;d\omega\;d\upsilon+ \iint\limits_{K_2}F_{13}\left|\frac{\partial (s_x,s_y)}{\partial(\omega,\upsilon)}\right|\;d\omega\;d\upsilon$. Since $K_j$, for $j=1,2$ is a surface, there are simple regions $S_1,\ldots,S_k$ whose interiors, denoted by ${\rm Int}(S_i)$, $i=1,\ldots,k$, are disjoint, such that for injective and smooth functions $s_x:\mathfrak{R}^2\to \mathfrak{R}$ and $s_y:\mathfrak{R}^2\to \mathfrak{R}$ on $U=\bigcup\limits_{i=1}^{k}{\rm Int} (S_i)$, $K_j=\left\{(s_x(\omega,\upsilon)),s_y(\omega,\upsilon),\upsilon):\omega,\upsilon\in S=\bigcup\limits_{i=1}^{k}S_i\right\}$, for $j=1,2$. Since $K_1$ is the bottom of the surface, the third coordinate of $\overline{\bn}(\bs(\omega,\upsilon),\upsilon)$ is negative. Using change of variables, we get
	
	\begin{equation*}
		\begin{split}
			\iint\limits_{K_1}F_{13}\left|\frac{\partial (s_x,s_y)}{\partial(\omega,\upsilon)}\right|\;d\omega\;d\upsilon 
			&=-\iint\limits_{W_T}F_{13}(s_x,s_y,\rho(s_x,s_y))ds_x\;ds_y.
		\end{split}
	\end{equation*}
	Since $F_{13}$ is tangential to the vertical collar $V$, $\iint\limits_{V}F_{13}\left|\frac{\partial (s_x,s_y)}{\partial(\omega,\upsilon)}\right|\;d\omega\;d\upsilon=0$. Note that $K_2$ is the top of the surface and hence the third coordinate of $\overline{\bn}$ would be positive. Using change of variables, $\iint\limits_{K_2}F_{13}\left|\frac{\partial (s_x,s_y)}{\partial(\omega,\upsilon)}\right|\;d\omega\;d\upsilon=\iint\limits_{W_T}F_{13}(s_x,s_y,\kappa(s_x,s_y))\;ds_x\;ds_y$. It follows that
	
	\begin{equation*}
		\begin{split}
			\iint\limits_{\C}F_{13}\left|\frac{\partial (s_x,s_y)}{\partial(\omega,\upsilon)}\right|\;d\omega\;d\upsilon&=\iint\limits_{W_T}F_{13}(s_x,s_y,\kappa(s_x,s_y))\;ds_x\;ds_y-\iint\limits_{W_T}F_{13}(s_x,s_y,\rho(s_x,s_y))\;ds_x\;ds_y,\\
			&=\iint\limits_{W_T}ds_x\;ds_y\int\limits_{\rho(s_x,s_y)}^{\kappa(s_x,s_y)}\frac{\partial F_{13}(s_x,s_y,t)}{\partial t}\;dt=\iiint\limits_{W}\frac{\partial F_{13}}{\partial t}{\rm d} W.
		\end{split}
	\end{equation*}
	Similar results follow for $F_{11}$ and $F_{12}$ observing that $\left|\frac{\partial (s_y,t)}{\partial(\omega,\upsilon)}\right|=\frac{\partial s_y(\omega,\upsilon)}{\partial\omega}$, $\left|\frac{\partial (s_x,t)}{\partial(\omega,\upsilon)}\right|=-\frac{\partial s_x(\omega,\upsilon)}{\partial\omega}$ provided that $\frac{\partial F_{11}}{\partial s_x}$, $\frac{\partial F_{12}}{\partial s_y}$ and $\frac{\partial F_{13}}{\partial t}$ exist, i.e. $Z(\bs,t)$ is twice differentiable.
	
	The second identity concerning the RSTF is proved similarly. Observe that $F_{21}=\frac{\partial^2}{\partial s_x^2}Z(\bs,t)$, $F_{22}=\frac{\partial^2}{\partial s_x\partial s_y}Z(\bs,t)$, $F_{23}=\frac{\partial^2}{\partial s_x\partial t}Z(\bs,t)$, $F_{24}=\frac{\partial^2}{\partial s_y^2}Z(\bs,t)$, $F_{25}=\frac{\partial^2}{\partial s_y\partial t}Z(\bs,t)$ and $F_{26}=\frac{\partial^2}{\partial t^2}Z(\bs,t)$. Hence, $\bnabla^{\otimes 2} Z(\bs,t)= (F_{21},F_{22},F_{23},F_{22},F_{24}, F_{25}, F_{23}, F_{25}, F_{26})^{\T}$ is the $9\times 1$ vector of pure and mixed spatiotemporal derivatives. Let $\overline{\bn}(\bs(\omega,\upsilon),\upsilon)=(\overline{n}_{s_x}(\omega,\upsilon), \overline{n}_{s_y}(\omega,\upsilon),\overline{n}_t(\omega,\upsilon))^{\T}=\left(\frac{\partial s_y(\omega,\upsilon)}{\partial\omega},-\frac{\partial s_x(\omega,\upsilon)}{\partial\omega}, \left|\frac{\partial (s_x,s_y)}{\partial(\omega,\upsilon)}\right|\right)^{\T}$ and similarly $n_{s_x}(\omega,\upsilon)=\frac{\overline{n}_{s_x}(\omega,\upsilon)}{||\overline{\bn}(\bs(\omega,\upsilon),\upsilon)||}$ etc. The left-hand side of the identity can be expressed as
	
	\begin{equation*}
		\oiintctrclockwise_{\C=\partial W} \left(n_{s_x}^2F_{21} + n_{s_y}^2F_{24} + n_{t}^2F_{26}\right) + 2\left(n_{s_x} n_{s_y} F_{22} + n_{s_x} n_{t} F_{23} + n_{s_y} n_{t} F_{25}\right) \;\mathrm{d}\A.
	\end{equation*}
	We show $\oiintctrclockwise_{\C=\partial W} n_{t}^2F_{26}\;\mathrm{d}\A = \iiint\limits_{W}\frac{\partial\;n_{t}F_{26}}{\partial t}{\rm d}W$ and $2\oiintctrclockwise_{\C=\partial W} n_{s_y}n_{t}F_{25}\;\mathrm{d}\A = \iiint\limits_{W}\frac{\partial\;n_{t}F_{25}}{\partial s_y}{\rm d}W+\iiint\limits_{W}\frac{\partial\;n_{s_y}F_{25}}{\partial t}{\rm d}W$.
	Similar identities follow for the other terms. For the first identity,
	
	\begin{equation*}
		\begin{split}
			\iint\limits_{\C}n_{t}F_{26}\left|\frac{\partial (s_x,s_y)}{\partial(\omega,\upsilon)}\right|\;d\omega\;d\upsilon&=\iint\limits_{W_T}n_{t}(s_x,s_y,\kappa(s_x,s_y))\;F_{26}(s_x,s_y,\kappa(s_x,s_y))\;ds_x\;ds_y-\\&\myquad[4]\iint\limits_{W_T}n_{t}(s_x,s_y,\rho(s_x,s_y))\;F_{26}(s_x,s_y,\rho(s_x,s_y))\;ds_x\;ds_y,\\
			&=\iint\limits_{W_T}ds_x\;ds_y\int\limits_{\rho(s_x,s_y)}^{\kappa(s_x,s_y)}\frac{\partial n_{t}(s_x,s_y,t)F_{26}(s_x,s_y,t)}{\partial t}dt=\iiint\limits_{W}\frac{\partial\; n_{t}F_{26}}{\partial t}{\rm d} W.
		\end{split}
	\end{equation*}
	The second equality is proved by observing that $W$ is also $s_x$-$t$ simple and writing $2\oiintctrclockwise\limits_{\C} n_{s_y}n_{t}F_{25}\;\mathrm{d}\A=\oiintctrclockwise\limits_{\C} n_{s_y}F_{25}\left|\frac{\partial (s_x,s_y)}{\partial(\omega,\upsilon)}\right|\;d\omega\; d\upsilon+\oiintctrclockwise\limits_{\C} n_{t}F_{25}\left|\frac{\partial (s_x,t)}{\partial(\omega,\upsilon)}\right|\;d\omega\; d\upsilon$ followed by similar arguments as pursued for the proof for the previous identity. Combining we have 
	
	\begin{equation*}
		\begin{split}
			&\oiintctrclockwise_{\C=\partial W} \{\bn(\bs,t)^{\otimes 2}\}^{\T} F_2(\bs,t)\;\mathrm{d}\A = \iiint\limits_{W}\left\{\left(\mfrac{\partial}{\partial s_x}n_{s_x}F_{21} + \mfrac{\partial}{\partial s_y}n_{s_x} F_{22} +\mfrac{\partial}{\partial t}n_{s_x}F_{23}\right)+\right.\\&\myquad[2]\left.\left(\mfrac{\partial}{\partial s_y}n_{s_y}F_{24}+\mfrac{\partial}{\partial s_x}n_{s_y}F_{22}+\mfrac{\partial}{\partial t}n_{s_y}F_{25}\right)+\left(\mfrac{\partial}{\partial t}n_{t}F_{26}+\mfrac{\partial}{\partial s_x}n_{t}F_{23}+\mfrac{\partial}{\partial s_y}n_{t}F_{25}\right)\right\}\;{\rm d}W.
		\end{split}
	\end{equation*}
	The integrand on the right-hand side is the expanded expression for ${\rm div}\; n_{s_x} \left(\begin{smallmatrix}
		F_{21}\\F_{22}\\F_{23}
	\end{smallmatrix}\right) + {\rm div}\; n_{s_y} \left(\begin{smallmatrix}
		F_{22}\\F_{24}\\F_{25}
	\end{smallmatrix}\right) + {\rm div}\; n_{t} \left(\begin{smallmatrix}
		F_{23}\\F_{25}\\F_{26}
	\end{smallmatrix}\right)$. It is comprised of elements from $\bnabla^{\otimes 3}Z(\bs,t)$ which requires $Z(\bs,t)$ to be thrice differentiable. The arguments above are adapted from the proof of Gauss's Divergence Theorem \citep[see, e.g.,][]{spivak1979comprehensive}.
\end{proof}

\section{Asymptotic Properties of \texorpdfstring{$\bGamma(\widetilde{\C})$}{GCtilde}}
We investigate the asymptotic properties of the wombling measure, $\bGamma(\widetilde{\C})$ computed on the triangulation, $\widetilde{\C}$ of a regular surface, $\C$ using a partition, $\mathscr{P}$ as the norm of the partition, $|\mathscr{P}|_A\to 0$. This supplements the discussion in Section~4.2 of the manuscript.

\begin{theorem}\label{thm:2}
	Let $\C$ be a regular (wombling) surface that can be triangulated using a partition $\mathscr{P}$ (with norm $|\mathscr{P}|_A$) and let $\widetilde{\C}$ be the triangulated approximation. Then $\bGamma_g(\widetilde{\C})\stackrel{a.s.}{\to} \bGamma_g(\C)$ as $|\mathscr{P}|_A\to0$, where $\bGamma_g(\widetilde{\C})$ is the wombling measure for $\widetilde{\C}$ and $\bGamma_g(\C) = \iint\limits_{\C} g\left(\L^*_{n_t,\bn_s}Z(\bs,t)\right)\;{\rm d}\A$ with $g(\cdot)$ being any uniformly continuous (u.c.) function.
\end{theorem}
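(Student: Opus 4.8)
The plan is to fix a single realization of the process and reduce the claim to a deterministic convergence of surface–integral approximations, with all the randomness absorbed into one almost-sure sample-path continuity statement. First I would argue that $\L^*Z(s,t)$ admits a version with almost surely continuous sample paths on any compact subset of $\mathfrak{R}^2\times\mathfrak{R}^+$. This is exactly where the standing assumption $\nu>2$ enters: by \Cref{thm:1} and \Cref{prop:5} the existence of $\partial_t^4\partial_s^4 K(0_d,0)$ guarantees that every component of $\L^*Z$ is a zero-mean, mean-square continuous Gaussian field whose cross-covariance $V_{\L^*Z}$ is continuous at the diagonal. Taylor-expanding $V_{\L^*Z}$ (smooth under $\nu>2$) yields a modulus-of-continuity bound on the incremental variance; since Gaussian increments have all moments dominated by their variance, the Kolmogorov--Chentsov criterion then produces an a.s. continuous modification \citep[see, e.g.,][]{adler1981geometry}.

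On the full-measure event on which $(s,t)\mapsto\L^*Z(s,t)$ is continuous, fix such a realization. Because $\C$ is smooth and regular, the outward unit normal field $n(s(\omega,\upsilon),\upsilon)$ and the area density $||\overline{n}(s(\omega,\upsilon),\upsilon)||$ depend continuously on $(\omega,\upsilon)$ over the compact parameter domain $[\omega_0,\omega_1]\times[t_1,t_N]$, and $N_{st}$ is a continuous (polynomial) function of $(n_s,n_t)$. Hence the parameter-domain integrand $(\omega,\upsilon)\mapsto g\!\left(\L^*_{n_t,n_s}Z(s(\omega,\upsilon),\upsilon)\right)||\overline{n}(s(\omega,\upsilon),\upsilon)||$ is a composition of continuous maps with the uniformly continuous $g$, and is therefore itself uniformly continuous on the compact domain.

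I would then decompose $\Gamma(\widetilde{\C})-\Gamma(\C)$ as a sum over the sub-rectangles $[\omega'_i,\omega'_{i+1}]\times[t_j,t_{j+1}]$ of $\mathscr{P}$, and on each piece compare the triangulated integrand on $\C_{ijk}$ — which uses the constant facet normal $n_{ijk}$, the flat triangle, and the facet area weight $||\overline{n}_{ijk}||$ — against the exact surface-patch integrand. Smoothness of the parameterization gives, via a first-order mean-value estimate, that $n_{ijk}$ converges to the surface normal and $||\overline{n}_{ijk}||$ to the area density, \emph{uniformly} as $|\mathscr{P}|_A\to0$; the regularity condition $||\overline{n}_s||\ne0$ keeps the unit normals well-defined throughout refinement, so the parameter-based triangulation stays non-degenerate and avoids Schwarz-lantern pathologies. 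Combining uniform normal convergence, uniform continuity of $N_{st}$ in its arguments, uniform continuity of $g$, and uniform continuity of the fixed sample path, the per-piece integrand error is dominated by a single modulus $\rho(|\mathscr{P}|_A)$ independent of the piece with $\rho(|\mathscr{P}|_A)\to0$. Summing against the uniformly bounded total area gives $|\Gamma(\widetilde{\C})-\Gamma(\C)|\le \A(\C)\,\rho(|\mathscr{P}|_A)\to0$, which is precisely a Riemann-sum convergence for the continuous integrand. Since this holds on a full-measure event, $\Gamma(\widetilde{\C})\stackrel{a.s.}{\to}\Gamma(\C)$.

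The main obstacle I anticipate is the a.s. sample-path continuity of the highest-order (fourth-order mixed) components of $\L^*Z$: mean-square continuity alone does not imply a.s. continuity, so the fourth-moment spectral condition of \Cref{prop:5} is doing essential work here rather than being a mere regularity convenience. A secondary technical point is certifying the uniform, non-degenerate convergence of the facet normals and areas through the regularity of $\C$, which is what prevents the triangulated area from failing to converge. I would remark that the general statement for an arbitrary uniformly continuous functional $g$ follows by the identical argument once the identity case is handled, since uniform continuity of $g$ is exactly the hypothesis that makes the per-piece bound $\rho(|\mathscr{P}|_A)$ uniform.
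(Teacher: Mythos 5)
Your proposal is correct and, at its core, follows the same route as the paper's own proof of Theorem~2 in the supplement: a pathwise $\epsilon$--$\delta$ Riemann-sum argument on the compact parameter domain $\D_\omega\times\D_\upsilon$, comparing the exact surface integral, a Riemann sum at rectangle centroids weighted by $||\overline{n}_{ijk}||$, and the sum of integrals over the triangulated pieces, with uniform continuity of the composed integrand supplying a piece-uniform modulus that is summed against the bounded parameter domain. Where you go beyond the paper is in justifying \emph{why} the realized integrand is uniformly continuous: the paper simply asserts that $(\omega,\upsilon)\mapsto g\left(\L^*_{n_t,n_s}Z(s(\omega,\upsilon),\upsilon)\right)$ is u.c.\ on the compact domain and deduces the almost-sure convergence directly from the deterministic inequality, implicitly invoking sample-path continuity of $\L^*Z$. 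Your Kolmogorov--Chentsov step supplies exactly this missing ingredient, and you are right to flag that mean-square continuity (equivalently, mere existence of $\partial_t^4\partial_s^4K(0_d,0)$) is not by itself sufficient --- one needs a quantitative modulus on the incremental variance of the highest-order components, which for the Gneiting--Mat\'ern class with $\nu>2$ holds with H\"older exponent $2\nu-4$ in the spatial lag, so Gaussianity plus Kolmogorov--Chentsov yields a continuous modification. Your explicit treatment of the uniform convergence of facet normals and areas under the regularity condition $||\overline{n}_s||\ne0$ likewise makes precise what the paper's Mean Value Theorem step handles tersely. One caveat worth recording: the theorem as stated allows any covariance admitting $\L^*Z$, so to match that generality you should either impose the H\"older (or Dudley-entropy) condition on $V_{\L^*Z}$ as an explicit hypothesis or note, as the paper effectively does, that a.s.\ path continuity of the realized field is being taken as given.
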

\begin{proof}
	Over the rectangle $\D_\omega\times \D_{\upsilon} = [\omega_0,\omega_1]\times [\upsilon_0,\upsilon_1]$, the wombling measure $\bGamma_g(\C) = \int\limits_{\upsilon_0}^{\upsilon_1} \int\limits_{\omega_0}^{\omega_1}g\left(\L^*_{n_t,\bn_s}Z(\bs(\omega,\upsilon),\upsilon)\right)\;||\overline{\bn}(\bs(\omega,\upsilon),\upsilon)||\; d\omega\;d\upsilon$ and we denote $g\left(\L^*_{n_t,\bn_s}Z(\bs(\omega,\upsilon),\upsilon)\right) = g(\omega,\upsilon)$ and $\overline{\bn}(\bs(\omega,\upsilon),\upsilon)= \overline{\bn}(\omega,\upsilon)$. Given the partition $\mathscr{P}$, $\omega_0=\omega_1'<\ldots<\omega_{n_\omega}'<\omega_1$ and $\upsilon_0=\upsilon'_1<\ldots<\upsilon'_{n_\upsilon}=\upsilon_1$ used to triangulate the surface, using the definition of a surface integral---for any $\epsilon>0$ there exists a $\delta_0>0$ such that for any $|\mathscr{P}|_A<\delta_0$
	
	\begin{equation*}
		\left|\int\limits_{\upsilon_0}^{\upsilon_1} \int\limits_{\omega_0}^{\omega_1}g\left(\omega,\upsilon\right)\;||\overline{\bn}(\omega,\upsilon)||\; d\omega\;d\upsilon-\sum\limits_{i=1}^{n_\omega-1}\sum\limits_{j=1}^{n_\upsilon-1}\sum\limits_{k=1}^{2}(\upsilon'_{j+1}-\upsilon'_j)(\omega'_{i+1}-\omega'_i)g\left(\omega_i^*,\upsilon_j^*\right)||\overline{\bn}_{ijk}(\omega_i^*,\upsilon_j^*)||\right|<\frac{\epsilon}{2},
	\end{equation*}
	where $\sum\limits_{i=1}^{n_\omega-1}\sum\limits_{j=1}^{n_\upsilon-1}\sum\limits_{k=1}^{2}(\upsilon'_{j+1}-\upsilon'_j)(\omega'_{i+1}-\omega'_i)\;g\left(\omega_i^*,\upsilon_j^*\right)\;||\overline{\bn}_{ijk}(\omega_i^*,\upsilon_j^*)||$ is the \emph{Riemann sum approximation}, $\left(\begin{smallmatrix}
		\omega_i^*\\ \upsilon_j^*
	\end{smallmatrix}\right)=\left(\begin{smallmatrix}
		\omega'_i\\\upsilon'_j
	\end{smallmatrix}\right)+\frac{1}{2}\left(\begin{smallmatrix}
		\omega'_i +\omega'_{i+1}\\\upsilon'_j+\upsilon'_{j+1}
	\end{smallmatrix}\right)$ are the centroids of the rectangles which partition $\bigcup\limits_{k=1}^2T_{ijk}$ and $\overline{\bn}_{ijk}$ is the normal to the triangle $T_{ijk}$, for $k=1,2$. 
	
	Since, $g:\mathfrak{R}^2\to\mathfrak{R}$ is a u.c. function over $\D_\omega\times \D_{\upsilon}$, for $\epsilon>0$, there exists $\delta_1>0$ such that for any $x,y\in \D_\omega\times \D_{\upsilon}$ with $||x-y||<\delta_1$, $x= \left(\begin{smallmatrix}
		x_\omega\\x_\upsilon
	\end{smallmatrix}\right)$, $y= \left(\begin{smallmatrix}
		y_\omega\\y_\upsilon
	\end{smallmatrix}\right)$ we have
	
	\begin{equation*}
		\left| g\left(x_\omega,x_\upsilon\right)||\overline{\bn}_{ijk}(x_\omega,x_\upsilon)|| - g\left(y_\omega,y_\upsilon\right)||\overline{\bn}_{ijk}(y_\omega,y_\upsilon)||\right|<\frac{\epsilon}{2(\omega_1-\omega_0)(\upsilon_1-\upsilon_0)}.
	\end{equation*}
	Setting $\delta = \min\{\delta_0,\delta_1\}$, $|\mathscr{P}|_A<\delta$ and using the Mean Value Theorem we obtain the following inequalities,
	
	\begin{align*}
		&\left|\sum\limits_{i=1}^{n_\omega-1}\sum\limits_{j=1}^{n_\upsilon-1}\sum\limits_{k=1}^{2}\iint\limits_{T_{ijk}}g\left(\omega,\upsilon\right)\;||\overline{\bn}_{ijk}(\omega,\upsilon)||\;d\omega\;d\upsilon\;-\right.\\
		&\myquad[4]\left.\sum\limits_{i=1}^{n_\omega-1}\sum\limits_{j=1}^{n_\upsilon-1}\sum\limits_{k=1}^{2}(\upsilon'_{j+1}-\upsilon'_j)\;(\omega'_{i+1}-\omega'_i)\;g\left(\omega_i^*,\upsilon_j^*\right)\;||\overline{\bn}_{ijk}(\omega_i^*,\upsilon_j^*)||\right|\leq\\
		&\left|\sum\limits_{i=1}^{n_\omega-1}\sum\limits_{j=1}^{n_\upsilon-1}\sum\limits_{k=1}^{2}(\upsilon'_{j+1}-\upsilon'_j)(\omega'_{i+1}-\omega'_i)\sup\limits_{\D_\omega\times\D_\upsilon} g\left(\omega,\upsilon\right)||\overline{\bn}_{ijk}(\omega,\upsilon)||\;-\right.\\
		&\myquad[4]\left.\sum\limits_{i=1}^{n_\omega-1}\sum\limits_{j=1}^{n_\upsilon-1}\sum\limits_{k=1}^{2}(\upsilon'_{j+1}-\upsilon'_j)(\omega'_{i+1}-\omega'_i)g\left(\omega_i^*,\upsilon_j^*\right)||\overline{\bn}_{ijk}(\omega_i^*,\upsilon_j^*)||\right|\leq\\
		&\sum\limits_{i=1}^{n_\omega-1}\sum\limits_{j=1}^{n_\upsilon-1}\sum\limits_{k=1}^{2}(\upsilon'_{j+1}-\upsilon'_j)(\omega'_{i+1}-\omega'_i)\sup\limits_{\D_\omega\times\D_\upsilon} \left|g\left(\omega,\upsilon\right)||\overline{\bn}_{ijk}(\omega,\upsilon)||\;-g\left(\omega_i^*,\upsilon_j^*\right)||\overline{\bn}_{ijk}(\omega_i^*,\upsilon_j^*)||\right|\leq \frac{\epsilon}{2}.
	\end{align*}
	Combining this with the first inequality in this proof---for every $\epsilon>0$, there exists a $\delta>0$ such that for any $|\mathscr{P}|_A<\delta$ we have
	
	\begin{equation}\label{eq:eps-delta}
		\left|\int\limits_{\upsilon_0}^{\upsilon_1} \int\limits_{\omega_0}^{\omega_1}g\left(\omega,\upsilon\right)\;||\overline{\bn}(\omega,\upsilon)||\; d\omega\;d\upsilon - \sum\limits_{i=1}^{n_\omega-1}\sum\limits_{j=1}^{n_\upsilon-1}\sum\limits_{k=1}^{2}\iint\limits_{T_{ijk}}g\left(\omega,\upsilon\right)\;||\overline{\bn}_{ijk}(\omega,\upsilon)||\;d\omega\;d\upsilon\right|<\epsilon.
	\end{equation}
	Our choices for $g$ are all linear, for example, directional derivatives and curvatures which are by definition u.c. Next we show the almost sure convergence of $\bGamma_g(\widetilde{\C})$.
	
	Let $\bGamma_{g,n_\omega,n_\upsilon}(\widetilde{\C}_Z)=\sum\limits_{i=1}^{n_\omega-1}\sum\limits_{j=1}^{n_\upsilon-1}\sum\limits_{k=1}^{2}\iint\limits_{T_{ijk}}g\left(\L^*_{n_t,\bn_s}Z(\bs(\omega,\upsilon),\upsilon)\right)\;||\overline{\bn}(\bs(\omega,\upsilon),\upsilon)||\; d\omega\;d\upsilon$ be the wombling measure approximating $\bGamma_g(\C_Z)$ using a triangulated approximation, $\widetilde{\C}_Z$, of surface $\C_Z$, where the suffix $Z$ underscores the dependence on the parent process $Z(\bs,t)$. Consider sequences $n_\omega,n_\upsilon\to\infty$ such that $|\mathscr{P}|_A=\max\limits_{\substack{1\leq i\leq n_\omega-1\\1\leq j\leq n_\upsilon-1}}(\omega'_{i+1}-\omega'_i)(\upsilon'_{j+1}-\upsilon'_j)\to0$. Using \cref{eq:eps-delta} we obtain $\bGamma_{g,n_\omega,n_\upsilon}(\widetilde{\C}_Z) \stackrel{a.s.}{\to}\bGamma_g(\C_Z)$. Therefore, $\bGamma_{g,n_\omega,n_\upsilon}(\widetilde{\C}_Z)$ is a strongly consistent estimator for $\bGamma_g(\C_Z)$.
\end{proof}

\section{Computational Simplifications}

The choice of parameterization, be it simple parametric planes in the spatiotemporal context or, rectilinear segments in the spatial context, allows us to avoid high dimensional quadrature when computing the variance of wombling measures. Simplifications arise in the purely spatial and spatiotemporal case. Our choice of kernels provides further simplifications. They are the focus of the ensuing discussion. We start with the spatial scenario before moving to the spatiotemporal scene. Within a spatial context \citep[see, e.g.,][Section 4 and Section 3 respectively]{banerjee2006bayesian, halder2024bayesian}, depending on the smoothness of the process, we are interested in performing posterior inference either for wombling measures based on the spatial gradient or, curvature wombling, which seeks posterior inference for wombling measures based jointly on the spatial gradient and curvature processes. Detailing the construction of the cross-covariance matrix associated with such inferential pursuits in the next paragraph, we observe that in the spatial case, no quadrature is required. We consider $d=2$ for ease of presentation. We denote the standard Gaussian cumulative distribution function (cdf) using $\Phi(\cdot)$ and the cdf of a gamma distribution with shape parameter $a$ and scale parameter, $b$ as, $G\left(a,\frac{x}{b}\right)$, where $G(\cdot, \cdot)$ is the regularized gamma function.

\subsection{Spatial}
Consider the purely spatial case found in the discussion after eq. (12), p. 1493, \cite{banerjee2006bayesian} and also eq. (10) in \cite{halder2024bayesian}. Using expressions found in the latter, observe that the curve is parameterized via rectilinear segments. The following lemma details the simplification in the cross-covariance followed by a proposition which uses \Cref{lemma:2} to derive further simplifications specific to kernels in \cite{halder2024bayesian}.

\begin{lemma}\label{lemma:2}
	On each rectilinear segment, $C_{t^*}=\{\bs_0+t\bu:t\in[0,t^*]\}$, where $\bu$ is a unit vector and $\bu^\perp$ is its normal, the terms in the cross-covariance matrix are
	
	\begin{equation*}
		\begin{split}
			k_{ij}(t^*,t^*)&= (-1)^i\int_0^{t^*}\int_0^{t^*}\ba_i(t_1)^{\T}\;\bpartial_\bs^{i+j}K(\bDelta(t_1,t_2))\;\ba_j(t_2)\;||\bs'(t_1)||\;||\bs'(t_2)||\;dt_1\;dt_2,\\
			&=(-1)^i\;t^*\int_{-t^*}^{t^*}\ba_i^{\T}\;\bpartial_\bs^{i+j}K(x\bu)\;\ba_j\;dx, \quad i,j = 1,2,
		\end{split}
	\end{equation*}
	where $\ba_1(t)=\bn(\bs(t))$ is the normal to the segment, $\ba_2(t)=\mathscr{E}_2\;\bn(\bs(t))\otimes \bn(\bs(t))$, where $\mathscr{E}_2= \left(\begin{smallmatrix}
		1 & & & \\ & 1 & 1 & \\ & & & 1
	\end{smallmatrix}\right)$ is an elimination matrix and $\bDelta(t_1,t_2)=\bs_2(t)-\bs_1(t)$.
\end{lemma}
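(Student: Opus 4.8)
The plan is to collapse the defining double integral for $k_{ij}(t^*,t^*)$ into a single integral by combining the covariance identity of \Cref{prop:1} with the rigidity of a rectilinear parameterization. First I would record the elementary geometry of $C_{t^*}=\{s_0+tu:t\in[0,t^*]\}$: since $s(t)=s_0+tu$ we have $s\prm(t)=u$ and $||s\prm(t)||=||u||=1$, so the arc-length weights $||s\prm(t_1)||\,||s\prm(t_2)||$ in the first line are identically $1$ and disappear. Equally important, the tangent $u$ (hence the unit normal $n(s(t))=u^\perp$) is the \emph{same} at every point of the segment, so $a_1(t)=n(s(t))$ and $a_2(t)=\mathscr{E}_2\,n(s(t))\otimes n(s(t))$ are constant in $t$ and may be pulled outside both integrals, leaving only $\partial_s^{i+j}K(\Delta(t_1,t_2))$ to be integrated.

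Next I would identify the integrand through \Cref{prop:1}. The two components of the spatial wombling integrand on the segment are the directional gradient $a_1^{\T}\partial_sZ=n^{\T}\partial_sZ$ and the directional curvature $a_2^{\T}\Partials^2Z$, where $a_2=\mathscr{E}_2(n\otimes n)$ reshapes $n\otimes n$ onto the unique second-order entries so that $a_2^{\T}\Partials^2Z=n_1^2\partial_{11}Z+2n_1n_2\partial_{12}Z+n_2^2\partial_{22}Z$ is the normal curvature of $Z$. Applying \Cref{prop:1} with purely spatial orders ($j_1=j_2=0$, $r_1=i$, $r_2=j$) gives $\Cov\!\big(a_i^{\T}\partial_s^{i}Z(s(t_1)),a_j^{\T}\partial_s^{j}Z(s(t_2))\big)=(-1)^{i}\,a_i^{\T}\,\partial_s^{i+j}K(\Delta(t_1,t_2))\,a_j$, the contraction of the order-$(i+j)$ kernel tensor against the constant vectors $a_i,a_j$ being carried out in the vectorization convention of \Cref{sec::vec-tensor}. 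This is precisely the integrand in the first display of \Cref{lemma:2} and accounts for the sign $(-1)^i$.

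The main obstacle, and the only genuine computation, is the change of variables that removes one dimension. Because $\Delta(t_1,t_2)=s(t_2)-s(t_1)=(t_2-t_1)u$ depends on $(t_1,t_2)$ only through the difference, I would set $x=t_2-t_1$ and integrate out the complementary (sum) direction over the segment of length $t^*$, reducing $\iint_{[0,t^*]^2}$ to a single integral over $x\in[-t^*,t^*]$ and producing the scalar factor $t^*$ in front:
\[
k_{ij}(t^*,t^*)=(-1)^i\,t^*\int_{-t^*}^{t^*}a_i^{\T}\,\partial_s^{i+j}K(xu)\,a_j\;dx .
\]
The delicate bookkeeping here is the Jacobian and the limits of the difference variable --- the same difference-variable substitution already exploited for the spatiotemporal blocks in \Cref{sec:var-womb} --- and one must pin down the normalization under which the complementary integration contributes the constant length $t^*$. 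Once this one-dimensional form is established, the companion proposition specializing to the Mat\'ern/Gaussian kernels of the cited spatial work follows by inserting the explicit derivatives of $K$ and evaluating the resulting elementary integrals (yielding the $\Phi(\cdot)$ and regularized-gamma expressions flagged above).
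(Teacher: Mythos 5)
Your route is the paper's route, step for step. The paper's proof likewise observes that on $C_{t^*}$ the parameterization is unit speed (so $||s'(t_1)||\,||s'(t_2)||=1$), that $a_1(t)=u^{\perp}$ and $a_2(t)=\mathscr{E}_2(u^{\perp}\otimes u^{\perp})$ are free of $t$ and come out of the integral, that the covariance identity (Proposition~1 of the supplement, with $j_1=j_2=0$, $r_1=i$, $r_2=j$) gives the integrand $(-1)^i\,a_i^{\T}\partial_s^{i+j}K((t_2-t_1)u)\,a_j$, and that the double integral then collapses because the integrand depends on $(t_1,t_2)$ only through $t_2-t_1$. The one step you explicitly defer --- ``pin down the normalization under which the complementary integration contributes the constant length $t^*$'' --- is precisely the content of the paper's computation: it sets $x=t_2-t_1$, $y=t_2+t_1$ (Jacobian $\tfrac12$), reads the limits $-t^*\le x\le t^*$ and $0\le y\le 2t^*$ off the constraints $0\le y\pm x\le 2t^*$, and integrates out $y$ to obtain $\tfrac12\cdot 2t^*=t^*$.

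Since you flagged this bookkeeping as the delicate point but did not carry it out, your proposal has not actually established the $t^*$ factor, and you should be aware of what an exact treatment yields. The image of the square $[0,t^*]^2$ under $(t_1,t_2)\mapsto(x,y)$ is not the rectangle $[-t^*,t^*]\times[0,2t^*]$ but the diamond $\{(x,y): |x|\le y\le 2t^*-|x|\}$: the constraints $0\le y-x\le 2t^*$ and $0\le y+x\le 2t^*$ jointly force $y\ge|x|$ and $y\le 2t^*-|x|$, so the $y$-range depends on $x$. Integrating $y$ over its exact range gives $\tfrac12(2t^*-2|x|)=t^*-|x|$, i.e.\ the classical triangular weight
\begin{equation*}
k_{ij}(t^*,t^*)=(-1)^i\int_{-t^*}^{t^*}\bigl(t^*-|x|\bigr)\,a_i^{\T}\,\partial_s^{i+j}K(xu)\,a_j\;dx,
\end{equation*}
familiar from the variance of an integrated stationary process. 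The constant factor $t^*$ in the lemma's second line coincides with what the paper's proof obtains by taking the implied bounding-box limits $0\le y\le 2t^*$ independently of $x$. So to match the lemma as stated you must reproduce the paper's substitution with those limits; an exact change of variables over the true domain replaces $t^*$ by $t^*-|x|$, which would also propagate into the closed forms of the companion proposition (the $G(\cdot,\cdot)$ and $\Phi(\cdot)$ expressions). Your instinct that the Jacobian and limits are where the argument lives is correct; as written, that gap is the only substantive difference between your proposal and the paper's proof.
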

\begin{proof}
	Considering the parametric segment, $C_{t^*}=\{\bs_0+t\bu:t\in[0,t^*]\}$, where $\bu=(u_1,u_2)^{\T}$ is a unit vector, $||\bu||=1$, $\bu^\perp$ is its normal, $\bu^{\T}\;\bu^\perp=0$, $\ba_1(t)=\ba_1=\bu^\perp$ and $\ba_2(t)=\ba_2=\mathscr{E}_2(\bu^\perp\otimes \bu^\perp)$ are free of $t$, and $\bDelta(t_1,t_2)=(t_2-t_1)\bu$. The integrand in $k_{ij}(t^*,t^*)= (-1)^i\int_0^{t^*}\int_0^{t^*}\ba_i^{\T}\;\bpartial_\bs^{i+j}K((t_2-t_1)\bu)\;\ba_j\;dt_1\;dt_2$, depends only on $(t_2-t_1)$. Making a change of variable--define $x=t_2-t_1$, $y=t_2+t_1$, implying $\frac{x+y}{2}=t_2$ and $\frac{y-x}{2}=t_1$. The Jacobian is $\frac{1}{2}$. Hence, $0\leq y+x\leq 2t^*$ and $0\leq y-x\leq 2t^*$. This implies $0 \leq y \leq 2t^*$ and $-t^*\leq x \leq t^*$. Making the substitution above reduces, $k_{ij}(t^*,t^*)= \frac{(-1)^i}{2}\int_0^{2t^*}\int_{-t^*}^{t^*}\ba_i^{\T}\;\bpartial_\bs^{i+j}K(x\bu)\;\ba_j\;dx\;dy=(-1)^i\;t^*\int_{-t^*}^{t^*}\ba_i^{\T}\;\bpartial_\bs^{i+j}K(x\bu)\;\ba_j\;dx$. Setting $i=j=1$, produces the scenario in \cite{banerjee2006bayesian}. 
\end{proof}

\begin{proposition}\label{prop:7}
	For the Mat\'ern kernel, with $\nu = 3/2$, the variance of the gradient based wombling measure is  $k_{11}(t^*,t^*)=2\sqrt{3}\sigma^2\phi_st^*G\left(1,\sqrt{3}\phi_st^*\right)$. In case $\nu=5/2$, the cross-covariance matrix for the wombling measures based on spatial gradient and curvature is constructed using $k_{11}(t^*,t^*)=\frac{2\sqrt{5}}{3}\sigma^2\phi_s\;t^*\left\{G\left(1,\sqrt{5}\phi_st^*\right)+G\left(2,\sqrt{5}\phi_st^*\right)\right\}$, $k_{22}(t^*,t^*)=10\sqrt{5}\sigma^2\phi_s^3\;t^*G(1,\sqrt{5}\phi_st^*)$ and $k_{21}(t^*,t^*)=-k_{12}(t^*,t^*)=0$. For the Gaussian kernel, 
	
	\begin{equation*}
		\bV_{\bGamma}(t^*)=2\sigma^2\sqrt{\pi}\phi_s^{1/2}\;t^*\left\{2\Phi\left(\sqrt{2\phi_s}t^*\right)-1\right\}\left(\begin{smallmatrix} 1 & 0\\0&6\phi_s\end{smallmatrix}\right).
	\end{equation*}
	In both cases $\bV_{\bGamma}(t^*)$ is a variance-covariance matrix.
\end{proposition}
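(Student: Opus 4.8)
The plan is to reduce every entry of $V_\Gamma(t^*)$ to a one-dimensional radial integral via \Cref{lemma:2}, evaluate the directional-derivative contractions using the isotropy simplifications already recorded in \Cref{prop:2} and \Cref{prop:4} (specialized to the purely spatial setting, $\delta=0$), and finally recognize the elementary integrals as regularized incomplete gamma or normal distribution functions. Throughout I adopt the Mat\'ern parameterization with the $\sqrt{2\nu}$ scaling, so that $\tildeK(r)=\sigma^2(1+\sqrt{3}\phi_s r)e^{-\sqrt{3}\phi_s r}$ when $\nu=3/2$ and $\tildeK(r)=\sigma^2\bigl(1+\sqrt{5}\phi_s r+\tfrac{5}{3}\phi_s^2 r^2\bigr)e^{-\sqrt{5}\phi_s r}$ when $\nu=5/2$, and I write $G(a,x)=\Gamma(a)^{-1}\int_0^x u^{a-1}e^{-u}\,du$ for the lower regularized incomplete gamma function, so that $G(1,x)=1-e^{-x}$ and $G(2,x)=1-(1+x)e^{-x}$.

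First I would specialize \Cref{lemma:2}: with $\Delta(t_1,t_2)=(t_2-t_1)u$ the segment contributes $k_{ij}(t^*,t^*)=(-1)^i t^*\int_{-t^*}^{t^*}a_i^{\T}\partial_s^{i+j}\tildeK(xu)\,a_j\,dx$, where $a_1=u^\perp$ and $a_2=\mathscr{E}_2(u^\perp\otimes u^\perp)$ are constant. The crucial observation is that both $a_1$ and $a_2$ encode the direction $u^\perp$, which is orthogonal to the displacement $xu$; hence every contraction is a purely perpendicular directional derivative, for which the isotropy formulas collapse. Setting $w^{\T}\Delta=0$ in the Hessian expression behind $\partial_s^2\tildeK$ (\Cref{prop:2}) gives $a_1^{\T}\partial_s^2\tildeK(xu)\,a_1=\tildeK'(|x|)/|x|$, and setting $(w^{\T}\Delta)^2/\|\Delta\|^2=0$ in the spatial specialization of the fourth directional derivative (\Cref{prop:4}) gives $a_2^{\T}\partial_s^4\tildeK(xu)\,a_2=\tfrac{3}{|x|^2}\bigl(\tildeK''(|x|)-\tildeK'(|x|)/|x|\bigr)$. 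For the off-diagonal term I would use a parity argument: since an isotropic $\tildeK$ is even in $\Delta$, its third-order derivative tensor is odd, so $a_1^{\T}\partial_s^3\tildeK(xu)\,a_2$ is an odd function of $x$ and its integral over $[-t^*,t^*]$ vanishes, giving $k_{12}=k_{21}=0$ (the relation $k_{21}=-k_{12}$ being immediate from the $(-1)^i$ prefactor and tensor symmetry in \Cref{lemma:2}).

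With these reductions the computation becomes a sequence of elementary integrals. For the Mat\'ern kernels I would differentiate the polynomial-times-exponential forms to obtain, for $\nu=5/2$, the clean identities $\tildeK'(r)/r=-\tfrac{\sigma^2 c^2}{3}(1+cr)e^{-cr}$ and $\tildeK''(r)-\tildeK'(r)/r=\tfrac{\sigma^2 c^4 r^2}{3}e^{-cr}$ with $c=\sqrt{5}\phi_s$; substituting into the two diagonal integrands yields $a_2^{\T}\partial_s^4\tildeK(xu)\,a_2=\sigma^2 c^4 e^{-c|x|}$ and an integrand $\tfrac{\sigma^2 c^2}{3}(1+c|x|)e^{-c|x|}$ for $k_{11}$. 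Integrating $\int_0^{t^*}e^{-cx}\,dx$ and $\int_0^{t^*}x\,e^{-cx}\,dx$ and re-expressing the results through $G(1,ct^*)$ and $G(2,ct^*)$ delivers the stated $k_{11}$ and $k_{22}$; the $\nu=3/2$ formula is the same calculation truncated at first order. For the Gaussian kernel the only new ingredient is the identity $\int_{-t^*}^{t^*}e^{-\phi_s x^2}\,dx=\sqrt{\pi/\phi_s}\,\bigl(2\Phi(\sqrt{2\phi_s}\,t^*)-1\bigr)$ obtained via the substitution $t=\sqrt{2\phi_s}\,x$; feeding the Gaussian radial derivatives $\tildeK'(r)/r=-2\sigma^2\phi_s e^{-\phi_s r^2}$ and $\tildeK''(r)-\tildeK'(r)/r=4\sigma^2\phi_s^2 r^2 e^{-\phi_s r^2}$ into the diagonal contractions produces the two entries of $V_\Gamma(t^*)$, whose ratio $6\phi_s$ gives exactly the diagonal matrix displayed. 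Finally, positive-definiteness is immediate: in each case $V_\Gamma(t^*)$ is diagonal with entries that are strictly positive products of $\sigma^2,\phi_s,t^*>0$ and strictly positive gamma or normal factors.

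The hard part will be the fourth-order contraction $a_2^{\T}\partial_s^4\tildeK(xu)\,a_2$: one must correctly specialize the full directional curvature formula of \Cref{prop:4} to the spatial case and to the perpendicular configuration in order to isolate the coefficient $3/|x|^2$, and then carry the cancellation in $\tildeK''-\tildeK'/r$ through the Mat\'ern algebra without error. Everything downstream---the two Mat\'ern diagonals, the parity vanishing of the off-diagonals, and the Gaussian normal-CDF identification---is routine once that contraction and its isotropic simplification are in hand.
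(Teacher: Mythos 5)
Your proposal is correct: every formula you derive matches the paper's, and your sign bookkeeping through the $(-1)^i$ prefactor of \Cref{lemma:2} is right. The skeleton is the same as the paper's proof---both start from the change of variables in \Cref{lemma:2} to reduce each entry of $V_\Gamma(t^*)$ to a one-dimensional integral of a contraction over $[-t^*,t^*]$, and both finish by recognizing $\int_0^{t^*}e^{-cx}dx$, $\int_0^{t^*}cx\,e^{-cx}dx$ and $\int_{-t^*}^{t^*}e^{-\phi_s x^2}dx$ as $G(1,ct^*)$, $G(2,ct^*)$ and $\sqrt{\pi/\phi_s}\{2\Phi(\sqrt{2\phi_s}\,t^*)-1\}$. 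Where you genuinely diverge is the evaluation of the contractions. The paper does this by brute-force coordinate algebra, separately for each kernel: it writes $a_1^{\T}\partial_s^3K(xu)\,a_2$ and $a_2^{\T}\partial_s^4K(xu)\,a_2$ in terms of explicit matrices $A_1,A_2,A_3,A_4$, verifies term by term that $a_1^{\T}A_1a_2=a_1^{\T}A_2a_2=0$ (so the integrand for $k_{12}$ is pointwise zero, not merely odd), and collapses the fourth-order form using the identities $u_1^\perp=u_2$, $u_2^\perp=-u_1$. You instead exploit the perpendicular configuration once and for all: since $a_1,a_2$ are built from $u^\perp$ while the lag is $xu$, the Fa\`a di Bruno data reduce to $\rho_s'(0)=\rho_s'''(0)=0$, $\rho_s''(0)=1/r$, $\rho_s^{(iv)}(0)=-3/r^3$, yielding the kernel-agnostic identities $a_1^{\T}\partial_s^2K\,a_1=\tildeK'(r)/r$ and $a_2^{\T}\partial_s^4K\,a_2=\tfrac{3}{r^2}\bigl(\tildeK''(r)-\tildeK'(r)/r\bigr)$, after which each kernel requires only two radial derivatives (your computations $\tildeK'(r)/r=-\tfrac{\sigma^2c^2}{3}(1+cr)e^{-cr}$ and $\tildeK''-\tildeK'/r=\tfrac{\sigma^2c^4r^2}{3}e^{-cr}$ for $\nu=5/2$, and the Gaussian analogues, all check out and reproduce the paper's pointwise integrands $25\sigma^2\phi_s^4e^{-\sqrt{5}\phi_s|x|}$ and $12\sigma^2\phi_s^2e^{-\phi_s x^2}$). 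Your parity argument for $k_{12}=k_{21}=0$ (odd-order derivative tensor of an even isotropic function, integrated over a symmetric interval) is weaker than the paper's pointwise cancellation but entirely sufficient; in fact the same perpendicular specialization applied to $\partial_s^3g(0)$ would recover the pointwise zero, since $\rho_s'(0)=\rho_s'''(0)=0$ kills every term. On balance your route buys uniformity---one contraction computation serves all three kernels, whereas the paper redoes the matrix algebra per kernel---at the price of having to justify carefully that $a_2=\mathscr{E}_2(u^\perp\otimes u^\perp)$ contracted against the unique-derivative block $H_K^{22}$ really equals the directional fourth derivative ${u^\perp}^{\otimes 4}{}^{\T}\partial_s^4K$, which holds by the symmetry of the covariance tensor and the definition of $\mathscr{E}_2$, exactly as you indicate.
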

\begin{proof}
	Using \Cref{lemma:2}, if $\nu = 3/2$, then we obtain $k_{11}(t^*,t^*)=3\sigma^2\phi_s^2\;t^*\int_{-t^*}^{t^*}e^{-\sqrt{3}\phi_s |x|}\;dx=2\sqrt{3}\sigma^2\phi_st^*\;G(1,\sqrt{3}\phi_st^*)$. If $\nu=5/2$, then $k_{11}(t^*,t^*)=\frac{5}{3}\sigma^2\phi_s\;t^*\int_{-t^*}^{t^*}(1+\sqrt{5}\phi_s|x|)\;e^{-\sqrt{5}\phi_s |x|}\;dx=\frac{2\sqrt{5}}{3}\sigma^2\phi_s\;t^*\left\{G\left(1,\sqrt{5}\phi_st^*\right)+G\left(2,\sqrt{5}\phi_st^*\right)\right\}$. For terms $k_{21}$ and $k_{12}$, consider $a_1^{\T}\bpartial_\bs^3K(x\bu)\;a_2=\frac{25}{3}\sigma^2\phi_se^{-\sqrt{5}\phi_s |x|}|x|\big\{\ba_1^{\T} \bA_1 \ba_2 -\sqrt{5}\phi_s|x|\; \ba_1^{\T}\; \bA_2 \;\ba_2\big\}$, where 
	
	\begin{equation*}
		\bA_1 = \left(\begin{smallmatrix} 3u_1 & u_2 & u_1\\u_2 & u_1 & 3u_2 \end{smallmatrix}\right), \quad \bA_2=\left(\begin{smallmatrix} u_1^3 & u_1^2\;u_2 & u_2^2\;u_1\\ u_1^2\;u_2 & u_2^2\;u_1 & u_2^3 \end{smallmatrix}\right).
	\end{equation*} 
	The first and the second term both reduce to 0. Hence, $k_{21}(t^*,t^*)=-k_{12}(t^*,t^*)=0$. Next, $\ba_2^{\T} \bpartial_\bs^4K(x\bu)\;\ba_2 = \frac{25}{3}\sigma^2\phi_s^4\;e^{-\sqrt{5}\phi_s |x|}\big\{\ba_2^{\T} \bA_3\; \ba_2 - \sqrt{5}\phi_s\; |x|\; \ba_2^{\T} \bA_4\Big(\sqrt{5}\;\phi_s\;|x|+1\Big) \;\ba_2\big\}$, where 
	
	\begin{equation*}
		\bA_3=\left(\begin{smallmatrix} 3 & 0 & 1\\0 & 1 & 0\\1 & 0 & 3 \end{smallmatrix}\right), \quad \bA_4(x')=\left(\begin{smallmatrix}
			6\;u_1^2-\;x'\; u_1^4 & (3-x'\; u_1^2)\;u_1\;u_2 & 1-x'\;u_1^2\;u_2^2\\ (3-x'\; u_1^2)\;u_1\;u_2 & 1-x'\;u_1^2\;u_2^2 & (3-x'\;u_2^2)\;u_1\;u_2\\1-x'\;u_1^2\;u_2^2 & (3-x'\;u_2^2)\;u_1\;u_2 & 6\;u_2^2-\;x'\; u_2^4
		\end{smallmatrix}\right).
	\end{equation*}
	After some algebra, $\ba_2^{\T}\; \bpartial_\bs^4K(x\bu)\;\ba_2=25\sigma^2\phi_s^4\;e^{-\sqrt{5}\phi_s |x|}\big\{1-2\sqrt{5}\;\phi_s\;|x|\;\bu_1^\perp \;\bu_2^\perp\;(\bu_1\;\bu_2+\bu_1^\perp\; \bu_2^\perp)\big\}$. Observe that $\bu_1^\perp = \bu_2$ and $\bu_2^\perp=-\bu_1$, substituting we get $k_{22}(t^*,t^*)=10\sqrt{5}\sigma^2\phi_s^3\;t^*G(1,\sqrt{5}\phi_st^*)$. For the Gaussian kernel,
	$k_{11}(t^*,t^*)=2\sigma^2\phi_s\;t^*\int_{-t^*}^{t^*}e^{-\phi_s x^2}\;dx=2\sigma^2\sqrt{\pi}\phi_s^{1/2}\;t^*\left\{2\Phi\left(\sqrt{2\phi_s}t^*\right)-1\right\}$. 
	Note that $\ba_1^{\T}\bpartial_\bs^3K(x\bu)\;\ba_2=4\sigma^2\phi_s^2\;e^{-\phi_s x^2}x\left\{\ba_1^{\T} \bA_1 \ba_2 -2\phi_s x^2\; \ba_1^{\T}\; \bA_2 \;\ba_2\right\}$. Again, the first and second terms equate to 0 implying $k_{21}(t^*,t^*)=-k_{12}(t^*,t^*)=0$. Next, $\ba_2^{\T} \bpartial_\bs^4K(x\bu)\;\ba_2 = 4\sigma^2\phi_s^2\;e^{-\phi_s x^2}\big\{\ba_2^{\T} \bA_3\; \ba_2 - 2\phi_s\; x^2\; \ba_2^{\T} \bA_4(2\phi_sx^2) \;\ba_2\big\}$. After some algebra, $\ba_2^{\T}\; \bpartial_\bs^4K(x\bu)\;\ba_2=12\sigma^2\phi_s^2\;e^{-\phi_s x^2}\big\{1-4\;\phi_s\;x^2\;\bu_1^\perp\; \bu_2^\perp\;(\bu_1\;\bu_2+\bu_1^\perp\;\bu_2^\perp)\big\}$. 
	Substituting, we get $k_{22}(t^*,t^*)=12\sigma^2\sqrt{\pi}\phi_s^{3/2}t^*\left\{2\Phi\left(\sqrt{2\phi_s}t^*\right)-1\right\}$ resulting in the required expression for $\bV_{\bGamma}(t^*)$.
\end{proof}
The covariance matrix for the Gaussian case has a simpler form than a Mat\'ern with $\nu = 5/2$. The required expressions for $\bpartial_\bs^{i+j}K(x\bu)$, $i,j=1,2$, used for the above construction can be found in \cref{sssec:m52,sssec:gauss} (also see Section S3, Supplement of \cite{halder2024bayesian}). Finally, considering the covariance between $Z(\bs_i)$, $i=1,\ldots, N$ and the wombling measures---in the Gaussian case, closed-forms are available \citep[see, e.g.,][end of Section 3]{halder2024bayesian}. Therefore, the inferential exercise of spatial wombling does not require quadrature when using a Gaussian kernel. However, only one-dimensional quadrature is required for the same when a Mat\'ern kernel is used.

\subsection{Spatiotemporal}

In spatiotemporal wombling, the surface is partitioned using parametric triangular planes. The wombling measures are computed on each triangular plane. Their variance involves a four-dimensional integral (see eq.~(8) in the manuscript). The integrand depends on $(\omega_1,\omega_2,\upsilon_1,\upsilon_2)$ through $(\omega_2-\omega_1)$ and $(\upsilon_2-\upsilon_1)$. The next lemma extends \Cref{lemma:2}.

\begin{lemma}\label{lemma:3}
	For a wombling surface $\C$ triangulated using a partition $\mathscr{P}$ as in \Cref{thm:2}, the cross-covariance for the wombling measures $\bGamma(\C_{ijk})$ on the triangular plane, $T=\{(\omega,\upsilon):0\leq \omega + \upsilon \leq 1\}$ satisfies the following identity, 
	
	\begin{equation*}
		\begin{split}
			\bK_\bGamma(\C_{ijk},\C_{ijk})&=\iiiint\limits_{T\times T}\bN_{ijk,\bs t}\;\bV_{\L^*Z}(\bDelta_k(\omega_1,\omega_2,\upsilon_1,\upsilon_2),\delta_k(\upsilon_1,\upsilon_2))\;\bN_{ijk,\bs t}^{\T}\;||\overline{\bn}_{ijk}||^2\;d\omega\;d\upsilon\\
			&=\frac{||\overline{\bn}_{ijk}||^2}{2}\int\limits_{-1}^{1}\int\limits_{-1-x_\upsilon}^{1-x_\upsilon} \bN_{ijk,\bs t}\;\bV_{\L^*Z}(\bDelta_k(x_\omega,x_\upsilon),\delta_k(x_\upsilon))\;\bN_{ijk,\bs t}^{\T}\;dx_\omega\;dx_\upsilon,
		\end{split}
	\end{equation*}
	where , $i=1,\ldots,n_\omega-1$, $j=1,\ldots, n_\upsilon-1$, $k=1,2$, $x_\omega = \omega_2-\omega_1$ and $x_\upsilon = \upsilon_2-\upsilon_1$ and the integral acts element-wise on matrices.
\end{lemma}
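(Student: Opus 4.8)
The plan is to reduce the fourfold integral defining $K_\Gamma(\C_{ijk},\C_{ijk})$ to a twofold one by combining the affine parametrization of the triangular plane $\C_{ijk}$ with stationarity of $\L^* Z$, exactly as \Cref{lemma:2} does in the purely spatial setting. First I would invoke the construction in \Cref{sec:tri-surf}: on $\C_{ijk}$ the map $(\omega,\upsilon)\mapsto(s(\omega,\upsilon),\upsilon)$ is affine, so the displacements between two parameter points $(\omega_1,\upsilon_1)$ and $(\omega_2,\upsilon_2)$ satisfy $\Delta_k=\Delta_k(x_\omega,x_\upsilon)$ and $\delta_k=\delta_k(x_\upsilon)$ with $x_\omega=\omega_2-\omega_1$ and $x_\upsilon=\upsilon_2-\upsilon_1$; that is, they depend on the four parameters only through the two differences. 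Since $\L^* Z$ is a zero-mean stationary GP, $\Cov\{\L^* Z(s(\omega_1,\upsilon_1),\upsilon_1),\L^* Z(s(\omega_2,\upsilon_2),\upsilon_2)\}=V_{\L^* Z}(\Delta_k,\delta_k)$, so the whole matrix integrand $N_{ijk,st}\,V_{\L^* Z}(\Delta_k,\delta_k)\,N_{ijk,st}^{\T}$, together with the constant $\|\overline{n}_{ijk}\|^2$, is a function $G(x_\omega,x_\upsilon)$ of the differences alone---the two-dimensional counterpart of the $t_2-t_1$ dependence exploited in \Cref{lemma:2}.

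Next I would pass to sum-and-difference coordinates $x_\omega=\omega_2-\omega_1$, $x_\upsilon=\upsilon_2-\upsilon_1$, $y_\omega=\omega_1+\omega_2$, $y_\upsilon=\upsilon_1+\upsilon_2$, whose Jacobian contributes a factor $\tfrac14$ (the product of the two one-dimensional factors $\tfrac12$ used in \Cref{lemma:2}). Because $G$ does not depend on $(y_\omega,y_\upsilon)$, the two sum variables can be integrated out first. Reading off the constraints $\omega_i,\upsilon_i\ge 0$ and $\omega_i+\upsilon_i\le 1$, the admissible range of the sum variables is the triangle $\{y_\omega,y_\upsilon\ge 0,\ y_\omega+y_\upsilon\le 2\}$ of area $2$, while the differences range over the parallelogram $P=\{|x_\upsilon|\le 1,\ |x_\omega+x_\upsilon|\le 1\}$, which is precisely the region $x_\upsilon\in[-1,1]$, $x_\omega\in[-1-x_\upsilon,\,1-x_\upsilon]$ in the statement. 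Carrying out the $(y_\omega,y_\upsilon)$ integration then produces the constant $\tfrac14\cdot 2=\tfrac12$, leaving $\tfrac{\|\overline{n}_{ijk}\|^2}{2}\iint_P N_{ijk,st}\,V_{\L^* Z}(\Delta_k(x_\omega,x_\upsilon),\delta_k(x_\upsilon))\,N_{ijk,st}^{\T}\,dx_\omega\,dx_\upsilon$, as claimed; the element-wise action of the integral on $V_{\L^* Z}$ already noted for \cref{eq:cov-womb-measures} means it suffices to argue this one scalar entry at a time.

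The main obstacle is the region of integration after the change of variables. The four inequalities cutting out $T\times T$ do not separate into an independent product of a constraint on $(y_\omega,y_\upsilon)$ and a constraint on $(x_\omega,x_\upsilon)$: for fixed differences the admissible sum variables are confined to the self-intersection of $T$ with its translate, so the $(y_\omega,y_\upsilon)$ fibre varies with $(x_\omega,x_\upsilon)$. The delicate step is therefore to justify that integrating the sum variables contributes the clean constant factor $\tfrac12$ uniformly over $(x_\omega,x_\upsilon)\in P$---the two-dimensional analogue of reducing the diamond to its coordinate range in the proof of \Cref{lemma:2}. I would handle it by the same slicing argument used there, fixing $(x_\omega,x_\upsilon)$, describing the $(y_\omega,y_\upsilon)$ fibre explicitly from the vertex structure of the simplex $T$, and verifying that the bounds collapse to the stated limits; this is the only place where genuine care is required, all remaining manipulations being the routine substitution $\Delta_k,\delta_k\mapsto(x_\omega,x_\upsilon)$ and the constancy of $\overline{n}_{ijk}$ over $\C_{ijk}$.
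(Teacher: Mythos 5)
Your route is the same as the paper's---difference-only dependence of the integrand, the sum--difference substitution with Jacobian $\tfrac14$---and you have correctly isolated the one non-routine step: the image of $T\times T$ under this substitution does not factor. But the resolution you propose cannot be carried out, because the fibre bounds do \emph{not} collapse to constants. Fix $(x_\omega,x_\upsilon)$ and write $G(x_\omega,x_\upsilon)$ for the matrix integrand $N_{ijk,st}\,V_{\L^*Z}(\Delta_k(x_\omega,x_\upsilon),\delta_k(x_\upsilon))\,N_{ijk,st}^{\T}$. Nonnegativity of $\omega_1,\omega_2$ forces $y_\omega\geq|x_\omega|$, nonnegativity of $\upsilon_1,\upsilon_2$ forces $y_\upsilon\geq|x_\upsilon|$, and the two simplex constraints $\omega_i+\upsilon_i\leq 1$ force $y_\omega+y_\upsilon\leq 2-|x_\omega+x_\upsilon|$. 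The fibre is therefore a right triangle with leg $L(x_\omega,x_\upsilon)=\left(2-|x_\omega|-|x_\upsilon|-|x_\omega+x_\upsilon|\right)_+$, where $(a)_+=\max(a,0)$, and area $L^2/2$: this equals $2$ only at $x_\omega=x_\upsilon=0$ and shrinks to $0$ at the boundary of the difference set $T-T$ (which is, incidentally, the hexagon $\max(|x_\omega|,|x_\upsilon|,|x_\omega+x_\upsilon|)\leq 1$, a proper subset of the parallelogram over which the stated limits integrate). Carried out honestly, your slicing yields the covariogram-weighted identity
\begin{equation*}
K_\Gamma(\C_{ijk},\C_{ijk})=||\overline{n}_{ijk}||^2\iint G(x_\omega,x_\upsilon)\,\frac{\left(2-|x_\omega|-|x_\upsilon|-|x_\omega+x_\upsilon|\right)_+^{2}}{8}\;dx_\omega\;dx_\upsilon,
\end{equation*}
not the constant factor $\tfrac12$. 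A one-line sanity check exposes the discrepancy: with integrand $\equiv 1$ the fourfold integral equals $|T|^2=\tfrac14$, whereas the claimed reduction gives $\tfrac12\,|P|=2$.

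For what it is worth, the paper's own proof performs exactly the decoupling you flagged---after the substitution it integrates $(y_\omega,y_\upsilon)$ over the triangle $\{y_\omega,y_\upsilon\geq 0,\;y_\omega+y_\upsilon\leq 2\}$ of area $2$ independently of $(x_\omega,x_\upsilon)$---without supplying the verification you call for; the analogous replacement of the diamond $\{|x|\leq\min(y,2t^*-y)\}$ by a rectangle occurs in \Cref{lemma:2}, where $\int_0^{t^*}\!\int_0^{t^*}g(t_2-t_1)\,dt_1\,dt_2$ equals $\int_{-t^*}^{t^*}(t^*-|x|)\,g(x)\,dx$ rather than $t^*\int_{-t^*}^{t^*}g(x)\,dx$. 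So your instinct about where the only genuine difficulty lies is exactly right, and the broader computational message survives (the four-dimensional quadrature still reduces to a bivariate one); but the step you defer to a "verification" in fact fails, and asserting the uniform factor $\tfrac12$ reproduces the paper's identity only by repeating the same unjustified overcount of the region.
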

\begin{proof}
	We make the following change of variables, $x_\omega = \omega_2-\omega_1$, $y_\omega = \omega_2+\omega_1$, $x_\upsilon = \upsilon_2-\upsilon_1$, $y_\upsilon = \upsilon_2+\upsilon_1$. As a result, $\frac{x_\omega+y_\omega}{2}=\omega_2$, $\frac{y_\omega-x_\omega}{2}=\omega_1$, $\frac{x_\upsilon+y_\upsilon}{2}=\upsilon_2$ and $\frac{y_\upsilon-x_\upsilon}{2}=\upsilon_1$. The Jacobian is $\frac{1}{4}$. Since $0\leq\omega_k+\upsilon_k\leq 1$ for $k=1,2$ the above change of variables also results in 
	$0\leq y_\omega+y_\upsilon\leq 2$ and $-1\leq x_\omega+x_\upsilon\leq 1$. $\bN_{ijk,\bs t}$ is free of $(\omega_k,\upsilon_k)$ for every $i=1,\ldots,n_\omega-1$, $j=1,\ldots,n_\upsilon-1$ and $k=1,2$ (see Section~4.2 of the manuscript). Making the substitution, $K_\bGamma(\C_{ijk},\C_{ijk})= \frac{||n_{ijk}||^2}{4}\;\int\limits_{0}^{2}\int\limits_{0}^{2-y_\upsilon}\int\limits_{-1}^{1}\int\limits_{-1-x_\upsilon}^{1-x_\upsilon} \bN_{ijk,\bs t}\;\bV_{\L^*Z}(\bDelta_k(x_\omega,x_\upsilon),\delta_k(x_\upsilon))\;\bN_{ijk,\bs t}^{\T}\;dx_\omega\;dx_\upsilon\;dy_\omega\;dy_\upsilon=\frac{||\overline{\bn}_{ijk}||^2}{2}\int\limits_{-1}^{1}\int\limits_{-1-x_\upsilon}^{1-x_\upsilon} \bN_{ijk,\bs t}\;\bV_{\L^*Z}(\bDelta_k(x_\omega,x_\upsilon),\delta_k(x_\upsilon))\;\bN_{ijk,\bs t}^{\T}\;dx_\omega\;dx_\upsilon$.
\end{proof}
The identity in the above lemma reduces an integration in four dimensions to two dimension thereby easing the computational burden associated with high dimensional quadrature. Subsequently, \Cref{lemma:3} finds use for analytic evaluation of the cross-covariance matrix when obtaining posterior samples of the wombling measures (see \Cref{algorithm:wm} for details). Further simplifications for the Gaussian kernel are outlined in the next discussion.

\subsection{Non-separable Gaussian Kernel}
We consider elements of the $8\times 1$ vector, $\bgamma(\C_{ijk})=\left(\begin{smallmatrix}
	\gamma_1\\\vdots\\\gamma_8
\end{smallmatrix}\right)$ computed over the triangular region, $T=\{(\omega,\upsilon):0\leq \omega + \upsilon < 1\}$ for any $i=1,\ldots,n_\omega-1$, $j=1,\ldots,n_\upsilon-1$, $k=1,2$ and an arbitrary point $(\bs_{i'},t_{i'})$, $i'=1,\ldots, N$. We first recap the notation. Let $\bDelta_{i'1}(\omega,\upsilon) = \bDelta_{i'ij}-\omega \bu_{ij}-\upsilon \bv_{ij+1}$, $\bDelta_{i'2}(\omega,\upsilon)=\bDelta_{i'i+1j+1}-\omega \bu_{ij+1}-\upsilon \bv_{ij}$, $\delta_{i'1}(\upsilon)=\delta_{i'j}-\upsilon\;\delta_{j+1j}$ and $\delta_{i'2}(\upsilon)=\delta_{i'j+1}+\upsilon\;\delta_{j+1j}$ for $i'=1,\ldots, N$, where $\bDelta_{i'ij}=\bs_{i'}-\bs(\omega'_i,t_j)$, $\bDelta_{i'i+1j+1}=\bs_{i'}-\bs(\omega'_{i+1},t_{j+1})$, $\delta_{i'j}=t_{i'}-t_j$, $\delta_{i'j+1}=t_{i'}-t_{j+1}$ and $\delta_{jj+1}=t_{j+1}-t_j$.

\begin{proposition}\label{prop:8}
	Let $Z(\bs,t)\sim GP(0,K(\cdot,\cdot;\btheta))$ where $K$ is the Gaussian kernel. For a wombling surface, $\C$, upon triangulation (see Section~4.2), analytic evaluation of $\bG(\C)$ and $\bK_\bGamma(\C,\C)$ (see eq.~(10)) requires only 1-dimensional quadrature.
\end{proposition}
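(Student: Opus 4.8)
The plan is to exploit the special algebraic structure of the non-separable Gaussian kernel under the affine parameterization supplied by the triangulation. Writing the kernel in the Gneiting form with $\varphi(x)=\exp(-\phi_s x)$ and $\psi(|\delta|^2)=A_t=\phi_t^2|\delta|^2+1$, we have $\tildeK(||\Delta||,|\delta|)=\sigma^2 A_t^{-d/2}\exp\!\left(-\phi_s||\Delta||^2/A_t\right)$. The first step is to record that every entry of $\calK_{i'}$ is a derivative $\partial_t^{j}\partial_s^{r-j}K$ with $r\le 4$, while every entry of $V_{\L^*Z}$ is such a derivative with $r\le 8$ (the extreme case being $\partial_t^4\partial_s^4K$), and that each of these derivatives retains the form (polynomial in the coordinates of $\Delta$, in $\delta$, and in $A_t^{-1}$) times $A_t^{-d/2-m}$ times the \emph{same} exponential $\exp(-\phi_s||\Delta||^2/A_t)$. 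This holds because differentiating in $s$ brings down Hermite-type polynomial factors in $\Delta/A_t^{1/2}$ without altering the exponent's functional form, whereas differentiating in $t$ acts only through the chain rule via $A_t$ (whose derivative in $\delta$ is $2\phi_t^2\delta$), producing extra factors of $\delta$, of $||\Delta||^2$, and negative powers of $A_t$.

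Next I would insert the affine substitutions from the triangulation. On $\C_{ijk}$ the displacement is affine, $\Delta_{i'k}(\omega,\upsilon)=\Delta_{i'ij}-\omega u_{ij}-\upsilon v_{ij+1}$, in the integrand for $\gamma_{i'}(\C_{ijk})$, and is linear, $\Delta_k(x_\omega,x_\upsilon)=x_\omega u_{ij}+x_\upsilon v_{ij+1}$, in the $K_\Gamma(\C_{ijk},\C_{ijk})$ integrand after the change of variables of \Cref{lemma:3}; the temporal lag $\delta_{i'k}$ (respectively $\delta_k$) is affine in the temporal parameter alone. Consequently $A_t$ is a function of $\upsilon$ (respectively $x_\upsilon$) only, so that for a \emph{fixed} value of the temporal parameter the exponent $-\phi_s||\Delta||^2/A_t$ is a genuine quadratic in the spatial parameter with strictly negative leading coefficient $-\phi_s||u_{ij}||^2/A_t$. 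Since $N_{ijk,st}$ and $||\overline{n}_{ijk}||$ are free of $(\omega,\upsilon)$, they pass outside the integral and play no role in the quadrature dimension. Thus, holding the temporal variable fixed, each scalar integrand is a polynomial in the spatial variable times a Gaussian in that variable — the $||\Delta||^2$ and Hermite prefactors contribute only to the polynomial, never to the quadratic exponent.

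The third step is to carry out the inner spatial integral in closed form. The remaining integration over the spatial parameter is over a finite interval, namely $[0,1-\upsilon]$ for $\gamma_{i'}$ on the triangle $T$ and $[-1-x_\upsilon,\,1-x_\upsilon]$ for $K_\Gamma$ as in \Cref{lemma:3}, of the type $\int P(\omega)\exp(-a\omega^2+b\omega+c)\,d\omega$ with $a>0$. Completing the square and integrating by parts recursively expresses every such moment through the standard normal cdf $\Phi$ evaluated at the (temporal-parameter-dependent) endpoints together with Gaussian density terms, requiring no numerical integration. What survives is a single integral over the temporal parameter, whose integrand now carries the $\Phi$-factors and the powers of $A_t(\upsilon)$ left over from the temporal derivatives; because $A_t(\upsilon)$ is non-constant this outer integral has no elementary antiderivative, so exactly one-dimensional quadrature remains — for both $\G(\C)=(\gamma_{i'}(\C_{ijk}))$ and $K_\Gamma(\C,\C)$ — and summing the per-triangle contributions preserves this. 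The main obstacle is the bookkeeping of the first step: one must check uniformly, across all derivative orders up to $\partial_t^4\partial_s^4$ in $V_{\L^*Z}$ and the mixed orders in $\calK_{i'}$, that no temporal differentiation ever couples the spatial parameter into the $A_t$ factor, so that the Gaussian-in-$\omega$ structure is genuinely preserved after every derivative. This is precisely the separation of roles between $\Delta$ (which carries the spatial parameter polynomially in the prefactor and quadratically in the exponent) and $A_t$ (which depends on the temporal parameter alone) that the Gneiting construction guarantees.
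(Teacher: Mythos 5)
Your proposal is correct and follows essentially the same route as the paper's own proof: exploit that every derivative of the Gneiting--Gaussian kernel keeps the form (polynomial prefactor)\,$\times\exp(-\phi_s^2||\Delta||^2/A_t)$ with $A_t$ a function of the temporal parameter alone, use the affine structure of $\Delta_{i'k}(\omega,\upsilon)$ (and, via Lemma~3, of $\Delta_k(x_\omega,x_\upsilon)$) so that for fixed temporal variable the integrand is a polynomial times a Gaussian in the spatial variable, complete the square and evaluate the inner integral in closed form (the paper records these as $\Phi$ and regularized gamma/$\chi^2$ cdf terms $F_1,F_2,F_3$), leaving a single temporal quadrature. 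The only differences are cosmetic: you argue the bookkeeping uniformly across derivative orders where the paper verifies it term by term ($\gamma_1,\ldots,\gamma_8$, using the extra simplification $n_s^{\T}\Delta_{i'1}(\omega,\upsilon)=n_s^{\T}\Delta(\upsilon)$, which your polynomial-in-$\omega$ argument does not even need), and your exponent $\exp(-\phi_s||\Delta||^2/A_t)$ differs from the paper's $\phi_s^2$ only by reparameterization.
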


\begin{proof}
	First we note that for a Gaussian kernel $\L^*Z(\bs,t)$ is well-defined. We focus on one triangular plane. The calculation for the other plane is similar and hence mentioned in brief. We refer to \Cref{subsubsec:gaussian} for required expressions of $\partial_t^{j}\bpartial_\bs^{r-j}\tildeK$, $j=0,\ldots,r$, $r=1,2$. 
	\paragraph{Triangular plane 1:} Here $k=1$. The normal to the plane is given by $\overline{\bn}_{ij1}=\left(\begin{smallmatrix}
		\bv_{ij+1}\\1
	\end{smallmatrix}\right)\times \left(\begin{smallmatrix}
		\bu_{ij}\\0
	\end{smallmatrix}\right) = \left(\begin{smallmatrix}
		-u_{ij2}\\u_{ij1}\\ v_{ij+11}u_{ij2}-v_{ij+12}u_{ij1}
	\end{smallmatrix}\right)=\left(\begin{smallmatrix}\bu_{ij}^{\perp}\\-\bv_{ij+1}^{\T}\bu_{ij}^{\perp}\end{smallmatrix}\right)=\left(\begin{smallmatrix}\overline{\bn}_\bs\\\overline{n}_t\end{smallmatrix}\right)$, where $\bu_{ij}^{\perp}$ is the normal to $\bu_{ij}$, $\overline{\bn}_\bs=\bu_{ij}^{\perp}$ and $\overline{n}_t=-\bv_{ij+1}^{\T}\bu_{ij}^{\perp}$. The unit normal is $\bn_{ij1} = \frac{\overline{\bn}_{ij1}}{||\overline{\bn}_{ij1}||}$, where $||\overline{\bn}_{ij1}||={\bu_{ij}^{\perp}}^{\T}\bu_{ij}^{\perp}+\left(\bv_{ij+1}^{\T}\bu_{ij}^{\perp}\right)^{\T}\left(\bv_{ij+1}^{\T}\bu_{ij}^{\perp}\right)={\bu_{ij}^{\perp}}^{\T}\left(\bI_2+\bv_{ij+1}\bv_{ij+1}^{\T}\right)\bu_{ij}^{\perp}$. The following identities are required for subsequent simplifications
	
	{\allowdisplaybreaks
		\begin{align}
			&\bn_s^{\T}\bDelta_{i'1}(\omega,\upsilon) = \frac{{\bu_{ij}^\perp}^{\T}}{||\bn_{ij1}||}\left(\bDelta_{i'ij}-\omega \bu_{ij}-\upsilon \bv_{ij+1}\right)=\frac{{\bu_{ij}^\perp}^{\T}}{||\bn_{ij1}||}\left(\bDelta_{i'ij}-\upsilon \bv_{ij+1}\right) 
			=\bn_s^{\T}\bDelta(\upsilon),\label{eq:id-1}\\
			&||\bDelta_{i'1}(\omega,\upsilon)||^2 = \bDelta_{i'1}(\omega,\upsilon)^{\T}\bDelta_{i'1}(\omega,\upsilon) = ||\bu_{ij}||^2\left(\omega-\frac{\bu_{ij}^{\T}\bDelta(\upsilon)}{||\bu_{ij}||^2}\right)^2 + \bDelta(\upsilon)^{\T}\left(\bI_2-\frac{\bu_{ij}\bu_{ij}^{\T}}{||\bu_{ij}||^2}\right)\bDelta(\upsilon)\nonumber\\
			&\myquad[2]=||\bu_{ij}||^2\left(\omega-\frac{\bu_{ij}^{\T}\bDelta(\upsilon)}{||\bu_{ij}||^2}\right)^2 + \bDelta(\upsilon)^{\T}\frac{\bu_{ij}^{\perp}{\bu_{ij}^{\perp}}^{\T}}{||\bu_{ij}||^2}\bDelta(\upsilon)=||\bu_{ij}||^2\left(\omega-\frac{\bu_{ij}^{\T}\bDelta(\upsilon)}{||\bu_{ij}||^2}\right)^2 + \left(\frac{{\bu_{ij}^{\perp}}^{\T}\bDelta(\upsilon)}{||\bu_{ij}||}\right)^2,\label{eq:id-2}\\
			&||\bDelta_{i'1}(\omega,\upsilon)||^4 = \left\{\bDelta_{i'1}(\omega,\upsilon)^{\T}\bDelta_{i'1}(\omega,\upsilon)\right\}^2\nonumber\\
			&\myquad[2]= \left\{||\bu_{ij}||^2\left(\omega-\frac{\bu_{ij}^{\T}\bDelta(\upsilon)}{||\bu_{ij}||^2}\right)^2 + \bDelta(\upsilon)^{\T}\left(\bI_2-\frac{\bu_{ij}\bu_{ij}^{\T}}{||\bu_{ij}||^2}\right)\bDelta(\upsilon)\right\}^2\nonumber\\
			&\myquad[2]=||\bu_{ij}||^4\left(\omega-\frac{\bu_{ij}^{\T}\bDelta(\upsilon)}{||\bu_{ij}||^2}\right)^4 + 2\left(\omega-\frac{\bu_{ij}^{\T}\bDelta(\upsilon)}{||\bu_{ij}||^2}\right)^2\bDelta(\upsilon)^{\T}\bu_{ij}^{\perp}{\bu_{ij}^{\perp}}^{\T}\bDelta(\upsilon) + \left(\bDelta(\upsilon)^{\T}\frac{\bu_{ij}^{\perp}{\bu_{ij}^{\perp}}^{\T}}{||\bu_{ij}||^2}\bDelta(\upsilon)\right)^2\nonumber\\
			&\myquad[2]=||\bu_{ij}||^4\left(\omega-\frac{\bu_{ij}^{\T}\bDelta(\upsilon)}{||\bu_{ij}||^2}\right)^4 + 2\left(\omega-\frac{\bu_{ij}^{\T}\bDelta(\upsilon)}{||\bu_{ij}||^2}\right)^2\left({\bu_{ij}^{\perp}}^{\T}\bDelta(\upsilon)\right)^2 + \left(\frac{{\bu_{ij}^{\perp}}^{\T}\bDelta(\upsilon)}{||\bu_{ij}||}\right)^4.\label{eq:id-3}
	\end{align}}
	Using \cref{eq:id-1,eq:id-2}, the entry corresponding to the spatial gradient satisfies 
	
	\begin{align}
		\gamma_1&=\iint\limits_T-2\;\sigma^2\frac{\phi_s^2}{A_t^2}\exp\left(-\frac{\phi_s^2}{A_t}||\bDelta_{i'1}(\omega,\upsilon)||^2\right)\;\bn_s^{\T}\;\bDelta_{i'1}(\omega,\upsilon)\;||\overline{\bn}_{ij1}||\;d\omega\;d\upsilon\nonumber\\
		&=\int\limits_0^1 -2\sigma^2\frac{\phi_s^2}{A_t^2}\exp\left\{-\frac{\phi_s^2}{A_t}\left(\frac{{\bu_{ij}^{\perp}}^{\T}\bDelta(\upsilon)}{||\bu_{ij}||}\right)^2\right\}\;\bn_s^{\T}\bDelta(\upsilon)\;||\overline{\bn}_{ij1}||\nonumber\\
		&\myquad[8]\int_0^{1-\upsilon}\exp\left\{-\frac{\phi_s^2}{A_t}||\bu_{ij}||^2\left(\omega-\frac{\bu_{ij}^{\T}\bDelta(\upsilon)}{||\bu_{ij}||^2}\right)^2\right\}\;d\omega\; d\upsilon\nonumber\\
		&=\int\limits_0^1a(\upsilon)\; F_1(\upsilon)\;\bn_s^{\T}\bDelta(\upsilon)\;d\upsilon,\label{eq:gamma-1}
	\end{align}
	where $A_t = 1+\phi_t^2\;|\delta_{i'1}(\upsilon)|^2$, $a(\upsilon) = a(\upsilon;\btheta,\bn_s)= -2\sigma^2\frac{\sqrt{\pi}\phi_s}{A_t^{3/2}}\;\exp\left\{-\frac{\phi_s^2}{A_t}\left(\frac{{\bu_{ij}^\perp}^{\T}\bDelta(\upsilon)}{||\bu_{ij}||}\right)^2\right\}\frac{||\overline{\bn}_{ij1}||}{||\bu_{ij}||}$ and $F_1(\upsilon) = F_1(\upsilon;\phi_s,\phi_t)=
	\Phi(a_2(\upsilon))+\Phi(-a_1(\upsilon))-1$, where $a_2(\upsilon) = a_2(\upsilon;\phi_s,\phi_t)=\frac{\sqrt{2}\phi_s}{A_t^{1/2}}||\bu_{ij}||\left(1-\upsilon-\frac{\bu_{ij}^{\T}\bDelta(\upsilon)}{||\bu_{ij}||^2}\right)$ and $a_1(\upsilon) = a_1(\upsilon;\phi_s,\phi_t)=-\frac{\sqrt{2}\phi_s}{A_t^{1/2}}\frac{\bu_{ij}^{\T}\bDelta(\upsilon)}{||\bu_{ij}||}$. Similarly, for spatial curvature, we have
	
	\begin{align}
		\gamma_2&=\iint\limits_T -2\;\sigma^2\;\frac{\phi_s^2}{A_t^2}\;\exp\left(-\frac{\phi_s^2}{A_t}||\bDelta_{i'1}(\omega,\upsilon)||^2\right)\;\left(||\bn_s||^2-2\frac{\phi_s^2}{A_t}\bn_s^{\T}\bDelta(\upsilon)\bDelta(\upsilon)^{\T}\bn_s\right)\; ||\overline{\bn}_{ij1}||\;d\omega\;d\upsilon\nonumber\\
		&=\int\limits_0^1 a(\upsilon)\; F_1(\upsilon)\;\left(||\bn_s||^2-2\frac{\phi_s^2}{A_t}\bn_s^{\T}\bDelta(\upsilon)\bDelta(\upsilon)^{\T} \bn_s\right)\;d\upsilon.\label{eq:gamma-2}
	\end{align}
	Identities \cref{eq:gamma-1,eq:gamma-2} serve as spatiotemporal extensions for the results found at the end of Section~3 in \cite{halder2024bayesian}. For the temporal derivative, we can write $\partial_t\tildeK=-2\sigma^2\frac{\phi_t^2}{A_t^2}\exp\left(-\frac{\phi_s^2}{A_t}||\bDelta_{i'1}(\omega,\upsilon)||^2\right)\;\delta_{i'1}(\upsilon)+2\sigma^2\frac{\phi_t^2}{A_t^2}\exp\left(-\frac{\phi_s^2}{A_t}||\bDelta_{i'1}(\omega,\upsilon)||^2\right)\frac{\phi_s^2}{A_t}||\bDelta_{i'1}(\omega,\upsilon)||^2\delta_{i'1}(\upsilon)$. The cross-covariance has the following expression
	
	{\allowdisplaybreaks
		\begin{align*}
			\gamma_3&=\iint\limits_T-2\;\sigma^2\;\frac{\phi_t^2}{A_t^2}\;\exp\left(-\frac{\phi_s^2}{A_t}||\bDelta_{i'1}(\omega,\upsilon)||^2\right)\;n_t\;\delta_{i'1}(\upsilon)\;||\bn_{ij1}||\;d\omega\;d\upsilon\;+\\
			&\myquad[2]\iint\limits_T 2\;\sigma^2\;\frac{\phi_t^2}{A_t^2}\;\exp\left(-\frac{\phi_s^2}{A_t}||\bDelta_{i'1}(\omega,\upsilon)||^2\right)\;\frac{\phi_s^2}{A_t}\;||\bDelta_{i'1}(\omega,\upsilon)||^2\;n_t\;\delta_{i'1}(\upsilon)\;||\bn_{ij1}||\;d\omega\;d\upsilon.
	\end{align*}}
	The first integral when evaluated using similar algebra as the preceding calculation produces $\int_0^1 a(\upsilon)\;\frac{\phi_t^2}{\phi_s^2}\;F_1(\upsilon)\;n_t\;\delta_{i'1}(\upsilon)\;d\upsilon$. We evaluate the second integral,
	
	{\allowdisplaybreaks
		\begin{align*}
			&\iint\limits_T 2\;\sigma^2\;\frac{\phi_t^2}{A_t^2}\;\exp\left(-\frac{\phi_s^2}{A_t}||\bDelta_{i'1}(\omega,\upsilon)||^2\right)\;\frac{\phi_s^2}{A_t}\;||\bDelta_{i'1}(\omega,\upsilon)||^2\;n_t\;\delta_{i'1}(\upsilon)\;||\bn_{ij1}||\;d\omega\;d\upsilon\\
			&=\int\limits_0^1 2\sigma^2\frac{\phi_t^2}{A_t^2}\frac{\phi_s^2}{A_t}\left(\frac{{\bu_{ij}^\perp}^{\T}\bDelta(\upsilon)}{||\bu_{ij}||}\right)^2\exp\left\{-\frac{\phi_s^2}{A_t}\left(\frac{{\bu_{ij}^\perp}^{\T}\bDelta(\upsilon)}{||\bu_{ij}||}\right)^2\right\}\;n_t\;\delta_{i'1}(\upsilon)\;||\overline{\bn}_{ij1}||\\
			&\myquad[6]\int\limits_0^{1-\upsilon}\exp\left\{-\frac{\phi_s^2}{A_t}||\bu_{ij}||^2\left(\omega-\frac{\bu_{ij}^{\T}\bDelta(\upsilon)}{||\bu_{ij}||^2}\right)^2\right\}d\omega\;d\upsilon\;+\\
			&\myquad[2]\int\limits_0^12\sigma^2\frac{\phi_t^2}{A_t^2}\exp\left\{-\frac{\phi_s^2}{A_t}\left(\frac{{\bu_{ij}^\perp}^{\T}\bDelta(\upsilon)}{||\bu_{ij}||}\right)^2\right\}\;n_t\;\delta_{i'1}(\upsilon)\;||\overline{\bn}_{ij1}||\\
			&\myquad[6]\int\limits_0^{1-\upsilon}\frac{\phi_s^2}{A_t}||\bu_{ij}||^2\left(\omega-\frac{\bu_{ij}^{\T}\bDelta(\upsilon)}{||\bu_{ij}||^2}\right)^2\exp\left\{-\frac{\phi_s^2}{A_t}||\bu_{ij}||^2\left(\omega-\frac{\bu_{ij}^{\T}\bDelta(\upsilon)}{||\bu_{ij}||^2}\right)^2\right\}d\omega\;d\upsilon\\
			&=-\int\limits_0^1a(\upsilon)\frac{\phi_t^2}{A_t}\left(\frac{{\bu_{ij}^\perp}^{\T}\bDelta(\upsilon)}{||\bu_{ij}||}\right)^2F_1(\upsilon)\;n_t\;\delta_{i'1}(\upsilon)\;d\upsilon\;-\\
			&\myquad[6]\int\limits_0^1a(\upsilon)\frac{\phi_t^2}{\phi_s^2}\;\frac{1}{4}\left\{G\left(\frac{3}{2},\frac{a_2(\upsilon)^2}{2}\right)-G\left(\frac{3}{2},\frac{a_1(\upsilon)^2}{2}\right)\right\}\;n_t\;\delta_{i'1}(\upsilon)\;d\upsilon\\
			&=-\int\limits_0^1a(\upsilon)\;\frac{\phi_t^2}{\phi_s^2}\left\{\frac{\phi_s^2}{A_t}\left(\frac{{\bu_{ij}^\perp}^{\T}\bDelta(\upsilon)}{||\bu_{ij}||}\right)^2F_1(\upsilon)+\frac{1}{4}F_2(\upsilon)\right\}\;n_t\;\delta_{i'1}(\upsilon)\;d\upsilon,
	\end{align*}}
	where $F_2(\upsilon) = F_2(\upsilon;\phi_s,\phi_t)=G\left(\frac{3}{2},\frac{a_2(\upsilon)^2}{2}\right)-G\left(\frac{3}{2},\frac{a_1(\upsilon)^2}{2}\right)$. Combining the two terms yield
	
	\begin{equation}
		\gamma_3=\int\limits_0^1a(\upsilon)\;\frac{\phi_t^2}{\phi_s^2}\;\left[\left\{1-\frac{\phi_s^2}{A_t}\left(\frac{{\bu_{ij}^\perp}^{\T}\bDelta(\upsilon)}{||\bu_{ij}||}\right)^2\right\}F_1(\upsilon)-\frac{1}{4}F_2(\upsilon)\right]\;n_t\;\delta_{i'1}(\upsilon)\;d\upsilon.\label{eq:gamma-3}
	\end{equation} 
	For the term corresponding to the spatial-temporal derivative the covariance is given by $\partial_t\bpartial_\bs\tildeK=4\sigma^2\frac{\phi_s^2\phi_t^2}{A_t^3}\exp\left(-\frac{\phi_s^2||\bDelta||^2}{A_t}\right)\left(2-\frac{\phi_s^2||\bDelta||^2}{A_t}\right)\bDelta\delta$. Hence,
	
	\begin{equation*}
		\begin{split}
			\gamma_4&=\iint\limits_T8\sigma^2\frac{\phi_s^2\phi_t^2}{A_t^3}\exp\left(-\frac{\phi_s^2}{A_t}||\bDelta_{i'1}(\omega,\upsilon)||^2\right)\bn_s^{\T}\bDelta(\upsilon)\;n_t\delta_{i'1}(\upsilon)\;||\overline{\bn}_{ij1}||\;d\omega\;d\upsilon\;-\\
			&\myquad[4]\iint\limits_T4\sigma^2\frac{\phi_s^4\phi_t^2}{A_t^4}||\bDelta_{i'1}(\omega,\upsilon)||^2\exp\left(-\frac{\phi_s^2}{A_t}||\bDelta_{i'1}(\omega,\upsilon)||^2\right)\bn_s^{\T}\bDelta(\upsilon)\;n_t\delta_{i'1}(\upsilon)\;||\overline{\bn}_{ij1}||\;d\omega\;d\upsilon.
		\end{split}
	\end{equation*}
	The first term reduces to $-\int_0^14\;a(\upsilon)\;\frac{\phi_t^2}{A_t^2}\;F_1(\upsilon)\;\bn_s^{\T}\bDelta(\upsilon)\;n_t\;\delta_{i'1}(\upsilon)\;d\upsilon$. The second term is $\int_0^1 2\;\frac{\phi_t^2}{A_t}\;a(\upsilon)\left\{-\frac{\phi_s^2}{A_t}\left(\frac{{\bu_{ij}^\perp}^{\T}\bDelta(\upsilon)}{||\bu_{ij}||}\right)^2F_1(\upsilon)-\frac{1}{4}F_2(\upsilon)\right\}\;\bn_s^{\T}\bDelta(\upsilon)\;n_t\delta_{i'1}(\upsilon)\;d\upsilon$ following the previous calculation for $\gamma_3$. Combining the terms, we have
	
	\begin{equation}
		\gamma_4=\int\limits_0^1 2\;a(\upsilon)\;\frac{\phi_t^2}{A_t}\left[\left\{\frac{\phi_s^2}{A_t}\left(\frac{{\bu_{ij}^\perp}^{\T}\bDelta(\upsilon)}{||\bu_{ij}||}\right)^2-\frac{2}{A_t}\right\}F_1(\upsilon)\;+\frac{1}{4}F_2(\upsilon)\right]\;\bn_s^{\T}\bDelta(\upsilon)\;n_t\delta_{i'1}(\upsilon)\;d\upsilon.\label{eq:gamma-4}
	\end{equation}
	Next, for $\gamma_5$ associated with the temporal derivative of spatial curvature, we require $\partial_t\bpartial_\bs^2\tildeK(\bDelta,\delta) = 4\sigma^2\frac{\phi_s^2\phi_t^2}{A_t^3}\exp\left(-\frac{\phi_s^2}{A_t}||\bDelta||^2\right)\left\{\left(2-\frac{\phi_s^2||\bDelta||^2}{A_t}\right)\bI_2-2\frac{\phi_s^2}{A_t}\left(3-\frac{\phi_s^2||\bDelta||^2}{A_t}\right)\bDelta\bDelta^{\T}\right\}\delta$. Hence,
	
	\begin{align}
		&\gamma_5= \iint\limits_Tn_tn_s^{\T}\partial_t\bpartial_\bs^2\tildeK(\bDelta,\delta)\bn_s\;||\overline{\bn}_{ij1}||\;d\omega\;d\upsilon\nonumber\\
		&=\int\limits_0^1 2\;a(\upsilon)
		\;\frac{\phi_t^2}{A_t}\left[\left\{\left(\frac{\phi_s^2}{A_t}\left(\frac{{\bu_{ij}^{\perp}}^{\T}\bDelta(\upsilon)}{||\bu_{ij}||}\right)^2-\frac{2}{A_t}\right)F_1(\upsilon)+\frac{1}{4}F_2(\upsilon)\right\}||\bn_s||^2-\right.\nonumber\\
		&\myquad[3]\left. \frac{2\phi_s^2}{A_t} \left\{\left(\frac{\phi_s^2}{A_t}\left(\frac{{\bu_{ij}^{\perp}}^{\T}\bDelta(\upsilon)}{||\bu_{ij}||}\right)^2-\frac{3}{A_t}\right)F_1(\upsilon)+\frac{1}{4}F_2(\upsilon)\right\}\bn_s^{\T}\bDelta(\upsilon)\bDelta(\upsilon)^{\T}\bn_s\right]n_t\delta_{i'1}(\upsilon)\;d\upsilon.\label{eq:gamma-5}
	\end{align}
	Next, for $\gamma_6$ associated with the temporal curvature, $\partial_t^2\tildeK=-2\sigma^2\frac{\phi_t^2}{A_t^2}\exp\left(-\frac{\phi_s^2||\bDelta||^2}{A_t}\right)\left(1-\frac{\phi_s^2||\bDelta||^2}{A_t}\right)\;+\;4\;\sigma^2\frac{\phi_t^4\delta^2}{A_t^3}\exp\left(-\frac{\phi_s^2||\bDelta||^2}{A_t}\right)\left(2-4\frac{\phi_s^2||\bDelta||^2}{A_t}+\frac{\phi_s^4||\bDelta||^4}{A_t^2}\right)$ is required with $\gamma_6 = \iint\limits_Tn_t^2\partial_t^2\tildeK(\bDelta,\delta)\;||\overline{\bn}_{ij1}||\;d\omega\;d\upsilon$. Evaluation of the terms in $\gamma_6$ follows a similar algebra as seen in the previous calculations with the last term being an exception. We use \cref{eq:id-1,eq:id-2,eq:id-3} to simplify the last term,
	
	{\allowdisplaybreaks
		\begin{align*}
			&\iint\limits_T4\sigma^2\frac{\phi_t^4}{A_t^3}\;\exp\left(-\frac{\phi_s^2}{A_t}||\bDelta_{i'1}(\omega,\upsilon)||^2\right)\;\frac{\phi_s^4}{A_t^2}\;||\bDelta_{i'1}(\omega,\upsilon)||^4\;n_t^2\;\delta_{i'1}^2(\upsilon)\;||\bn_{ij1}||\;d\omega\;d\upsilon\\
			&=2\int\limits_0^1 2\sigma^2\frac{\phi_t^4}{A_t^3}\frac{\phi_s^4}{A_t^2}\left(\frac{{\bu_{ij}^\perp}^{\T}\bDelta(\upsilon)}{||\bu_{ij}||}\right)^4\exp\left\{-\frac{\phi_s^2}{A_t}\left(\frac{{\bu_{ij}^\perp}^{\T}\bDelta(\upsilon)}{||\bu_{ij}||}\right)^2\right\}\;n_t^2\;\delta_{i'1}(\upsilon)^2\;||\overline{\bn}_{ij1}||\\
			&\myquad[6]\int\limits_0^{1-\upsilon}\exp\left\{-\frac{\phi_s^2}{A_t}||\bu_{ij}||^2\left(\omega-\frac{\bu_{ij}^{\T}\bDelta(\upsilon)}{||\bu_{ij}||^2}\right)^2\right\}d\omega\;d\upsilon\;+\\
			&\myquad[2]4\int\limits_0^12\sigma^2\frac{\phi_t^4}{A_t^3}\exp\left\{-\frac{\phi_s^2}{A_t}\left(\frac{{\bu_{ij}^\perp}^{\T}\bDelta(\upsilon)}{||\bu_{ij}||}\right)^2\right\}\;n_t^2\;\delta_{i'1}(\upsilon)^2\;||\overline{\bn}_{ij1}||\\
			&\myquad[4]\int\limits_0^{1-\upsilon}\frac{\phi_s^4}{A_t^2}\left(\frac{{\bu_{ij}^\perp}^{\T}\bDelta(\upsilon)}{||\bu_{ij}||}\right)^2||\bu_{ij}||^2\left(\omega-\frac{\bu_{ij}^{\T}\bDelta(\upsilon)}{||\bu_{ij}||^2}\right)^2\exp\left\{-\frac{\phi_s^2}{A_t}||\bu_{ij}||^2\left(\omega-\frac{\bu_{ij}^{\T}\bDelta(\upsilon)}{||\bu_{ij}||^2}\right)^2\right\}d\omega\;d\upsilon\;+\\
			&\myquad[2]2\int\limits_0^12\sigma^2\frac{\phi_t^4}{A_t^3}\exp\left\{-\frac{\phi_s^2}{A_t}\left(\frac{{\bu_{ij}^\perp}^{\T}\bDelta(\upsilon)}{||\bu_{ij}||}\right)^2\right\}\;n_t^2\;\delta_{i'1}(\upsilon)^2\;||\overline{\bn}_{ij1}||\\
			&\myquad[4]\int\limits_0^{1-\upsilon}\frac{\phi_s^4}{A_t^2}||\bu_{ij}||^4\left(\omega-\frac{\bu_{ij}^{\T}\bDelta(\upsilon)}{||\bu_{ij}||^2}\right)^4\exp\left\{-\frac{\phi_s^2}{A_t}||\bu_{ij}||^2\left(\omega-\frac{\bu_{ij}^{\T}\bDelta(\upsilon)}{||\bu_{ij}||^2}\right)^2\right\}d\omega\;d\upsilon\;\\
			&=-2\int\limits_0^1a(\upsilon)\frac{\phi_t^4\phi_s^2}{A_t^3}\left(\frac{{\bu_{ij}^\perp}^{\T}\bDelta(\upsilon)}{||\bu_{ij}||}\right)^4 F_1(\upsilon)\;n_t^2\;\delta_{i'1}(\upsilon)^2\;d\upsilon\;-\\
			&\myquad[4]\int\limits_0^1a(\upsilon)\frac{\phi_t^4}{A_t^2}\;\left(\frac{{\bu_{ij}^\perp}^{\T}\bDelta(\upsilon)}{||\bu_{ij}||}\right)^2\;F_2(\upsilon)\;n_t^2\;\delta_{i'1}(\upsilon)^2\;d\upsilon\;-\\
			&\myquad[6]3\int\limits_0^1a(\upsilon)\frac{\phi_t^4}{\phi_s^2A_t}\;\left\{G\left(\frac{5}{2},\frac{a_2(\upsilon)^2}{2}\right)-G\left(\frac{5}{2},\frac{a_1(\upsilon)^2}{2}\right)\right\}\;n_t^2\;\delta_{i'1}(\upsilon)^2\;d\upsilon\\
			&=-\int\limits_0^1a(\upsilon)\frac{\phi_t^4}{\phi_s^2A_t}\left\{\frac{2\phi_s^4}{A_t^2}\left(\frac{{\bu_{ij}^\perp}^{\T}\bDelta(\upsilon)}{||\bu_{ij}||}\right)^4F_1(\upsilon)+\frac{\phi_s^2}{A_t}\left(\frac{{\bu_{ij}^\perp}^{\T}\bDelta(\upsilon)}{||\bu_{ij}||}\right)^2 F_2(\upsilon)+3F_3(\upsilon)\right\}\;n_t^2\delta_{i'1}(\upsilon)^2\;d\upsilon,
	\end{align*}}
	where $F_3(\upsilon) = F_3(\upsilon;\phi_s,\phi_t)=G\left(\frac{5}{2},\frac{a_2(\upsilon)^2}{2}\right)-G\left(\frac{5}{2},\frac{a_1(\upsilon)^2}{2}\right)$. Collecting terms, the final expression for $\gamma_6$ is obtained as follows:
	
	{\allowdisplaybreaks
		\begin{align}
			\gamma_6&=\int\limits_0^1 a(\upsilon)\frac{\phi_t^2}{\phi_s^2}\left[\left\{\left(1-\frac{\phi_s^2}{A_t}\left(\frac{{\bu_{ij}^{\perp}}^{\T}\bDelta(\upsilon)}{||\bu_{ij}||}\right)^2\right)F_1(\upsilon)-\frac{1}{4}F_2(\upsilon)\right\}n_t^2\;+\right.\nonumber\\
			&\myquad[6]\left.\frac{\phi_t^2}{A_t}\left\{\left(-4+8\frac{\phi_s^2}{A_t}\left(\frac{{\bu_{ij}^{\perp}}^{\T}\bDelta(\upsilon)}{||\bu_{ij}||}\right)^2-2\frac{\phi_s^4}{A_t^2}\left(\frac{{\bu_{ij}^{\perp}}^{\T}\bDelta(\upsilon)}{||\bu_{ij}||}\right)^4\right)F_1(\upsilon)\;+ \right.\right.\nonumber\\
			&\myquad[8]\left.\left.\left(2-\frac{\phi_s^2}{A_t}\left(\frac{{\bu_{ij}^\perp}^{\T}\bDelta(\upsilon)}{||\bu_{ij}||}\right)^2 \right)F_2(\upsilon)-3F_3(\upsilon)\right\}n_t^2\delta_{i'1}(\upsilon)^2
			\right]\;d\upsilon.\label{eq:gamma-6}
		\end{align}
	}
	Proceeding similarly for the temporal curvature in the spatial gradient, we have
	
	\begin{align}
		\gamma_7&=\int\limits_0^1 a(\upsilon)\frac{\phi_t^2}{A_t}\left[\left\{\left(-4+2\frac{\phi_s^2}{A_t}\left(\frac{{\bu_{ij}^{\perp}}^{\T}\bDelta(\upsilon)}{||\bu_{ij}||}\right)^2\right)F_1(\upsilon)+\frac{1}{4}F_2(\upsilon)\right\}n_t^2\;+\right.\nonumber\\
		&\myquad[6]\left.\frac{\phi_t^2}{A_t}\left\{\left(24-24\frac{\phi_s^2}{A_t}\left(\frac{{\bu_{ij}^{\perp}}^{\T}\bDelta(\upsilon)}{||\bu_{ij}||}\right)^2+ 4\frac{\phi_s^4}{A_t^2}\left(\frac{{\bu_{ij}^{\perp}}^{\T}\bDelta(\upsilon)}{||\bu_{ij}||}\right)^4\right)F_1(\upsilon)\;+ \right.\right.\nonumber\\
		&\myquad[8]\left.\left.\left(6-2\frac{\phi_s^2}{A_t}\right)F_2(\upsilon)-6F_3(\upsilon)\right\}n_t^2\delta_{i'1}(\upsilon)^2
		\right]\;\bn_s^{\T}\bDelta(\upsilon)\;d\upsilon.\label{eq:gamma-7}
	\end{align}
	Finally, for the spatial-temporal curvature, we note that
	
	\begin{equation*}
		\begin{split}
			\partial_t^2\bpartial_\bs^2\tildeK &= 4\sigma^2\frac{\phi_s^2\phi_t^2}{A_t^3}\exp\left(-\frac{\phi_s^2||\bDelta||^2}{A_t}\right)\left[\left\{\left(2-\frac{\phi_s^2||\bDelta||^2}{A_t}\right)-2\left(6-6\frac{\phi_s^2||\bDelta||^2}{A_t}+\frac{\phi_s^4||\bDelta||^4}{A_t^2}\right)\frac{\phi_t^2\delta^2}{A_t}\right\}\bI_2\;-\right.\\&\left.\qquad\qquad\qquad\qquad2\frac{\phi_s^2}{A_t}\left\{\left(3-\frac{\phi_s^2||\bDelta||^2}{A_t}\right)-2\left(12-8\frac{\phi_s^2||\bDelta||^2}{A_t}+\frac{\phi_s^4||\bDelta||^4}{A_t^2}\right)\frac{\phi_t^2\delta^2}{A_t}\right\}\bDelta\bDelta^{\T}\right].
		\end{split}
	\end{equation*}
	We compute the terms separately using algebra similar to that required for obtaining $\gamma_6$. The first term evaluates to 
	
	\begin{subequations}
		\begin{align}
			&\int\limits_0^1 a(\upsilon)\frac{\phi_t^2}{A_t}\left[\left\{\left(-4+2\frac{\phi_s^2}{A_t}\left(\frac{{\bu_{ij}^{\perp}}^{\T}\bDelta(\upsilon)}{||\bu_{ij}||}\right)^2\right)F_1(\upsilon)+\frac{1}{4}F_2(\upsilon)\right\}n_t^2\;+\right.\nonumber\\
			&\myquad[6]\left.\frac{\phi_t^2}{A_t}\left\{\left(24-24\frac{\phi_s^2}{A_t}\left(\frac{{\bu_{ij}^{\perp}}^{\T}\bDelta(\upsilon)}{||\bu_{ij}||}\right)^2+ 4\frac{\phi_s^4}{A_t^2}\left(\frac{{\bu_{ij}^{\perp}}^{\T}\bDelta(\upsilon)}{||\bu_{ij}||}\right)^4\right)F_1(\upsilon)\;+ \right.\right.\nonumber\\
			&\myquad[8]\left.\left.\left(6-2\frac{\phi_s^2}{A_t}\right)F_2(\upsilon)-6F_3(\upsilon)\right\}n_t^2\delta_{i'1}(\upsilon)^2
			\right]\;||\bn_s||^2\;d\upsilon,\label{eq:gamma-8-a}
		\end{align}
		while for the second term again following similar algebra produces
		
		\begin{align}
			&2\int\limits_0^1 a(\upsilon)\;\frac{\phi_t^2\phi_s^2}{A_t^2}\left[\left\{\left(6-2\frac{\phi_s^2}{A_t}\left(\frac{{\bu_{ij}^{\perp}}^{\T}\bDelta(\upsilon)}{||\bu_{ij}||}\right)^2\right)F_1(\upsilon)-\frac{1}{4}F_2(\upsilon)\right\}n_t^2\;+\right.\nonumber\\
			&\myquad[4]\left.\frac{\phi_t^2}{A_t}\left\{\left(-48+32\frac{\phi_s^2}{A_t}\left(\frac{{\bu_{ij}^{\perp}}^{\T}\bDelta(\upsilon)}{||\bu_{ij}||}\right)^2 - 4\frac{\phi_s^4}{A_t^2}\left(\frac{{\bu_{ij}^{\perp}}^{\T}\bDelta(\upsilon)}{||\bu_{ij}||}\right)^4\right)F_1(\upsilon)\;+ \right.\right.\nonumber\\
			&\myquad[6]\left.\left.\left(-8+2\frac{\phi_s^4}{A_t^4}\right)F_2(\upsilon;\phi_s,\phi_t)+6F_3(\upsilon)\right\}n_t^2\delta_{i'1}(\upsilon)^2\right]\;\bn_s^{\T}\bDelta(\upsilon)\bDelta(\upsilon)^{\T}\bn_s\;d\upsilon.\label{eq:gamma-8-b}
		\end{align}
	\end{subequations}
	Adding \cref{eq:gamma-8-a,eq:gamma-8-b} produces $\gamma_8$. Referring to \cref{eq:gamma-1,eq:gamma-2,eq:gamma-3,eq:gamma-4,eq:gamma-5,eq:gamma-6,eq:gamma-7,eq:gamma-8-a,eq:gamma-8-b}, we note that the evaluation of $\gamma(\C_{ij1})$ requires a 1-dimensional quadrature over $[0,1]$.
	
	\paragraph{Triangular plane 2:} Here $k=2$. The normal to the plane is given by $\overline{\bn}_{ij2}=\left(\begin{smallmatrix}
		-\bu_{ij+1}\\0
	\end{smallmatrix}\right)\times \left(\begin{smallmatrix}
		-\bv_{i+1j}\\1
	\end{smallmatrix}\right) = \left(\begin{smallmatrix}
		-u_{ij+12}\\u_{ij+11}\\ -v_{i+1j1}u_{ij+12}+v_{i+1j2}u_{ij+11}\end{smallmatrix}\right)=\left(\begin{smallmatrix}\bu_{ij+1}^{\perp}\\\bv_{i+1j}^{\T}\bu_{ij+1}^{\perp}\end{smallmatrix}\right)=\left(\begin{smallmatrix}\overline{\bn}_\bs\\\overline{n}_t\end{smallmatrix}\right)$, where $\bu_{ij+1}^{\perp}$ is the normal to $\bu_{ij+1}$, $\overline{\bn}_\bs=\bu_{ij+1}^{\perp}$ and $\overline{n}_t=\bv_{i+1j}^{\T}\bu_{ij+1}^{\perp}$. The unit normal is $\bn_{ij2} = \frac{\overline{\bn}_{ij2}}{||\overline{\bn}_{ij2}||}$, where $||\overline{\bn}_{ij2}||={\bu_{ij+1}^{\perp}}^{\T}\bu_{ij+1}^{\perp}+\left(\bv_{i+1j}^{\T}\bu_{ij+1}^{\perp}\right)^{\T}\left(\bv_{i+1j}^{\T}\bu_{ij+1}^{\perp}\right)={\bu_{ij+1}^{\perp}}^{\T}\left(\bI_2+\bv_{i+1j}\bv_{i+1j}^{\T}\right)\bu_{ij+1}^{\perp}$. The identities required for simplifications are as follows:
	
	{\allowdisplaybreaks
		\begin{align}
			&\bn_s^{\T}\bDelta_{i'2}(\omega,\upsilon) = \frac{{\bu_{ij+1}^\perp}^{\T}}{||n_{ij2}||}\left(\bDelta_{i'i+1j+1}-\omega \bu_{ij+1}-\upsilon v_{ij}\right)=\frac{{\bu_{ij+1}^\perp}^{\T}}{||n_{ij2}||}\left(\bDelta_{i'i+1j+1}-\upsilon v_{ij}\right)\nonumber\\
			&\myquad[10]=\bn_s^{\T}\left(\bDelta_{i'i+1j+1}-\upsilon v_{ij}\right)=\bn_s^{\T}\bDelta(\upsilon),\label{eq:id-4}\\
			&||\bDelta_{i'2}(\omega,\upsilon)||^2 = \bDelta_{i'2}(\omega,\upsilon)^{\T}\bDelta_{i'2}(\omega,\upsilon) = ||\bu_{ij+1}||^2\left(\omega-\frac{\bu_{ij+1}^{\T}\bDelta(\upsilon)}{||\bu_{ij+1}||^2}\right)^2 + \bDelta(\upsilon)^{\T}\left(\bI_2-\frac{\bu_{ij+1}\bu_{ij+1}^{\T}}{||\bu_{ij+1}||^2}\right)\bDelta(\upsilon)\nonumber\\
			&\myquad[4]=||\bu_{ij+1}||^2\left(\omega-\frac{\bu_{ij+1}^{\T}\bDelta(\upsilon)}{||\bu_{ij+1}||^2}\right)^2 + \bDelta(\upsilon)^{\T}\frac{\bu_{ij+1}^{\perp}{\bu_{ij+1}^{\perp}}^{\T}}{||\bu_{ij+1}||^2}\bDelta(\upsilon)\nonumber\\
			&\myquad[4]=||\bu_{ij+1}||^2\left(\omega-\frac{\bu_{ij+1}^{\T}\bDelta(\upsilon)}{||\bu_{ij+1}||^2}\right)^2 + \left(\frac{{\bu_{ij+1}^{\perp}}^{\T}\bDelta(\upsilon)}{||\bu_{ij+1}||}\right)^2,\label{eq:id-5}\\
			&||\bDelta_{i'2}(\omega,\upsilon)||^4 = \left\{||\bu_{ij+1}||^2\left(\omega-\frac{\bu_{ij+1}^{\T}\bDelta(\upsilon)}{||\bu_{ij+1}||^2}\right)^2 + \bDelta(\upsilon)^{\T}\left(\bI_2-\frac{\bu_{ij+1}\bu_{ij+1}^{\T}}{||\bu_{ij+1}||^2}\right)\bDelta(\upsilon)\right\}^2\nonumber\\
			&~=||\bu_{ij+1}||^4\left(\omega-\frac{\bu_{ij+1}^{\T}\bDelta(\upsilon)}{||\bu_{ij+1}||^2}\right)^4 + 2\left(\omega-\frac{\bu_{ij+1}^{\T}\bDelta(\upsilon)}{||\bu_{ij+1}||^2}\right)^2\bDelta(\upsilon)^{\T}\bu_{ij+1}^{\perp}{\bu_{ij+1}^{\perp}}^{\T}\bDelta(\upsilon)\;+\nonumber\\
			&\myquad[22]\left(\bDelta(\upsilon)^{\T}\frac{\bu_{ij+1}^{\perp}{\bu_{ij+1}^{\perp}}^{\T}}{||\bu_{ij+1}||^2}\bDelta(\upsilon)\right)^2\nonumber\\
			&~=||\bu_{ij+1}||^4\left(\omega-\frac{\bu_{ij+1}^{\T}\bDelta(\upsilon)}{||\bu_{ij+1}||^2}\right)^4 + 2\left(\omega-\frac{\bu_{ij+1}^{\T}\bDelta(\upsilon)}{||\bu_{ij+1}||^2}\right)^2\left({\bu_{ij+1}^{\perp}}^{\T}\bDelta(\upsilon)\right)^2 + \left(\frac{{\bu_{ij+1}^{\perp}}^{\T}\bDelta(\upsilon)}{||\bu_{ij+1}||}\right)^4.\label{eq:id-6}
	\end{align}}
	The remainder of the calculation follows triangular plane 1 using \cref{eq:id-4,eq:id-5,eq:id-6}. Overall, the evaluation of $\gamma(\C_{ijk})$ requires 1-dimensional quadrature. 
\end{proof}
Finally, we conclude by noting that simplifications for the terms in $\bK_\bGamma(\C,\C)$ follow similar algebraic steps and make use of \Cref{lemma:3}. They can be expressed using cdfs for $\chi^2$-distributions with higher degrees of freedom. The next discussion outlines algorithms used for triangulating surfaces, posterior sampling of gradients and wombling measures.

\newpage

\section{Algorithms}

We outline algorithms for obtaining posterior samples of the spatiotemporal differential processes and the wombling measures. We use a grid $G$ over which the differential processes are sampled. The choice of $G$ can be regularly spaced locations spanning the spatial domain at every time point or space-time locations between time points, or extrapolated space-time locations. These algorithms are implemented in the \texttt{R}-statistical environment {\citep[][]{r_core_team_2021}} and available for download at \if1\blind\href{https://github.com/arh926/sptwombling}{https://github.com/arh926/sptwombling}\fi \if0\blind {\url{(REDACTED)}}\fi.

\subsection{Spatiotemporal Gradients}

\begin{algorithm}[htb]
	\caption{Posterior sampling of spatiotemporal differential processes}\label{algorithm:stdp}
	{\bf Input:} (i) Observed process $(\bs_i,t_i)$, $i = 1, \ldots, N$\; 
	(ii) Posterior samples $\btheta_1,\ldots,\btheta_{n_{mcmc}}$ from fitting model in eq.~(11)\;
	(iii) A kernel, $K(\cdot,\btheta)$ of choice (Mat\'ern with $\nu =3/2,5/2$ or $\infty$)\;
	(iv) A grid $G=\{(\bs_{i_g},t_{i_g}), i_g = 1,\ldots,n_G\}$\;
	{\bf Output:} Posterior samples of spatiotemporal differential processes over $G$\;
	\For{$i = 1, \ldots, N$}
	{
		\For{$i_g = 1, \ldots,n_G$}
		{
			\tcc{Compute $\bDelta$ and $\delta$ for observed locations, $(\bs_i,t_i)$ and grid $G$}
			$(\bDelta[i,i_g], \delta[i,i_g]) = (\bs_{i_g}-\bs_i, t_{i_g}-t_i)$\;
		}
	}
	\For{$i_\btheta =1,\ldots,n_{mcmc}$}
	{
		
		$\bK=\bK(\cdot;\btheta_{i_\btheta})$, $\bK^{-1} = \left(\bK(\cdot;\btheta_{i_\btheta})\right)^{-1}$ \tcp*{$O(N^3)$ step--compute once}
		$\bM = \bK^{-1} + \tau^{-2} \bI_N$,~$\m= \tau^{-2}\bM^{-1}\Y$,~$\Z \sim N(\m, \bM^{-1})$\tcp*{Posterior sample of $Z$}
		$\bV_0 = \bV_{\L^*}(\0_2,0;\btheta_{i_\btheta})$\tcp*{see \Cref{sec:covfns}}
		\For{$i_g = 1,\ldots,n_G$}
		{
			${\calK}_0 = \L^* \bK(\bDelta[,i_g], \delta[,i_g];\btheta_{i_\btheta})$\;
			\BlankLine
			\tcc{Posterior mean and cross-covariance matrix}
			$\bmu_{\L^*} = -{\calK}_0^{\T}\bK^{-1}\Z$,~$\Sigma_{\L^*} = \bV_0 - {\calK}_0^{\T}\bK^{-1}{\calK}_0$\;
			$\L^* Z[i_\btheta,i_g] \sim \N_{18}(\bmu_{\L^*},\Sigma_{\L^*})$
		}
	}
	\Return $\L^* Z$
\end{algorithm}

\subsection{Spatiotemporal Wombling Measures}
We focus on obtaining posterior samples for $\bGamma(\C)$. It relies on \Cref{algorithm:triangle-surf}, which triangulates $\C$. Alternatively, posterior samples of the spatiotemporal differential processes can be used to generate posterior samples of $\bGamma(\C)$. \Cref{algorithm:riemann} utilizes posterior samples of $\L^*Z$ and the Riemann sum approximation from \Cref{thm:2} to generate posterior samples of $\bGamma(\C)$ and the induced GP over parametric triangular planes. Finally, \Cref{algorithm:wm} uses the posterior in eq.~(10). If derivative estimation and wombling are performed simultaneously, \Cref{algorithm:stdp,algorithm:wm} are combined to avoid repeated  inversion of $K(\cdot,\cdot;\btheta)$.

\begin{algorithm}
	\caption{Triangulation of a wombling surface}\label{algorithm:triangle-surf}
	{\bf Input:} Wombling surface $\C$ and partitioned set of planar curves $\widetilde{C}_{t_j}$, $j=1,\ldots,N$ (see Section~4.2) on $\C$ with the length of partition $(n_\omega,n_\upsilon)$\;
	{\bf Output:} Set of points constituting parametric triangles with normals
	
	\For{$i = 1,\ldots,n_\omega-1$}
	{
		\For{$j=1,\ldots,n_\upsilon-1$}
		{   
			$T = \{(\omega,\upsilon):0\leq\omega+\upsilon< 1\}$\tcp*{Triangle in $\mathfrak{R}^2$}
			\tcc{Parametric Plane 1}
			$\C_{ij1}= \left\{\bigl(\begin{smallmatrix}\bs(\omega'_i,t_j)\\ t_j \end{smallmatrix}\bigr)+\omega\;\bigl(\begin{smallmatrix}\bu_{ij}\\0 \end{smallmatrix}\bigr)+\upsilon\;\bigl(\begin{smallmatrix}\bv_{ij+1}\\t_{j+1}-t_{j} \end{smallmatrix}\bigr):(\omega,\upsilon)\in T\right\}$\;
			\tcc{Parametric Plane 2}
			$\C_{ij2} = \left\{\bigl(\begin{smallmatrix}\bs(\omega'_{i+1},t_{j+1})\\ t_{j+1} \end{smallmatrix}\bigr)+\omega\;\bigl(\begin{smallmatrix}\bu_{ij+1}\\ 0 \end{smallmatrix}\bigr)+\upsilon\;\bigl(\begin{smallmatrix}\bv_{i+1j}\\ t_j-t_{j+1} \end{smallmatrix}\bigr):(\omega,\upsilon)\in T\right\}$\;
			$\overline{\bn}_{ij1} = \left(\begin{smallmatrix} \bv_{ij+1}\\t_{j+1}-t_j\end{smallmatrix}\right)\times\left(\begin{smallmatrix} \bu_{ij}\\0\end{smallmatrix}\right)$\tcp*{Normal to Plane 1}
			$\overline{\bn}_{ij2} = \left(\begin{smallmatrix} -\bu_{ij+1}\\0\end{smallmatrix}\right) \times \left(\begin{smallmatrix} -\bv_{i+1j}\\t_{j+1}-t_j\end{smallmatrix}\right)$\tcp*{Normal to Plane 2}
		}
	}
	\Return $\widetilde{\C}=\bigcup\limits_{i=1}^{n_\omega-1}\bigcup\limits_{j=1}^{n_\upsilon-1}\bigcup\limits_{k=1}^{2}\C_{ijk}$ and $\left\{\overline{\bn}_{ijk}: i = 1,\ldots,n_\omega-1, j = 1,\ldots,n_\upsilon-1, k= 1,2\right\}$
\end{algorithm}

\begin{algorithm}
	\caption{Posterior samples of wombling measures using Riemann sums}\label{algorithm:riemann}
	{\bf Input:} Triangulated Surface $\widetilde{\C}$ and Normals (using \Cref{algorithm:triangle-surf})\;
	{\bf Output:} Posterior samples of $\bGamma(\widetilde{\C})$\;
	{\bf Step 1:} Use \Cref{algorithm:stdp} to obtain posterior samples of $\L^*Z$ for $\widetilde{\C}$\;
	\For{$i_{\btheta}=1,\ldots,n_{mcmc}$}
	{
		\tcc{Riemann sum approximation}
		$\bGamma(\widetilde{\C})[i_\btheta] = \sum\limits_{i=1}^{n_\omega-1}\sum\limits_{j=1}^{n_\upsilon-1}\sum\limits_{k=1}^{2}(\upsilon'_{j+1}-\upsilon'_j)\;(\omega'_{i+1}-\omega'_i)\;g\left(\omega_i^*,\upsilon_j^*\right)\;||\overline{\bn}_{ijk}(\omega_i^*,\upsilon_j^*)||$\;
	}
	\Return $\bGamma(\widetilde{\C})$
\end{algorithm}
\noindent In \Cref{algorithm:riemann}, $g$s are linear functions in $\L^*_{n_t,\bn_s}$---spatiotemporal directional derivatives and curvatures and $\omega^*$s and $\upsilon^*$s are mid-points of rectangles (see \Cref{thm:2} p. 27).

\begin{algorithm}
	\caption{Posterior samples of wombling measures using eq.~(10)}\label{algorithm:wm}
	{\bf Input:} (i) Observed process $(\bs_i,t_i)$, $i = 1, \ldots, N$\; 
	(ii) Posterior samples $\btheta_1,\ldots,\btheta_{n_{mcmc}}$ from fitting model in eq.~(11)\;
	(iii) A kernel, $K(\cdot,\btheta)$ of choice (Mat\'ern with $\nu =3/2,5/2$ or $\infty$)\;
	(iv) Triangulated Surface $\widetilde{\C}$ and Normals (using \Cref{algorithm:triangle-surf})\;
	{\bf Output:} Posterior samples of $\bGamma(\widetilde{\C})$\;
	\For{$i_\btheta =1,\ldots,n_{mcmc}$}
	{
		
		$\bK=\bK(\cdot;\btheta_{i_\btheta})$, $\bK^{-1} = \left(\bK(\cdot;\btheta_{i_\btheta})\right)^{-1}$ \tcp*{$O(N^3)$ step--compute once}
		$\bM = \bK^{-1} + \tau^{-2} \bI_N$,~$\m= \tau^{-2}\bM^{-1}\Y$,~$\Z \sim N(\m, \bM^{-1})$\tcp*{Posterior sample of $Z$}
		\BlankLine
		\tcc{computed using 2-D quadrature}
		$\bG(\widetilde{\C}) =  \iint\limits_{\widetilde{\C}}\bN_{\bs t}{\calK}_0(\bDelta(\omega,\upsilon), \delta(\upsilon))||\overline{\bn}||\;d\omega\;d\upsilon$\;
		\tcc{computed using \Cref{lemma:3} and 2-D quadrature}
		$\bK_\bGamma(\widetilde{\C},\widetilde{\C}) = ||\overline{\bn}||^2\idotsint\limits_{\widetilde{\C}\times\widetilde{\C}}\bN_{\bs t}\bV_{\L^*Z}(\bDelta(\omega_1,\omega_2,\upsilon_1,\upsilon_2),\delta(\upsilon_1,\upsilon_2))\;\bN_{\bs t}^{\T}\;d\omega_1\;d\omega_2\;d\upsilon_1\;d\upsilon_2$\;
		\BlankLine
		\tcc{Posterior mean and cross-covariance matrix}
		$\bmu_{\bGamma} = -\bG(\widetilde{\C})^{\T}\bK^{-1}\Z$,~$\Sigma_{\bGamma} = \bK_\bGamma(\widetilde{\C},\widetilde{\C}) - \bG(\widetilde{\C})^{\T}\bK^{-1}\bG(\widetilde{\C})$\;
		$\bGamma(\widetilde{\C})[i_\btheta] \sim \N_8(\bmu_{\bGamma},\Sigma_{\bGamma})$
	}
	\Return $\bGamma(\widetilde{\C})$
\end{algorithm}


\section{Spatiotemporal Covariance Functions}\label{sec:covfns}

\subsection{Separable:}\label{subsec:sep}

The required derivatives for the separable space-time kernels are computed using $\partial_t^{j}\bpartial_\bs^{r-j}K(\bDelta,\delta) = \partial_t^{j}\bpartial_\bs^{r-j}\left\{K_t(\delta)\otimes K_s(\bDelta)\right\} = \partial_t^{j}K_t(\delta) \otimes \bpartial_\bs^{r-j}K_s(\bDelta)$. The derivatives are listed using $x$ which can either be $\delta$ or, $\bDelta$. In case $x=\delta$, $||x||=|\delta|$, then $\phi=\phi_t$ and $\partial=\partial_t$, otherwise, $\partial = \bpartial_\bs$, $||x||=||\bDelta||$, and $\phi=\phi_s$, where $||\cdot||$ stands for the Euclidean norm. For a Mat\'ern with $\nu=3/2$ we have $\partial\tildeK(x)=-3\sigma^2\phi^2e^{-\sqrt{3}\phi||x||}x$ and $\partial^2\tildeK(x) = -3\sigma^2\phi^2e^{-\sqrt{3}\phi||x||}\left(1-\sqrt{3}\phi\frac{xx^{\T}}{||x||}\right)$.

\subsubsection{Mat\'ern Covariance \texorpdfstring{($\nu=5/2$)}{nu2}}\label{sssec:m52}
We have the first derivative, $\partial\tildeK(x)=-\frac{5}{3}\sigma^2\phi^2e^{-\sqrt{5}\phi||x||}\left(1+\sqrt{5}\phi||x||\right)x$ and the second, $\partial^2\tildeK(x)=-\frac{5}{3}\sigma^2\phi^2e^{-\sqrt{5}\phi||x||}\left(1+\sqrt{5}\phi||x||-5\phi^2xx^{\T}\right)$. The third and fourth derivatives are

\begin{equation*}
	\partial^3\tildeK(x) = \begin{cases}
		\frac{25}{3}\sigma^2\phi^4e^{-\sqrt{5}\phi||x||}\left(3-\sqrt{5}\phi\frac{x_i^2}{||x||}\right)x_i & iii\\
		\frac{25}{3}\sigma^2\phi^4e^{-\sqrt{5}\phi||x||}\left(1-\sqrt{5}\phi\frac{x_i^2}{||x||}\right)x_j & iij\\
		\frac{25\sqrt{5}}{3}\sigma^2\phi^5e^{-\sqrt{5}\phi||x||}\frac{x_i x_j x_k}{||x||} & ijk
	\end{cases}
\end{equation*}

\begin{equation*}
	\partial^4\tildeK(x) = \begin{cases}
		\frac{25}{3}\sigma^2\phi^4e^{-\sqrt{5}\phi||x||}\left\{3-6\sqrt{5}\phi\frac{x_i^2}{||x||}+\sqrt{5}\phi\left(\sqrt{5}\phi+\frac{1}{||x||}\right)\frac{x_i^4}{||x||^2}\right\}& iiii\\
		-\frac{25\sqrt{5}}{3}\sigma^2\phi^5e^{-\sqrt{5}\phi||x||}\left\{3-\sqrt{5}\phi\frac{x_i^2}{||x||}-\frac{x_i}{||x||^2}\right\}\frac{x_ix_j}{||x||} & iiij\\
		\frac{25}{3}\sigma^2\phi^4e^{-\sqrt{5}\phi||x||}\left\{\left(1-\sqrt{5}\phi\frac{x_i^2}{||x||}\right)\left(1-\sqrt{5}\phi\frac{x_j^2}{||x||}\right)+\sqrt{5}\phi\frac{x_i^2x_j^2}{||x||^3}\right\} & iijj\\
		-\frac{25\sqrt{5}}{3}\sigma^2\phi^5e^{-\sqrt{5}\phi||x||}\left\{1-\frac{x_i}{||x||}\left(\sqrt{5}\phi+\mfrac{1}{||x||}\right)\right\}\frac{x_jx_k}{||x||} & iijk\\
		\frac{25\sqrt{5}}{3}\sigma^2\phi^5e^{-\sqrt{5}\phi||x||}\left(\sqrt{5}\phi+\mfrac{1}{||x||}\right)\frac{x_ix_jx_kx_l}{||x||^2} & ijkl
	\end{cases}
\end{equation*}

\subsubsection{Gaussian}\label{sssec:gauss}
We have $\partial\tildeK(x)=-2\sigma^2\phi^2 e^{-\phi^2||x||^2}x$ and $\partial^2\tildeK(x)=-2\sigma^2\phi^2 e^{-\phi^2||x||^2}(1-2\phi^2 xx^{\T})$. Next,

\begin{equation*}
	\scalemath{0.95}{
		\partial^3\tildeK(x) = \begin{cases}
			4\sigma^2\phi^4e^{-\phi^2||x||^2}(3-2\phi^2 x_i^2)x_i, & iii\\
			4\sigma^2\phi^4e^{-\phi^2||x||^2}(1-2\phi^2 x_i^2)x_j, & iij\\
			-8\sigma^2\phi^6e^{-\phi^2||x||^2}x_i x_j x_k, & ijk
		\end{cases},
		\partial^4\tildeK(x) = \begin{cases}
			4\sigma^2\phi^4e^{-\phi^2||x||^2}(3-12\phi^2 x_i^2+4\phi^4x_i^4), & iiii\\
			-8\sigma^2\phi^6e^{-\phi^2||x||^2}(3-2\phi^2 x_i^2)x_i x_j, & iiij\\
			4\sigma^2\phi^4e^{-\phi^2||x||^2}(1-2\phi^2 x_i^2)(1-2\phi^2 x_j^2), & iijj\\
			8\sigma^2\phi^6e^{-\phi^2||x||^2}(1-2\phi^2 x_i^2)x_jx_k, & iijk\\
			16\sigma^2\phi^8e^{-\phi^2||x||^2} x_i x_j x_k x_l, & ijkl
		\end{cases}
	}
\end{equation*}

\subsubsection{Inverse}
Particularly, for comparing results with non-separable kernels, presented in the next discussion, we let $\tildeK_t(|\delta|)=\sigma^2(\phi_t^2|\delta|^2+1)^{-1}=\sigma^2A_t^{-1}$. Note that $\frac{\partial}{\partial\delta}A_t=2\phi_t^2\delta$. The required derivatives are, $\partial_t\tildeK(\delta)=-\frac{2\phi_t^2\delta}{A_t^2}$, $\partial_t^2\tildeK(\delta)=-\frac{2\phi_t^2}{A_t^3}(1-3\phi_t^2\delta^2)$, $\partial_t^3\tildeK(\delta)=\frac{24\phi_t^4\delta}{A_t^4}(1-\phi_t^2\delta^2)$ and $\partial_t^4\tildeK(\delta)=\frac{24\phi_t^4}{A_t^5}(1-12\phi_t^2\delta^2+5\phi_t^4\delta^4)$. 

For comparison with the non-separable case, the resulting cross-covariance using Mat\'ern with $\nu=3/2$, $\nu=5/2$ and the Gaussian case for the spatial covariance and the inverse for the temporal covariance is shown below. For a Mat\'ern with $\nu=3/2$, $\L^*Z(\bs_0,t_0)=\left(Z(s_{\cal S},t_{\cal T}),\bpartial_\bs Z(\bs_0,t_0)^{\T},\partial_tZ(\bs_0,t_0),\partial_t\bpartial_\bs Z(\bs_0,t_0)^{\T}\right)^{\T}$, at an arbitrary space-time location, $(\bs_0,t_0)$

\begin{equation}\label{eq:matern-1-cov-sep}
	\bV_{\L^*}=\sigma^2\begin{pmatrix}
		\Sigma_Z & \bpartial_\bs\bK_0^{\T} & \partial_t\bK_0&\partial_t\bpartial_\bs\bK_0^{\T}\\
		-\bpartial_\bs\bK_0 & 3\phi_s^2\bI_d & 0 & 0\\
		-\partial_t\bK_0 & 0 & 2\phi_t^2 & 0\\
		\partial_t\bpartial_\bs\bK_0 & 0 & 0 & 6\phi_s^2\phi_t^2\bI_d
	\end{pmatrix},
\end{equation}
In case of a Mat\'ern with $\nu=5/2$ and Gaussian, the full cross-covariance matrix is obtained as $(||\bDelta||,|\delta|)\to 0$, excepting the terms $\bpartial_\bs K,\partial_tK,\Partials^2K,\partial_t\bpartial_\bs K$, $\partial_t^2K,\partial_t\Partials^2K$, $\partial_t^2\bpartial_\bs K$ and $\partial_t^2\Partials^2K$. The resulting cross-covariance matrices are
\begin{equation*}
	\bV_{\L^*}=\scalemath{0.7}{\sigma^2\left[
		\begin{array}{c;{2pt/2pt}cc;{2pt/2pt}ccc;{2pt/2pt}ccc}
			\Sigma_Z & \bpartial_\bs \bK_0^{\T} & \Partials^2 \bK_0^{\T} & \partial_t \bK_0 & \partial_t\bpartial_\bs \bK_0^{\T} & \partial_t\Partials^2 \bK_0^{\T} & \partial_t^2 \bK_0 & \partial_t^2\bpartial_\bs \bK_0^{\T} & \partial_t^2\Partials^2 \bK_0^{\T}\\\hdashline[2pt/2pt]
			-\bpartial_\bs \bK_0 & \frac{5}{3}\phi_s^2I_d & 0 & 0 & 0 & 0 & 0 & -\frac{10}{3}\phi_s^2\phi_t^2\bI_d & 0\\
			\Partials^2 \bK_0 & 0 & \frac{25}{3}\phi_s^4\bA & 0 & 0 & 0 & \frac{10}{3}\phi_s^2\phi_t^2\tilde{I}_d & 0 & -\frac{50}{3}\phi_s^4\phi_t^2\bA\\\hdashline[2pt/2pt]
			-\partial_t \bK_0 & 0 & 0 & 2\phi_t^2 & 0 & -\frac{10}{3}\phi_s^2\phi_t^2\tilde{I}_d^{\T} & 0 & 0 & 0\\
			\partial_t\bpartial_\bs \bK_0 & 0 & 0 & 0 & \frac{10}{3}\phi_s^2\phi_t^2I_d & 0 & 0 & 0 & 0\\
			-\partial_t\Partials^2 \bK_0 & 0 & 0 & -\frac{10}{3}\phi_s^2\phi_t^2\tilde{I}_d & 0 & \frac{50}{3}\phi_s^4\phi_t^2\bA & 0 & 0 & 0\\\hdashline[2pt/2pt]
			\partial_t^2 \bK_0 & 0 & \frac{10}{3}\phi_s^2\phi_t^2\tilde{I}_d^{\T} & 0 & 0 & 0 & 24\phi_t^4 & 0 & -40\phi_s^2\phi_t^4\tilde{I}_d^{\T}\\
			-\partial_t^2\bpartial_\bs \bK_0 & -\frac{10}{3}\phi_s^2\phi_t^2\bI_d & 0 & 0 & 0 & 0 & 0 & 40\phi_s^2\phi_t^4\bI_d & 0\\
			\partial_t^2\Partials^2 \bK_0 & 0 & -\frac{50}{3}\phi_s^4\phi_t^2\bA & 0 & 0 & 0 & -40\phi_s^2\phi_t^4\tilde{I}_d & 0 & 200\phi_s^4\phi_t^4\bA
		\end{array}
		\right]},
\end{equation*}
and,

\begin{equation*}
	\bV_{\L^*}=\scalemath{0.7}{\sigma^2\left(
		\begin{array}{ccc;{2pt/2pt}ccc;{2pt/2pt}ccc}
			\Sigma_Z & \bpartial_\bs \bK_0^{\T} & \Partials^2 \bK_0^{\T} & \partial_t \bK_0 & \partial_t\bpartial_\bs \bK_0^{\T} & \partial_t\Partials^2 \bK_0^{\T} & \partial_t^2 \bK_0 & \partial_t^2\bpartial_\bs \bK_0^{\T} & \partial_t^2\Partials^2 \bK_0^{\T}\\
			-\bpartial_\bs \bK_0 & 2\phi_s^2\bI_d & 0 & 0 & 0 & 0 & 0 & -4\phi_s^2\phi_t^2\bI_d & 0\\
			\Partials^2 \bK_0 & 0 & 4\phi_s^4\bA & 0 & 0 & 0 & 4\phi_s^2\phi_t^2\tilde{I}_d & 0 & -8\phi_s^4\phi_t^2\bA\\\hdashline[2pt/2pt]
			-\partial_t \bK_0 & 0 & 0 & 2\phi_t^2 & 0 & -4\phi_s^2\phi_t^2\tilde{I}_d^{\T} & 0 & 0 & 0\\
			\partial_t\bpartial_\bs \bK_0 & 0 & 0 & 0 & 4\phi_s^2\phi_t^2\bI_d & 0 & 0 & 0 & 0\\
			-\partial_t\Partials^2 \bK_0 & 0 & 0 & -4\phi_s^2\phi_t^2\tilde{I}_d & 0 & 8\phi_s^4\phi_t^2\bA & 0 & 0 & 0\\\hdashline[2pt/2pt]
			\partial_t^2 \bK_0 & 0 & 4\phi_s^2\phi_t^2\tilde{I}_d^{\T} & 0 & 0 & 0 & 24\phi_t^4 & 0 & -48\phi_s^2\phi_t^4\tilde{I}_d^{\T}\\
			-\partial_t^2\bpartial_\bs \bK_0 & -4\phi_s^2\phi_t^2\bI_d & 0 & 0 & 0 & 0 & 0 & 48\phi_s^2\phi_t^4\bI_d & 0\\
			\partial_t^2\Partials^2 \bK_0 & 0 & -8\phi_s^4\phi_t^2\bA & 0 & 0 & 0 & -48\phi_s^2\phi_t^4\tilde{I}_d & 0 & 96\phi_s^4\phi_t^4\bA
		\end{array}
		\right)},
\end{equation*}
respectively, where $\bA = \bA_{d(d+1)/2}=\left(\begin{smallmatrix}
	3 & 0 & 0 & \cdots & 1 & 0 & \cdots & 1 & 0 & 1\\
	0 & 1 & 0 & \cdots & 0 & 0 & \cdots & 0 & 0 & 0\\
	\vdots & \vdots &\vdots & \ddots & \vdots & \vdots & \ddots & \vdots &\vdots &\vdots \\
	1 & 0 & 0 & \cdots & 3 & 0 & \cdots & 1 & 0 & 1\\
	\vdots & \vdots &\vdots & \ddots & \vdots & \vdots & \ddots & \vdots &\vdots &\vdots \\
	1 & 0 & 0 & \cdots & 1 & 0 & \cdots & 1 & 0 & 3
\end{smallmatrix}\right)$. For $d=2$, $\bA=\bA_3=\left(\begin{smallmatrix}
	3 & 0  & 1\\ 
	0 & 1 & 0\\
	1 & 0 & 3
\end{smallmatrix}\right)$. Observe that the \emph{variance for spatiotemporal derivatives is lower} compared to the cross-covariance matrices for the non-separable kernels (for, e.g., compare \cref{eq:matern-1-cov-sep,eq:matern-1-cov}). 

\subsection{Non-separable}\label{subsec:kernels-2}
We provide examples of non-separable stationary space-time covariance. We focus on the Mat\'ern family of covariance functions \citep[see, for e.g,][]{gneiting2002nonseparable, quick2015bayesian}. Let $K = \tildeK(||\bDelta||,|\delta|) = \tildeK(||\bDelta||,|\delta|;\btheta)$, where $\btheta = \{\phi_s,\phi_t,\sigma^2\}$ denote the covariance kernel. The Mat\'ern family has an additional fractal parameter, $\nu$ determining smoothness of process realizations. We consider three kernels, $\nu=3/2, 5/2$ and $\infty$ (Gaussian or, squared exponential kernel). Let $A_t=\phi_t^2|\delta|^2+1$, then $\displaystyle\frac{\partial A_t}{\partial\delta}=2\phi_t^2\delta$. Substituting $\delta = 0$ in the expressions derived below produces spatial derivatives (see \Cref{subsec:sep}).

\subsubsection{Mat\'ern Covariance \texorpdfstring{($\nu=3/2$)}{nu1}}\label{subsubsec:matern1}
The covariance kernel is 

\begin{equation*}
	\tildeK(||\bDelta||,|\delta|)=\tildeK=\frac{\sigma^2}{A_t}\left(1+\sqrt{3}\frac{\phi_s||\bDelta||}{A_t^{1/2}}\right)\exp\left(-\sqrt{3}\frac{\phi_s||\bDelta||}{A_t^{1/2}}\right).
\end{equation*}
It admits two derivatives in space-time respectively, i.e. $\partial^2_t\bpartial_\bs^2\tildeK<\infty$. Let $ c(\btheta)=c(||\bDelta||,|\delta|;\btheta)=\frac{\sigma^2}{A_t^2}e^{-\sqrt{3}\frac{\phi_s||\bDelta||}{A_t^{1/2}}}$. Partial derivatives with respect to $\delta$ and $\bDelta$ are, $\frac{\partial}{\partial\delta} c(\btheta)=\frac{c(\btheta)}{A_t}\left(\sqrt{3}\frac{\phi_s||\bDelta||}{A_t^{1/2}}-4\right)\phi_t^2\delta$, and $\frac{\partial}{\partial\bDelta} c(\btheta)=-c(\btheta)\;\left(\sqrt{3}\frac{\phi_s}{A_t^{1/2}}\frac{\bDelta}{||\bDelta||}\right)$. We compute the required derivatives. $\bpartial_\bs\tildeK=-3\phi_s^2c(\btheta)\bDelta$, $\partial_t\tildeK =-2\phi_t^2c(\btheta)\;f_{10}\;\delta$, where $f_{10} = 1+\sqrt{3}\frac{\phi_s||\bDelta||}{A_t^{1/2}}-\frac{3}{2}\frac{\phi_s^2||\bDelta||^2}{A_t}$, $\bpartial_\bs^2\tildeK =-3\phi_s^2c(\btheta)\left(\bI_d-\sqrt{3}\frac{\phi_s}{A_t^{1/2}}\frac{\bDelta\bDelta^{\T}}{||\bDelta||}\right)$, $\partial_t\bpartial_\bs\tildeK =6\phi_s^2\phi_t^2\frac{c(\btheta)}{A_t}\;f_{11}\;\bDelta\delta$, where $f_{11}=2-\frac{\sqrt{3}}{2}\frac{\phi_s||\bDelta||}{A_t^{1/2}}$, $\partial^2_{t}\tildeK=-2\phi_t^2c(\btheta)\left\{f_{10}-f_{20}\frac{\phi_t^2\delta^2}{A_t}\right\}$, where $f_{20} 
= -\left\{\left(\sqrt{3}\frac{\phi_s||\bDelta||}{A_t^{1/2}}-4\right)f_{10}+\frac{A_t}{\phi_t^2\delta}\frac{\partial}{\partial\delta}f_{10}\right\}$, $\partial_t\bpartial_\bs^2\tildeK=3\phi_s^2\phi_t^2\frac{c(\btheta)}{A_t}\;f_{12}\;\delta$, where $f_{12} = 4\bI_d-5\sqrt{3}\frac{\phi_s}{A_t^{1/2}}\frac{\bDelta\bDelta^{\T}}{||\bDelta||}-\sqrt{3}\frac{\phi_s||\bDelta||}{A_t^{1/2}}\bI_d+3\frac{\phi_s^2}{A_t}\bDelta\bDelta^{\T}$, $\partial_t^2\bpartial_\bs\tildeK=6\phi_s^2\phi_t^2\frac{c(\btheta)}{A_t}\left\{f_{11}-f_{21}\frac{\phi_t^2\delta^2}{A_t}\right\}\bDelta$, where $f_{21} = 12-\frac{11\sqrt{3}}{2}\frac{\phi_s||\bDelta||}{A_t^{1/2}}+\frac{3}{2}\frac{\phi_s^2||\bDelta||^2}{A_t}= -\left\{\left(\sqrt{3}\frac{\phi_s||\bDelta||}{A_t^{1/2}}-6\right)f_{11}+\frac{A_t}{\phi_t^2\delta}\frac{\partial}{\partial\delta}f_{11}\right\}$ and finally $\partial_t^2\bpartial_\bs^2\tildeK= 3\phi_s^2\phi_t^2\frac{c(\btheta)}{A_t}\left\{f_{12}-f_{22}\frac{\phi_t^2\delta^2}{A_t}\right\}$, where $f_{22} = 24\bI_d-35\sqrt{3}\frac{\phi_s}{A_t^{1/2}}\frac{\bDelta\bDelta^{\T}}{||\bDelta||}-11\sqrt{3}\frac{\phi_s||\bDelta||}{A_t^{1/2}}\bI_d+39\frac{\phi_s^2}{A_t}\bDelta\bDelta^{\T}+3\frac{\phi_s^2||\bDelta||^2}{A_t}\bI_d-3\sqrt{3}\frac{\phi_s^3||\bDelta||}{A_t^{3/2}}\bDelta\bDelta^{\T} = -\left\{\left(\sqrt{3}\frac{\phi_s||\bDelta||}{A_t^{1/2}}-6\right)f_{12}+\frac{A_t}{\phi_t^2\delta}\frac{\partial}{\partial\delta}f_{12}\right\}$. These recurrence relations for $f$ above are increasingly useful when computing higher order derivatives as is seen in the following subsections. 
	Statistical inference is sought on the spatiotemporal differential process, $\L^*Z(\bs_0,t_0)=\left(Z(s_{\cal S},t_{\cal T}),\bpartial_\bs Z(\bs_0,t_0)^{\T},\partial_tZ(\bs_0,t_0),\partial_t\bpartial_\bs Z(\bs_0,t_0)^{\T}\right)^{\T}$, at an arbitrary space-time location, $(\bs_0,t_0)$. This requires evaluation of the derivatives pertaining to the variance-covariance sub-matrix as $(||\bDelta||,|\delta|)\to 0$. They combine to produce
	
	\begin{equation}\label{eq:matern-1-cov}
		\bV_{Z,\L^*Z}=\sigma^2\begin{pmatrix}
			K & \bpartial_\bs K^{\T} & \partial_tK&\partial_t\bpartial_\bs K^{\T}\\
			-\bpartial_\bs K & 3\phi_s^2\bI_d & 0 & 0\\
			-\partial_tK & 0 & 2\phi_t^2 & 0\\
			\partial_t\bpartial_\bs K & 0 & 0 & 12\phi_s^2\phi_t^2\bI_d
		\end{pmatrix},
	\end{equation}
	which is the required cross-covariance matrix for obtaining posterior samples of $\L^* Z(\bs_0,t_0)$. The variance-covariance matrix pertaining to the differential processes is a \emph{diagonal matrix}.
	
	\subsubsection{Mat\'ern Covariance \texorpdfstring{($\nu=5/2$)}{nu2}}\label{subsubsec:matern2}
	The covariance kernel is $\tildeK=\frac{\sigma^2}{A_t}\left(1+\sqrt{5}\frac{\phi_s||\bDelta||}{A_t^{1/2}}+\frac{5}{3}\frac{\phi_s^2||\bDelta||^2}{A_t}\right)\exp\left(-\sqrt{5}\frac{\phi_s||\bDelta||}{A_t^{1/2}}\right)$. { The required derivatives are obtained via successive differentiation.} 
We require them to compute $\bV_{Z,\L^*Z}$ in \cref{eq:full-cross-cov} at an arbitrary space-time location $(\bs_0,t_0)$ for the spatiotemporal differential process, 

\begin{equation*}
	\begin{split}\L^* Z(\bs,t)&=\left(Z(s_{\cal S},t_{\cal T}),\bpartial_\bs Z(\bs_0,t_0)^{\T},\partial_tZ(\bs_0,t_0),\Partials^2Z(\bs_0,t_0)^{\T},\bpartial_\bs\partial_t Z(\bs_0,t_0)^{\T},\right.\\&\qquad~\left.\partial_t^2Z(\bs_0,t_0), \Partials^2\partial_tZ(\bs_0,t_0)^{\T},\bpartial_\bs\partial_t^2Z(\bs_0,t_0)^{\T}, \Partials^2\partial_t^2Z(\bs_0,t_0)^{\T}\right)^{\T}.
	\end{split}
\end{equation*} 
Considering the cross-covariance matrix, excepting the terms $\bpartial_\bs K,\partial_tK,\Partials^2K,\partial_t\bpartial_\bs K$, $\partial_t^2K,\partial_t\Partials^2K$, $\partial_t^2\bpartial_\bs K$ and $\partial_t^2\Partials^2K$, all other terms are evaluated as $(||\bDelta||,|\delta|)\to 0$ resulting in the following cross-covariance matrix. Note that compared to Mat\'ern($\nu=3/2$) it is not a diagonal matrix.

\begin{equation*}
	\bV_{Z,\L^*Z}=\scalemath{0.7}{\sigma^2\left[
		\begin{array}{c;{2pt/2pt}cc;{2pt/2pt}ccc;{2pt/2pt}ccc}
			\Sigma_Z & \bpartial_\bs \tildeK^{\T} & \Partials^2 \tildeK^{\T} & \partial_t \tildeK & \partial_t\bpartial_\bs \tildeK^{\T} & \partial_t\Partials^2 \tildeK^{\T} & \partial_t^2 \tildeK & \partial_t^2\bpartial_\bs \tildeK^{\T} & \partial_t^2\Partials^2 \tildeK^{\T}\\\hdashline[2pt/2pt]
			-\bpartial_\bs \tildeK & \frac{5}{3}\phi_s^2\bI_d & 0 & 0 & 0 & 0 & 0 & -\frac{20}{3}\phi_s^2\phi_t^2\bI_d & 0\\
			\Partials^2 \tildeK & 0 & \frac{25}{3}\phi_s^4\bA & 0 & 0 & 0 & \frac{20}{3}\phi_s^2\phi_t^2\tilde{I}_d & 0 & -50\phi_s^4\phi_t^2\bA\\\hdashline[2pt/2pt]
			-\partial_t \tildeK & 0 & 0 & 2\phi_t^2 & 0 & -\frac{20}{3}\phi_s^2\phi_t^2\tilde{I}_d^{\T} & 0 & 0 & 0\\
			\partial_t\bpartial_\bs \tildeK & 0 & 0 & 0 & \frac{20}{3}\phi_s^2\phi_t^2\bI_d & 0 & 0 & 0 & 0\\
			-\partial_t\Partials^2 \tildeK & 0 & 0 & -\frac{20}{3}\phi_s^2\phi_t^2\tilde{I}_d & 0 & 50\phi_s^4\phi_t^2\bA & 0 & 0 & 0\\\hdashline[2pt/2pt]
			\partial_t^2 \tildeK & 0 & \frac{20}{3}\phi_s^2\phi_t^2\tilde{I}_d^{\T} & 0 & 0 & 0 & 24\phi_t^4 & 0 & -120\phi_s^2\phi_t^4\tilde{I}_d^{\T}\\
			-\partial_t^2\bpartial_\bs \tildeK & -\frac{20}{3}\phi_s^2\phi_t^2\bI_d & 0 & 0 & 0 & 0 & 0 & 120\phi_s^2\phi_t^4\bI_d & 0\\
			\partial_t^2\Partials^2 \tildeK & 0 & -50\phi_s^4\phi_t^2\bA & 0 & 0 & 0 & -120\phi_s^2\phi_t^4\tilde{I}_d & 0 & 1200\phi_s^4\phi_t^4\bA
		\end{array}
		\right]}.
\end{equation*}

\subsubsection{Gaussian Covariance}\label{subsubsec:gaussian}
In case $\nu\to\infty$, $\displaystyle\tildeK=\frac{\sigma^2}{A_t}\exp\left(-\frac{\phi_s^2||\bDelta||^2}{A_t}\right)$. { The required derivatives are obtained via successive differentiation.} 
We require to compute $\bV_{Z, \L^*Z}$ at an arbitrary space-time location, $(\bs_0,t_0)$ for the spatiotemporal differential process, 
\begin{equation*}
\begin{split}\L^* Z(\bs,t)&=\left(Z(s_{\cal S},t_{\cal T}),\bpartial_\bs Z(\bs_0,t_0)^{\T},\partial_tZ(\bs_0,t_0),\Partials^2Z(\bs_0,t_0)^{\T},\partial_t\bpartial_\bs Z(\bs_0,t_0)^{\T},\right.\\&\qquad~\left.\partial_t^2Z(\bs_0,t_0), \partial_t\Partials^2Z(\bs_0,t_0)^{\T},\partial_t^2\bpartial_\bs Z(\bs_0,t_0)^{\T}, \partial_t^2\Partials^2Z(\bs_0,t_0)^{\T}\right)^{\T}.
\end{split}
\end{equation*} 
Referring to the full covariance matrix, excepting the terms $\bpartial_\bs K^{\T},\partial_tK,\Partials^2K,\bpartial_\bs\partial_tK^{\T}, \partial_t^2K$, $\Partials^2\partial_tK^{\T},\bpartial_\bs\partial_t^2K^{\T}$ and $\Partials^2\partial_t^2K^{\T}$, the other terms are evaluated as $(||\bDelta||,|\delta|)\to 0$ resulting in

\begin{equation*}
\bV_{Z,\L^*Z}=\scalemath{0.7}{\sigma^2\left(
\begin{array}{ccc;{2pt/2pt}ccc;{2pt/2pt}ccc}
	\Sigma_Z & \bpartial_\bs \tildeK^{\T} & \Partials^2 \tildeK^{\T} & \partial_t \tildeK & \partial_t\bpartial_\bs \tildeK^{\T} & \partial_t\Partials^2 \tildeK^{\T} & \partial_t^2 \tildeK & \partial_t^2\bpartial_\bs \tildeK^{\T} & \partial_t^2\Partials^2 \tildeK^{\T}\\
	-\bpartial_\bs \tildeK & 2\phi_s^2\bI_d & 0 & 0 & 0 & 0 & 0 & -8\phi_s^2\phi_t^2\bI_d & 0\\
	\Partials^2 \tildeK & 0 & 4\phi_s^4\bA & 0 & 0 & 0 & 8\phi_s^2\phi_t^2\tilde{I}_d & 0 & -24\phi_s^4\phi_t^2\bA\\\hdashline[2pt/2pt]
	-\partial_t \tildeK & 0 & 0 & 2\phi_t^2 & 0 & -8\phi_s^2\phi_t^2\tilde{I}_d^{\T} & 0 & 0 & 0\\
	\partial_t\bpartial_\bs \tildeK & 0 & 0 & 0 & 8\phi_s^2\phi_t^2\bI_d & 0 & 0 & 0 & 0\\
	-\partial_t\Partials^2 \tildeK & 0 & 0 & -8\phi_s^2\phi_t^2\tilde{I}_d & 0 & 24\phi_s^4\phi_t^2\bA & 0 & 0 & 0\\\hdashline[2pt/2pt]
	\partial_t^2 \tildeK & 0 & 8\phi_s^2\phi_t^2\tilde{I}_d^{\T} & 0 & 0 & 0 & 24\phi_t^4 & 0 & -144\phi_s^2\phi_t^4\tilde{I}_d^{\T}\\
	-\partial_t^2\bpartial_\bs \tildeK & -8\phi_s^2\phi_t^2\bI_d & 0 & 0 & 0 & 0 & 0 & 144\phi_s^2\phi_t^4\bI_d & 0\\
	\partial_t^2\Partials^2 \tildeK & 0 & -24\phi_s^4\phi_t^2\bA & 0 & 0 & 0 & -144\phi_s^2\phi_t^4\tilde{I}_d & 0 & 576\phi_s^4\phi_t^4\bA
\end{array}
\right)}.
\end{equation*}

\section{Comparison with clustering, image segmentation, hot-spot detection techniques and level set modeling}

\begin{figure}[t]
\centering
\begin{subfigure}{.5\textwidth}
\centering
\hspace*{-0.6cm}    
\includegraphics[scale = 0.5]{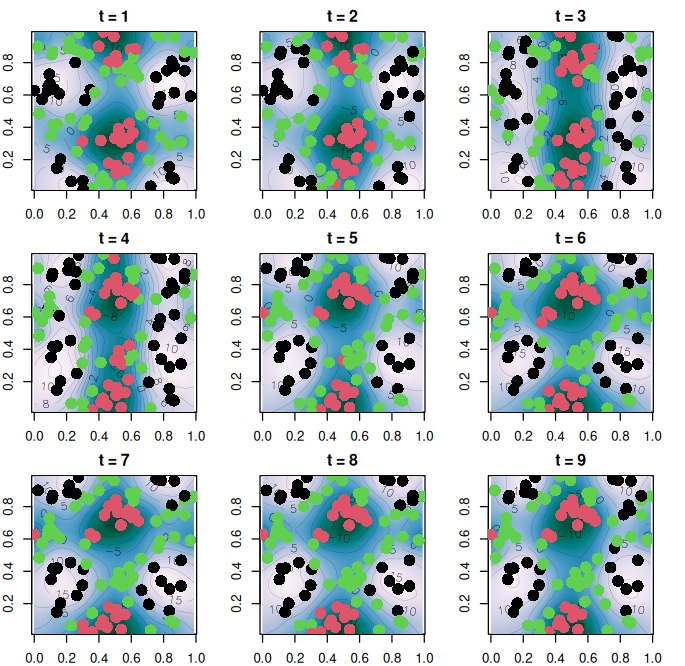}
\caption{$k$-Means with 3 clusters}\label{fig:kmeans}
\end{subfigure}%
\begin{subfigure}{.5\textwidth}
\centering
\includegraphics[scale = 0.5]{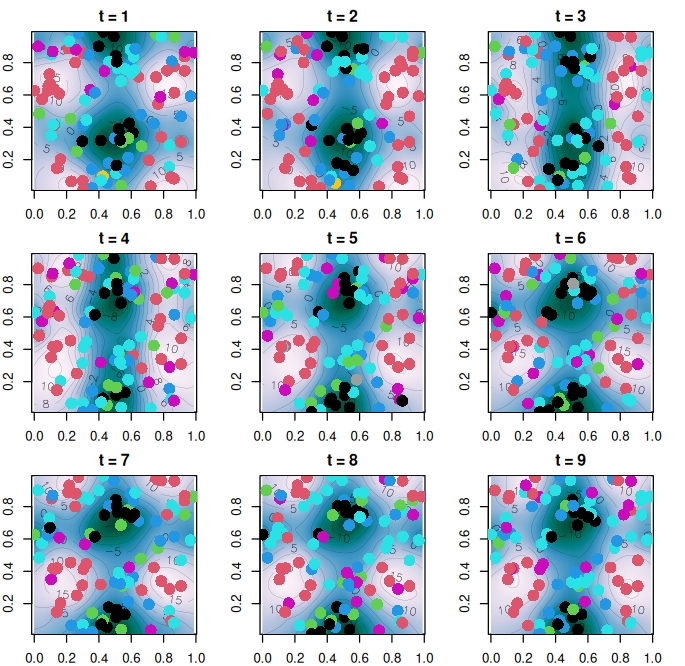}
\caption{Dirichlet Process Mixture Model fit}\label{fig:dpfit}
\end{subfigure}\\
\begin{subfigure}{.5\textwidth}
\centering
\includegraphics[scale = 0.17]{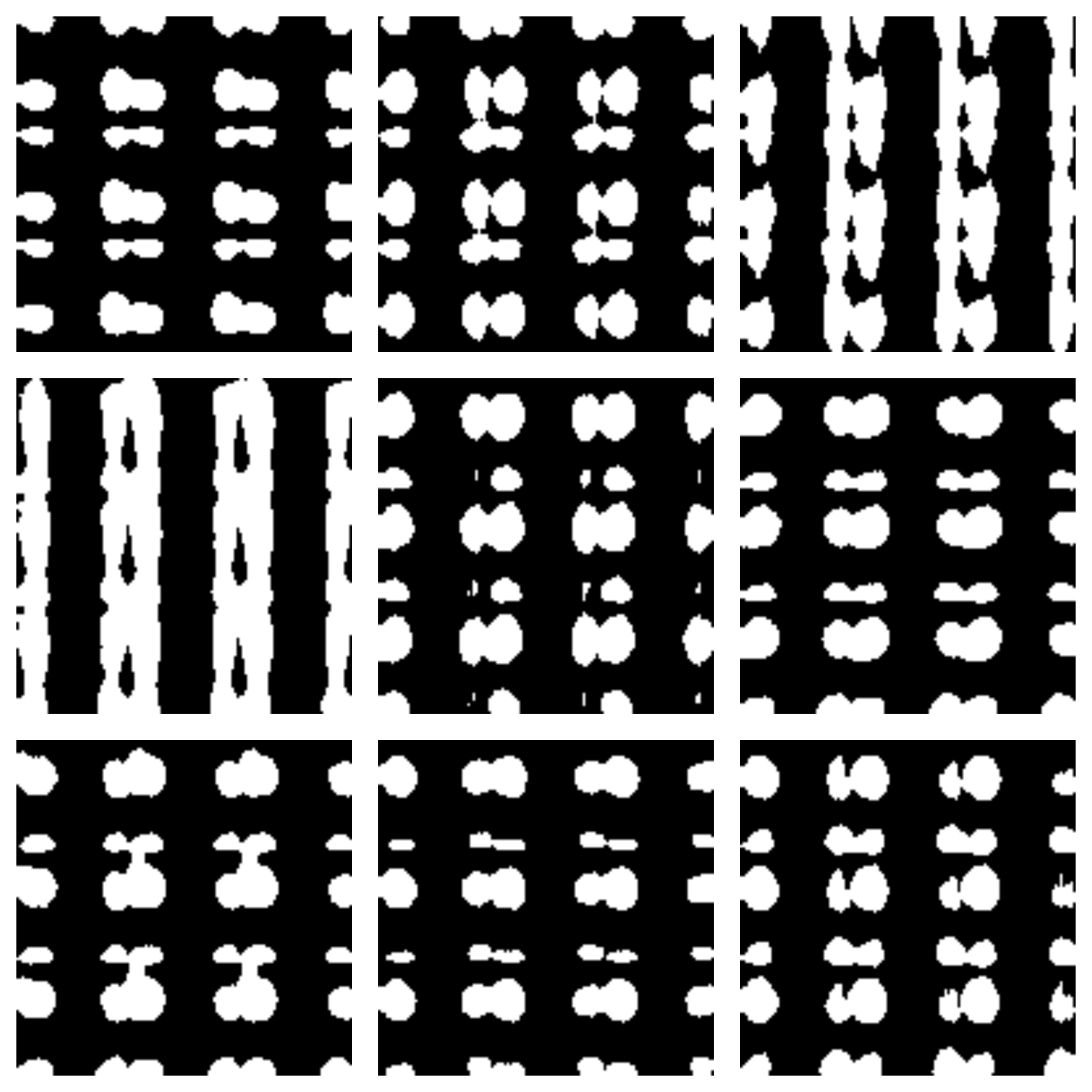}
\caption{Segmentation masks generated by U-Net.}\label{fig:unet}
\end{subfigure}%
\begin{subfigure}{.5\textwidth}
\centering
\hspace*{-0.6cm}    
\includegraphics[scale = 0.4]{plots/wombling-curves.jpg}
\caption{Curves for wombling surfaces}\label{fig:womb-curves}
\end{subfigure}
\caption{Comparison between (a) $k$-means clusters, (b) DPMM clusters (c) segmentation produced by U-Net and (d) the curves chosen for spatiotemporal wombling on simulated data from Pattern 1.}
\end{figure}

{
Wombling can be contrastingly compared to spatiotemporal clustering. While clustering aims to find homogeneous regions and provide predictive probabilities for cluster membership for a new point, wombling seeks predictive process-based inference on rates of change over curvilinear or surface boundaries. Substantive interest focuses on the boundary itself and what distinguishes the regions on either side, rather than on any particular region.

Although several approaches exist for spatiotemporal clustering for example, the spatiotemporal density based scan statistics (ST-DBSCAN) \citep[see, e.g.,][]{birant_stdbscan_2007} or, the spatiotemporal $k$-means \citep[see, e.g.,][]{dorabiala_spatiotemporal_2024}, we were unable to find any dependable open-source software for the \texttt{R}-statistical environment. The situation is similar for spatiotemporal DP approaches \citep[see, e.g.,][]{gelfand2007bayesian, duan2007generalized, kottas_modeling_2008}. In the broader AI/ML scope of research, image segmentation is also related to boundary analysis and can be used as a precursor to provide boundaries that are of possible interest to the user. We select U-Net \citep[see, e.g.,][]{ronneberger_unet_2015}, a supervised learning framework, to compare and demonstrate our ideas. It uses a convolutional neural network (CNN). There also exist other approaches that use residual learning \cite[see, e.g.,][]{he_deep_2016}. 

Furthering the contrastive comparison featuring the approaches discussed, we demonstrate results on the simulated dataset referred to as Pattern 1 in the manuscript using (a) a standard $k$-means algorithm with Euclidean distance for the triplet $(\bs_i,t_i,y(\bs_i,,t_i))^{\T}$, $\bs=(s_x,s_y)$ using three clusters that identify peaks, flats and troughs. We use the \texttt{kmeans} subroutine in base \texttt{R} (b) a DP mixture model (DPMM) using a Gaussian-Inverse-Gamma base measure featured in the \texttt{dirichletprocess} package in \texttt{R} \citep[][]{ross2018dirichletprocess} to cluster $y(\bs_i,t_i)$ and (c) we trained the CNN for 500 epochs using the binary accuracy and dice metric \citep[see, e.g.,][]{zou_statistical_2004} on binary segmentation masks isolating peaks and troughs within interpolated surface plot images. We achieved a validation accuracy of 84\%. The images for each time point in Pattern 1 were then supplied to the trained algorithm for boundary detection. 

The clusters detected using $k$-means are shown in \cref{fig:kmeans} using colored points $(\bs_i,t_i)$ over an interpolated plot for $y(\bs_i,t_i)$. The DPMM detected 6 clusters which are shown in \cref{fig:dpfit}. This is the generic end result of a clustering procedure. The segmented images from U-Net are shown in \cref{fig:unet} where {\em white} regions indicate zones of rapid change. For the sake of comparison we show the curves used for constructing wombling surfaces in \cref{fig:womb-curves}. The boundaries we pick in our simulation overlap with those located by U-Net. Comparing the outputs from clustering, the \emph{region between two clusters} can house possible \emph{wombling boundaries}. Uncertainty quantification along such boundaries is essential. Although quick and easy to implement, clustering is unable to provide any uncertainty quantification on curves (or surfaces) that separate clusters. Image segmentation methods like U-Net require careful training and are unable to provide uncertainty quantification for boundaries. Spatiotemporal wombling provides model-based uncertainty quantification on surface boundaries generated by curves A, B and C. The curves A and B are \emph{contours} within clusters detected by both clustering procedures since they enclose peaks and troughs. However, the curve C is arbitrary. In real world scenarios, as seen in our data examples, when studying precipitation in Northern California, the curve of interest is the central Sierra Nevada Mountain range; in the neuroimaging example, curves delineate regions of the human skull, e.g. the frontal lobe. Lastly, there is no way to characterize clusters, for example, in \cref{fig:kmeans}, the difference between the cluster in red from the cluster in black. Spatiotemporal gradients enable inference on the geometry of the surface indicating that the red points constitute troughs while the black points are peaks.

\begin{figure}[t]
\centering
\begin{subfigure}{.5\textwidth}
	\centering
	\includegraphics[scale = 0.43]{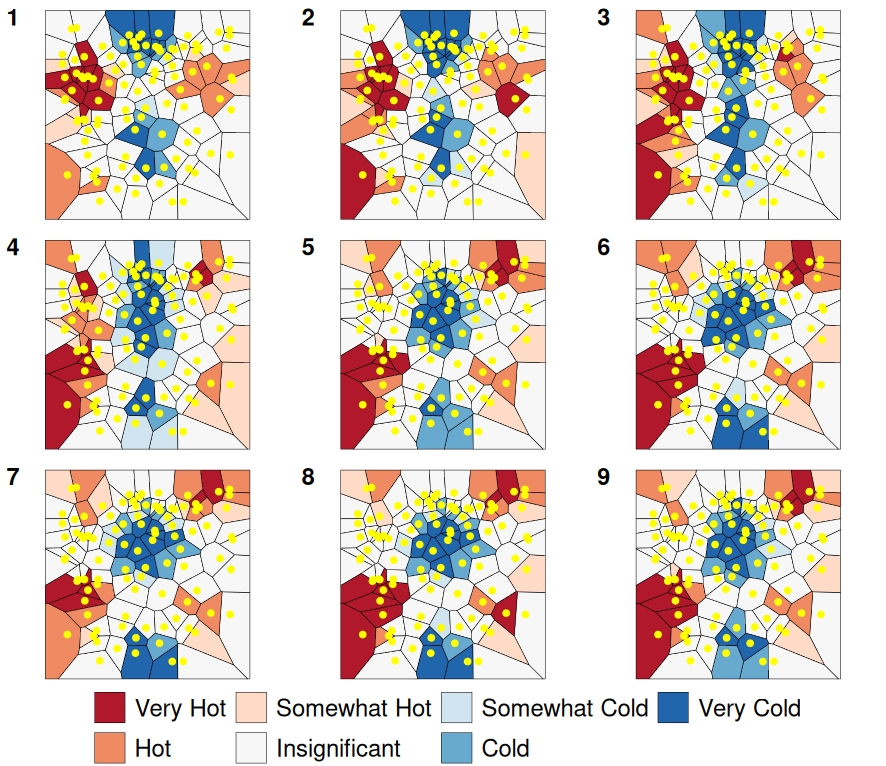}
	\caption{Hot \& Cold spots}\label{fig:getis-ord}
\end{subfigure}%
\begin{subfigure}{.5\textwidth}
	\centering
	\hspace*{-0.6cm}    
	\includegraphics[scale = 0.4]{plots/wombling-curves.jpg}
	\caption{Curves for wombling surfaces}\label{fig:womb-curves-1}
\end{subfigure}
\caption{Comparison between (a) hot \& cold spot detection using local Getis-Ord statistic and (b) the curves chosen for spatiotemporal wombling on simulated data from Pattern 1.}
\end{figure}

We also did a contrastive comparison with local hot/cold spot detection approaches. The results are shown in \cref{fig:getis-ord}. We show the selected curves for spatiotemporal wombling on the side in figure \cref{fig:womb-curves-1} for reference purposes. For the detection of hot \& cold spots we use the local Getis-Ord statistic \citep[see, e.g.,][]{getis_analysis_1992, ord_local_1995, bivand_comparing_2018}. We were able to find literature on spatiotemporal hot-spot detection \citep[see, e.g.,][]{fanaee2015eigenspace, di2018spatiotemporal, butt2020spatio} but no stable implementations for the \texttt{R}-statistical environment. Hot-spot analysis is typically done on areal data. We use Voronoi triangulation (see any plot in \cref{fig:getis-ord}) on the random co-ordinates generated for the simulated data in Pattern 1 to obtain polygons corresponding to an areal representation of the data. We used the \texttt{sfdep} package in \texttt{R} to obtain corresponding $p$-values for the Getis-Ord statistic. Breaks of $(0.05, 0.1]$, $(0.01, 0.05]$ and $(0,0.01]$ were used on the $p$-values to construct the levels ``Somewhat Hot", ``Hot" and ``Very Hot" (similarly for cold). The resulting plots are shown in \cref{fig:getis-ord}. Comparing the hot and cold spots we see that surfaces produced by curves A and B in our spatiotemporal wombling experiment track peaks and troughs respectively and fall in the hot/cold-spot regions.

The convex hull obtained from the hot/cold-spot polygons can be used to construct a crude contour approximation for use in developing wombling surfaces. Wombling estimates spatiotemporal gradients along the surface generated using edges of these polygons to determine whether it forms a spatiotemporal boundary. This is not possible using a hotspot detection approach. The surface generated using curve C is arbitrary and does not correspond to any particular spot but rather lies in the ``Insignificant" zone separating regions of rapid change. 

\begin{figure}[H]
\centering
\begin{subfigure}{.5\textwidth}
	\centering
	\includegraphics[scale = 0.15]{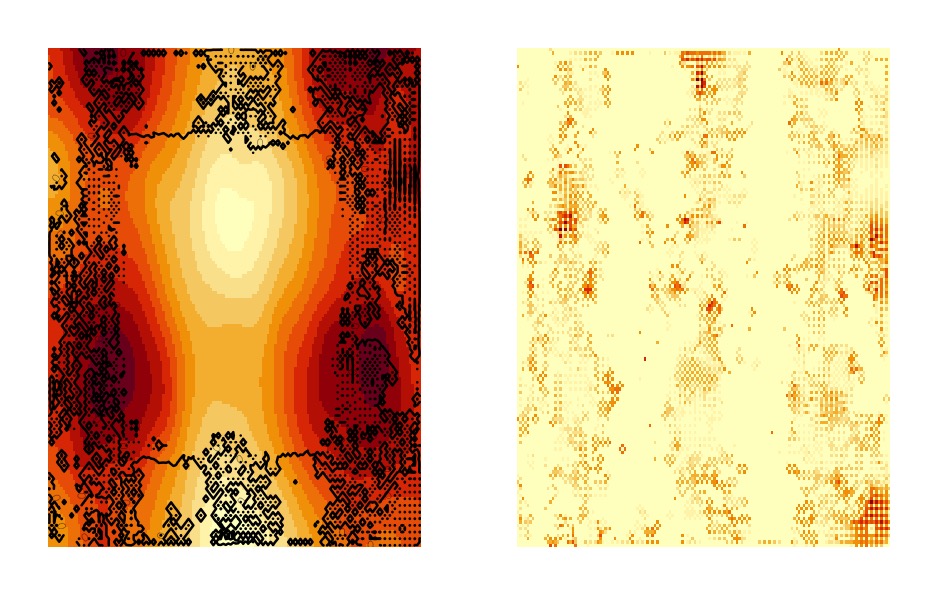}
\end{subfigure}%
\begin{subfigure}{.5\textwidth}
	\centering
	\includegraphics[scale = 0.15]{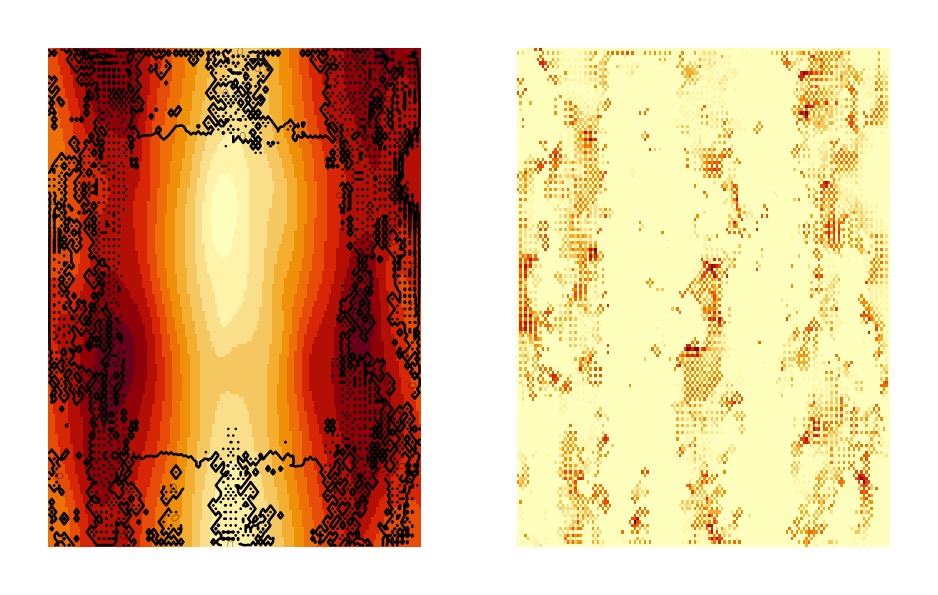}
\end{subfigure}\\
\begin{subfigure}{.5\textwidth}
	\centering
	\includegraphics[scale = 0.15]{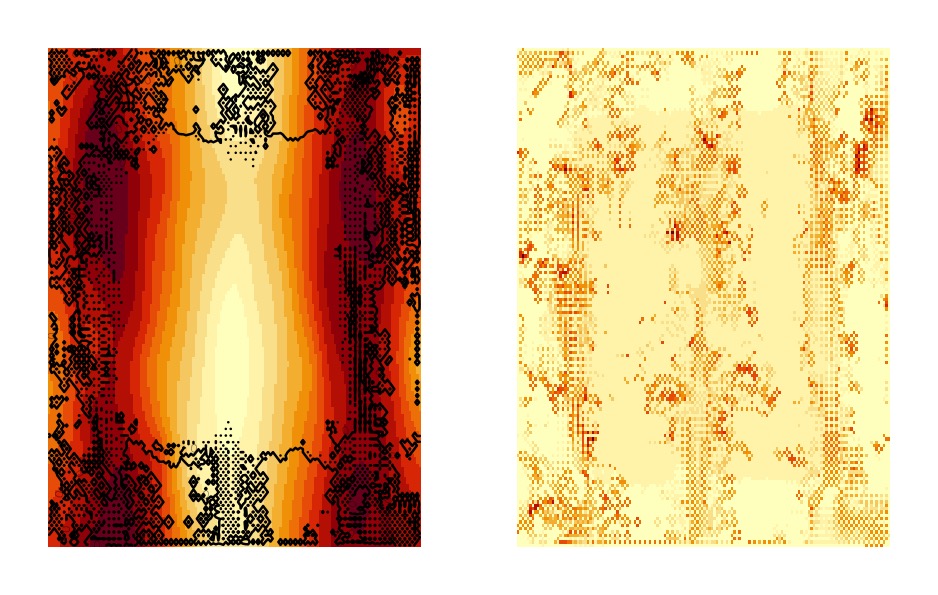}
\end{subfigure}%
\begin{subfigure}{.5\textwidth}
	\centering
	\includegraphics[scale = 0.15]{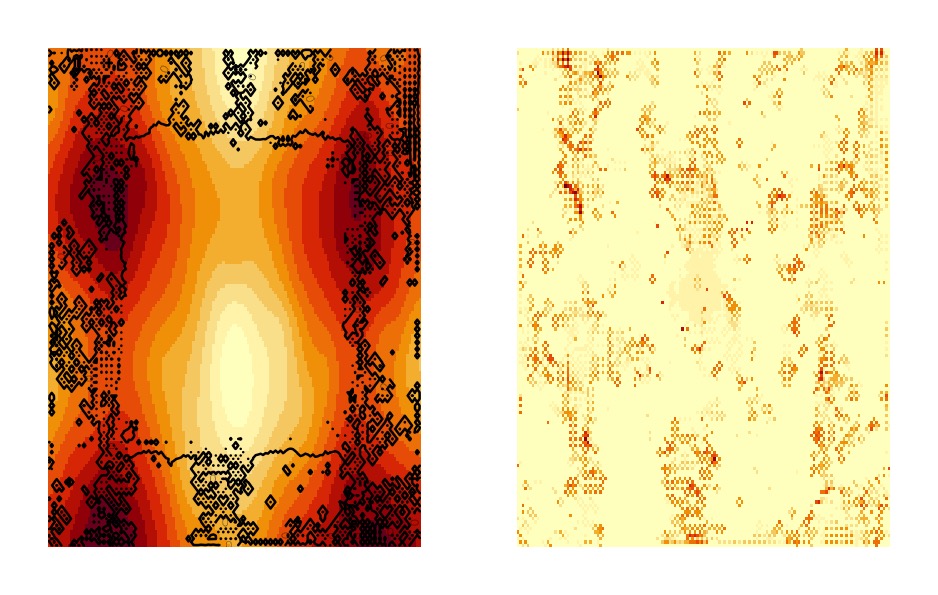}
\end{subfigure}\\
\begin{subfigure}{.5\textwidth}
	\centering
	\includegraphics[scale = 0.15]{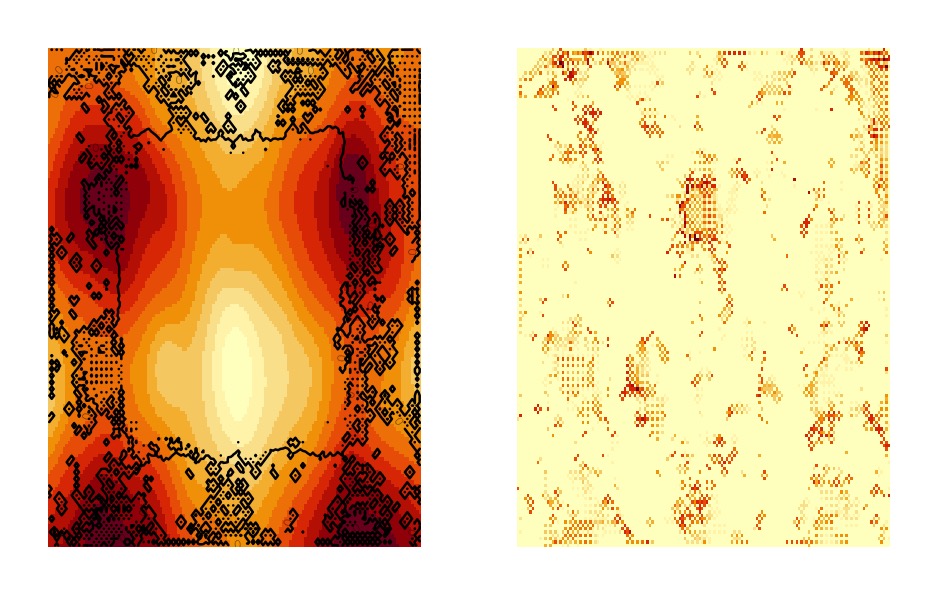}
\end{subfigure}%
\begin{subfigure}{.5\textwidth}
	\centering
	\includegraphics[scale = 0.15]{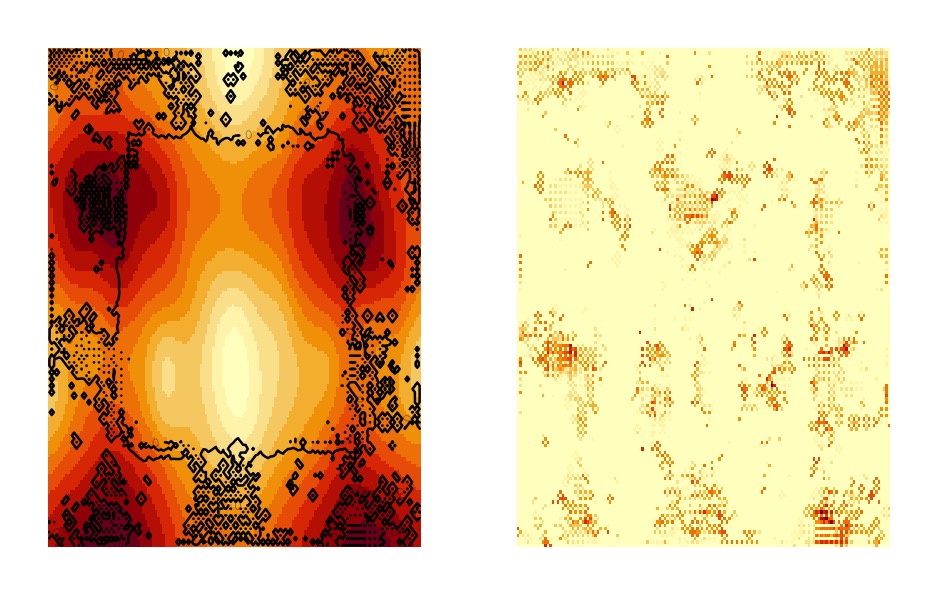}
\end{subfigure}\\
\begin{subfigure}{.5\textwidth}
	\centering
	\includegraphics[scale = 0.15]{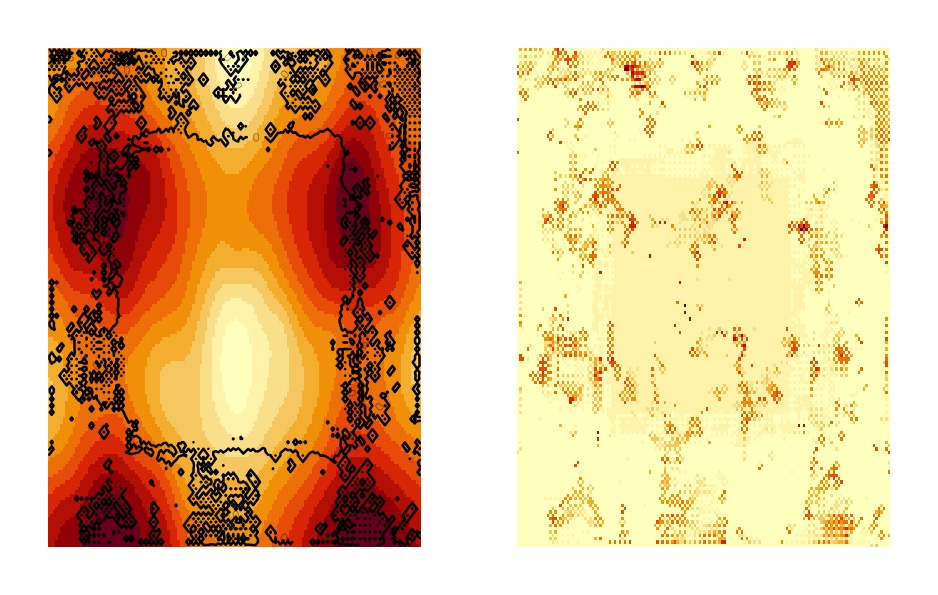}
\end{subfigure}%
\begin{subfigure}{.5\textwidth}
	\centering
	\includegraphics[scale = 0.15]{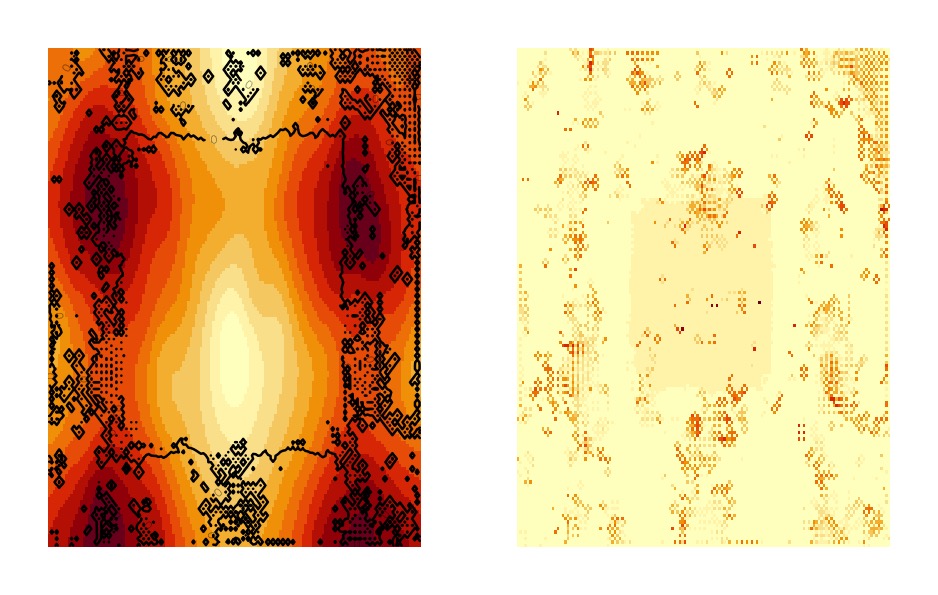}
\end{subfigure}\\
\begin{subfigure}{.5\textwidth}
	\centering
	\includegraphics[scale = 0.15]{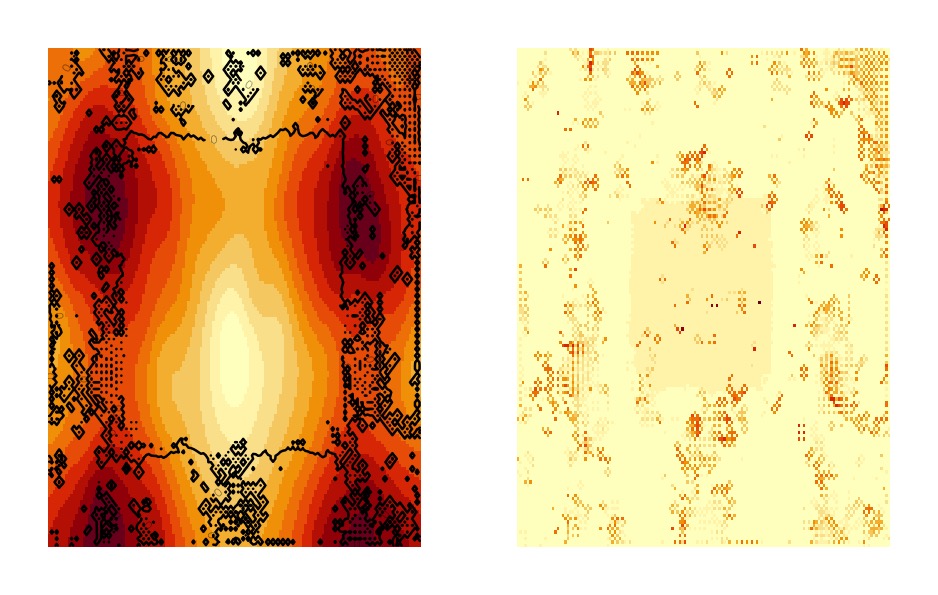}
\end{subfigure}%
\begin{subfigure}{.5\textwidth}
	\centering
	\hspace*{-0.6cm}    
	\includegraphics[scale = 0.25]{plots/wombling-curves.jpg}
\end{subfigure}
\caption{Level set modeling for Pattern 1. Each row (with the exception of the last) consists of the output generated for two time points. For each time point the first picture shows the level set overlay and the second picture shows the values of the level set function, $\phi$. The last plot shows curves chosen for wombling.}\label{fig:level}
\end{figure}
{ 
\noindent {\bf Level Set Modeling:} Level set modeling has a rich history \citep[see, e.g.,][]{osher_fronts_1988,malladi_shape_1995,osher_level_2001,lie2006binary,biswas_state---art_2022,saad-falcon_level_2024}. We provide a brief outline before making remarks. The level set method requires a Lipschitz continuous function often chosen to be the signed distance function (also called level set function) $\phi$ defined for a boundary $\C$ for $\Omega\subseteq \mathfrak{R}^d$ as $\phi(\bs) = -||\bs -\C||_2\;I(\bs\in \Omega^{-})+||\bs -\C||_2\;I(\bs\in \Omega^{+})+0\;I(\bs\in \C)$, where $\Omega^{-}$ and $\Omega^{+}$ indicate regions inside and outside the boundary respectively and $||\cdot||_2$ is the Euclidean norm. The normal to $\C$ and the mean curvature are $\bn(\bs) = \nabla\phi(\bs)/|\nabla\phi(\bs)|$ and $\kappa(\bs) = \nabla\cdot\bn(\bs)$ respectively. The curve $\C$ evolves under a velocity field, $\bv=(u,v,w)^\T$ according to the equation $\frac{\partial\phi}{\partial t}+\bv\cdot \bn\; |\nabla \phi|=0$. Solutions to this equation \citep[for details see, e.g.,][]{sussman1994level, sussman1998improved} provides us with the required level sets.

The level set method is used primarily for shape modeling, computer experiments and image classification. We would like to make some remarks and some possible directions for future work in the following point-wise discussion. 
\begin{enumerate}
	\item The end result of level set modeling comprises the detected level sets and the value of the level set function. There is no uncertainty quantification performed on the level sets detected. Alternatively, Bayesian wombling assumes that the investigator has specific curves that are of scientific interest and performs model-based uncertainty quantification on them. There is precedent for automatic construction and detection of boundaries using significant gradients in the response \citep[see, e.g., Sec. 5,][]{banerjee2005boundary}. Future work can also explore a joint criteria involving both gradient and curvature in the response. Albeit a time-consuming approach, another alternative is to generate the boundaries using the level set method followed by uncertainty quantification on boundary segments using Bayesian wombling. 
	\item Computationally, the level set method is implemented using finite differences. Finite differences produce unstable inference for smaller step-sizes \cite[for more discussion, see,][]{banerjee2003directional}. A model based approach like Bayesian wombling offers computationally stable and exact closed form posterior inference on rates of change.
	\item Specifically for a spatiotemporal response, the level set method requires a space-time level set function $\phi(\bs,t)$. As expected, the ensuing computation is sufficiently complex \citep[see, e.g.,][]{boledi_level-set_2022}. 
	For the sake of comparison we perform a brief simulation that employs the spatial version for every time point independently. We used the simulated data in Pattern 1 from Sec. 5 of the manuscript. The results are shown in \Cref{fig:level}. For comparison, we also show the curves used for our simulation. Evidently, there is a high overlap between the level sets detected and our choice of curves.
\end{enumerate}

We would like to make some comments that might possibly help with clarification of the proposed developments. Our manuscript (and \cite{banerjee2006bayesian,halder2024bayesian}) deals with statistical methods for discerning whether segments of a \emph{predetermined} curve (can be a level set) forms a boundary and \emph{not} methods for explicit construction of the curve which are the focus of methods like U-Net and level set modeling. 

}
}

{ 
\section{Static Wombling}\label{sec:stat-womb}
In certain applications, spatiotemporal wombling can be regarded as a fairly straightforward extension of purely spatial wombling. For example, investigators may be interested in evaluating whether a geographic feature (e.g., a mountain range or a river) delineates {  a region} of rapid rates of change. Here, one seeks to estimate the wombling measures over a curve that stays fixed over time. This special case is derived by setting $n_t=0$. Hence, the directional spatiotemporal differential process $\L^*_{0,\bn_\bs}Z(\bs,t) = \widetilde{\bn}_{\bs,-1} \L_{\bs}Z(\bs,t)$ is $2\times 1$ consisting of the directional spatial derivative and curvature of $Z(\bs,t)$. Consequently, the wombling measures only capture the total and average spatial derivative and curvature. Flux in the spatiotemporal derivative and curvature processes are muted. Using $\L^*_{1,\bn_\bs} Z(\bs,t)$ is an alternative to produce spatiotemporal derivative and curvature wombling measures. However, this alters the normal, $\overline{\bn}(\bs,t)$, thereby removing the geometric interpretation associated with the wombling measures. We provide some examples.

The parametric line segment $C=\{\bs_0+\omega \bu:||\bu||=1,\omega\in[0,1]\}$ ranges over $\{t_0+\upsilon :\upsilon\in[0,1]\}$ to form $\C$ containing the co-planar points, $\left\{\bigl(\begin{smallmatrix}
\bs_0\\ t_0
\end{smallmatrix}\bigr), \bigl(\begin{smallmatrix}
\bs_0 + \bu \\ t_0
\end{smallmatrix}\bigr), \bigl(\begin{smallmatrix}
\bs_0\\ t_0+1
\end{smallmatrix}\bigr), \bigl(\begin{smallmatrix}
\bs_0 + \bu\\ t_0 + 1
\end{smallmatrix}\bigr)\right\}$. The normal to $\C$ is $\overline{\bn} = \left(\begin{smallmatrix}
\overline{\bn}_\bs\\0
\end{smallmatrix}\right)$, where $\overline{\bn}_\bs=(u_y,-u_x)^{\T}=\bu^\perp$. It is independent of $(\omega,\upsilon)$ and $||\overline{\bn}||=1$. Hence, $\C=\left\{(x,y,z):\overline{\bn}^{\T}\left(\begin{smallmatrix}
x\\y\\z
\end{smallmatrix}\right)=\overline{\bn}^{\T}\left(\begin{smallmatrix}
s_{0x}\\s_{0y}\\t_0
\end{smallmatrix}\right)\right\}$. The vector of directional differential processes is $\L^*_{0,\bu^\perp} Z(\bs,t)$. Then, we have $\bGamma(\C)=\int\limits_{t_0}^{t_0+1}\int\limits_0^1\L^*_{0,\bu^\perp} Z(\bs(\omega),\upsilon)\;d\omega\; d\upsilon=\left(\int\limits_{t_0}^{t_0+1}\int\limits_0^1{\bu^{\perp}}^{\T}\bpartial_\bs Z(\bs(\omega),\upsilon)\;d\omega\; d\upsilon,\int\limits_{t_0}^{t_0+1}\int\limits_0^1\{\widetilde{{\bu^\perp}}^{\otimes 2}\}^{\T}\Partials^2Z(\bs(\omega),\upsilon)\;d\omega\; d\upsilon\right)^{\T}$. Next, consider a half-quadrant of radius $r$, $C = \{(r\cos \omega, r\sin \omega):\omega\in[0,\pi/2]\}$ ranging over $\{t_0+\upsilon :\upsilon\in[0,1]\}$, generating the quarter-circular sheet $\C=\{(r\cos \omega, r\sin \omega,t_0+\upsilon):\omega\in[0,\pi/2],\upsilon\in[0,1]\}$. The normal to $\C$ is $\overline{\bn}(\bs(\omega),\upsilon) = \left(\begin{smallmatrix}\overline{\bn}_\bs(\omega)\\0\end{smallmatrix}\right)$, where $\overline{\bn}_\bs(\omega)=(r\cos\omega,-r\sin\omega)^{\T}$ with, $||\overline{\bn}(\bs(\omega),\upsilon)||=r$. The area, $\A(\C)=\int\limits_{t_0}^{t_0+1}\int\limits_0^{\pi/2}r\;d\omega\;d\upsilon=\frac{r\pi}{2}$ and  $\overline{\bGamma}(\C)=\frac{2}{\pi}\int\limits_{t_0}^{t_0+1}\int\limits_0^{\pi/2}\L^*_{0,\bn_\bs} Z(\bs(\omega),\upsilon)\;d\omega\; d\upsilon=\left(\frac{2}{\pi}\int\limits_{t_0}^{t_0+1}\int\limits_0^{\pi/2}{\bn_\bs(\omega)}^{\T}\bpartial_\bs Z(\bs(\omega),\upsilon)\;d\omega\; d\upsilon,\frac{2}{\pi}\int\limits_{t_0}^{t_0+1}\int\limits_0^{\pi/2}\{\widetilde{{\bn_\bs}}(\omega)^{\otimes 2}\}^{\T}\Partials^2Z(\bs(\omega),\upsilon)\;d\omega\; d\upsilon\right)^{\T}$. In the foregoing examples, using $\L^*_{1,\bn_\bs} Z(\bs,t)$ produces the complete vector of eight wombling measures. Lastly, although the curve evolves over time, $n_t=n_t(\omega,\upsilon)$ can still be zero---if the vectors $\left(\begin{smallmatrix}
\frac{\partial s_x(\omega,\upsilon)}{\partial \upsilon}\\ \frac{\partial s_y(\omega,\upsilon)}{\partial \upsilon}
\end{smallmatrix}\right)$ and $\left(\begin{smallmatrix}\frac{\partial s_x(\omega,\upsilon)}{\partial \omega}\\ \frac{\partial s_y(\omega,\upsilon)}{\partial \omega}\end{smallmatrix}\right)$ are linearly dependent. We exclude such pathological cases.
}
\section{Additional Applications}

\subsection{Precipitation in Northern California}
The state of California receives 90\% of its rainfall during the months of October to April. The Sierra Nevada mountain range, part of the American Cordillera, is a prominent topographic feature of the state. The presence of a mountain range affects precipitation in the surrounding areas, causing clouds to form over the windward side and rain-shadow regions on the leeward side, producing the \emph{orographic effect} \citep[see, e.g.,][]{marra2021orographic}. 
Topographic complexity of the feature effects geophysical manifestation of the orographic effect. \Cref{fig:prcp-sep-dec} captures this in the northern California region during the year 2022. We explore further statistical quantification of the orographic effect using spatiotemporal rates of change and wombling.

\begin{figure}[t]
\centering
\hspace*{-.2in}
\includegraphics[scale = 0.55]{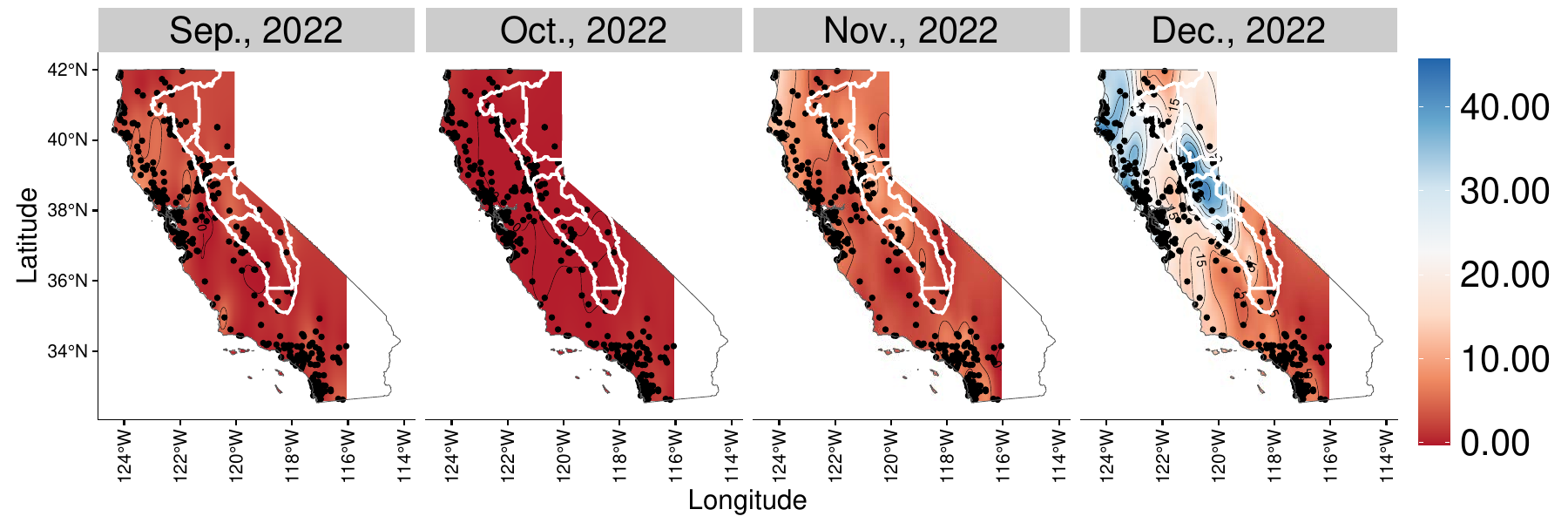}
\caption{Monthly precipitation (in cms.) for the state of California from September to December, 2022. The Sierra Nevada Conservancy sub-regions are marked in \texttt{white}.}\label{fig:prcp-sep-dec}
\end{figure}

\begin{figure}[t]
	\centering
	\includegraphics[scale=0.5]{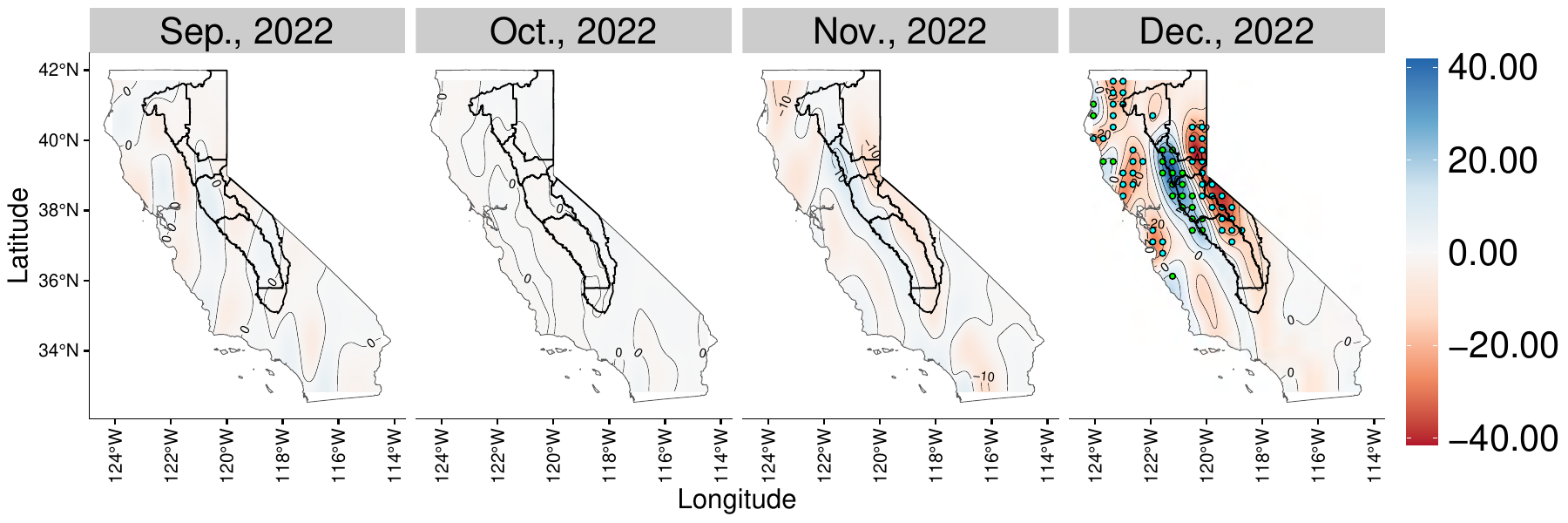}
	\includegraphics[scale=0.5]{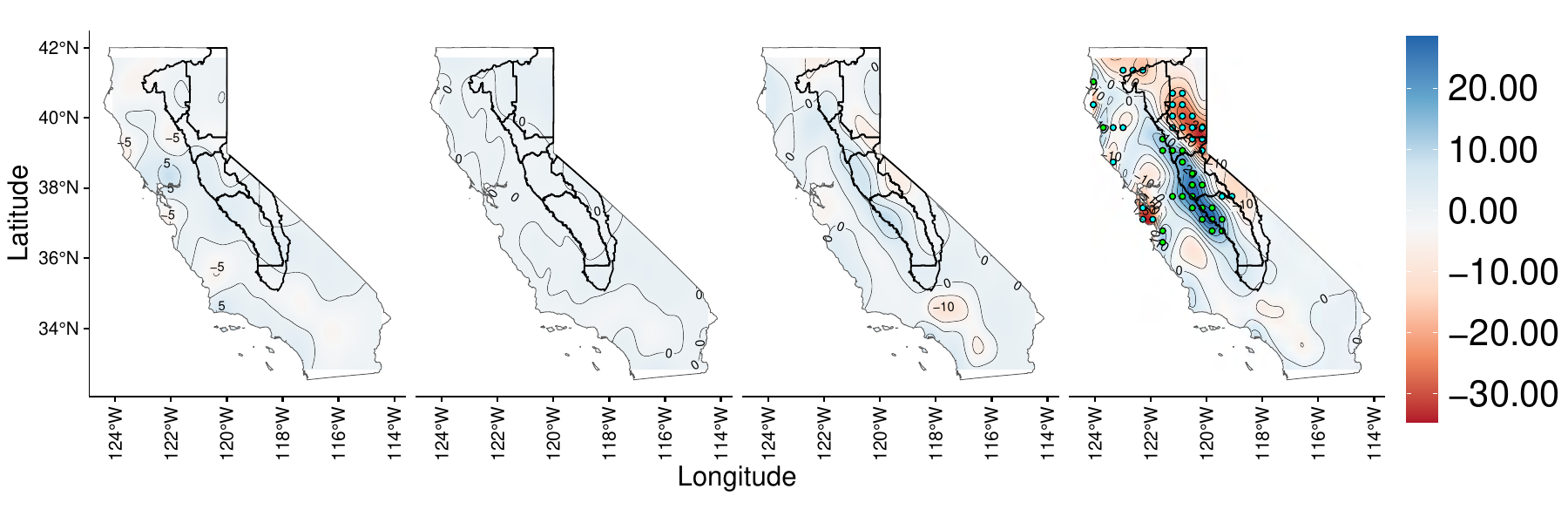}
	\caption{Showing $\partial_{s_x}$ and $\partial_{s_y}$ in precipitation with color-coded grid locations for significant estimates: positive/negative (\texttt{green}/\texttt{cyan}).}\label{fig:gradsy}
\end{figure}

\begin{figure}[t]
\centering
\includegraphics[width=\linewidth]{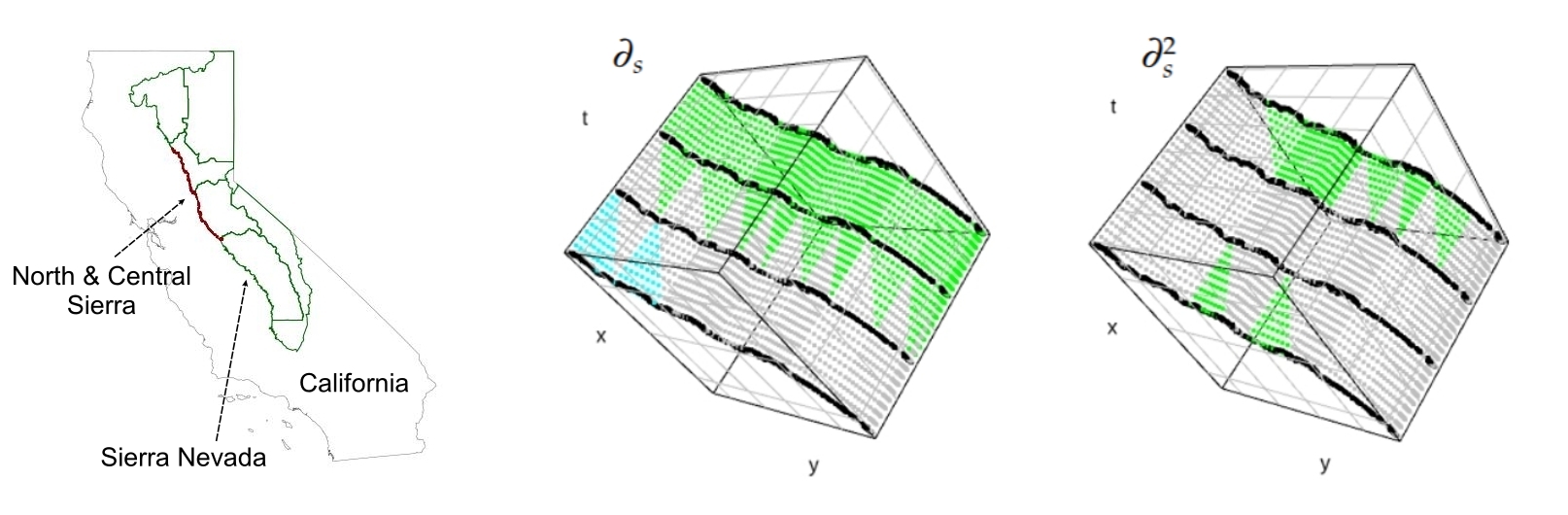}
\caption{Wombling and significance analysis on the North and Centra Sierra. Plots (left) showing the north and central Sierra to be tested for a wombling boundary (center and right) significance at the triangular region level for the \emph{static wombling} surface.}\label{fig:wm-prcp}
\end{figure}

\begin{table}[t]
\caption{Posterior estimates accompanied by HPD intervals for process parameters from analyzing precipitation in California during September--December, 2022. }\label{tab:precip-model}
\begin{minipage}{.5\linewidth}
\resizebox{\linewidth}{!}{
	\centering
	\begin{tabular}{l|@{\extracolsep{150pt}}c@{}}
		\hline\hline
		Parameter & Estimate \\
		\hline
		\multirow{2}{*}{$\sigma^2$ (Space-time Var.)} & 74.40 \\
		& (58.77, 89.76) \\ 
		\multirow{2}{*}{$\tau^2$ (Error Var.)} & 11.15\\
		& (10.28, 12.18) \\ 
		\multirow{2}{*}{$\phi_s$ (Spatial)} & 1.32 \\
		& (1.15, 1.44) \\
		\hline
		\hline
	\end{tabular}
}
\end{minipage}%
\begin{minipage}{.5\linewidth}
\resizebox{\linewidth}{!}{
	\centering
	\begin{tabular}{l|@{\extracolsep{150pt}}c@{}}
		\hline\hline
		Parameter & Estimate \\
		\hline
		\multirow{2}{*}{$\phi_t$ (Temporal)} & 4.74 \\
		& (2.38, 9.06)\\ 
		\multirow{2}{*}{$\beta_0$ (Intercept)} & 6.82 \\
		& (6.64, 7.01) \\ 
		\multirow{2}{*}{$\beta_1$ (Elevation)} &  $3.44 \times 10^{-3}$\\
		& (-0.05, 7.08)$\times 10^{-3}$\\
		\hline
		\hline
	\end{tabular}
}
\end{minipage}
\end{table}

Precipitation data is obtained from the National Ocean and Atmospheric Administration (NOAA) for 411 stations strewn across the state recording daily measurements (in tenths of mms.). They are aggregated to produce monthly totals in cms. \Cref{fig:prcp-sep-dec} shows the interpolated surfaces obtained. Elevation (in meters) of the station is used as a covariate. \Cref{tab:precip-model} shows posterior estimates of $\btheta$ resulting from hierarchical modeling of precipitation. Strength of the spatiotemporal story is measured using $\frac{\widehat{\sigma^2}}{\widehat{\sigma^2}+\widehat{\tau^2}} = \frac{74.40}{74.40 + 11.15}\times 100\approx 87\%$. Comparing spatial and temporal ranges---the spatial variation is stronger compared to temporal variation. \Cref{fig:gradsy} shows the spatial gradient along the latitude and longitude. Owing to high precipitation in Dec. we see significant gradients towards the western (windward) side of Sierra Nevada followed by negative gradients on the eastern (leeward) side---a consequence of the orographic effect. 

{ \emph{Static wombling} is performed on the residual surfaces, $Z(\bs,t)$ using the north and south central Sierra as a static boundary as seen in the left plot for \cref{fig:wm-prcp}. Performing static wombling only yields measures for the spatial derivatives and curvature (see Section 3.3 of the manuscript). The north and south central Sierra shows a significant overall spatial wombling measure (HPD interval) of 1.00 (0.37, 1.68) but no significant curvature estimated (HPD interval) at 1.18 (-1.16, 3.52). A time-interval level analysis is shown in \cref{tab:wmbl-timesplit-prcp}. The intervals Oct.--Nov. and Nov.--Dec. show significant spatial wombling measures as can also be seen in the significance plots in \cref{fig:wm-prcp}.

\begin{table}[ht!]
\centering
\caption{Estimated average wombling measure split by time intervals for the precipitation analysis. They are accompanied by HPD intervals. Estimates that are significant are marked in bold.}\label{tab:wmbl-timesplit-prcp}
\begin{tabular}{l|c|@{\extracolsep{10pt}}*{1}{c}@{}}
	\hline\hline
	\multirow{2}{*}{$\overline{\bGamma}(\C^*)$} & \multirow{2}{*}{Time}  & \multirow{2}{*}{Estimate}\\
	&&\\
	\hline
	\multirow{6}{*}{$\bpartial_\bs$} & \multirow{2}{*}{Sep.--Oct.} & -0.24\\ 
	& & (-1.65, 1.06) \\
	& \multirow{2}{*}{Oct.--Nov.} & {\bf 1.81}\\ 
	& & {\bf (0.34, 3.41)}\\
	& \multirow{2}{*}{Nov.--Dec.} & {\bf 7.36}\\
	& & {\bf (4.59, 10.55)}\\\hline 
	\multirow{6}{*}{$\bpartial_\bs^2$} & \multirow{2}{*}{Sep.--Oct.} & 2.56\\ 
	& & (-3.78, 9.30) \\
	& \multirow{2}{*}{Oct.--Nov.} & 1.33\\ 
	& & (-5.62, 7.83)\\
	& \multirow{2}{*}{Nov.--Dec.} & 6.67\\ 
	& & (-0.97, 14.27)\\\hline
	\hline
\end{tabular}
\end{table}
}

\subsection{\bf Ambient particle concentration (\texorpdfstring{PM\textsubscript{2.5}})) and the Canada Wildfires}

\begin{figure}[t]
\centering
\begin{subfigure}{.5\textwidth}
\centering
\includegraphics[scale = 0.25]{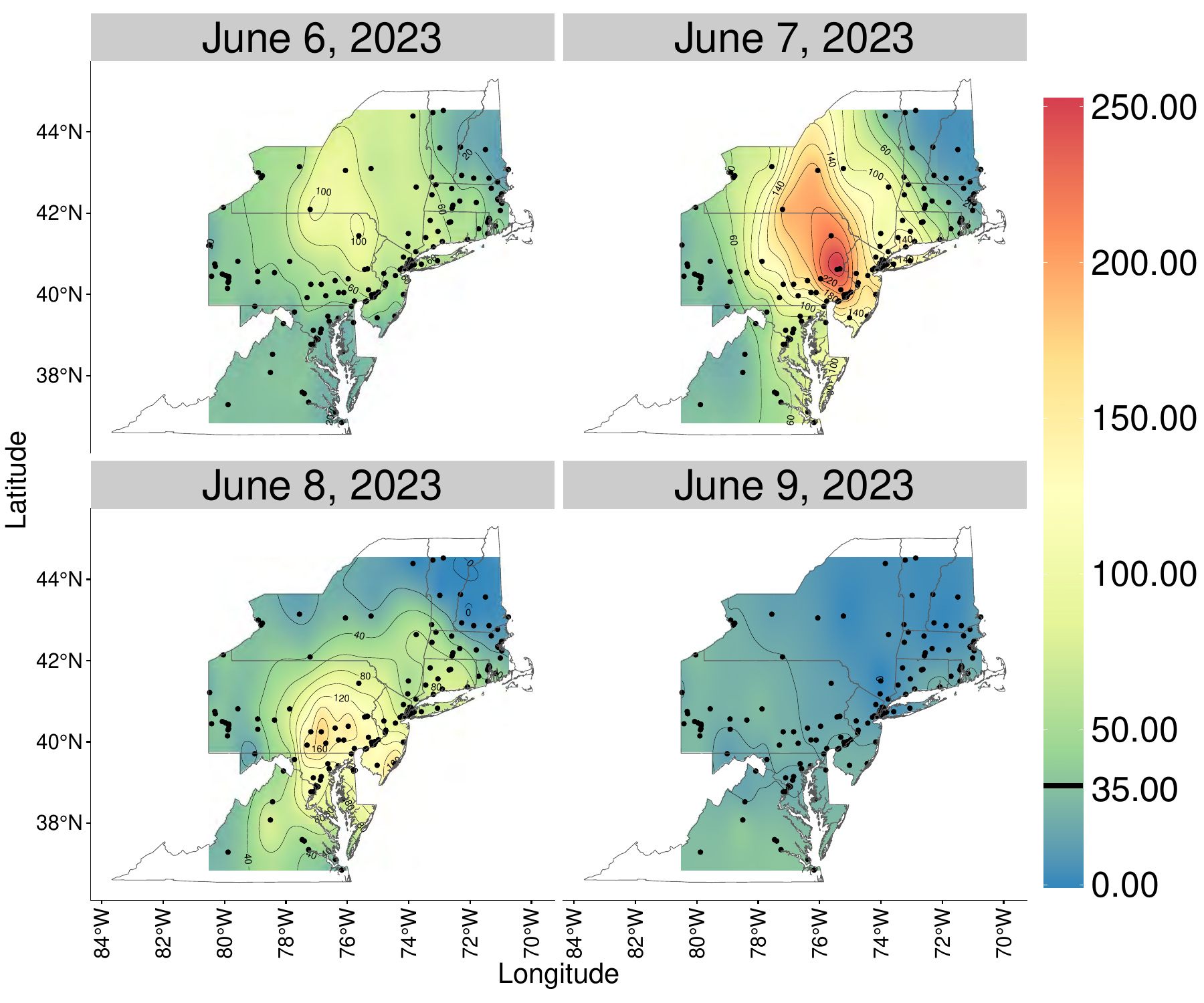}
\caption{June 6 -- 9}\label{fig:pm2.5jun}
\end{subfigure}%
\begin{subfigure}{.5\textwidth}
\centering
\includegraphics[scale = 0.25]{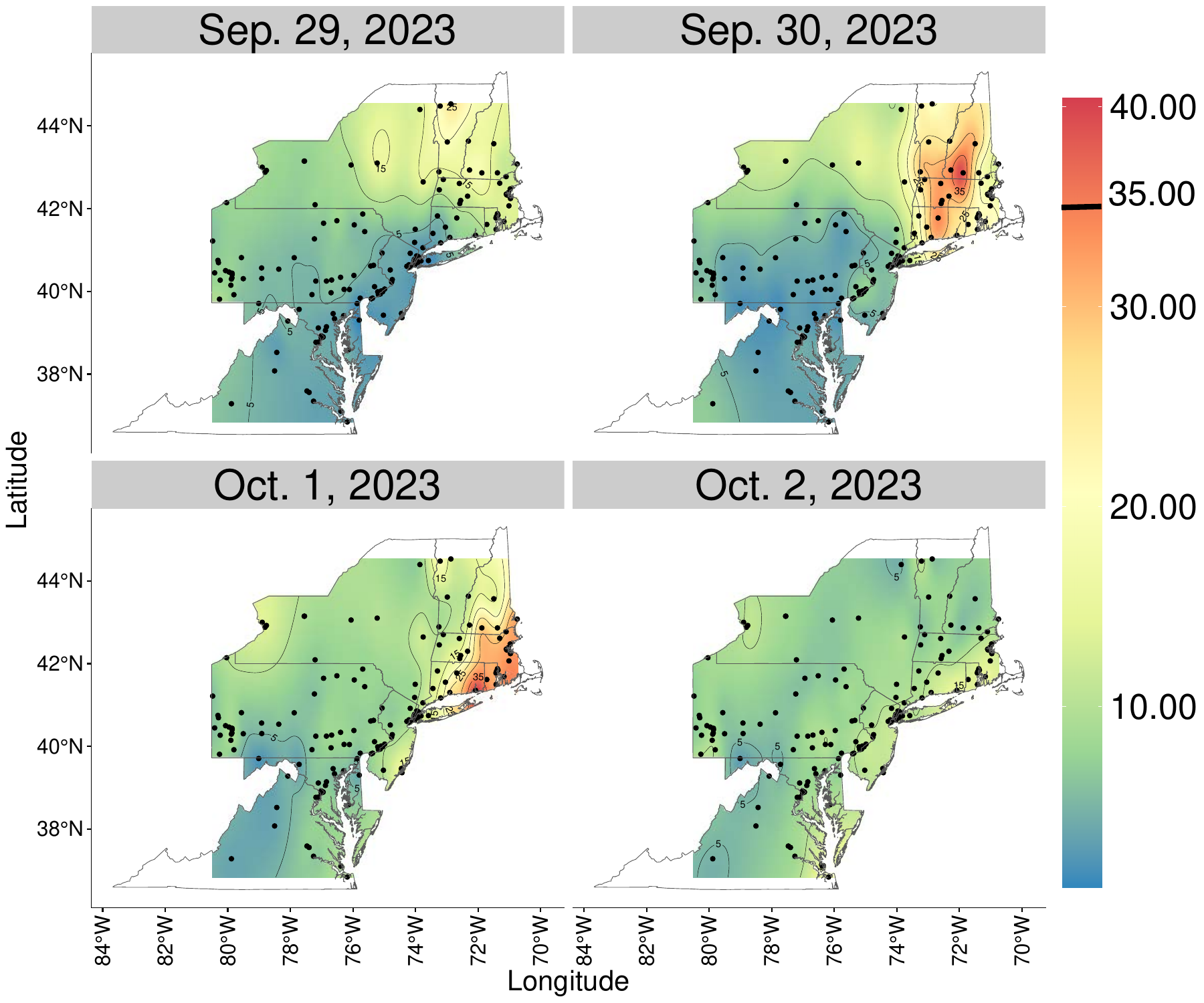}
\caption{Sep. 29 -- Oct. 2}\label{fig:pm2.5sep-oct}
\end{subfigure}
\caption{Spatiotemporal PM\textsubscript{2.5} (in \si{\micro\gram\meter\textsuperscript{-3}}) surfaces during the Canada wildfires, 2023. The \href{https://www.epa.gov/pm-pollution/final-reconsideration-national-ambient-air-quality-standards-particulate-matter-pm}{national daily standard} for PM\textsubscript{2.5} is set at 35\si{\micro\gram\meter^{-3}} by the US Environmental Protection Agency (EPA).}
\end{figure} 

Record setting wildfires in the year 2023 burned 5\% of the entire forest area of Canada. Waves of smoke descended into the US increasing PM\textsubscript{2.5} in the air, resulting in increased hospitalizations and evacuations. During June 6--8, New York City saw record rises in PM\textsubscript{2.5} (117\si{\micro\gram\meter^{-3}}) from a smoke wave resulting in massive respiratory hospitalizations \citep[see, e.g.,][]{chen2023canadian}. The city awoke to a smog-filled overcast sky on Oct. 1 
as another smoke wave passed through. Interpolated spatiotemporal PM\textsubscript{2.5} surface plots shown in \cref{fig:pm2.5jun,fig:pm2.5sep-oct} during these time periods serve as supporting evidence. We use spatiotemporal wombling to detect boundaries delineating zones of high PM\textsubscript{2.5} levels in the neighboring regions.

\begin{table}[t]
\caption{Posterior estimates accompanied by HPD intervals for $\btheta$ from the PM\textsubscript{2.5} analyses.}\label{tab:pm2.5-model}
\begin{minipage}{.5\linewidth}
\resizebox{\linewidth}{!}{
\centering
\begin{tabular}{l|@{\extracolsep{55pt}}cc@{}}
	\hline\hline
	\multirow{2}{*}{Parameter} & \multicolumn{2}{c}{Estimate} \\
	\cline{2-3}
	& Jun. 6--9 & Sep. 29 -- Oct. 2\\
	\hline
	\multirow{2}{*}{$\sigma^2$} & 2057.18 & 55.59 \\
	& (1316.47, 3224.98) & (42.36, 85.54) \\ 
	\multirow{2}{*}{$\tau^2$} & 128.78 & 2.99 \\
	& (109.93, 143.82) & (2.40, 3.40) \\ 
	\multirow{2}{*}{$\phi_s$} & 0.51 & 0.60 \\
	& (0.45, 0.61) & (0.50, 0.73) \\
	\hline
	\hline
\end{tabular}
}
\end{minipage}%
\begin{minipage}{.5\linewidth}
\resizebox{\linewidth}{!}{
\centering
\begin{tabular}{l|@{\extracolsep{70pt}}cc@{}}
	\hline\hline
	\multirow{2}{*}{Parameter} & \multicolumn{2}{c}{Estimate} \\
	\cline{2-3}
	&Jun. 6--9 & Sep. 29 -- Oct. 2\\
	\hline
	\multirow{2}{*}{$\phi_t$} & 0.90 & 0.73 \\
	& (0.66, 1.29) & (0.60, 0.90)\\ 
	\multirow{2}{*}{$\beta_0$} & 59.27 & 10.36 \\
	& (58.22, 60.26) & (10.20, 10.53) \\ 
	\multirow{2}{*}{$\beta_1 \times 10^{-2}$ } & -3.31 & -0.42 \\
	& (-3.81, -2.88) & (-0.48, -0.35)\\
	\hline
	\hline
\end{tabular}
}
\end{minipage}

\medskip

\caption{Average wombling measures for wombling surfaces selected for the two waves.}\label{tab:wm-pm25}
\resizebox{\linewidth}{!}{
\begin{tabular}{l|@{\extracolsep{1pt}}*{8}{>{\bfseries}c}@{}}
\hline\hline
\multirow{2}{*}{Waves} &\multicolumn{8}{c}{$\overline{\bGamma}(\C)$}\\
\cline{2-9}
& $\bpartial_\bs$ & $\bpartial_\bs^2$ & $\partial_t$ & $\partial_t\bpartial_\bs$ & $\partial_t \bpartial_\bs^2$ & $\partial_t^2$ & $\partial_t^2\bpartial_\bs$ & $\partial_t^2\bpartial_\bs^2$ \\ 
\hline
\multirow{2}{*}{June}& -26.33 & -28.08 & -36.84 & -13.19 & 12.12 & -100.18 & 20.41 & 19.19 \\ 
& (-29.50, -23.03) & (-32.21, -23.59) & (-41.57, -32.59) & (-15.69, -10.27) & (8.12, 16.14) & (-107.07, -91.79) & (14.38, 27.97) & (11.51, 26.56) \\
\hline
\multirow{2}{*}{Sep.--Oct.} & -4.57 & -1.34 & -6.35 & -0.50 & 1.89 & -4.15 & 4.44 & {\normalfont 0.30}\\
& (-5.13, -4.04) & (-2.10,-0.52) & (-6.87, -5.80) & (-0.86, -0.13) & (1.19, 2.63) & (-4.96, -3.37) & (3.72, 5.18) & {\normalfont (-0.91, 1.57)} \\ 
\hline\hline
\end{tabular}
}
\end{table}
Daily PM\textsubscript{2.5} data from 144 and 154 monitoring stations is obtained from the EPA for June and Sep.-Oct. respectively. We use elevation of the station as a covariate. Posterior results from modeling PM\textsubscript{2.5} are shown in \cref{tab:pm2.5-model}. Strength of the spatiotemporal story in both waves is evident, given $\frac{\widehat{\sigma^2}}{\widehat{\sigma^2}+\widehat{\tau^2}} = \frac{2057.18}{2057.18 + 128.78} \times 100 = 94.11\%$ and $\frac{\widehat{\sigma^2}}{\widehat{\sigma^2}+\widehat{\tau^2}} = \frac{55.59}{55.59 + 2.99} \times 100 = 94.90\%$ of the variation being spatiotemporal in nature, respectively. \Cref{fig:gradsxt} shows the mixed spatial-temporal derivative along the longitude. 

\begin{figure}[t]
	\centering
	\includegraphics[scale=0.5]{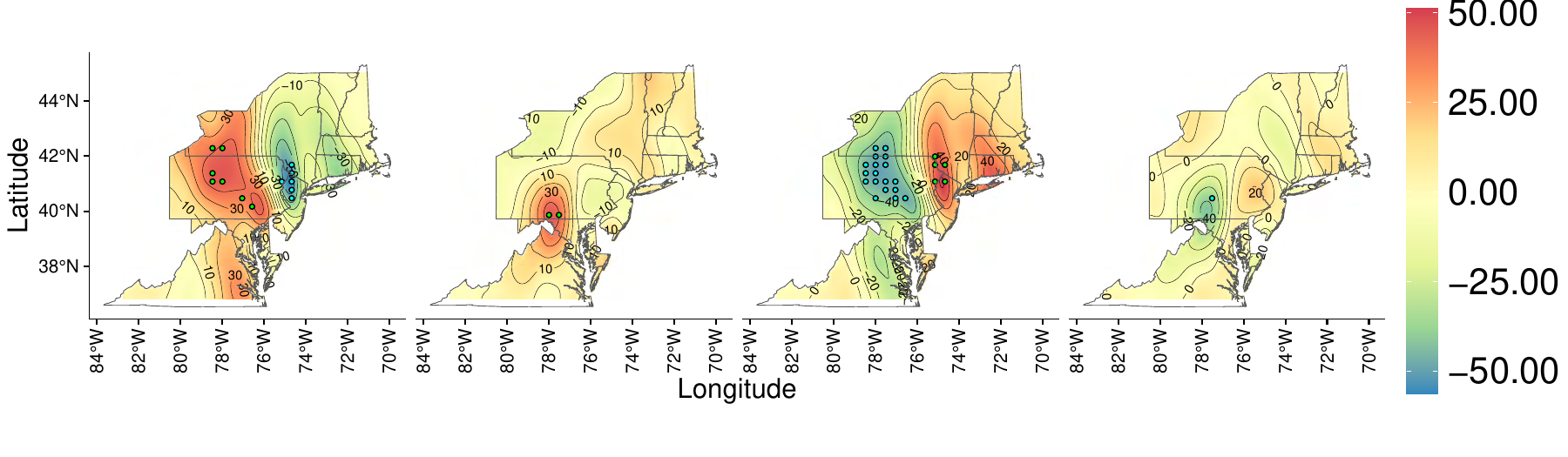}
	\includegraphics[scale=0.5]{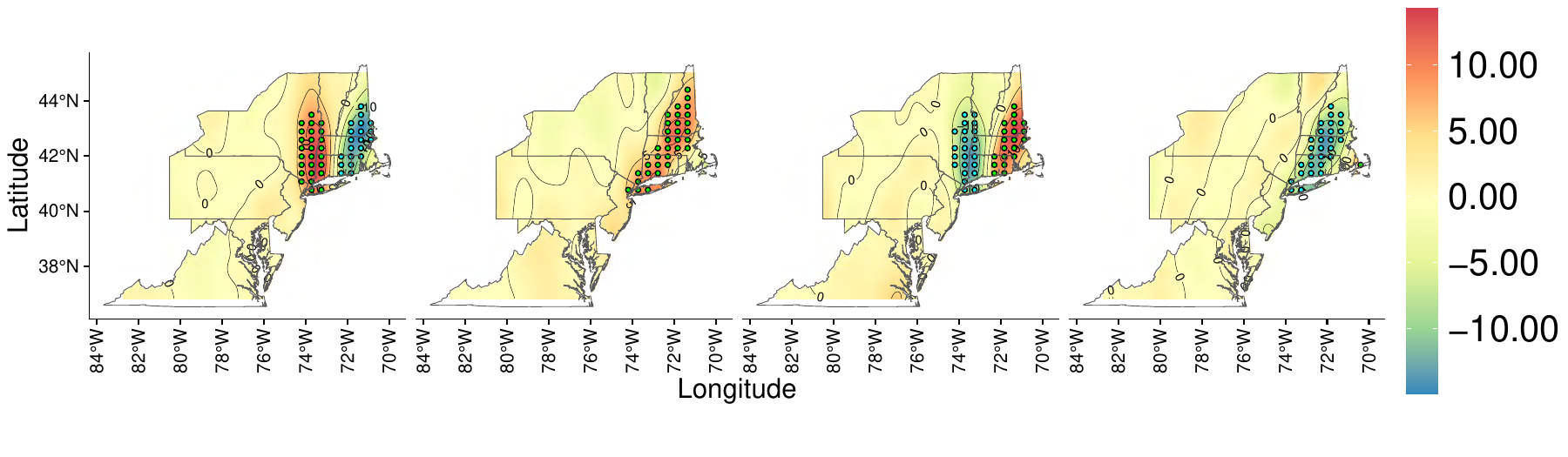}
	\caption{Comparing $\partial_t\partial_{s_x}$ for Jun 6--9 (row 1) versus Sep. 29 -- Oct. 2 (row 2). Significant grid locations are color-coded: positive/negative (\texttt{green}/\texttt{cyan}).}\label{fig:gradsxt}
\end{figure}

\begin{figure}[t]
\centering
\includegraphics[width = \linewidth]{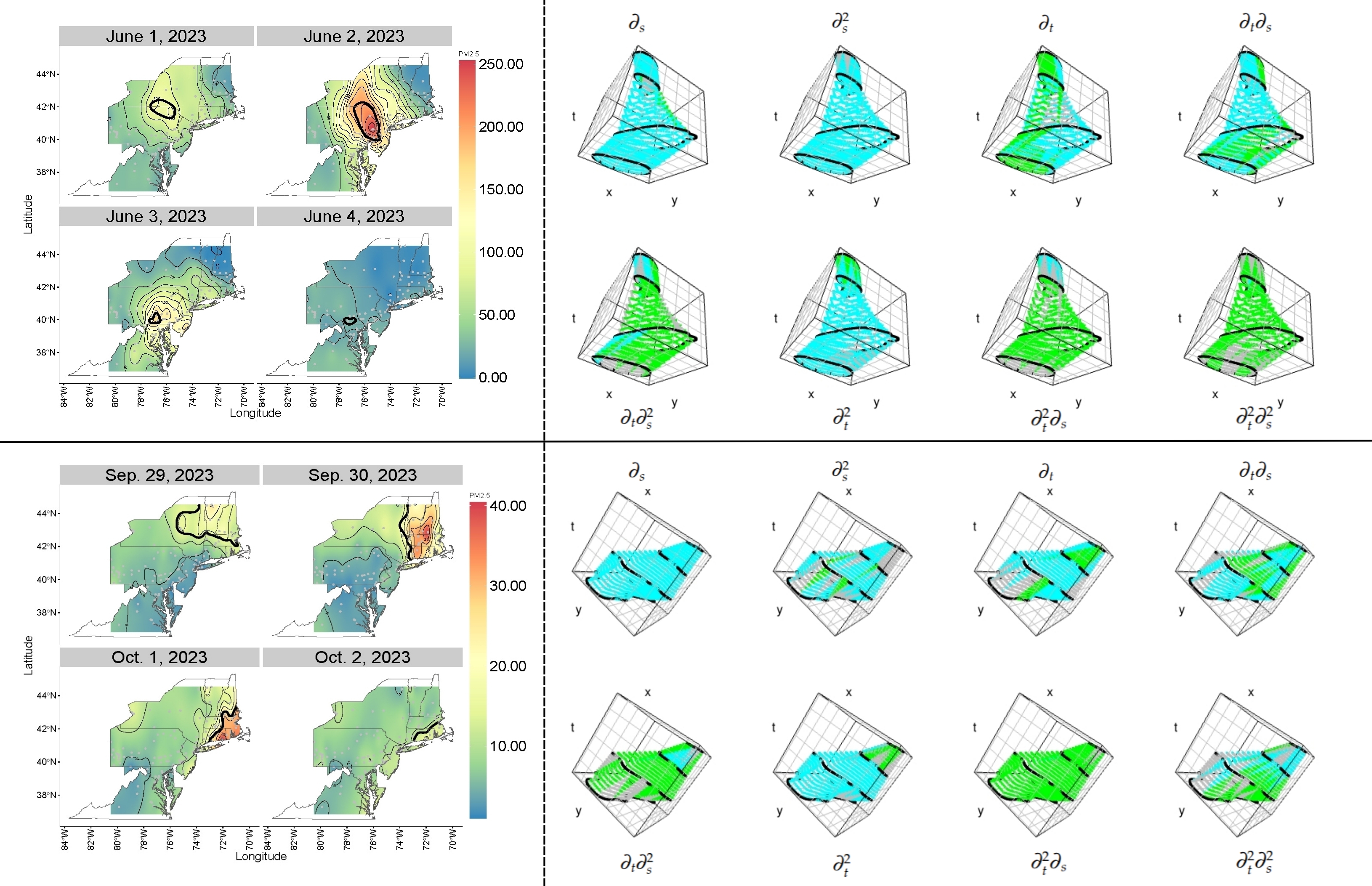}
\caption{Wombling and significance analysis for Jun 6--9 (row 1) versus Sep. 29 -- Oct. 2 (row 2).}\label{fig:wm-pm25}
\end{figure}

The residual spatiotemporal surface, $Z(\bs,t)$ is used for wombling. The wombling surface for June 6--9 is constructed using closed level curves while for Sep.--Oct. we use open curves (see \cref{fig:wm-pm25}). Wombling measures for the two surfaces are shown in \cref{tab:wm-pm25}. Comparing intensities for the two waves, the June 6--9 wave was severe, as is evident from the differences in magnitudes and significance of the overall average wombling measures for the two waves. The rapidity of spatial-temporal changes in PM\textsubscript{2.5} levels is captured by the differences between mixed derivatives. These changes translated to rapidly deteriorating air-quality resulting in massive respiratory hospitalizations on June 7, 2023 in New York City. The total wombling measures summarized by time-intervals, as seen in Table~S11 of the Supplement, serve as further evidence. 

\section{Spatiotemporal Surface Wombling: time interval and triangular region level analysis}

We supplement the discussion in Section~5 with time-interval and triangular-region level analysis. For the time-interval level analysis we refer to \cref{tab:wmbl-timesplit-1,tab:wmbl-timesplit-2} to verify and draw conclusions. The triangular level analysis is carried out through assessing significance for surfaces $A$, $B$ and $C$ as shown in \cref{fig:sig-wombl-A-B,fig:sig-wombl-C}. We interpret the wombling measures (WMs) associated with temporal gradient. The others follow a similar pattern. During time intervals (1--2) and (3--4) troughs (peaks) enclosed by surface $A$ (surface $B$) flatten yielding positive (negative) WMs, while surface $C$ does not experience much change yielding insignificant WMs. Interval (3--4) is one of change where planar curves generating $A$ and $B$ shift spatially and the surface flattens overall. During intervals (4--5), (5--6) and (6--7) the troughs (peaks) re-appear for surface $A$ (surface $B$), while surface $C$ experiences decreasing temporal gradients before returning to insignificant WMs. Significance at the triangular region level is available in \cref{fig:sig-wombl-A-B,fig:sig-wombl-C} third plot from the left in row 1. In the following subsections we show tables and figures associated with simulation experiments and applications in the manuscript.

\newgeometry{margin=1cm} 
\begin{landscape}
\thispagestyle{empty}
\section{Tables}
\subsection{Non-separable Kernels}
\begin{table}[ht]
\caption{Goodness-of-fit for the spatiotemporal differential processes for simulation experiments performed using Pattern 1.}\label{tab::sim-1}
\resizebox{\linewidth}{!}{
\centering
\begin{tabular}{*{2}{c}|c|@{\extracolsep{5pt}}*{17}{c}@{}}
	\hline
	\hline
	\multicolumn{2}{c|}{\multirow{2}{*}{$N$}} & \multirow{2}{*}{$\tau^2$} & \multicolumn{17}{c}{\multirow{2}{*}{RMSE}} \\
	&&&&&&&&&&&&&&&&&\\\cline{1-2}\cline{4-20}
	\multirow{2}{*}{$N_s$} & \multirow{2}{*}{$N_t$} & \multirow{2}{*}{(Truth = 1.00)}  & \multirow{2}{*}{$\partial_{s_x}$} & \multirow{2}{*}{$\partial_{s_y}$} & \multirow{2}{*}{$\partial^2_{s_{xx}}$} & \multirow{2}{*}{$\partial^2_{s_{xy}}$} & \multirow{2}{*}{$\partial^2_{s_{yy}}$} & \multirow{2}{*}{$\partial_t$} & \multirow{2}{*}{$\partial_t\partial_{s_x}$} & \multirow{2}{*}{$\partial_t\partial_{s_y}$} & \multirow{2}{*}{$\partial_t\partial^2_{s_{xx}}$} & \multirow{2}{*}{$\partial_t\partial^2_{s_{xy}}$} & \multirow{2}{*}{$\partial_t\partial^2_{s_{yy}}$} & \multirow{2}{*}{$\partial^2_t$} & \multirow{2}{*}{$\partial^2_t\partial_{s_x}$} & \multirow{2}{*}{$\partial^2_t\partial_{s_y}$} & \multirow{2}{*}{$\partial^2_t\partial^2_{s_{xx}}$} & \multirow{2}{*}{$\partial^2_t\partial^2_{s_{xy}}$} & \multirow{2}{*}{$\partial^2_t\partial^2_{s_{yy}}$} \\
	&&&&&&&&&&&&&&&&&&&\\
	\hline
	\multirow{6}{*}{30} & \multirow{2}{*}{3} & 0.74 & 38.09 & 21.07 & 308.59 & 117.41 & 238.24 & 1.34 & 6.56 & 10.32 & 70.33 & 43.70 & 124.38 & 0.79 & 4.07 & 6.11 & 45.61 & 19.76 & 67.83 \\ 
	& & (0.25, 1.37) & (5.17), (0.8) & (2.54), (0.9) & (45.30), (0.9) & (11.71), (1.0) & (22.21), (1.0) & (0.17), (0.9) & (0.94), (0.9) & (1.03), (0.8) & (15.25), (1.0) & (7.44), (1.0) & (9.90), (1.0) & (0.10), (0.8) & (1.31), (1.0) & (0.66), (1.0) & (16.30), (1.0) & (9.00), (1.0) & (8.11), (1.0)\\
	& \multirow{2}{*}{6} & 0.87 & 34.84 & 19.94 & 263.03 & 119.92 & 240.95 & 1.15 & 5.59 & 7.72 & 55.30 & 38.07 & 103.23 & 0.56 & 2.13 & 4.36 & 22.70 & 13.98 & 49.86 \\ 
	& & (0.61, 1.17) & (4.06), (0.8) & (1.86), (0.9) & (37.91), (0.9) & (8.87), (1.0) & (21.46), (0.9) & (0.13), (0.9) & (0.71), (0.8) & (0.86), (0.8) & (10.19), (1.0) & (3.01), (1.0) & (10.86), (0.9) & (0.05), (0.6) & (0.41), (1.0) & (0.53), (0.8) & (6.10), (1.0) & (2.75), (1.0) & (5.09), (0.9) \\ 
	& \multirow{2}{*}{9} & 0.88 & 33.48 & 20.81 & 277.93 & 128.89 & 235.40 & 1.05 & 5.18 & 7.75 & 51.83 & 35.52 & 98.28 & 0.57 & 2.76 & 4.55 & 28.54 & 17.71 & 53.36 \\ 
	& & (0.67, 1.15) & (5.26), (0.8) & (1.94), (0.9) & (42.67), (0.9) & (8.26), (0.9) & (19.05), (0.9) & (0.17), (0.9) & (0.37), (0.8) & (0.68), (0.7) & (5.65), (1.0) & (2.99), (1.0) & (8.33), (0.9) & (0.05), (0.7) & (0.30), (1.0) & (0.20), (0.9) & (4.32), (1.0) & (2.25), (1.0) & (2.40), (0.9) \\ 
	\hline
	\multirow{6}{*}{50} & \multirow{2}{*}{3} & 0.86 & 25.33 & 16.52 & 209.54 & 112.17 & 207.27 & 1.08 & 6.40 & 7.86 & 62.90 & 39.11 & 104.08 & 0.81 & 4.60 & 5.38 & 44.39 & 21.22 & 60.49 \\ 
	& & (0.54 1.29) & (2.79), (0.9) & (0.85), (1.0) & (28.11), (1.0) & (15.26), (1.0) & (12.48), (0.9) & (0.14), (1.0) & (1.11), (0.8) & (1.27), (0.8) & (18.48), (1.0) & (7.77), (1.0) & (11.58), (0.9) & (0.09), (0.7) & (1.11), (1.0) & (0.58), (1.0) & (13.35), (1.0) & (6.12), (1.0) & (6.30), (1.0) \\ 
	& \multirow{2}{*}{6} & {\bf 0.88} & {\bf 24.03} & {\bf 15.12} & {\bf 190.88} & {\bf 105.09} & {\bf 185.69} & {\bf 0.68} & {\bf 4.92} & {\bf 6.02} & {\bf 46.68} & {\bf 33.10} & {\bf 84.06} & {\bf 0.45} & {\bf 2.27} & {\bf 3.29} & {\bf 22.89} & {\bf 12.91} & {\bf 41.20} \\ 
	& & (0.70, 1.10) & (2.72), (0.9) & (1.84), (1.0) & (14.76), (1.0) & (6.66), (1.0) & (11.59), (1.0) & (0.15), (1.0) & (0.51), (0.8) & (1.07), (0.7) & (5.21), (1.0) & (3.15), (1.0) & (7.37), (0.9) & (0.05), (0.7) & (0.34), (1.0) & (0.34), (0.9) & (3.47), (1.0) & (2.23), (1.0) & (3.02), (0.9) \\
	& \multirow{2}{*}{9} & 0.96 & 23.84 & 15.31 & 194.11 & 110.39 & 183.62 & 0.69 & 4.42 & 5.58 & 41.45 & 30.15 & 75.19 & 0.44 & 2.39 & 3.43 & 23.57 & 15.29 & 43.10 \\ 
	& & (0.78, 1.16) & (4.78), (0.9) & (1.97), (1.0) & (29.37), (0.9) & (7.58), (1.0) & (16.72), (0.9) & (0.11), (1.0) & (0.41), (0.8) & (0.44), (0.8) & (3.38), (1.0) & (2.07), (1.0) & (5.69), (0.9) & (0.05), (0.7) & (0.33), (1.0) & (0.17), (0.9) & (2.47), (1.0) & (1.28), (1.0) & (1.95), (0.9)  \\ \hline
	\multirow{6}{*}{100} & \multirow{2}{*}{3} & {\bf 0.90} & {\bf 14.20} & {\bf 12.32} & {\bf 140.74} & {\bf 96.92} & {\bf 166.74} & {\bf 0.78} & {\bf 5.57} & {\bf 6.21} & {\bf 53.58} & {\bf 32.80} & {\bf 85.01} & {\bf 0.72} & {\bf 4.64} & {\bf 4.59} & {\bf 43.58} & {\bf 16.59} & {\bf 50.75} \\
	& & (0.70, 1.15) & (2.16), (1.0) & (1.23), (1.0) & (17.63), (1.0) & (7.94), (0.9) & (16.94), (0.9) & (0.07), (1.0) & (0.68), (0.8) & (0.44), (0.7) & (7.65), (1.0) & (4.03), (1.0) & (5.54), (0.9) & (0.09), (0.7) & (0.72), (1.0) & (0.38), (1.0) & (6.78), (1.0) & (2.52), (1.0) & (4.23), (1.0) \\
	& \multirow{2}{*}{6} & 0.97 & 13.37 & 12.77 & 132.80 & 94.34 & 176.47 & 0.50 & 3.80 & 5.20 & 40.19 & 26.07 & 74.91 & 0.39 & 2.05 & 2.58 & 19.27 & 10.88 & 35.63 \\ 
	& & (0.82, 1.17) & (2.03), (1.0) & (0.73), (1.0) & (9.29), (1.0) & (6.95), (1.0) & (8.41), (0.9) & (0.06), (1.0) & (0.43), (0.8) & (0.52), (0.7) & (3.23), (1.0) & (2.56), (1.0) & (2.51), (0.9) & (0.04), (0.7) & (0.13), (1.0) & (0.36), (1.0) & (1.27), (1.0) & (1.16), (1.0) & (3.49), (0.9) \\ 
	& \multirow{2}{*}{9} & 0.96 & 10.76 & 11.34 & 118.83 & 81.59 & 166.53 & 0.46 & 3.48 & 4.61 & 34.53 & 24.01 & 68.69 & 0.36 & 2.04 & 3.00 & 19.51 & 11.55 & 38.77 \\ 
	& & (0.84, 1.08) & (1.54), (1.0) & (1.15), (1.0) & (8.17), (1.0) & (4.61), (1.0) & (5.24), (0.9) & (0.04), (1.0) & (0.32), (0.8) & (0.48), (0.7) & (2.52), (1.0) & (1.98), (1.0) & (2.89), (0.9) & (0.02), (0.7) & (0.15), (1.0) & (0.15), (1.0) & (1.78), (1.0) & (0.85), (1.0) & (1.49), (0.8) \\  
	\hline
	\hline
\end{tabular}
}

\medskip

\caption{Goodness-of-fit for the spatiotemporal differential processes for simulation experiments performed using Pattern 2.}\label{tab::sim-2}
\resizebox{\linewidth}{!}{
\centering
\begin{tabular}{*{2}{c}|c|@{\extracolsep{5pt}}*{17}{c}@{}}
	\hline
	\hline
	\multicolumn{2}{c|}{\multirow{2}{*}{$N$}} & \multirow{2}{*}{$\tau^2$} & \multicolumn{17}{c}{\multirow{2}{*}{RMSE}} \\
	&&&&&&&&&&&&&&&&&\\\cline{1-2}\cline{4-20}
	\multirow{2}{*}{$N_s$} & \multirow{2}{*}{$N_t$} & \multirow{2}{*}{(Truth = 1.00)}  & \multirow{2}{*}{$\partial_{s_x}$} & \multirow{2}{*}{$\partial_{s_y}$} & \multirow{2}{*}{$\partial^2_{s_{xx}}$} & \multirow{2}{*}{$\partial^2_{s_{xy}}$} & \multirow{2}{*}{$\partial^2_{s_{yy}}$} & \multirow{2}{*}{$\partial_t$} & \multirow{2}{*}{$\partial_t\partial_{s_x}$} & \multirow{2}{*}{$\partial_t\partial_{s_y}$} & \multirow{2}{*}{$\partial_t\partial^2_{s_{xx}}$} & \multirow{2}{*}{$\partial_t\partial^2_{s_{xy}}$} & \multirow{2}{*}{$\partial_t\partial^2_{s_{yy}}$} & \multirow{2}{*}{$\partial^2_t$} & \multirow{2}{*}{$\partial^2_t\partial_{s_x}$} & \multirow{2}{*}{$\partial^2_t\partial_{s_y}$} & \multirow{2}{*}{$\partial^2_t\partial^2_{s_{xx}}$} & \multirow{2}{*}{$\partial^2_t\partial^2_{s_{xy}}$} & \multirow{2}{*}{$\partial^2_t\partial^2_{s_{yy}}$} \\
	&&&&&&&&&&&&&&&&&&&\\
	\hline
	\multirow{6}{*}{30} & \multirow{2}{*}{3} & 0.44 & 26.71 & 20.19 & 265.40 & 216.34 & 257.71 & 1.43 & 15.57 & 12.58 & 166.88 & 133.97 & 172.29 & 0.92 & 12.42 & 11.71 & 211.41 & 112.00 & 215.27 \\ 
	& & (0.01, 1.12) & (2.44), (0.8) & (2.40), (0.9) & (60.50), (1.0) & (31.36), (1.0) & (63.81), (1.0) & (0.10), (0.8) & (1.39), (0.9) & (1.73), (1.0) & (58.17), (1.0) & (29.09), (1.0) & (64.63), (1.0) & (0.24), (1.0) & (4.67), (1.0) & (6.95), (1.0) & (195.87), (1.0) & (119.65), (1.0) & (228.83), (1.0)\\
	& \multirow{2}{*}{6} & 0.92 & 24.65 & 18.13 & 233.04 & 200.04 & 224.00 & 1.18 & 13.14 & 9.81 & 127.17 & 108.50 & 123.88 & 0.45 & 7.14 & 4.22 & 57.40 & 42.74 & 58.77 \\ 
	& & (0.61, 1.29) & (2.44), (0.8) & (2.11), (0.9) & (21.50), (1.0) & (19.05), (0.9) & (21.28), (1.0) & (0.18), (0.7) & (1.37), (0.7) & (1.23), (0.8) & (14.07), (1.0) & (11.16), (0.9) & (14.09), (1.0) & (0.05), (0.9) & (0.58), (0.9) & (0.66), (1.0) & (14.10), (1.0) & (7.00), (1.0) & (14.81), (1.0) \\ 
	& \multirow{2}{*}{9} & 0.89 & 25.43 & 18.86 & 255.28 & 206.19 & 241.77 & 0.99 & 11.37 & 8.56 & 115.37 & 95.07 & 108.49 & 0.51 & 7.64 & 4.62 & 61.88 & 50.69 & 60.57 \\ 
	& & (0.67, 1.18) & (2.87), (0.8) & (2.95), (0.9) & (28.42), (1.0) & (28.46), (0.9) & (29.95), (1.0) & (0.13), (0.8) & (1.14), (0.7) & (1.25), (0.8) & (10.32), (1.0) & (11.67), (0.9) & (12.52), (1.0) & (0.05), (0.8) & (0.38), (0.8) & (0.46), (0.9) & (6.28), (1.0) & (4.67), (1.0) & (8.59), (1.0) \\  
	\hline
	\multirow{6}{*}{50} & \multirow{2}{*}{3} & 0.73 & 20.43 & 14.42 & 222.15 & 173.34 & 205.63 & 1.10 & 12.72 & 9.79 & 145.37 & 109.78 & 144.18 & 0.81 & 12.00 & 9.37 & 147.28 & 96.60 & 155.16 \\ 
	& & (0.36, 1.17) & (1.23), (0.9) & (1.24), (1.0) & (22.46), (1.0) & (9.88), (1.0) & (21.28), (1.0) & (0.07), (0.9) & (0.64), (0.9) & (0.81), (0.9) & (11.49), (1.0) & (7.39), (1.0) & (13.75), (1.0) & (0.13), (0.9) & (1.14), (1.0) & (1.20), (1.0) & (18.93), (1.0) & (11.98), (1.0) & (20.97), (1.0) \\
	& \multirow{2}{*}{6} & {\bf 0.89} & {\bf 18.25} & {\bf 12.91} & {\bf 198.32} & {\bf 155.60} & {\bf 189.54} & {\bf 0.80} & {\bf 9.79} & {\bf 6.65} & {\bf 103.26} & {\bf 82.24} & {\bf 95.38} & {\bf 0.33} & {\bf 7.78} & {\bf 3.14} & {\bf 53.41} & {\bf 35.49} & {\bf 50.46} \\ 
	& & (0.66, 1.18) & (1.66), (0.9) & (1.53), (1.0) & (22.96), (1.0) & (12.90), (0.9) & (20.68), (1.0) & (0.10), (0.9) & (0.82), (0.7) & (0.79), (0.9) & (11.88), (1.0) & (6.86), (0.9) & (11.30), (1.0) & (0.03), (0.9) & (0.61), (0.8) & (0.42), (1.0) & (11.25), (1.0) & (4.51), (1.0) & (11.06), (1.0) \\
	& \multirow{2}{*}{9} & 1.01 & 20.98 & 13.93 & 217.75 & 168.32 & 210.82 & 0.79 & 9.53 & 6.73 & 99.50 & 78.72 & 99.39 & 0.44 & 8.16 & 3.87 & 59.44 & 43.55 & 56.19 \\ 
	& & (0.79, 1.24) & (1.82), (0.9) & (1.27), (0.9) & (22.24), (1.0) & (16.71), (0.9) & (18.56), (1.0) & (0.07), (0.9) & (0.72), (0.7) & (0.57), (0.8) & (8.58), (1.0) & (6.95), (0.9) & (9.20), (1.0) & (0.03), (0.8) & (0.30), (0.8) & (0.30), (1.0) & (6.48), (1.0) & (3.60), (1.0) & (6.44), (1.0) \\ \hline
	\multirow{6}{*}{100} & \multirow{2}{*}{3} & {\bf 0.87} & {\bf 15.34} & {\bf 10.45} & {\bf 182.46} & {\bf 134.34} & {\bf 165.86} & {\bf 0.86} & {\bf 10.42} & {\bf 7.63} & {\bf 116.20} & {\bf 89.68} & {\bf 107.22} & {\bf 0.83} & {\bf 12.21} & {\bf 8.26} & {\bf 113.92} & {\bf 74.32} & {\bf 115.24} \\
	& & (0.66, 1.12) & (0.99), (0.9) & (0.81), (1.0) & (16.50), (1.0) & (7.59), (0.9) & (12.39), (1.0) & (0.07), (1.0) & (0.54), (0.8) & (0.55), (0.9) & (10.78), (1.0) & (3.70), (1.0) & (6.55), (1.0) & (0.11), (0.8) & (1.17), (0.9) & (0.98), (1.0) & (17.24), (1.0) & (8.76), (1.0) & (13.30), (1.0) \\
	& \multirow{2}{*}{6} & 0.93 &  14.53 & 9.65 & 171.10 & 129.04 & 153.40 & 0.57 & 7.65 & 5.07 & 86.70 & 66.41 & 76.32 & 0.28 & 8.43 & 2.77 & 47.06 & 29.86 & 44.09 \\ 
	& & (0.79, 1.10) & (0.87), (0.9) & (0.58), (1.0) & (9.32), (1.0) & (6.37), (0.9) & (8.41), (1.0) & (0.04), (1.0) & (0.52), (0.7) & (0.26), (0.8) & (5.10), (1.0) & (4.42), (0.9) & (4.71), (1.0) & (0.02), (0.9) & (0.38), (0.8) & (0.17), (1.0) & (4.83), (1.0) & (1.98), (1.0) & (2.85), (1.0) \\ 
	& \multirow{2}{*}{9} & 0.95 & 15.25 & 9.87 & 172.23 & 127.89 & 161.00 & 0.56 & 7.22 & 4.73 & 77.06 & 59.96 & 69.60 & 0.35 & 8.64 & 3.07 & 46.50 & 33.66 & 43.01 \\ 
	& & (0.82, 1.09) & (0.79), (0.9) & (0.59), (1.0) & (9.55), (1.0) & (5.36), (0.9) & (8.40), (1.0) & (0.04), (1.0) & (0.33), (0.7) & (0.30), (0.8) & (4.70), (1.0) & (2.02), (0.9) & (4.11), (1.0) & (0.03), (0.8) & (0.20), (0.7) & (0.25), (1.0) & (3.56), (1.0) & (1.69), (1.0) & (3.87), (1.0) \\  
	\hline
	\hline
\end{tabular}
}
\end{table}

\end{landscape}

\newpage

\newgeometry{margin=1cm} 
\begin{landscape}
\thispagestyle{empty}

\subsection{Separable Kernels}

\begin{table}[ht]
\caption{Goodness-of-fit for the spatiotemporal differential processes for simulation experiments performed using Pattern 1.}\label{tab::sim-3}
\resizebox{\linewidth}{!}{
\centering
\begin{tabular}{*{2}{c}|c|@{\extracolsep{5pt}}*{17}{c}@{}}
	\hline
	\hline
	\multicolumn{2}{c|}{\multirow{2}{*}{$N$}} & \multirow{2}{*}{$\tau^2$} & \multicolumn{17}{c}{\multirow{2}{*}{RMSE}} \\
	&&&&&&&&&&&&&&&&&\\\cline{1-2}\cline{4-20}
	\multirow{2}{*}{$N_s$} & \multirow{2}{*}{$N_t$} & \multirow{2}{*}{(Truth = 1.00)}  & \multirow{2}{*}{$\partial_{s_x}$} & \multirow{2}{*}{$\partial_{s_y}$} & \multirow{2}{*}{$\partial^2_{s_{xx}}$} & \multirow{2}{*}{$\partial^2_{s_{xy}}$} & \multirow{2}{*}{$\partial^2_{s_{yy}}$} & \multirow{2}{*}{$\partial_t$} & \multirow{2}{*}{$\partial_t\partial_{s_x}$} & \multirow{2}{*}{$\partial_t\partial_{s_y}$} & \multirow{2}{*}{$\partial_t\partial^2_{s_{xx}}$} & \multirow{2}{*}{$\partial_t\partial^2_{s_{xy}}$} & \multirow{2}{*}{$\partial_t\partial^2_{s_{yy}}$} & \multirow{2}{*}{$\partial^2_t$} & \multirow{2}{*}{$\partial^2_t\partial_{s_x}$} & \multirow{2}{*}{$\partial^2_t\partial_{s_y}$} & \multirow{2}{*}{$\partial^2_t\partial^2_{s_{xx}}$} & \multirow{2}{*}{$\partial^2_t\partial^2_{s_{xy}}$} & \multirow{2}{*}{$\partial^2_t\partial^2_{s_{yy}}$} \\
	&&&&&&&&&&&&&&&&&&&\\
	\hline
	\multirow{6}{*}{30} & \multirow{2}{*}{3} & 1.33 & 37.56 & 21.29 & 305.51 & 129.57 & 248.14 & 1.44 & 6.36 & 10.34 & 67.82 & 35.76 & 168.31 & 0.87 & 3.84 & 5.98 & 31.62 & 16.37 & 71.05 \\ 
	& & (0.73, 2.10) & (3.56), (0.8) & (2.71), (1.0) & (29.05), (1.0) & (11.66), (1.0) & (32.94), (1.0) & (0.09), (0.9) & (0.56), (0.9) & (0.82), (0.7) & (6.18), (1.0) & (4.61), (1.0) & (5.69), (0.9) & (0.11), (0.9) & (0.59), (1.0) & (0.60), (1.0) & (5.82), (1.0) & (2.46), (1.0) & (3.18), (1.0)\\
	& \multirow{2}{*}{6} & 1.08 & 35.56 & 21.00 & 291.35 & 124.83 & 245.35 & 1.16 & 5.79 & 8.11 & 71.05 & 36.66 & 159.98 & 0.57 & 2.25 & 4.04 & 22.90 & 12.32 & 66.26 \\ 
	& & (0.76, 1.48) & (2.00), (0.9) & (1.30), (1.0) & (23.50), (1.0) & (13.89), (1.0) & (19.94), (1.0) & (0.20), (0.9) & (0.37), (0.8) & (0.81), (0.7) & (7.92), (1.0) & (2.65), (1.0) & (6.75), (0.7) & (0.05), (0.8) & (0.35), (1.0) & (0.46), (0.9) & (3.61), (1.0) & (1.44), (1.0) & (2.73), (0.8) \\ 
	& \multirow{2}{*}{9} & 1.03 & 33.48 & 20.81 & 277.93 & 128.89 & 235.40 & 1.05 & 5.18 & 7.75 & 51.83 & 35.52 & 98.28 & 0.57 & 2.76 & 4.55 & 28.54 & 17.71 & 53.36 \\ 
	& & (0.77, 1.30) & (3.94), (0.8) & (2.09), (1.0) & (31.63), (1.0) & (11.03), (1.0) & (17.01), (1.0) & (0.16), (0.9) & (0.33), (0.9) & (1.07), (0.7) & (7.98), (1.0) & (1.48), (1.0) & (8.84), (0.7) & (0.07), (0.7) & (0.26), (1.0) & (0.39), (0.8) & (4.27), (1.0) & (1.41), (1.0) & (3.59), (0.8) \\ 
	\hline
	\multirow{6}{*}{50} & \multirow{2}{*}{3} & 1.12 & 26.25 & 17.13 & 214.23 & 116.46 & 214.19 & 1.15 & 5.60 & 8.64 & 56.49 & 32.60 & 173.11 & 0.82 & 3.88 & 5.57 & 25.61 & 15.06 & 71.03 \\ 
	& & (0.75 1.58) & (2.88), (0.9) & (2.23), (1.0) & (15.61), (1.0) & (8.72), (1.0) & (25.15), (1.0) & (0.20), (1.0) & (0.77), (0.8) & (0.76), (0.7) & (2.57), (1.0) & (3.22), (1.0) & (3.11), (0.6) & (0.11), (0.9) & (0.64), (1.0) & (0.36), (0.9) & (2.55), (1.0) & (1.88), (1.0) & (1.60), (1.0)  \\ 
	& \multirow{2}{*}{6} & {\bf 1.01} & {\bf 24.18} & {\bf 16.98} & {\bf 203.51} & {\bf 114.78} & {\bf 207.63} & {\bf 0.77} & {\bf 4.69} & {\bf 6.47} & {\bf 58.38} & {\bf 30.86} & {\bf 167.77} & {\bf 0.49} & {\bf 2.10} & {\bf 3.24} & {\bf 19.36} & {\bf 11.00} & {\bf 68.03} \\ 
	& & (0.79, 1.31) & (3.51), (0.9) & (1.62), (1.0) & (25.25), (1.0) & (11.23), (1.0) & (14.40), (1.0) & (0.19), (1.0) & (0.43), (0.8) & (0.98), (0.7) & (4.38), (1.0) & (1.66), (1.0) & (5.85), (0.5) & (0.05), (0.8) & (0.21), (1.0) & (0.38), (1.0) & (2.00), (1.0) & (0.77), (1.0) & (2.11), (0.6) \\
	& \multirow{2}{*}{9} & 0.99 & 23.21 & 16.33 & 189.47 & 109.55 & 208.69 & 0.71 & 4.24 & 6.33 & 52.98 & 28.38 & 151.37 & 0.54 & 2.40 & 4.03 & 23.38 & 13.06 & 74.76 \\ 
	& & (0.80, 1.22) & (3.05), (0.9) & (1.69), (1.0) & (15.57), (1.0) & (8.48), (1.0) & (13.58), (1.0) & (0.07), (1.0) & (0.31), (0.8) & (0.39), (0.7) & (2.64), (1.0) & (2.39), (1.0) & (2.77), (0.6) & (0.04), (0.7) & (0.19), (1.0) & (0.22), (0.9) & (1.31), (1.0) & (1.15), (1.0) & (1.23), (0.5)  \\ \hline
	\multirow{6}{*}{100} & \multirow{2}{*}{3} & {\bf 0.95} & {\bf 13.97} & {\bf 14.17} & {\bf 142.00} & {\bf 95.28} & {\bf 190.14} & {\bf 0.84} & {\bf 5.28} & {\bf 7.69} & {\bf 50.50} & {\bf 28.71} & {\bf 178.57} & {\bf 0.81} & {\bf 4.18} & {\bf 5.76} & {\bf 23.38} & {\bf 14.44} & {\bf 71.50} \\
	& & (0.73, 1.21) & (2.41), (1.0) & (1.31), (1.0) & (17.80), (1.0) & (7.81), (1.0) & (7.43), (0.9) & (0.10), (1.0) & (0.60), (0.8) & (1.00), (0.6) & (3.34), (1.0) & (2.41), (1.0) & (3.99), (0.5) & (0.07), (0.9) & (0.50), (1.0) & (0.45), (0.9) & (2.63), (1.0) & (1.33), (1.0) & (2.52), (0.8)\\
	& \multirow{2}{*}{6} & 0.96 & 15.34 & 14.00 & 146.19 & 95.03 & 192.18 & 0.59 & 4.10 & 5.99 & 53.76 & 27.00 & 174.14 & 0.43 & 2.07 & 2.96 & 17.92 & 10.31 & 69.52 \\ 
	& & (0.81, 1.12) & (2.48), (1.0) & (1.63), (1.0) & (12.21), (1.0) & (8.36), (1.0) & (12.63), (0.9) & (0.17), (1.0) & (0.49), (0.8) & (0.66), (0.7) & (1.63), (1.0) & (2.64), (1.0) & (2.94), (0.4) & (0.09), (0.8) & (0.34), (1.0) & (0.23), (1.0) & (1.14), (1.0) & (0.86), (1.0) & (1.20), (0.5) \\ 
	& \multirow{2}{*}{9} & 0.97 & 13.23 & 13.31 & 129.31 & 88.65 & 194.43 & 0.51 & 3.45 & 5.61 & 47.13 & 22.84 & 158.94 & 0.45 & 2.02 & 3.58 & 20.77 & 10.94 & 77.17 \\ 
	& & (0.86, 1.10) & (3.13), (1.0) & (0.77), (1.0) & (19.66), (1.0) & (6.42), (1.0) & (3.43), (0.9) & (0.03), (1.0) & (0.37), (0.8) & (0.39), (0.6) & (1.38), (1.0) & (2.91), (1.0) & (1.83), (0.4) & (0.03), (0.8) & (0.18), (1.0) & (0.20), (0.9) & (0.85), (1.0) & (1.15), (1.0) & (0.82), (0.4) \\  
	\hline
	\hline
\end{tabular}
}

\medskip

\caption{Goodness-of-fit for the spatiotemporal differential processes for simulation experiments performed using Pattern 2.}\label{tab::sim-4}
\resizebox{\linewidth}{!}{
\centering
\begin{tabular}{*{2}{c}|c|@{\extracolsep{5pt}}*{17}{c}@{}}
	\hline
	\hline
	\multicolumn{2}{c|}{\multirow{2}{*}{$N$}} & \multirow{2}{*}{$\tau^2$} & \multicolumn{17}{c}{\multirow{2}{*}{RMSE}} \\
	&&&&&&&&&&&&&&&&&\\\cline{1-2}\cline{4-20}
	\multirow{2}{*}{$N_s$} & \multirow{2}{*}{$N_t$} & \multirow{2}{*}{(Truth = 1.00)}  & \multirow{2}{*}{$\partial_{s_x}$} & \multirow{2}{*}{$\partial_{s_y}$} & \multirow{2}{*}{$\partial^2_{s_{xx}}$} & \multirow{2}{*}{$\partial^2_{s_{xy}}$} & \multirow{2}{*}{$\partial^2_{s_{yy}}$} & \multirow{2}{*}{$\partial_t$} & \multirow{2}{*}{$\partial_t\partial_{s_x}$} & \multirow{2}{*}{$\partial_t\partial_{s_y}$} & \multirow{2}{*}{$\partial_t\partial^2_{s_{xx}}$} & \multirow{2}{*}{$\partial_t\partial^2_{s_{xy}}$} & \multirow{2}{*}{$\partial_t\partial^2_{s_{yy}}$} & \multirow{2}{*}{$\partial^2_t$} & \multirow{2}{*}{$\partial^2_t\partial_{s_x}$} & \multirow{2}{*}{$\partial^2_t\partial_{s_y}$} & \multirow{2}{*}{$\partial^2_t\partial^2_{s_{xx}}$} & \multirow{2}{*}{$\partial^2_t\partial^2_{s_{xy}}$} & \multirow{2}{*}{$\partial^2_t\partial^2_{s_{yy}}$} \\
	&&&&&&&&&&&&&&&&&&&\\
	\hline
	\multirow{6}{*}{30} & \multirow{2}{*}{3} & 1.44 & 25.37 & 18.80 & 254.07 & 209.71 & 235.54 & 1.40 & 14.72 & 11.27 & 121.13 & 122.14 & 121.24 & 0.70 & 7.58 & 5.36 & 67.89 & 51.24 & 68.25 \\ 
	& & (0.81, 2.33) & (1.83), (0.8) & (2.26), (0.9) & (23.06), (1.0) & (21.55), (1.0) & (25.56), (1.0) & (0.09), (0.8) & (0.87), (0.7) & (0.98), (0.8) & (8.05), (1.0) & (9.03), (1.0) & (7.98), (1.0) & (0.07), (1.0) & (0.57), (1.0) & (0.38), (1.0) & (7.70), (1.0) & (3.36), (1.0) & (7.50), (1.0)\\
	& \multirow{2}{*}{6} & 1.03 & 25.66 & 19.58 & 253.65 & 215.64 & 242.26 & 1.29 & 13.55 & 10.32 & 114.59 & 113.97 & 114.21 & 0.53 & 6.82 & 4.31 & 50.17 & 44.69 & 49.70 \\ 
	& & (0.58, 1.71) & (2.44), (0.9) & (2.46), (1.0) & (22.31), (1.0) & (26.27), (1.0) & (25.41), (1.0) & (0.10), (0.8) & (1.11), (0.7) & (1.03), (0.8) & (14.22), (1.0) & (13.85), (1.0) & (14.30), (1.0) & (0.03), (0.9) & (0.48), (0.9) & (0.52), (1.0) & (6.94), (1.0) & (5.22), (1.0) & (6.86), (1.0) \\ 
	& \multirow{2}{*}{9} & 1.00 & 28.38 & 21.12 & 289.47 & 234.66 & 278.79 & 1.18 & 12.49 & 9.42 & 104.09 & 103.96 & 104.04 & 0.61 & 7.25 & 4.86 & 55.92 & 51.81 & 55.61 \\ 
	& & (0.71, 1.34) & (1.40), (0.8) & (1.32), (0.9) & (20.16), (1.0) & (10.57), (1.0) & (19.08), (1.0) & (0.07), (0.8) & (0.55), (0.7) & (0.50), (0.8) & (4.77), (1.0) & (4.20), (1.0) & (5.07), (1.0) & (0.03), (0.8) & (0.29), (0.8) & (0.16), (0.9) & (2.89), (1.0) & (1.61), (1.0) & (2.69), (1.0) \\  
	\hline
	\multirow{6}{*}{50} & \multirow{2}{*}{3} & 1.21 & 20.29 & 14.44 & 211.83 & 169.11 & 201.95 & 1.21 & 12.81 & 9.59 & 103.66 & 106.34 & 103.38 & 0.88 & 9.40 & 6.86 & 70.72 & 56.68 & 71.60 \\ 
	& & (0.79, 1.77) & (1.88), (0.9) & (1.35), (1.0) & (20.24), (1.0) & (14.61), (1.0) & (20.83), (1.0) & (0.12), (0.9) & (1.05), (0.7) & (0.77), (0.8) & (9.43), (1.0) & (7.56), (1.0) & (8.91), (1.0) & (0.20), (0.9) & (1.42), (1.0) & (1.55), (1.0) & (14.18), (1.0) & (9.40), (1.0) & (14.41), (1.0) \\
	& \multirow{2}{*}{6} & {\bf 0.93} & {\bf 20.99} & {\bf 14.78} & {\bf 226.43} & {\bf 177.90} & {\bf 209.76} & {\bf 0.99} & {\bf 11.25} & {\bf 7.90} & {\bf 86.48} & {\bf 93.25} & {\bf 86.49} & {\bf 0.47} & {\bf 7.70} & {\bf 3.65} & {\bf 40.47} & {\bf 37.52} & {\bf 40.64} \\ 
	& & (0.73, 1.18) & (2.30), (0.9) & (1.69), (1.0) & (22.22), (1.0) & (14.31), (1.0) & (24.48), (1.0) & (0.16), (0.9) & (1.27), (0.7) & (1.03), (0.8) & (13.22), (1.0) & (8.46), (0.9) & (12.95), (1.0) & (0.05), (0.9) & (0.33), (0.9) & (0.52), (1.0) & (6.21), (1.0) & (3.42), (1.0) & (6.42), (1.0) \\
	& \multirow{2}{*}{9} & 0.97 & 23.59 & 16.58 & 246.29 & 199.07 & 240.90 & 0.95 & 10.63 & 7.58 & 83.52 & 88.77 & 83.32 & 0.55 & 7.86 & 4.33 & 47.88 & 46.01 & 47.87 \\ 
	& & (0.77, 1.19) & (1.40), (0.9) & (1.53), (1.0) & (18.00), (1.0) & (14.80), (1.0) & (22.47), (1.0) & (0.07), (0.9) & (0.56), (0.7) & (0.63), (0.8) & (5.50), (1.0) & (5.96), (0.9) & (5.64), (1.0) & (0.04), (0.8) & (0.32), (0.8) & (0.32), (1.0) & (3.86), (1.0) & (2.80), (1.0) & (3.78), (1.0) \\ \hline
	\multirow{6}{*}{100} & \multirow{2}{*}{3} & {\bf 0.99} & {\bf 15.72} & {\bf 9.97} & {\bf 188.76} & {\bf 136.03} & {\bf 162.48} & {\bf 0.93} & {\bf 10.64} & {\bf 7.62} & {\bf 85.59} & {\bf 89.24} & {\bf 85.39} & {\bf 1.02} & {\bf 11.62} & {\bf 8.55} & {\bf 71.17} & {\bf 69.37} & {\bf 71.97} \\
	& & (0.72, 1.26) & (1.09), (0.9) & (0.74), (1.0) & (14.06), (1.0) & (6.61), (1.0) & (12.92), (1.0) & (0.07), (1.0) & (0.65), (0.7) & (0.39), (0.8) & (5.08), (1.0) & (4.65), (1.0) & (4.94), (1.0) & (0.18), (0.9) & (1.60), (0.9) & (1.69), (1.0) & (11.70), (1.0) & (12.32), (1.0) & (11.63), (1.0) \\
	& \multirow{2}{*}{6} & 0.98 & 15.48 & 10.49 & 187.59 & 136.70 & 167.15 & 0.67 & 8.14 & 5.36 & 67.31 & 67.87 & 67.22 & 0.44 & 8.46 & 3.05 & 28.61 & 28.52 & 28.42 \\ 
	& & (0.81, 1.18) & (0.96), (1.0) & (0.68), (1.0) & (17.55), (1.0) & (7.98), (1.0) & (23.27), (1.0) & (0.05), (1.0) & (0.47), (0.7) & (0.28), (0.8) & (3.28), (1.0) & (3.93), (0.9) & (3.34), (1.0) & (0.07), (0.9) & (0.38), (0.8) & (0.23), (1.0) & (1.62), (1.0) & (1.94), (1.0) & (1.77), (1.0) \\ 
	& \multirow{2}{*}{9} & 0.96 & 17.23 & 11.12 & 201.00 & 145.78 & 176.16 & 0.68 & 8.04 & 5.27 & 67.68 & 66.18 & 67.53 & 0.50 & 8.48 & 3.62 & 38.68 & 37.19 & 38.73 \\ 
	& & (0.82, 1.11) & (1.95), (0.9) & (0.91), (1.0) & (12.29), (1.0) & (11.79), (1.0) & (12.47), (1.0) & (0.06), (1.0) & (0.57), (0.7) & (0.30), (0.8) & (7.36), (1.0) & (4.73), (0.9) & (7.26), (1.0) & (0.03), (0.8) & (0.24), (0.7) & (0.19), (1.0) & (1.85), (1.0) & (1.51), (0.9) & (1.75), (1.0) \\  
	\hline
	\hline
\end{tabular}
}
\end{table}

\end{landscape}
\restoregeometry

\newpage
\clearpage
\begin{table}[h!]
\centering
\caption{True and estimated \emph{total wombling measure} split by time-intervals. They are accompanied by HPD intervals. Estimates that are not significant are marked in bold.}\label{tab:wmbl-timesplit-1}
\resizebox{\linewidth}{!}{
\begin{tabular}{l|c|@{\extracolsep{40pt}}*{6}{c}@{}}
\hline\hline
\multirow{2}{*}{$\bGamma(\C^*)$} & \multirow{2}{*}{Time}  & \multicolumn{2}{c}{A} & \multicolumn{2}{c}{B} &   \multicolumn{2}{c}{C} \\ 
\cline{3-4}\cline{5-6}\cline{7-8}
& & Estimate & True & Estimate & True & Estimate & True\\
\hline
\multirow{16}{*}{$\bpartial_\bs$} & \multirow{2}{*}{1--2} & 1.02 &  \multirow{2}{*}{1.03} & -1.31 &  \multirow{2}{*}{-1.30} & 0.88 &  \multirow{2}{*}{0.99} \\ 
& & (0.78, 1.26) & & (-1.60, -1.04) & & (0.21, 1.55) & \\
& \multirow{2}{*}{2--3} & 0.51 &  \multirow{2}{*}{0.49} & -1.14 &  \multirow{2}{*}{-1.09} & {\bf 0.25} &  \multirow{2}{*}{0.23} \\ 
& & (0.36, 0.67) & & (-1.41, -0.90) & & (-0.21, 0.69) & \\
& \multirow{2}{*}{3--4} & 1.68 & \multirow{2}{*}{1.75} & -1.91 & \multirow{2}{*}{-1.93} & {\bf 0.02} & \multirow{2}{*}{0.03} \\
& & (1.17, 2.21) & & (-2.52, -1.32) & & (-0.67, 0.76) & \\
& \multirow{2}{*}{4--5} & 1.90 & \multirow{2}{*}{1.98} & -1.86 & \multirow{2}{*}{-1.89} & {\bf -0.05} & \multirow{2}{*}{-0.01} \\
& & (1.29, 2.52) & & (-2.44, -1.27) & & (-0.93, 0.83) & \\
& \multirow{2}{*}{5--6} & 1.52 & \multirow{2}{*}{1.55} & -1.63 & \multirow{2}{*}{-1.55} & 1.02 & \multirow{2}{*}{1.16} \\ 
& & (1.21, 1.82) & & (-1.92, -1.34) & & (0.37, 1.71) & \\
& \multirow{2}{*}{6--7} & 1.17 & \multirow{2}{*}{1.17} & -1.30 & \multirow{2}{*}{-1.23} & 3.97 & \multirow{2}{*}{4.16} \\ 
& & (0.95, 1.40) & & (-1.52, -1.06) & & (3.35, 4.56) & \\
& \multirow{2}{*}{7--8} & 0.58 & \multirow{2}{*}{0.57} & -0.86 & \multirow{2}{*}{-0.82} & 2.79 & \multirow{2}{*}{2.89} \\ 
& & (0.43, 0.73) & & (-1.03, -0.69) & & (2.29, 3.27) & \\
& \multirow{2}{*}{8--9} & 0.72 & \multirow{2}{*}{0.68} & -1.03 & \multirow{2}{*}{-1.00} & 0.85 & \multirow{2}{*}{0.87} \\
& & (0.53, 0.89) & & (-1.25, -0.82) & & (0.35, 1.35) & \\\hline 
\multirow{16}{*}{$\bpartial_\bs^2$} & \multirow{2}{*}{1--2} & 15.24 & \multirow{2}{*}{16.41} & -19.02 & \multirow{2}{*}{-18.50} & {\bf 1.20} & \multirow{2}{*}{1.49} \\ 
& & (12.48, 18.07) & & (-22.32, -15.66) & & (-4.77, 7.24) & \\
& \multirow{2}{*}{2--3} & 8.57 & \multirow{2}{*}{9.11} & -14.27 & \multirow{2}{*}{-14.11} & {\bf 2.43} & \multirow{2}{*}{2.37} \\ 
& & (6.94, 10.33) & & (-17.26, -11.28) & & (-1.51, 6.31) & \\
& \multirow{2}{*}{3--4} & 36.40 & \multirow{2}{*}{36.84} & -40.26 & \multirow{2}{*}{-39.24} & {\bf 1.13} & \multirow{2}{*}{-0.52} \\ 
& & (32.34, 40.3) & & (-45.45, -35.24) & & (-5.12, 7.31) & \\
& \multirow{2}{*}{4--5} & 43.87 & \multirow{2}{*}{44.34} & -42.07 & \multirow{2}{*}{-41.26} & {\bf -2.65} & \multirow{2}{*}{-5.46} \\ 
& & (38.98, 48.86) & & (-47.98, -35.68) & & (-9.67, 5.01) & \\
& \multirow{2}{*}{5--6} & 20.53 & \multirow{2}{*}{20.47} & -20.04 & \multirow{2}{*}{-19.99} & -7.01 & \multirow{2}{*}{-10.18} \\ 
& & (17.56, 23.55) & & (-23.43, -16.58) & & (-11.6, -2.2) & \\
& \multirow{2}{*}{6--7} & 17.04 & \multirow{2}{*}{16.92} & -17.11 & \multirow{2}{*}{-17.02} & {\bf 3.87} & \multirow{2}{*}{2.37} \\ 
& & (14.66, 19.39) & & (-19.87, -14.33) & & (-1.82, 9.57) & \\
& \multirow{2}{*}{7--8} & 11.84 & \multirow{2}{*}{11.69} & -13.07 & \multirow{2}{*}{-13.11} & {\bf 2.44} & \multirow{2}{*}{0.59} \\ 
& & (10.31, 13.34) & & (-15.12, -10.95) & & (-2.34, 7.34) & \\
& \multirow{2}{*}{8--9} & 12.31 & \multirow{2}{*}{12.10} & -13.35 & \multirow{2}{*}{-14.00} & {\bf -2.80} & \multirow{2}{*}{-4.79} \\ 
& & (10.48, 14.15) & & (-15.99, -10.71) & & (-7.46, 1.77) & \\\hline
\multirow{16}{*}{$\partial_t$} & \multirow{2}{*}{1--2} & 0.11 & \multirow{2}{*}{0.09} & -0.09 & \multirow{2}{*}{-0.10} & {\bf -0.06} & \multirow{2}{*}{-0.07}\\ 
& & (0.07, 0.14) & & (-0.13, -0.05) & & (-0.16, 0.04) & \\
& \multirow{2}{*}{2--3} & 0.04 & \multirow{2}{*}{0.04} & -0.06 & \multirow{2}{*}{-0.06} & {\bf 0.01} & \multirow{2}{*}{0.01}\\ 
& & (0.03, 0.05) & & (-0.08, -0.03) & & (-0.03, 0.05) & \\
& \multirow{2}{*}{3--4} & 1.05 & \multirow{2}{*}{1.00} & 1.25 & \multirow{2}{*}{1.22} & -1.51 & \multirow{2}{*}{-1.55}\\
& & (0.78, 1.32) & & (0.88, 1.60) & & (-1.95, -1.02) & \\
& \multirow{2}{*}{4--5} & -0.42 & \multirow{2}{*}{-0.38} & 0.38 & \multirow{2}{*}{0.36} & -0.32 & \multirow{2}{*}{-0.26}\\
& & (-0.59, -0.26) & & (0.22, 0.56) & & (-0.58, -0.05) & \\
& \multirow{2}{*}{5--6} & -0.04 & \multirow{2}{*}{-0.04} & 0.05 & \multirow{2}{*}{0.04} & 0.04 & \multirow{2}{*}{0.03}\\
& & (-0.05, -0.03) & & (0.03, 0.06) & & (0.00, 0.07) & \\
& \multirow{2}{*}{6--7} & {\bf -0.01} & \multirow{2}{*}{-0.02} & 0.02 & \multirow{2}{*}{0.02} & {\bf 0.03} & \multirow{2}{*}{0.02}\\
& & (-0.03, 0.00) & &  (0.01, 0.03) & & (-0.01, 0.06) & \\
& \multirow{2}{*}{7--8} & 0.02 & \multirow{2}{*}{0.01} & {\bf -0.02} & \multirow{2}{*}{-0.02} & {\bf 0.02} & \multirow{2}{*}{0.02}\\ 
& & (0.01, 0.03) & & (-0.03, 0.00) & & (-0.02, 0.06) & \\
& \multirow{2}{*}{8--9} & -0.07 & \multirow{2}{*}{-0.08} & 0.09 & \multirow{2}{*}{0.10} & {\bf -0.06} & \multirow{2}{*}{-0.11}\\ 
& & (-0.09, -0.06) & & (0.07, 0.11) & & (-0.12, 0.00) & \\\hline
\multirow{16}{*}{$\partial_t\bpartial_\bs$} & \multirow{2}{*}{1--2} & -0.20 & \multirow{2}{*}{-0.27} & 0.23 & \multirow{2}{*}{0.34} & -0.80 & \multirow{2}{*}{-0.95} \\ 
& & (-0.37, -0.02) & & (0.02, 0.42) & & (-1.26, -0.31) & \\ 
& \multirow{2}{*}{2--3} & -0.13 & \multirow{2}{*}{-0.11} & 0.27 & \multirow{2}{*}{0.20} & 0.23 & \multirow{2}{*}{0.16} \\ 
& & (-0.20, -0.07) & & (0.18, 0.37) & & (0.07, 0.40) & \\ 
& \multirow{2}{*}{3--4} & 3.56 & \multirow{2}{*}{3.62} & 3.95 & \multirow{2}{*}{4.15} & -9.42 & \multirow{2}{*}{-9.54} \\ 
& & (2.76, 4.37) & & (2.87, 4.95) & & (-11.35, -7.72) & \\
& \multirow{2}{*}{4--5} & 1.98 & \multirow{2}{*}{2.03} & -0.95 & \multirow{2}{*}{-1.14} & 4.13 & \multirow{2}{*}{3.98} \\ 
& & (1.26, 2.67) & & (-1.61, -0.32) & & (3.11, 5.13) & \\ 
& \multirow{2}{*}{5--6} & 0.34 & \multirow{2}{*}{0.28} & -0.26 & \multirow{2}{*}{-0.26} & {\bf 0.07} & \multirow{2}{*}{0.08} \\ 
& & (0.28, 0.39) & & (-0.31, -0.21) & & (-0.03, 0.18) & \\  
& \multirow{2}{*}{6--7} & 0.11 & \multirow{2}{*}{0.05} & {\bf -0.03} & \multirow{2}{*}{-0.05} & 0.24 & \multirow{2}{*}{0.19} \\ 
& & (0.05, 0.17) & & (-0.08, 0.03) & & (0.09, 0.38) & \\  
& \multirow{2}{*}{7--8} & {\bf -0.01} & \multirow{2}{*}{-0.03} & 0.06 & \multirow{2}{*}{0.05} & -0.32 & \multirow{2}{*}{-0.23} \\ 
& & (-0.08, 0.05) & & (0.00, 0.13) & & (-0.52, -0.12) & \\ 
& \multirow{2}{*}{8--9} & 0.26 & \multirow{2}{*}{0.29} & -0.40 & \multirow{2}{*}{-0.37} & {\bf 0.01} & \multirow{2}{*}{-0.00} \\ 
& & (0.15, 0.36) & & (-0.52, -0.28) & & (-0.27, 0.28) \\  \hline\hline
\end{tabular}
}
\end{table}

\newpage
\clearpage
\begin{table}[ht!]
\centering
\caption{True and estimated total wombling measure split by time intervals. They are accompanied by HPD intervals. Estimates that are not significant are marked in bold. Estimates not containing true values are marked in \texttt{teal}.}\label{tab:wmbl-timesplit-2}
\resizebox{\linewidth}{!}{
\begin{tabular}{l|c|@{\extracolsep{40pt}}*{6}{c}@{}}
\hline\hline
\multirow{2}{*}{$\bGamma(\C^*)$} & \multirow{2}{*}{Time}  & \multicolumn{2}{c}{A} & \multicolumn{2}{c}{B} & \multicolumn{2}{c}{C} \\ 
\cline{3-4}\cline{5-6}\cline{7-8}
& & Estimate & True & Estimate & True & Estimate & True\\
\hline
\multirow{16}{*}{$\partial_t\bpartial_\bs^2$} & \multirow{2}{*}{1--2} & -5.08 & \multirow{2}{*}{-4.33} & 4.15 & \multirow{2}{*}{4.64} & {\bf 6.58} & \multirow{2}{*}{6.32} \\ 
& & (-8.33, -2.2) & & (0.45, 7.53) & & (-0.11, 13.49) & \\
& \multirow{2}{*}{2--3} & {\bf -0.82} & \multirow{2}{*}{-0.71} & {\bf -0.69} & \multirow{2}{*}{0.51} & {\bf -2.68} & \multirow{2}{*}{-4.40} \\ 
& & (-2.08, 0.47) & & (-2.92, 1.34) & & (-5.68, 0.51) & \\
& \multirow{2}{*}{3--4} & -27.29 & \multirow{2}{*}{-27.77} & -51.48 & \multirow{2}{*}{-49.29} & 80.35 & \multirow{2}{*}{80.17} \\
& & (-42.43, -12.22) & & (-73.22, -29.60) & & (48.53, 114.35) & \\
& \multirow{2}{*}{4--5} & {\bf -2.97} & \multirow{2}{*}{-0.82} & {\bf -13.32} & \multirow{2}{*}{-8.96} & 21.83 & \multirow{2}{*}{18.22} \\ 
& & (-14.25, 8.92) & & (-27.19, 0.38) & & (3.10, 40.14) & \\
& \multirow{2}{*}{5--6} & 1.73 & \multirow{2}{*}{2.28} & -2.35 & \multirow{2}{*}{-2.37} & {\bf -1.97} & \multirow{2}{*}{-0.96} \\ 
& & (0.61, 2.90) & & (-3.61, -1.06) &  &(-4.07, 0.03) & \\
& \multirow{2}{*}{6--7} & {\bf 0.02} & \multirow{2}{*}{0.45} & {\bf -0.29} & \multirow{2}{*}{-0.52} & {\bf 1.05} & \multirow{2}{*}{-0.12} \\ 
& & (-1.05, 1.13) & & (-1.55, 0.95) & & (-1.99, 4.00) & \\
& \multirow{2}{*}{7--8} & {\bf -0.56} & \multirow{2}{*}{-0.59} & {\bf 1.14} & \multirow{2}{*}{0.68} & {\bf 1.04} & \multirow{2}{*}{-0.46} \\ 
& & (-1.69, 0.56) & & (-0.36, 2.62) & & (-2.64, 4.54) & \\
& \multirow{2}{*}{8--9} & 2.61 & \multirow{2}{*}{2.88} & -3.44 & \multirow{2}{*}{-3.70} & {\bf 3.33} & \multirow{2}{*}{5.63} \\ 
& & (0.85, 4.33) & & (-5.87, -1.12) & & (-1.27, 7.73) & \\\hline
\multirow{16}{*}{$\partial_t^2$} & \multirow{2}{*}{1--2} & 0.21 & \multirow{2}{*}{0.30} & -0.35 & \multirow{2}{*}{-0.33} & {\color{teal} -0.53} & \multirow{2}{*}{{\color{teal}-0.20}} \\
& & (0.13, 0.30) & & (-0.46, -0.26) & & {\color{teal}(-0.77, -0.28)} & \\
& \multirow{2}{*}{2--3} & 0.03 & \multirow{2}{*}{0.05} & -0.07 & \multirow{2}{*}{-0.07} & {\color{teal}-0.05} & \multirow{2}{*}{{\color{teal} -0.00}} \\ 
& & (0.02, 0.05) & & (-0.09, -0.04) & & {\color{teal} (-0.09, -0.01)} & \\
& \multirow{2}{*}{3--4} & {\bf 0.91} & \multirow{2}{*}{0.89} & {\bf 0.88} & \multirow{2}{*}{-1.39} & {\bf 1.87} & \multirow{2}{*}{0.79} \\ 
& & (-0.62, 2.50) & & (-1.22, 3.08) & & (-1.86, 5.65) & \\
& \multirow{2}{*}{4--5} & 1.74 & \multirow{2}{*}{1.43} & -1.52 & \multirow{2}{*}{-1.27} & {\color{teal} 1.92} & \multirow{2}{*}{{\color{teal} 0.67}} \\ 
& & (1.30, 2.21) & & (-1.97, -1.05) & & {\color{teal} (1.10, 2.70)} & \\
& \multirow{2}{*}{5--6} & 0.05 & \multirow{2}{*}{0.04} & -0.04 & \multirow{2}{*}{-0.04} & {\color{teal} 0.01} & \multirow{2}{*}{{\color{teal}-0.01}} \\ 
& & (0.04, 0.05) & & (-0.05, -0.04) & & {\color{teal} (0.00, 0.02)} & \\
& \multirow{2}{*}{6--7} & 0.08 & \multirow{2}{*}{0.08} & -0.08 & \multirow{2}{*}{-0.08} & -0.08 & \multirow{2}{*}{-0.08} \\ 
& & (0.08, 0.09) & & (-0.09, -0.08) & & (-0.10, -0.06) & \\
& \multirow{2}{*}{7--8} & 0.11 & \multirow{2}{*}{0.11} & -0.11 & \multirow{2}{*}{-0.11} & 0.13 & \multirow{2}{*}{0.16} \\ 
& & (0.10, 0.13) & & (-0.12, -0.09) & & (0.08, 0.18) & \\
& \multirow{2}{*}{8--9} & 0.10 & \multirow{2}{*}{0.11} & -0.12 & \multirow{2}{*}{-0.14} & {\bf \color{teal}0.08} & \multirow{2}{*}{\color{teal} 0.27} \\ 
& & (0.06, 0.13) & & (-0.17, -0.08) & & {\color{teal}(-0.03, 0.18)} & \\\hline
\multirow{16}{*}{$\partial_t^2\bpartial_\bs$} & \multirow{2}{*}{1--2} & -1.01 & \multirow{2}{*}{-1.22} & 1.28 & \multirow{2}{*}{1.40} & -3.54 & \multirow{2}{*}{-3.73} \\ 
& & (-1.61, -0.43) & & (0.64, 1.98) & & (-5.15, -1.99) & \\
& \multirow{2}{*}{2--3} & {\bf -0.11} & \multirow{2}{*}{-0.09} & 0.16 & \multirow{2}{*}{0.21} & -0.38 & \multirow{2}{*}{-0.20} \\ 
& & (-0.24, 0.02) & & (-0.06, 0.39) & & (-0.73, -0.02) & \\
& \multirow{2}{*}{3--4} & {\bf -7.08} & \multirow{2}{*}{-8.89} & {\bf 9.30} & \multirow{2}{*}{13.25} & {\bf 1.92} & \multirow{2}{*}{-3.03} \\ 
& & (-22.16, 8.22) & & (-12.68, 30.49) & & (-32.64, 37.94) & \\ 
& \multirow{2}{*}{4--5} & -6.69 & \multirow{2}{*}{-5.87} & 5.57 & \multirow{2}{*}{5.84} & {\bf 2.27} & \multirow{2}{*}{1.29} \\ 
& & (-11.7, -1.75) & & (0.79, 10.51) & & (-5.12, 9.58) & \\ 
& \multirow{2}{*}{5--6} & -0.24 & \multirow{2}{*}{-0.23} & 0.27 & \multirow{2}{*}{0.22} & -0.20 & \multirow{2}{*}{-0.21} \\ 
& & (-0.3, -0.17) & & (0.21, 0.33) & & (-0.34, -0.07) & \\
& \multirow{2}{*}{6--7} & -0.24 & \multirow{2}{*}{-0.17} & 0.18 & \multirow{2}{*}{0.16} & -0.76 & \multirow{2}{*}{-0.74} \\ 
& & (-0.33, -0.15) & & (0.10, 0.28) & & (-0.99, -0.51) & \\
& \multirow{2}{*}{7--8} & -0.21 & \multirow{2}{*}{-0.27} & 0.33 & \multirow{2}{*}{0.31} & -1.68 & \multirow{2}{*}{-1.62} \\ 
& & (-0.35, -0.08) & & (0.19, 0.47) & & (-2.15, -1.24) & \\ 
& \multirow{2}{*}{8--9} & -0.35 & \multirow{2}{*}{-0.54} & 0.61 & \multirow{2}{*}{0.73} & {\bf 0.17} & \multirow{2}{*}{0.08}\\ 
& & (-0.58, -0.13) & & (0.29, 0.94) & & (-0.49, 0.89) & \\\hline
\multirow{16}{*}{$\partial_t^2\bpartial_\bs^2$} & \multirow{2}{*}{1--2} & -11.17 & \multirow{2}{*}{-16.36} & 17.66 & \multirow{2}{*}{18.03} & 18.73 & \multirow{2}{*}{14.11} \\ 
& & (-18.47, -4.07) & & (9.25, 26.29) & & (2.98, 33.62) & \\
& \multirow{2}{*}{2--3} & {\bf -0.70} & \multirow{2}{*}{-1.18} & {\bf 0.19} & \multirow{2}{*}{1.19} & {\bf -3.83} & \multirow{2}{*}{-2.15} \\ 
& & (-2.72, 1.15) & & (-3.09, 3.44) & & (-8.88, 1.03) & \\
& \multirow{2}{*}{3--4} & {\bf -38.05} & \multirow{2}{*}{-30.46} & {\bf -68.52} & \multirow{2}{*}{53.02} & {\bf -296.99} & \multirow{2}{*}{-48.13} \\ 
& & (-186.63, 122.86) & & (-317.43, 177.02) & & (-743.29, 129.76) & \\
& \multirow{2}{*}{4--5} & {\bf -23.93} & \multirow{2}{*}{-34.80} & {\bf 54.47} & \multirow{2}{*}{38.03} & {\bf -76.32} & \multirow{2}{*}{-11.36} \\ 
& & (-82.71, 35.55) & & (-15.23, 126.69) & & (-172.6, 20.09) & \\
& \multirow{2}{*}{5--6} & -2.10 & \multirow{2}{*}{-1.92} & 2.39 & \multirow{2}{*}{2.03} & {\bf 0.42} & \multirow{2}{*}{0.95} \\ 
& & (-2.99, -1.27) & & (1.49, 3.36) & & (-1.14, 1.98) & \\
& \multirow{2}{*}{6--7} & -1.54 & \multirow{2}{*}{-1.77} & 2.69 & \multirow{2}{*}{1.98} & {\bf 0.34} & \multirow{2}{*}{0.08} \\ 
& & (-2.71, -0.37) & & (1.37, 4.01) & & (-2.96, 3.58) & \\
& \multirow{2}{*}{7--8} & -4.79 & \multirow{2}{*}{-4.06} & 4.39 & \multirow{2}{*}{4.44} & {\bf -4.30} & \multirow{2}{*}{-4.51} \\ 
& & (-6.57, -3.1) & & (2.2, 6.45) & & (-9.61, 0.93) & \\
& \multirow{2}{*}{8--9} & -6.03 & \multirow{2}{*}{-5.23} & 4.33 & \multirow{2}{*}{7.14} & {\color{teal}\bf -6.75} & \multirow{2}{*}{\color{teal}-15.45} \\ 
& & (-8.95, -3.22) & & (0.59, 8.23) & & {\color{teal}(-14.24, 0.86)} & \\\hline
\hline
\end{tabular}
}
\end{table}

\newpage
\clearpage
\begin{table}[ht!]
\centering
\caption{Estimated total wombling measure split by time intervals for the PM\textsubscript{2.5} analysis. They are accompanied by HPD intervals. Estimates that are not significant are marked in bold.}\label{tab:wmbl-timesplit-pm25}
\resizebox{\linewidth}{!}{
\begin{tabular}{l|c|@{\extracolsep{130pt}}*{2}{c}@{}}
\hline\hline
\multirow{2}{*}{$\bGamma(\C^*)$} & \multirow{2}{*}{Time}  & June 6 -- 9 & Sep. 29 -- Oct. 2 \\ 
\cline{3-3}\cline{4-4}
& & Estimate & Estimate \\
\hline
\multirow{6}{*}{$\bpartial_\bs$} & \multirow{2}{*}{1--2} & -291.63 & -49.28 \\ 
& & (-326.07, -256.89) &  (-56.41, -42.40)\\
& \multirow{2}{*}{2--3} & -253.85  &  -35.56 \\ 
& & (-282.87, -222.63) &  (-39.02, -32.30)\\
& \multirow{2}{*}{3--4} & -9.34 & -14.41\\
& & (-12.76, -5.80) & (-16.02, -12.97)\\\hline 
\multirow{6}{*}{$\bpartial_\bs^2$} & \multirow{2}{*}{1--2} & -283.86 & -16.50\\ 
& & (-328.55, -232.16) & (-25.38, -6.52)\\
& \multirow{2}{*}{2--3} & -282.52 &  -8.37\\ 
& & (-316.84, -248.03) & (-12.64, -3.96)\\
& \multirow{2}{*}{3--4} & -25.28 & -4.31\\ 
& & (-33.33, -17.03) & (-7.65, -0.85)\\\hline
\multirow{6}{*}{$\partial_t$} & \multirow{2}{*}{1--2} & -265.87 & -57.01\\ 
& & (-289.47, -239.92) & (-63.03, -50.45)\\
& \multirow{2}{*}{2--3} & -506.81 & -73.18\\ 
& & (-582.19, -444.05) &  (-77.87, -68.36)\\
& \multirow{2}{*}{3--4} & -3.65 & -7.70\\
& & (-4.41, -2.87) & (-8.28, -7.15)\\\hline
\multirow{6}{*}{$\partial_t\bpartial_\bs$} & \multirow{2}{*}{1--2} & 8.21 & -8.98\\ 
& & (-9.09, 27.37) & (-13.46, -4.26)\\ 
& \multirow{2}{*}{2--3} & -284.20  & -2.83\\ 
& & (-318.73, -242.77) & (-5.60, -0.14)\\ 
& \multirow{2}{*}{3--4} & -1.93  & 0.95\\ 
& & (-2.87, -0.97) & (0.33, 1.60)\\ \hline
\multirow{6}{*}{$\partial_t\bpartial_\bs^2$} & \multirow{2}{*}{1--2} & 107.04 & 10.24\\ 
& & (74.42, 139.73) & (1.81, 18.95)\\
& \multirow{2}{*}{2--3} & 145.63 & 27.88\\ 
& & (96.49, 195.38) &  (22.76, 33.35)\\
& \multirow{2}{*}{3--4} & 2.77 &  2.99\\
& & (0.28, 5.10 ) & (1.17, 4.81)\\ \hline
\multirow{6}{*}{$\partial_t^2$} & \multirow{2}{*}{1--2} & -182.88 & -30.78\\
& & (-217.80, -144.68) & (-39.36, -22.05)\\
& \multirow{2}{*}{2--3} & -1928.07 & -58.78\\ 
& & (-2037.92, -1790.01) & (-67.17, -51.10)\\
& \multirow{2}{*}{3--4} & -0.14 & -0.58\\ 
& & (-0.55, 0.34) & (-1.11, -0.05)\\\hline
\multirow{6}{*}{$\partial_t^2\bpartial_\bs$} & \multirow{2}{*}{1--2} & 126.95 & 28.75\\ 
& & (95.55, 153.08) & (21.03, 36.31)\\
& \multirow{2}{*}{2--3} & 302.13 & 63.95\\ 
& & (206.92, 434.72) & (56.78, 71.52)\\
& \multirow{2}{*}{3--4} & 1.11 &  3.84\\ 
& & (0.56, 1.60) &  (3.01, 4.70)\\\hline
\multirow{6}{*}{$\partial_t^2\bpartial_\bs^2$} & \multirow{2}{*}{1--2} & 86.33 & 10.67\\ 
& & (36.98, 133.26) & (-4.09, 26.08)\\
& \multirow{2}{*}{2--3} & 316.79 &  -5.90\\ 
& & (205.67, 424.09) & (-15.84, 4.17)\\
& \multirow{2}{*}{3--4} & 1.29 &  1.80\\ 
& & (-0.07, 2.44) & (0.14, 3.87)\\\hline
\hline
\end{tabular}
}
\end{table}

\newpage
\clearpage
\begin{table}[ht!]
\centering
\caption{Estimated total wombling measures ($\times 10^2$) split by time intervals for the EEG analysis. They are accompanied by HPD intervals. Estimates that are not significant are in bold.}\label{tab:wmbl-timesplit-eeg-1}
\resizebox{\linewidth}{!}{
\begin{tabular}{l|c|@{\extracolsep{130pt}}*{2}{c}@{}}
\hline\hline
\multirow{2}{*}{$\bGamma(\C^*)$} & \multirow{2}{*}{Time}  & Alcoholic & Control \\ 
\cline{3-3}\cline{4-4}
& & Estimate & Estimate \\
\hline
\multirow{10}{*}{$\bpartial_\bs$} & \multirow{2}{*}{1--2} & 180.29 & 8.35 \\ 
& & (160.08, 201.04) &  (-19.24, 34.95)\\
& \multirow{2}{*}{2--3} & 30.97  &  309.90 \\ 
& & (13.46, 51.05) &  (282.57, 338.02)\\
& \multirow{2}{*}{3--4} & -116.14 &  369.72\\
& & (-133.00, -98.24) & (342.11, 399.08)\\
& \multirow{2}{*}{4--5} & -86.06 &  307.01\\
& & (-103.93, -68.21) & (282.08, 331.66)\\\
& \multirow{2}{*}{5--6} & -79.70 &  258.40\\
& & (-97.02, -62.00) & (234.22, 281.08)\\\hline
\multirow{10}{*}{$\bpartial_\bs^2$} & \multirow{2}{*}{1--2} & -617.55 & -42.90\\ 
& & (-750.63, -477.67) & (-228.8, 155.70)\\
& \multirow{2}{*}{2--3} & -287.54 &  -565.06\\ 
& & (-415.89, -172.75) & (-753.63, -374.55)\\
& \multirow{2}{*}{3--4} & -104.61 & -755.23\\ 
& & (-214.16, 21.59) & (-943.36, -567.42)\\
& \multirow{2}{*}{4--5} & 34.67 & -471.44\\ 
& & (-62.01, 124.43) & (-650.69, -279.69)\\
& \multirow{2}{*}{5--6} & 27.33 & -707.47\\ 
& & (-84.93, 145.26) & (-894.38, -530.20)\\\hline
\multirow{6}{*}{$\partial_t$} & \multirow{2}{*}{1--2} & 1.26& -0.31\\ 
& & (0.71, 1.81) & (-1.76, 1.27)\\
& \multirow{2}{*}{2--3} &  2.29 & -7.64\\ 
& & (0.84, 3.64 ) &  (-8.45, -6.83)\\
& \multirow{2}{*}{3--4} & -1.40 & 1.11\\
& & (-1.87, -0.94) & (0.46, 1.74)\\
& \multirow{2}{*}{4--5} & -1.18 & 0.05\\
& & (-2.47, 0.10) & (-0.02, 0.12)\\
& \multirow{2}{*}{5--6} & -0.14 & -0.31\\
& & (-0.36, 0.07) & (-0.44, -0.18)\\\hline
\multirow{6}{*}{$\partial_t\bpartial_\bs$} & \multirow{2}{*}{1--2} & -5.26 &  -32.27\\ 
& & (-6.99, -3.71) & (-36.45, -27.97)\\ 
& \multirow{2}{*}{2--3} & 1.20  & -21.63\\ 
& & (-2.83, 5.16) & ( -23.80, -19.38)\\ 
& \multirow{2}{*}{3--4} & -1.83  & -6.54\\ 
& & (-3.31, -0.49) & (-8.47, -4.58)\\
& \multirow{2}{*}{4--5} & -7.66  & 0.54\\ 
& & (-10.74, -4.39) & (0.35, 0.75)\\
& \multirow{2}{*}{5--6} & -0.05  & -1.60\\ 
& & (-0.64, 0.61) & (-2.04, -1.19)\\\hline
\hline
\end{tabular}
}
\end{table}

\newpage
\clearpage
\begin{table}[ht!]
\centering
\caption{Estimated total wombling measures ($\times 10^2$) split by time intervals for the EEG analysis. They are accompanied by HPD intervals. Estimates that are not significant are in bold.}\label{tab:wmbl-timesplit-eeg-2}
\resizebox{\linewidth}{!}{
\begin{tabular}{l|c|@{\extracolsep{130pt}}*{2}{c}@{}}
\hline\hline
\multirow{2}{*}{$\bGamma(\C^*)$} & \multirow{2}{*}{Time}  & Alcoholic & Control \\ 
\cline{3-3}\cline{4-4}
& & Estimate & Estimate \\
\hline
\multirow{6}{*}{$\partial_t\bpartial_\bs^2$} & \multirow{2}{*}{1--2} & 2.88 & -65.39\\ 
& & (-10.51, 17.30)  & (-105.14, -29.25)\\
& \multirow{2}{*}{2--3} & -1.56 & 8.12\\ 
& & (-29.08,  27.45) &  (-19.80, 33.41)\\
& \multirow{2}{*}{3--4} & 14.39 &  -11.82\\
& &  (2.64, 26.79) & (-34.85, 9.23)\\
& \multirow{2}{*}{4--5} & -0.68 & 2.52\\ 
& & (-23.07, 22.79) & (0.18, 4.76)\\
& \multirow{2}{*}{5--6} & -0.03 &  -4.34\\
& & (-5.26, 4.66) & (-8.61, 0.35)\\\hline
\multirow{6}{*}{$\partial_t^2$} & \multirow{2}{*}{1--2} & 0.61 & 5.75\\
& & (0.51, 0.73) & (4.88, 6.72)\\
& \multirow{2}{*}{2--3} & 1.02 & -1.54\\ 
& & (0.56, 1.48) & (-1.67, -1.40)\\
& \multirow{2}{*}{3--4} & -0.57 & -0.95\\ 
& & (-0.63, -0.49) & (-1.02, -0.87)\\
& \multirow{2}{*}{4--5} & -1.14 & 0.00\\ 
& & (-1.71, -0.53) & (-0.00, 0.00)\\
& \multirow{2}{*}{5--6} & -0.03 & -0.04\\ 
& & (-0.06, -0.01) & (-0.06, -0.02)\\\hline
\multirow{6}{*}{$\partial_t^2\bpartial_\bs$} & \multirow{2}{*}{1--2} & -2.46 & 1.65\\ 
& & (-2.79, -2.14) & (-0.70, 4.11)\\
& \multirow{2}{*}{2--3} & 3.16 & -7.54\\ 
& & (1.05, 5.21) & (-8.67, -6.61)\\
& \multirow{2}{*}{3--4} & 1.39 &  -0.92\\ 
& & (1.05, 1.73) &  (-1.46, -0.31)\\
& \multirow{2}{*}{4--5} & -3.56 &  0.01\\ 
& & (-6.20, -1.11) &  (0.00, 0.01)\\
& \multirow{2}{*}{5--6} & -0.01 &  -0.04\\ 
& & ( -0.09, 0.05) &  (-0.06, -0.02)\\\hline
\multirow{6}{*}{$\partial_t^2\bpartial_\bs^2$} & \multirow{2}{*}{1--2} & 1.52 & -23.49\\ 
& & (-0.73, 4.14) & (-39.70, -6.29)\\
& \multirow{2}{*}{2--3} & -19.23 &  4.23\\ 
& & (-32.45, -5.83) & (-3.34, 12.63)\\
& \multirow{2}{*}{3--4} & 1.97 &  6.25\\ 
& & (-0.16, 4.29) & (1.67, 10.94)\\
& \multirow{2}{*}{4--5} & 3.44 &  -0.02\\ 
& & (-9.33, 18.73) & (-0.09, 0.04)\\
& \multirow{2}{*}{5--6} & 0.11 &  0.13\\ 
& & (-0.37, 0.60) & (-0.02, 0.31)\\\hline
\hline
\end{tabular}
}
\end{table}

\section{Figures}\label{sec:figs}

\subsection{Spatiotemporal Curvature}

\subsubsection{Pattern 2}

\begin{figure}[H]
	\centering
	\includegraphics[scale=0.78]{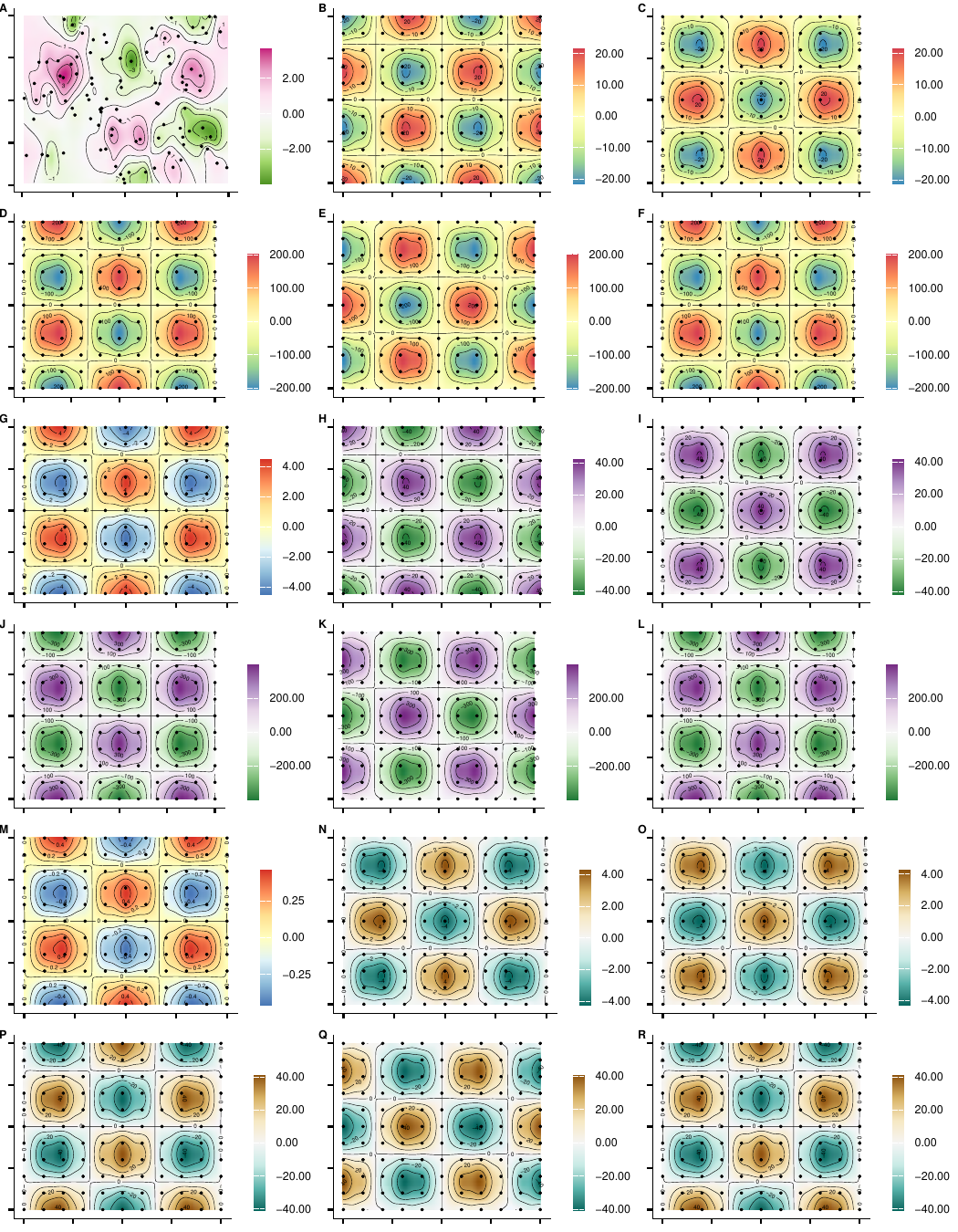}
	\caption{True spatiotemporal process, derivatives and curvature for $t=3$. (A) response from Pattern 2, (B) spatial gradient along $x$, (C) along $y$, (D) spatial curvature along $x$, (E) along $x$ and $y$ and (F) along $y$, (G--L) temporal gradients for processes in (A--F), (M--R) temporal curvature for processes in (A--F).}
	\label{fig:spt-grad-curv-2-3}
\end{figure}

\begin{figure}[H]
	\centering
	\includegraphics[scale=0.92]{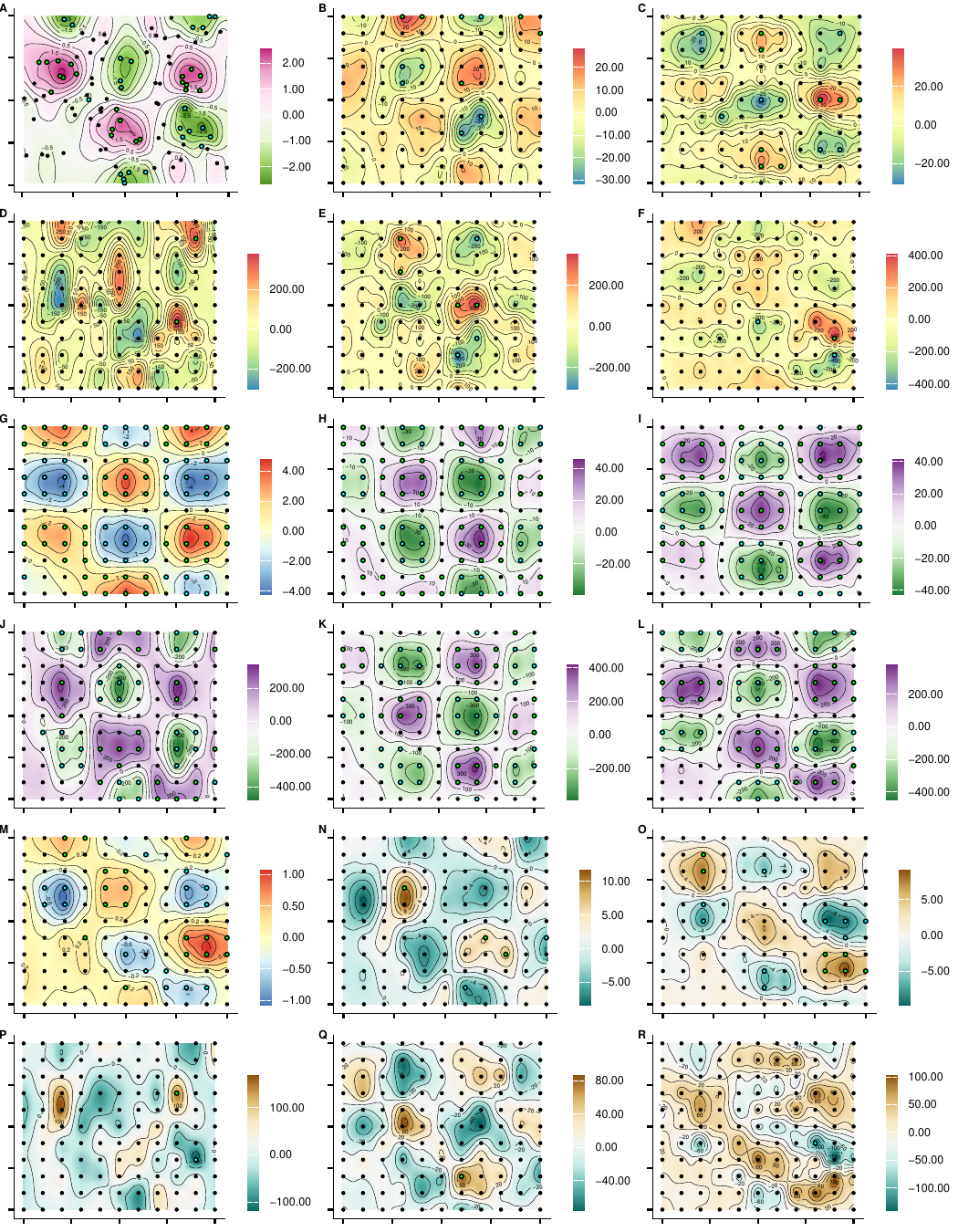}
	\caption{Estimated spatiotemporal process, derivatives and curvature for $t=3$. (A) response from Pattern 2, (B) spatial gradient along $x$, (C) along $y$, (D) spatial curvature along $x$, (E) along $x$ and $y$ and (F) along $y$, (G--L) temporal gradients for processes in (A--F), (M--R) temporal curvature for processes in (A--F).}
	\label{fig:est-spt-grad-curv-2-3}
\end{figure}

\subsection{Differential Geometric Operators}\label{figsec:diff-geo}

\begin{figure}[H]
	\centering
	\includegraphics[scale=0.92]{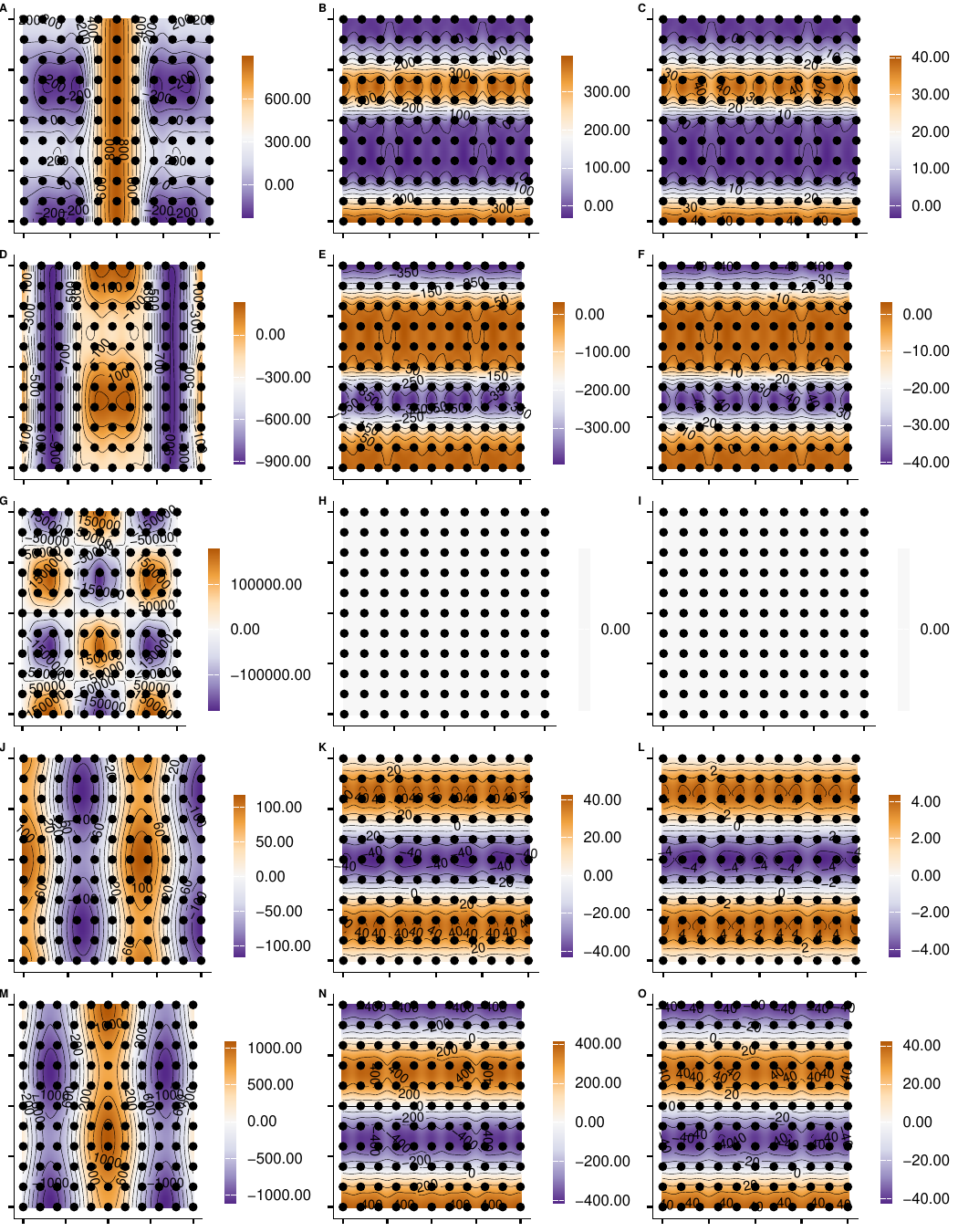}
	\caption{True spatiotemporal differential geometric processes for $t=3$. (A--C) Eigen-value 1, its temporal derivative and curvature (D--F) Eigen-value 2 its temporal derivative and curvature (G--I) Gaussian curvature, its temporal derivative and curvature (J--L) Divergence (see pp. 10--11) (M--O) Laplacian (see pp. 10--11).}\label{fig:diff-ge-true-1-3}
\end{figure}
\begin{figure}[H]
	\centering
	\includegraphics[scale=0.92]{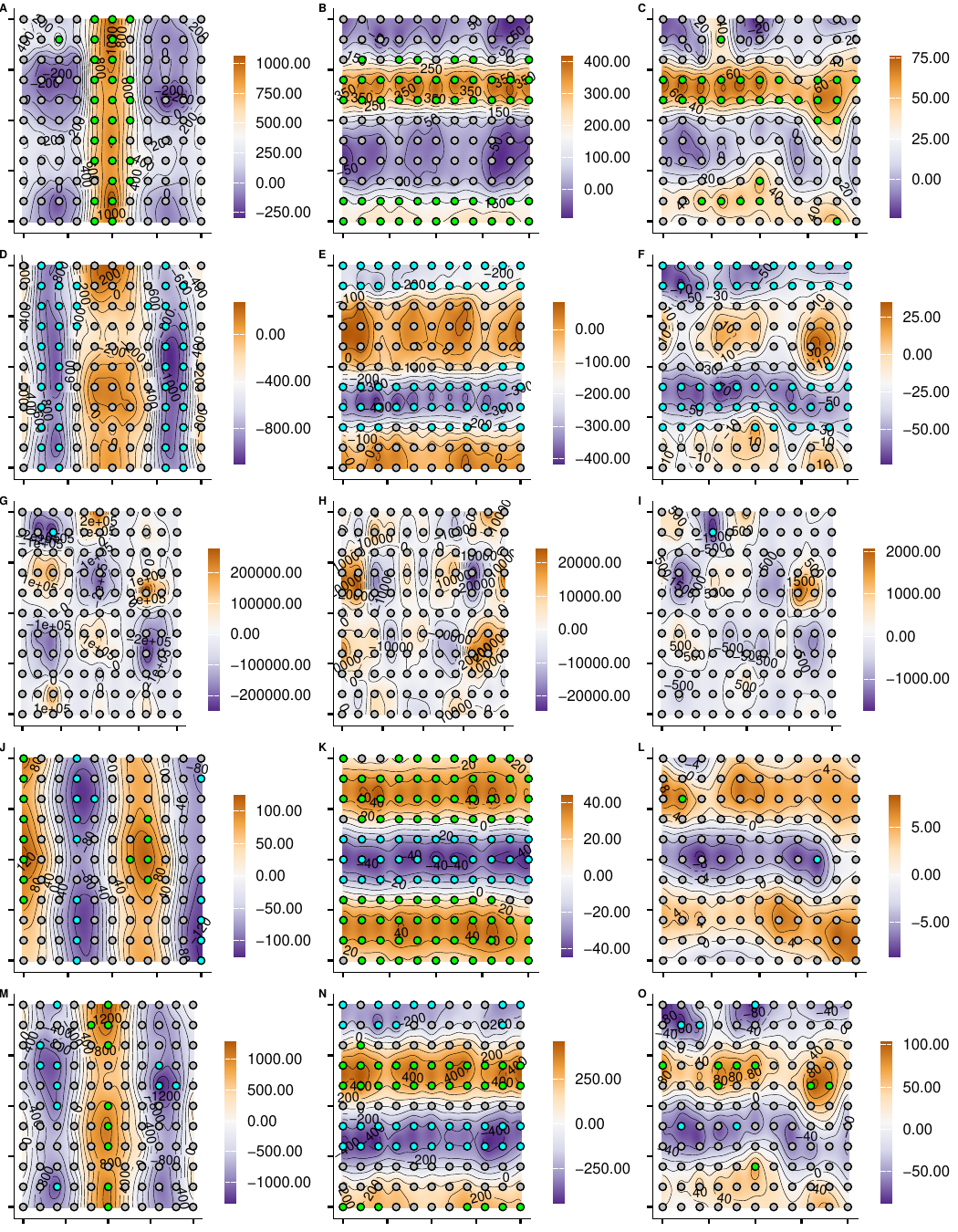}
	\caption{Estimated spatiotemporal differential geometric processes for $t=3$. (A--C) Eigen-value 1, its temporal derivative and curvature (D--F) Eigen-value 2 its temporal derivative and curvature (G--I) Gaussian curvature, its temporal derivative and curvature (J--L) Divergence (see pp. 10--11) (M--O) Laplacian (see pp. 10--11). Significant grid locations are in color---\emph{green} indicates significant positive and \emph{cyan} indicates significant negative values.}
	\label{fig:diff-ge-est-1-3}
\end{figure}

\subsection{Spatiotemporal Surface Wombling: Significance}
\begin{figure}[H]
\centering
\begin{subfigure}{\linewidth}
\includegraphics[scale = 0.19]{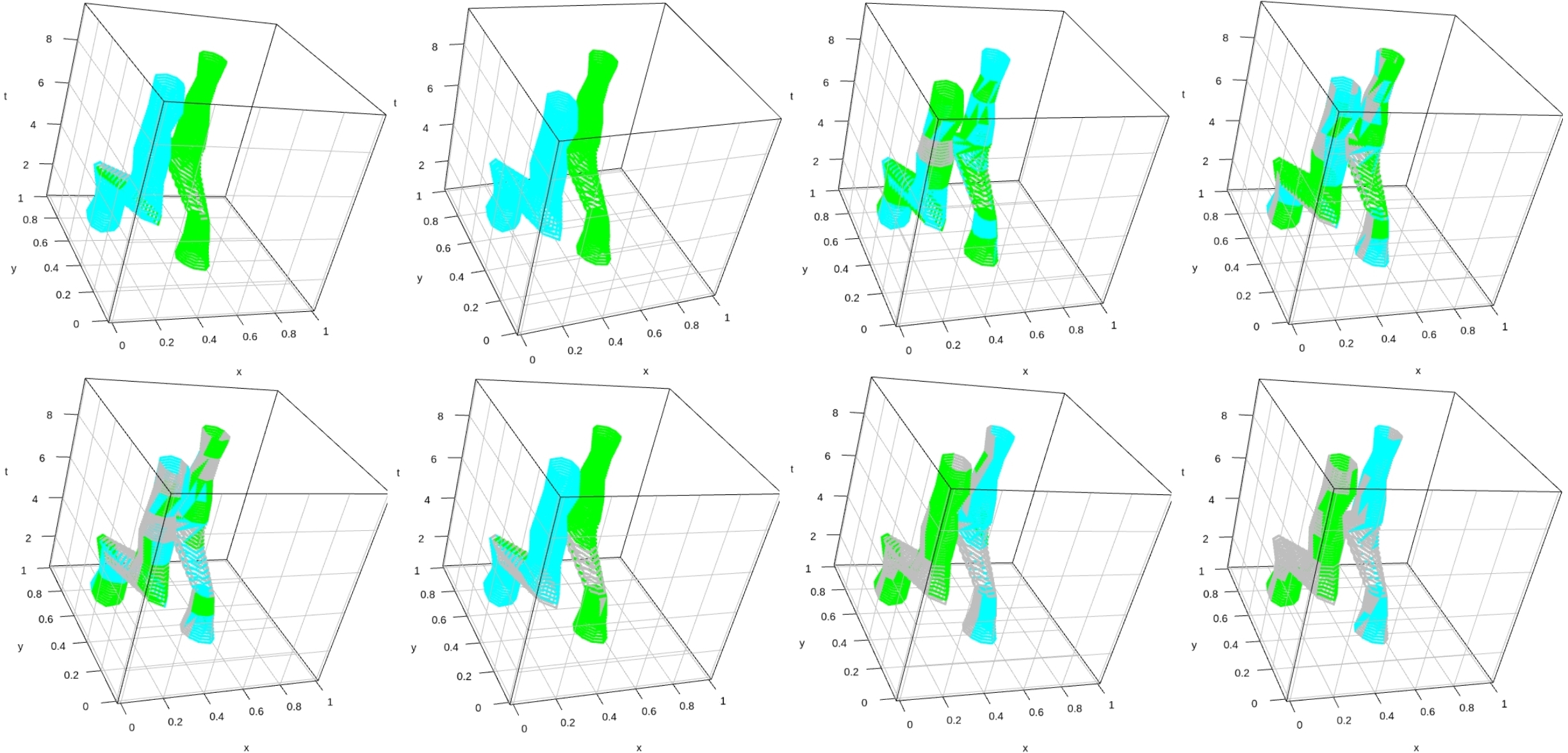}
\caption{Wombling Surfaces $A$ and $B$}\label{fig:sig-wombl-A-B}
\end{subfigure}

\vspace*{0.5in}

\begin{subfigure}{\linewidth}
\includegraphics[scale = 0.23]{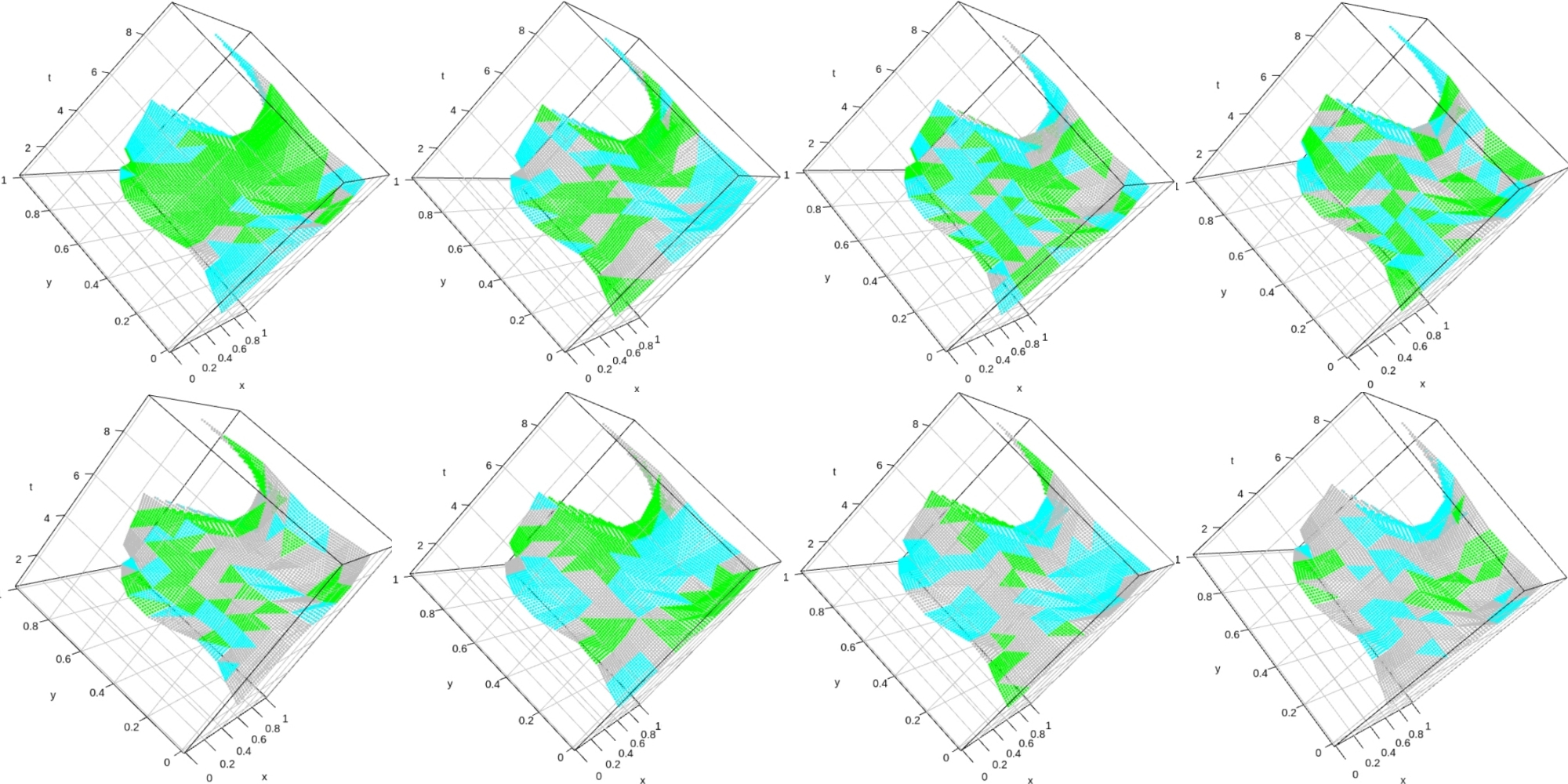}
\caption{Wombling Surface $C$}\label{fig:sig-wombl-C}
\end{subfigure}
\caption{Significance of spatiotemporal wombling measures for wombling surfaces (a) $A, B$ and (b) $C$ (see Section~5.3, fig.~4). (Row 1: left to right) spatial gradient, spatial curvature, temporal gradient, spatial-temporal gradient, (Row 2: left to right) temporal gradient in spatial curvature, temporal curvature, temporal curvature in spatial gradient, spatial-temporal curvature. Regions marked \texttt{green} (\texttt{cyan}) indicate positive (negative) significance while \texttt{grey} indicates no significance.}
\end{figure}

\subsection{Applications: Neuroimaging}
\begin{figure}[H]
\centering
\includegraphics[width=0.7\linewidth]{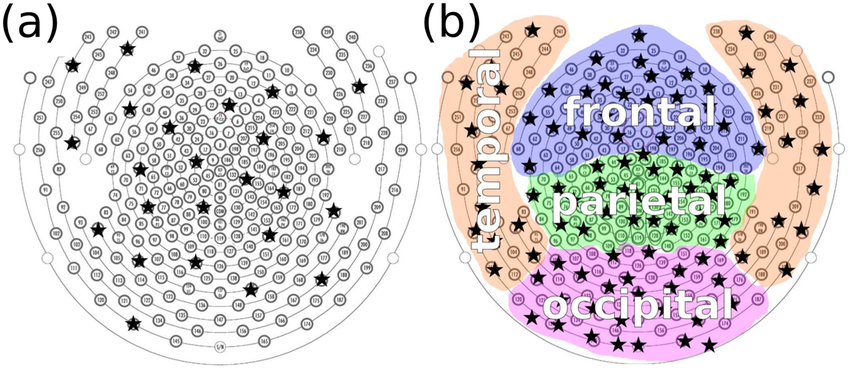}
\caption{Scalp regions overlaid over a 256 electrode EEG cap for recording measurements. Figure (a) shows the electrode markings/locations and (b) shows the regions.}
\label{fig:eeg-cap}
\end{figure}

\bibliographystyle{apalike}

\bibliography{sptwombling}

@incollection{gneiting2010handbook,
	author = {Gneiting, Tillman and Guttorp, Peter},
	title = {Continuous {P}arameter {S}patio-{T}emporal {P}rocesses},
	booktitle = {Handbook of {S}patial {S}tatistics: {E}dited by {A}lan {E}. {G}elfand, {P}eter {D}iggle, {P}eter {G}uttorp and {M}ontserrat {F}uentes},
	publisher = {Taylor \& Francis},
	year = {2010}
}

@article{mayr1944wallace,
	title={Wallace's {L}ine in the {L}ight of {R}ecent {Z}oogeographic {S}tudies},
	author={Mayr, Ernst},
	journal={The Quarterly Review of Biology},
	volume={19},
	number={1},
	pages={1--14},
	year={1944},
	publisher={Williams and Wilkins}
}

@article{stein2005space,
	title = {Space–{Time} {Covariance} {Functions}},
	volume = {100},
	issn = {0162-1459},
	doi = {10.1198/016214504000000854},
	number = {469},
	journal = {Journal of the American Statistical Association},
	author = {Stein, Michael L},
	month = mar,
	year = {2005},
	keywords = {Markov process, Matérn model, Random processes on the sphere, Spatial isotropy, Spectral density},
	pages = {310--321},
}

@article{gneiting2002nonseparable,
	title={Nonseparable, {S}tationary {C}ovariance {F}unctions for {S}pace--{T}ime {D}ata},
	author={Gneiting, Tilmann},
	journal={Journal of the American Statistical Association},
	volume={97},
	number={458},
	pages={590--600},
	year={2002},
	publisher={Taylor \& Francis}
}

@article{fuentes2008class,
	title={A {C}lass of {N}onseparable and {N}onstationary {S}patial {T}emporal {C}ovariance {F}unctions},
	author={Fuentes, Montserrat and Chen, Li and Davis, Jerry M},
	journal={Environmetrics: The official journal of the International Environmetrics Society},
	volume={19},
	number={5},
	pages={487--507},
	year={2008},
	publisher={Wiley Online Library}
}

@book{bochner2005harmonic,
	title={Harmonic {A}nalysis and the {T}heory of {P}robability},
	author={Bochner, Salomon},
	year={2005},
	publisher={Courier Corporation}
}

@book{graham2018kronecker,
	title = {Kronecker {Products} and {Matrix} {Calculus} with {Applications}},
	isbn = {978-0-486-83120-6},
	abstract = {Enhanced by many worked examples — as well as problems and solutions — this in-depth text discusses the Kronecker matrix product. Named after a 19th-century German mathematician, Leopold Kronecker, the Kronecker product is an increasingly important and useful matrix operation and an area of matrix calculus with numerous applications. A great deal of information previously only available in specialized and hard-to-find texts and papers is presented here at an undergraduate level.After the preliminaries of an introductory chapter, the treatment proceeds to examinations of the Kronecker product and its applications. An introduction to matrix calculus is followed by further development of the subject, including an application of Kronecker products. Final chapters explore the derivative of a matrix and some applications of matrix calculus. The text concludes with solutions to the problems that appear at the end of each chapter and helpful tables of formulae and derivatives.},
	language = {en},
	publisher = {Courier Dover Publications},
	author = {Graham, Alexander},
	month = jun,
	year = {2018},
	note = {Google-Books-ID: XMJeDwAAQBAJ},
	keywords = {Mathematics / Applied, Mathematics / Matrices},
}

@article{vetter1970derivative,
	title = {Derivative {O}perations on {M}atrices},
	volume = {15},
	issn = {1558-2523},
	abstract = {The structure of a matrix derivative on a matrix valued function is defined. Matrix product and chain rules are developed which provide significant simplifications for obtaining derivatives of compound matrix structures, and some closed-form structures for Taylor's expansions of a matrix in terms of derivatives and elements of a second matrix are given.},
	number = {2},
	journal = {IEEE Transactions on Automatic Control},
	author = {Vetter, W.},
	month = apr,
	year = {1970},
	note = {Conference Name: IEEE Transactions on Automatic Control},
	keywords = {Automatic control, Control systems, Councils, Delay systems, Equations, History, Linear systems, Matrices, Nonlinear control systems, Optimal control},
	pages = {241--244},
}

@article{liang2009bayesian,
	title={Bayesian {W}ombling for {S}patial {P}oint {P}rocesses},
	author={Liang, Shengde and Banerjee, Sudipto and Carlin, Bradley P},
	journal={Biometrics},
	volume={65},
	number={4},
	pages={1243--1253},
	year={2009},
	publisher={Wiley Online Library}
}

@article{halder2024bayesian,
	title={Bayesian {M}odeling with {S}patial {C}urvature {P}rocesses},
	author={Halder, Aritra and Banerjee, Sudipto and Dey, Dipak K},
	journal={Journal of the American Statistical Association},
	volume={119},
	number={546},
	pages={1155--1167},
	year={2024},
	publisher={Taylor \& Francis}
}

@article{alaimo2023bayesian,
	title={Bayesian {H}ierarchical {M}odeling and {A}nalysis for {A}ctigraph {D}ata from {W}earable {D}evices},
	author={Alaimo Di Loro, Pierfrancesco and Mingione, Marco and Lipsitt, Jonah and Batteate, Christina M and Jerrett, Michael and Banerjee, Sudipto},
	journal={The Annals of Applied Statistics},
	volume={17},
	number={4},
	pages={2865--2886},
	year={2023},
	publisher={Institute of Mathematical Statistics}
}

@book{spivak1979comprehensive,
	title={A {C}omprehensive {I}ntroduction {T}o {D}ifferential {G}eometry},
	author={Spivak, M.},
	number={v. 1--5},
	series={A Comprehensive Introduction to Differential Geometry},
	year={1979},
	publisher={Publish or Perish Inc.}
}

@article{quick2013modeling,
	title={Modeling {T}emporal {G}radients in {R}egionally {A}ggregated {C}alifornia {A}sthma {H}ospitalization {D}ata},
	author={Quick, Harrison and Banerjee, Sudipto and Carlin, Bradley P},
	journal={The Annals of Applied Statistics},
	volume={7},
	number={1},
	pages={154},
	year={2013},
	publisher={NIH Public Access}
}

@book{cressie2015statistics,
	title={Statistics for {S}patio-temporal {D}ata},
	author={Cressie, Noel and Wikle, Christopher K},
	year={2015},
	publisher={John Wiley \& Sons}
}

@article{greenwood1984unified,
	title={A {U}nified {T}heory of {S}urface {R}oughness},
	author={Greenwood, JA},
	journal={Proceedings of the Royal Society of London. A. Mathematical and Physical Sciences},
	volume={393},
	number={1804},
	pages={133--157},
	year={1984},
	publisher={The Royal Society London}
}

@article{banerjee2006bayesian,
	title={Bayesian {W}ombling: {C}urvilinear {G}radient {A}ssessment {U}nder {S}patial {P}rocess {M}odels},
	author={Banerjee, Sudipto and Gelfand, Alan E},
	journal={Journal of the American Statistical Association},
	volume={101},
	number={476},
	pages={1487--1501},
	year={2006},
	publisher={Taylor \& Francis}
}

@book{adler1981geometry,
	title={The {G}eometry of {R}andom {F}ields},
	author={Adler, Robert J},
	year={1981},
	publisher={SIAM}
}

@article{banerjee2003directional,
	title={Directional {R}ates of {C}hange {U}nder {S}patial {P}rocess {M}odels},
	author={Banerjee, Sudipto and Gelfand, Alan E and Sirmans, CF},
	journal={Journal of the American Statistical Association},
	volume={98},
	number={464},
	pages={946--954},
	year={2003},
	publisher={Taylor \& Francis}
}

@article{quick2015bayesian,
	title={Bayesian {M}odeling and {A}nalysis for {G}radients in {S}patiotemporal {P}rocesses},
	author={Quick, Harrison and Banerjee, Sudipto and Carlin, Bradley P},
	journal={Biometrics},
	volume={71},
	number={3},
	pages={575--584},
	year={2015},
	publisher={Wiley Online Library}
}

@article{womble1951differential,
	title={Differential {S}ystematics},
	author={Womble, William H},
	journal={Science},
	volume={114},
	number={2961},
	pages={315--322},
	year={1951},
	publisher={JSTOR}
}

@article{kent1989continuity,
	title={Continuity {P}roperties for {R}andom {F}ields},
	author={Kent, John T},
	journal={The Annals of Probability},
	pages={1432--1440},
	year={1989},
	publisher={JSTOR}
}

@article{banerjee2003smoothness,
	title={On {S}moothness {P}roperties of {S}patial {P}rocesses},
	author={Banerjee, S and Gelfand, AE},
	journal={Journal of Multivariate Analysis},
	volume={84},
	number={1},
	pages={85--100},
	year={2003},
	publisher={Elsevier}
}

@article{wang2018process,
	title = {Process {M}odeling for {S}lope and {A}spect with {A}pplication to {E}levation {D}ata {M}aps},
	volume = {27},
	issn = {1863-8260},
	doi = {10.1007/s11749-018-0619-x},
	abstract = {Learning about the behavior of land surface gradients and, in particular, slope and aspect over a region from a dataset of levels obtained at a set of (possibly) irregularly spaced locations assumes importance in a variety of applications. A primary example considers digital terrain models for exploring roughness of land surfaces. In a geographic information system software package, gradient information is typically extracted from a digital elevation/terrain model (DEM/DTM), which usually presents the topography of the surface in terms of a set of pre-specified regular grid points, each with an assigned elevation value. That is, the DEM arises from preprocessing of an originally irregularly spaced set of elevation observations. Slope in one dimension is defined as “rise over run”. However, in two dimensions, at a given location, there is a rise over run in every direction. Then, the slope at the location is customarily taken as the maximum slope over all directions. It can be expressed as an angle whose tangent is the ratio of the rise to the run at the maximum. In practice, at each point of the grid, rise/run is obtained through comparison of the elevation at the point to that of a set of neighboring grid points, usually the eight compass neighbors, to find the maximum. Aspect is defined as the angular direction of maximum slope over the compass neighbors. We present a fully model-based approach for inference regarding slope and aspect. In particular, we define process versions of the slope and aspect over a continuous spatial domain. Modeling slopes takes us to directional derivative processes; modeling angles takes us to spatial processes for angular data. Using a stationary Gaussian process model for the elevation data, we obtain distribution theory for slope and associated aspect as well as covariance structure. Hierarchical models emerge; fitting in a Bayesian framework enables attachment of uncertainty. We illustrate with both a simulation example and a real data example using elevations from a collection of monitoring station locations in South Africa.},
	language = {en},
	number = {4},
	journal = {TEST},
	author = {Wang, Fangpo and Bhattacharya, Anirban and Gelfand, Alan E.},
	month = dec,
	year = {2018},
	keywords = {60G15, 62F15, 62H11, 62M30, Directional derivative process, Finite difference process, G2P12, Gaussian process, Hierarchical model, Kriging, Markov chain Monte Carlo, Projected process},
	pages = {749--772},
}

@article{marra2021orographic,
	title={Orographic {E}ffect on {E}xtreme {P}recipitation {S}tatistics {P}eaks at {H}ourly {T}ime {S}cales},
	author={Marra, Francesco and Armon, Moshe and Borga, Marco and Morin, Efrat},
	journal={Geophysical Research Letters},
	volume={48},
	number={5},
	pages={e2020GL091498},
	year={2021},
	publisher={Wiley Online Library}
}

@article{chen2023canadian,
	title={Canadian {W}ildfire {S}moke and {A}sthma {S}yndrome {E}mergency {D}epartment {V}isits in {N}ew {Y}ork {C}ity},
	author={Chen, Kai and Ma, Yiqun and Bell, Michelle L and Yang, Wan},
	journal={JAMA},
	volume={330},
	number={14},
	pages={1385--1387},
	year={2023},
	publisher={American Medical Association}
}

@misc{misc_eeg_database_121,
	author       = {Begleiter,Henri},
	title        = {{EEG Database}},
	year         = {1999},
	howpublished = {UCI Machine Learning Repository}
}

@article{coda,
	title = {{CODA}: {C}onvergence {D}iagnosis and {O}utput {A}nalysis for {MCMC}},
	author = {Martyn Plummer and Nicky Best and Kate Cowles and Karen
	Vines},
	journal = {R News},
	year = {2006},
	volume = {6},
	number = {1},
	pages = {7--11},
	pdf = {https://www.r-project.org/doc/Rnews/Rnews_2006-1.pdf},
}

@article{fitzpatrick2010ecological,
	title={Ecological {B}oundary {D}etection {U}sing {B}ayesian {A}real {W}ombling},
	author={Fitzpatrick, Matthew C and Preisser, Evan L and Porter, Adam and Elkinton, Joseph and Waller, Lance A and Carlin, Bradley P and Ellison, Aaron M},
	journal={Ecology},
	volume={91},
	number={12},
	pages={3448--3455},
	year={2010},
	publisher={Wiley Online Library}
}

@article{qu2021boundary,
	title={Boundary {D}etection {U}sing a {B}ayesian {H}ierarchical {M}odel for {M}ultiscale {S}patial {D}ata},
	author={Qu, Kai and Bradley, Jonathan R and Niu, Xufeng},
	journal={Technometrics},
	volume={63},
	number={1},
	pages={64--76},
	year={2021},
	publisher={Taylor \& Francis}
}

@InProceedings{gleyze2001wombling,
	author="Gleyze, J. F. and Bacro, J. N. and Allard, D.",
	editor="Monestiez, Pascal and Allard, Denis and Froidevaux, Roland",
	title="{D}etecting {R}egions of {A}brupt {C}hange: {W}ombling {P}rocedure and {S}tatistical {S}ignificance",
	booktitle="geoENV III --- {G}eostatistics {F}or {E}nvironmental {A}pplications",
	year="2001",
	publisher="Springer Netherlands",
	address="Dordrecht",
	pages="311--322",
	abstract="When dealing with a spatial variable, the zones of abrupt changes are often of interest. Highlighting the transitional areas of a process, specially when too little is known for building a model, should be part of the data exploratory analysis toolbox. An empirical method for doing this, the Wombling, named after Womble (1951), has been proposed by the biology community. This method has some drawbacks: the characterization is systematic and a lot of zones of abrupt change are obtained, independently of the regularity of the data. In this paper, the statistical significance of these detected zones is addressed and a test procedure is proposed. Its limits are then analyzed, leading to propositions for future developments.",
	isbn="978-94-010-0810-5"
}

@article{mardia1996kriging,
	title={Kriging and {S}plines with {D}erivative {I}nformation},
	author={Mardia, KV and Kent, JT and Goodall, CR and Little, JA},
	journal={Biometrika},
	volume={83},
	number={1},
	pages={207--221},
	year={1996},
	publisher={Oxford University Press}
}

@article{majumdar2006gradients,
	title={Gradients in {S}patial {R}esponse {S}urfaces with {A}pplication to {U}rban {L}and {V}alues},
	author={Majumdar, Anandamayee and Munneke, Henry J and Gelfand, Alan E and Banerjee, Sudipto and Sirmans, CF},
	journal={Journal of Business \& Economic Statistics},
	volume={24},
	number={1},
	pages={77--90},
	year={2006},
	publisher={Taylor \& Francis}
}

@article{terres2015using,
	title={Using {S}patial {G}radient {A}nalysis to {C}larify {S}pecies {D}istributions with {A}pplication to {S}outh {A}frican {P}rotea},
	author={Terres, Maria A and Gelfand, Alan E},
	journal={Journal of Geographical Systems},
	volume={17},
	number={3},
	pages={227--247},
	year={2015},
	publisher={Springer}
}

@article{wang2016estimating,
	title={Estimating {S}hape {C}onstrained {F}unctions {U}sing {G}aussian {P}rocesses},
	author={Wang, Xiaojing and Berger, James O},
	journal={SIAM/ASA Journal on Uncertainty Quantification},
	volume={4},
	number={1},
	pages={1--25},
	year={2016},
	publisher={SIAM}
}

@article{terres2016spatial,
	title={Spatial {P}rocess {G}radients and {T}heir {U}se in {S}ensitivity {A}nalysis for {E}nvironmental {P}rocesses},
	volume = {168},
	number = {1},
	issn = {0378-3758},
	doi = {10.1016/j.jspi.2015.07.003},
	abstract = {This paper develops methodology for local sensitivity analysis based on directional derivatives associated with spatial processes. Formal gradient analysis for spatial processes was elaborated in previous papers, focusing on distribution theory for directional derivatives associated with a response variable assumed to follow a Gaussian process model. In the current work, these ideas are extended to additionally accommodate a continuous covariate whose directional derivatives are also of interest and to relate the behavior of the directional derivatives of the response surface to those of the covariate surface. It is of interest to assess whether, in some sense, the gradients of the response follow those of the explanatory variable. The joint Gaussian structure of all variables, including the directional derivatives, allows for explicit distribution theory and, hence, kriging across the spatial region using multivariate normal theory. Working within a Bayesian hierarchical modeling framework, posterior samples enable all gradient analysis to occur post model fitting. As a proof of concept, we show how our methodology can be applied to a standard geostatistical modeling setting using a simulation example. For a real data illustration, we work with point pattern data, deferring our gradient analysis to the intensity surface, adopting a log-Gaussian Cox process model. In particular, we relate elevation data to point patterns associated with several tree species in Duke Forest.},
	journal = {Journal of Statistical Planning and Inference},
	author = {Terres, Maria A. and Gelfand, Alan E.},
	month = jan,
	year = {2016},
	keywords = {Cauchy process, Directional derivative, Gaussian process, Log-Gaussian cox process, Matérn correlation function},
	pages = {106--119},
}

@article{heaton2014wombling,
	title={Wombling {A}nalysis of {C}hildhood {T}umor {R}ates in {F}lorida},
	author={Heaton, Matthew J},
	journal={Statistics and Public Policy},
	volume={1},
	number={1},
	pages={60--67},
	year={2014},
	publisher={Taylor \& Francis}
}

@article{banerjee2005boundary,
	title={Boundary {A}nalysis: {S}ignificance and {C}onstruction of {C}urvilinear {B}oundaries},
	author={Banerjee, Sudipto and Gelfand, Alan E},
	journal={Journal of the American Statistical Association. To appear},
	year={2005}
}

@book{rasmussen_gaussian_2005,
	title = {Gaussian {Processes} for {Machine} {Learning}},
	isbn = {978-0-262-25683-4},
	abstract = {A comprehensive and self-contained introduction to Gaussian processes, which provide a principled, practical, probabilistic approach to learning in kernel},
	language = {en},
	publisher = {The MIT Press},
	author = {Rasmussen, Carl Edward and Williams, Christopher K. I.},
	month = nov,
	year = {2005},
	doi = {10.7551/mitpress/3206.001.0001},
}

@book{stein1999interpolation,
	title={Interpolation of {S}patial {D}ata: {S}ome {T}heory for {K}riging},
	author={Stein, Michael L},
	year={1999},
	publisher={Springer Science \& Business Media}
}

@article{morris1993bayesian,
	title={Bayesian {D}esign and {A}nalysis of {C}omputer {E}xperiments: {U}se of {D}erivatives in {S}urface {P}rediction},
	author={Morris, Max D and Mitchell, Toby J and Ylvisaker, Donald},
	journal={Technometrics},
	volume={35},
	number={3},
	pages={243--255},
	year={1993},
	publisher={Taylor \& Francis}
}

@article{yu2023bayesian,
	title={Bayesian {I}nference for {S}tationary {P}oints in {G}aussian {P}rocess {R}egression {M}odels for {E}vent-{R}elated {P}otentials {A}nalysis},
	author={Yu, Cheng-Han and Li, Meng and Noe, Colin and Fischer-Baum, Simon and Vannucci, Marina},
	journal={Biometrics},
	volume={79},
	number={2},
	pages={629--641},
	year={2023},
	publisher={Wiley Online Library}, 
	abstract = {Stationary points embedded in the derivatives are often critical for a model to be interpretable and may be considered as key features of interest in many applications. We propose a semiparametric Bayesian model to efficiently infer the locations of stationary points of a nonparametric function, which also produces an estimate of the function. We use Gaussian processes as a flexible prior for the underlying function and impose derivative constraints to control the function's shape via conditioning. We develop an inferential strategy that intentionally restricts estimation to the case of at least one stationary point, bypassing possible mis‐specifications in the number of stationary points and avoiding the varying dimension problem that often brings in computational complexity. We illustrate the proposed methods using simulations and then apply the method to the estimation of event‐related potentials derived from electroencephalography (EEG) signals. We show how the proposed method automatically identifies characteristic components and their latencies at the individual level, which avoids the excessive averaging across subjects that is routinely done in the field to obtain smooth curves. By applying this approach to EEG data collected from younger and older adults during a speech perception task, we are able to demonstrate how the time course of speech perception processes changes with age.}
}

@article{liu_optimal_2026,
	title = {Optimal {Plug}-in {Gaussian} {Processes} for {Modeling} {Derivatives}},
	volume = {0},
	issn = {0162-1459},
	abstract = {Derivatives are a key nonparametric functional in wide-ranging applications where the rate of change of an unknown function is of interest. In the Bayesian paradigm, Gaussian processes (GPs) are routinely used as a flexible prior for unknown functions, and are arguably one of the most popular tools in many areas. However, little is known about the optimal modeling strategy and theoretical properties when using GPs for derivatives. In this article, we study a plug-in strategy by differentiating the posterior distribution with GP priors for derivatives of any order. This practically appealing plug-in GP method has been previously perceived as suboptimal and degraded, but this is not necessarily the case. We provide posterior contraction rates for plug-in GPs and establish that they achieve optimal rates simultaneously for all derivative orders. We show that the posterior measure of the regression function and its derivatives, with the same choice of hyperparameter that does not depend on the order of derivatives, converges at the minimax optimal rate up to a logarithmic factor for functions in certain classes. We analyze a data-driven hyperparameter tuning method based on empirical Bayes, and show that it satisfies the optimal rate condition while maintaining computational efficiency. This article to the best of our knowledge provides the first positive result for plug-in GPs in the context of inferring derivative functionals, and leads to a practically simple nonparametric Bayesian method with optimal and adaptive hyperparameter tuning for simultaneously estimating the regression function and its derivatives. Simulations show competitive finite sample performance of the plug-in GP method. A climate change application for analyzing the global sea-level rise is discussed. Supplementary materials for this article are available online, including a standardized description of the materials available for reproducing the work.},
	number = {0},
	urldate = {2026-07-02},
	journal = {Journal of the American Statistical Association},
	author = {Liu, Zejian and Li, Meng},
	month = mar,
	year = {2026},
	keywords = {Adaptive, Bayesian nonparametrics, Derivative estimation, Gaussian process regression, Plug-in property, Posterior contraction},
	pages = {1--12},
}

@Article{gabriel2011,
	author={Gabriel, Edith and Allard, Denis and Bacro, Jean-No{\"e}l},
	title={{E}stimating and {T}esting {Z}ones of {A}brupt {C}hange for {S}patial {D}ata},
	journal={Statistics and Computing},
	year={2011},
	month={Jan},
	day={01},
	volume={21},
	number={1},
	pages={107-120},
	abstract={We propose a method for detecting the zones where a variable irregularly sampled in the plane changes abruptly. Our general model is that under the null hypothesis the variable is the realisation of a stationary Gaussian process with constant expectation. The alternative is that the mean function presents abrupt changes. We define potential Zones of Abrupt Change (ZACs) by the points where the gradient, estimated under the null hypothesis, exceeds a determined threshold. We then design a global test to assess the global significance of the potential ZACs, an issue missing in all existing methods. The theory that links the threshold and the global level is based on asymptotic distributions of excursion sets of non-stationary $\chi$2 fields for which we provide new results. The method is evaluated by a simulation study and applied to a soil data set in the context of precision agriculture.},
	issn={1573-1375},
}

@article{guindani2006smoothness,
	abstract = {When analyzing point-referenced spatial data, interest will be in the first order or global behavior of associated surfaces. However, in order to better understand these surfaces, we may also be interested in second order or local behavior, e.g., in the rate of change of a spatial surface at a given location in a given direction. In a Bayesian parametric setting, such smoothness analysis has been pursued by Banerjee and Gelfand (2003) and Banerjee et al. (2003). We study continuity and differentiability of random surfaces in the Bayesian nonparametric setting proposed by Gelfand et al. (2005), which is based on the formulation of a spatial Dirichlet process (SDP). We provide conditions under which the random surfaces sampled from a SDP are smooth. We also obtain complete distributional theory for the directional finite difference and derivative processes associated with those random surfaces. We present inference under a Bayesian framework and illustrate our methodology with a simulated dataset.},
	author = {Guindani, Michele and Gelfand, Alan E.},
	date = {2006/06/01},
	date-added = {2024-07-22 15:18:59 -0400},
	date-modified = {2024-07-22 15:18:59 -0400},
	id = {Guindani2006},
	isbn = {1573-7713},
	journal = {Methodology and Computing in Applied Probability},
	number = {2},
	pages = {159--189},
	title = {Smoothness {P}roperties and {G}radient {A}nalysis {U}nder {S}patial {D}irichlet {P}rocess {M}odels},
	volume = {8},
	year = {2006},
}

@incollection{gelfand2007bayesian,
	title={Bayesian {N}onparametric {M}odelling for {S}patial {D}ata {U}sing {D}irichlet {P}rocesses},
	author={Gelfand, AE and Guindani, M and Petrone, Sonia and others},
	booktitle={Bayesian Statistics 8},
	pages={175--200},
	year={2007},
	publisher={Oxford University Press},
	abstract = {Modelling for spatially referenced data is receiving increased attention in the statistics and the more general scientific literature with applications in, e.g., environmental, ecological and health sciences. Bayesian nonparametric modelling for unknown population distributions, i.e., placing distributions on a space of distributions is also enjoying a resurgence of interest thanks to their amenability to MCMC model fitting. Indeed, both areas benefit from the wide availability of high speed computation. Until very recently, there was no literature attempting to merge them. The contribution of this paper is to provide an overview of this recent effort including some new advances. The nonparametric specifications that underlie this work are generalizations of Dirichlet process mixture models. We attempt to interrelate these various choices either as generalizations or suitable limits. We also offer data analytic comparison among these specifications as well as with customary Gaussian process alternatives.}
}

@article{duan2007generalized,
	ISSN = {00063444, 14643510},
	abstract = {Many models for the study of point-referenced data explicitly introduce spatial random effects to capture residual spatial association. These spatial effects are customarily modelled as a zero-mean stationary Gaussian process. The spatial Dirichlet process introduced by Gelfand et al. (2005) produces a random spatial process which is neither Gaussian nor stationary. Rather, it varies about a process that is assumed to be stationary and Gaussian. The spatial Dirichlet process arises as a probability-weighted collection of random surfaces. This can be limiting for modelling and inferential purposes since it insists that a process realization must be one of these surfaces. We introduce a random distribution for the spatial effects that allows different surface selection at different sites. Moreover, we can specify the model so that the marginal distribution of the effect at each site still comes from a Dirichlet process. The development is offered constructively, providing a multivariate extension of the stick-breaking representation of the weights. We then introduce mixing using this generalized spatial Dirichlet process. We illustrate with a simulated dataset of independent replications and note that we can embed the generalized process within a dynamic model specification to eliminate the independence assumption.},
	author = {Jason A. Duan and Michele Guindani and Alan E. Gelfand},
	journal = {Biometrika},
	number = {4},
	pages = {809--825},
	publisher = {[Oxford University Press, Biometrika Trust]},
	title = {Generalized {S}patial {D}irichlet {P}rocess {M}odels},
	volume = {94},
	year = {2007}
}

@article{li2023manifold,
	author  = {Didong Li and Wenpin Tang and Sudipto Banerjee},
	title   = {Inference for {G}aussian {P}rocesses with {M}atern {C}ovariogram on {C}ompact {R}iemannian {M}anifolds},
	journal = {Journal of Machine Learning Research},
	year    = {2023},
	volume  = {24},
	number  = {101},
	pages   = {1--26},
}

@Manual{r_core_team_2021,
	title = {R: A {L}anguage and {E}nvironment for {S}tatistical {C}omputing},
	author = {{R Core Team}},
	organization = {R Foundation for Statistical Computing},
	address = {Vienna, Austria},
	year = {2021},
}

@article{birant_stdbscan_2007,
	series = {Intelligent {Data} {Mining}},
	title = {{ST}-{DBSCAN}: {An} algorithm for clustering spatial–temporal data},
	volume = {60},
	issn = {0169-023X},
	shorttitle = {{ST}-{DBSCAN}},
	abstract = {This paper presents a new density-based clustering algorithm, ST-DBSCAN, which is based on DBSCAN. We propose three marginal extensions to DBSCAN related with the identification of (i) core objects, (ii) noise objects, and (iii) adjacent clusters. In contrast to the existing density-based clustering algorithms, our algorithm has the ability of discovering clusters according to non-spatial, spatial and temporal values of the objects. In this paper, we also present a spatial–temporal data warehouse system designed for storing and clustering a wide range of spatial–temporal data. We show an implementation of our algorithm by using this data warehouse and present the data mining results.},
	number = {1},
	journal = {Data \& Knowledge Engineering},
	author = {Birant, Derya and Kut, Alp},
	month = jan,
	year = {2007},
	keywords = {Algorithms, Cluster analysis, Cluster visualization, Data mining, Spatial–temporal data},
	pages = {208--221},
	file = {ScienceDirect Snapshot:/home/aritrah-dsph/Zotero/storage/ACALGSSC/S0169023X06000218.html:text/html},
}

@misc{dorabiala_spatiotemporal_2024,
	title = {Spatiotemporal k-means},
	abstract = {Spatiotemporal data is increasingly available due to emerging sensor and data acquisition technologies that track moving objects. Spatiotemporal clustering addresses the need to efficiently discover patterns and trends in moving object behavior without human supervision. One application of interest is the discovery of moving clusters, where clusters have a static identity, but their location and content can change over time. We propose a two phase spatiotemporal clustering method called spatiotemporal k-means (STkM) that is able to analyze the multi-scale relationships within spatiotemporal data. By optimizing an objective function that is unified over space and time, the method can track dynamic clusters at both short and long timescales with minimal parameter tuning and no post-processing. We begin by proposing a theoretical generating model for spatiotemporal data and prove the efficacy of STkM in this setting. We then evaluate STkM on a recently developed collective animal behavior benchmark dataset and show that STkM outperforms baseline methods in the low-data limit, which is a critical regime of consideration in many emerging applications. Finally, we showcase how STkM can be extended to more complex machine learning tasks, particularly unsupervised region of interest detection and tracking in videos.},
	publisher = {arXiv},
	author = {Dorabiala, Olga and Dabke, Devavrat Vivek and Webster, Jennifer and Kutz, Nathan and Aravkin, Aleksandr},
	month = apr,
	year = {2024},
	note = {arXiv:2211.05337 [cs]},
	keywords = {Computer Science - Machine Learning},
	file = {Preprint PDF:/home/aritrah-dsph/Zotero/storage/VJX3U5GV/Dorabiala et al. - 2024 - Spatiotemporal k-means.pdf:application/pdf;Snapshot:/home/aritrah-dsph/Zotero/storage/PFWZ6C4M/2211.html:text/html},
}

@article{kottas_modeling_2008,
	title = {Modeling {Disease} {Incidence} {Data} with {Spatial} and {Spatio} {Temporal} {Dirichlet} {Process} {Mixtures}},
	volume = {50},
	copyright = {Copyright © 2008 WILEY-VCH Verlag GmbH \& Co. KGaA, Weinheim},
	issn = {1521-4036},
	abstract = {Disease incidence or mortality data are typically available as rates or counts for specified regions, collected over time. We propose Bayesian nonparametric spatial modeling approaches to analyze such data. We develop a hierarchical specification using spatial random effects modeled with a Dirichlet process prior. The Dirichlet process is centered around a multivariate normal distribution. This latter distribution arises from a log-Gaussian process model that provides a latent incidence rate surface, followed by block averaging to the areal units determined by the regions in the study. With regard to the resulting posterior predictive inference, the modeling approach is shown to be equivalent to an approach based on block averaging of a spatial Dirichlet process to obtain a prior probability model for the finite dimensional distribution of the spatial random effects. We introduce a dynamic formulation for the spatial random effects to extend the model to spatio-temporal settings. Posterior inference is implemented through Gibbs sampling. We illustrate the methodology with simulated data as well as with a data set on lung cancer incidences for all 88 counties in the state of Ohio over an observation period of 21 years. (© 2008 WILEY-VCH Verlag GmbH \& Co. KGaA, Weinheim)},
	language = {en},
	number = {1},
	journal = {Biometrical Journal},
	author = {Kottas, Athanasios and Duan, Jason A. and Gelfand, Alan E.},
	year = {2008},
	keywords = {Areal unit spatial data, Dirichlet process mixture models, Disease mapping, Dynamic spatial process models, Gaussian processes},
	pages = {29--42},
}

@inproceedings{ronneberger_unet_2015,
	address = {Cham},
	title = {U-{Net}: {Convolutional} {Networks} for {Biomedical} {Image} {Segmentation}},
	shorttitle = {U-{Net}},
	abstract = {There is large consent that successful training of deep networks requires many thousand annotated training samples. In this paper, we present a network and training strategy that relies on the strong use of data augmentation to use the available annotated samples more efficiently. The architecture consists of a contracting path to capture context and a symmetric expanding path that enables precise localization. We show that such a network can be trained end-to-end from very few images and outperforms the prior best method (a sliding-window convolutional network) on the ISBI challenge for segmentation of neuronal structures in electron microscopic stacks. Using the same network trained on transmitted light microscopy images (phase contrast and DIC) we won the ISBI cell tracking challenge 2015 in these categories by a large margin. Moreover, the network is fast. Segmentation of a 512x512 image takes less than a second on a recent GPU. The full implementation (based on Caffe) and the trained networks are available at http://lmb.informatik.uni-freiburg.de/people/ronneber/u-net.},
	language = {en},
	booktitle = {Medical {Image} {Computing} and {Computer}-{Assisted} {Intervention} – {MICCAI} 2015},
	publisher = {Springer International Publishing},
	author = {Ronneberger, Olaf and Fischer, Philipp and Brox, Thomas},
	editor = {Navab, Nassir and Hornegger, Joachim and Wells, William M. and Frangi, Alejandro F.},
	year = {2015},
	keywords = {Convolutional Layer, Data Augmentation, Deep Network, Ground Truth Segmentation, Training Image},
	pages = {234--241},
}

@misc{ross2018dirichletprocess,
	title = {dirichletprocess: {An} {R} {Package} for {Fitting} {Complex} {Bayesian} {Nonparametric} {Models}},
	abstract = {The dirichletprocess package provides software for creating ﬂexible Dirichlet processes objects in R. Users can perform nonparametric Bayesian analysis using Dirichlet processes without the need to program their own inference algorithms. Instead, the user can utilise our pre-built models or specify their own models whilst allowing the dirichletprocess package to handle the Markov chain Monte Carlo sampling. Our Dirichlet process objects can act as building blocks for a variety of statistical models including and not limited to: density estimation, clustering and prior distributions in hierarchical models.},
	language = {en},
	author = {Ross, Gordon J and Markwick, Dean},
	year = {2018},
	file = {PDF:/home/aritrah-dsph/Zotero/storage/QILS6Q2R/Ross and Markwick - dirichletprocess An R Package for Fitting Complex Bayesian Nonparametric Models.pdf:application/pdf},
}

@article{zou_statistical_2004,
	title = {Statistical {Validation} of {Image} {Segmentation} {Quality} {Based} on a {Spatial} {Overlap} {Index}},
	volume = {11},
	issn = {1076-6332},
	abstract = {Rationale and Objectives
	To examine a statistical validation method based on the spatial overlap between two sets of segmentations of the same anatomy.
	
	Materials and Methods
	The Dice similarity coefficient (DSC) was used as a statistical validation metric to evaluate the performance of both the reproducibility of manual segmentations and the spatial overlap accuracy of automated probabilistic fractional segmentation of MR images, illustrated on two clinical examples. Example 1: 10 consecutive cases of prostate brachytherapy patients underwent both preoperative 1.5T and intraoperative 0.5T MR imaging. For each case, 5 repeated manual segmentations of the prostate peripheral zone were performed separately on preoperative and on intraoperative images. Example 2: A semi-automated probabilistic fractional segmentation algorithm was applied to MR imaging of 9 cases with 3 types of brain tumors. DSC values were computed and logit-transformed values were compared in the mean with the analysis of variance (ANOVA).
	
	Results
	Example 1: The mean DSCs of 0.883 (range, 0.876–0.893) with 1.5T preoperative MRI and 0.838 (range, 0.819–0.852) with 0.5T intraoperative MRI (P {\textless} .001) were within and at the margin of the range of good reproducibility, respectively. Example 2: Wide ranges of DSC were observed in brain tumor segmentations: Meningiomas (0.519–0.893), astrocytomas (0.487–0.972), and other mixed gliomas (0.490–0.899).
	
	Conclusion
	The DSC value is a simple and useful summary measure of spatial overlap, which can be applied to studies of reproducibility and accuracy in image segmentation. We observed generally satisfactory but variable validation results in two clinical applications. This metric may be adapted for similar validation tasks.},
	number = {2},
	journal = {Academic radiology},
	author = {Zou, Kelly H. and Warfield, Simon K. and Bharatha, Aditya and Tempany, Clare M.C. and Kaus, Michael R. and Haker, Steven J. and Wells, William M. and Jolesz, Ferenc A. and Kikinis, Ron},
	month = feb,
	year = {2004},
	pmid = {14974593},
	pmcid = {PMC1415224},
	pages = {178--189},
}

@inproceedings{he_deep_2016,
	title = {Deep {Residual} {Learning} for {Image} {Recognition}},
	abstract = {Deeper neural networks are more difficult to train. We present a residual learning framework to ease the training of networks that are substantially deeper than those used previously. We explicitly reformulate the layers as learning residual functions with reference to the layer inputs, instead of learning unreferenced functions. We provide comprehensive empirical evidence showing that these residual networks are easier to optimize, and can gain accuracy from considerably increased depth. On the ImageNet dataset we evaluate residual nets with a depth of up to 152 layers---8x deeper than VGG nets but still having lower complexity. An ensemble of these residual nets achieves 3.57% error on the ImageNet test set. This result won the 1st place on the ILSVRC 2015 classification task. We also present analysis on CIFAR-10 with 100 and 1000 layers. The depth of representations is of central importance for many visual recognition tasks. Solely due to our extremely deep representations, we obtain a 28% relative improvement on the COCO object detection dataset. Deep residual nets are foundations of our submissions to ILSVRC & COCO 2015 competitions, where we also won the 1st places on the tasks of ImageNet detection, ImageNet localization, COCO detection, and COCO segmentation.},
	booktitle = {Proceedings of the IEEE Conference on Computer Vision and Pattern Recognition (CVPR)},
	author = {He, Kaiming and Zhang, Xiangyu and Ren, Shaoqing and Sun, Jian},
	year = {2016},
	pages = {770--778},
}

@article{getis_analysis_1992,
	title = {The {Analysis} of {Spatial} {Association} by {Use} of {Distance} {Statistics}},
	volume = {24},
	issn = {1538-4632},
	abstract = {Introduced in this paper is a family of statistics, G, that can be used as a measure of spatial association in a number of circumstances. The basic statistic is derived, its properties are identified, and its advantages explained. Several of the G statistics make it possible to evaluate the spatial association of a variable within a specified distance of a single point. A comparison is made between a general G statistic and Moran's I for similar hypothetical and empirical conditions. The empirical work includes studies of sudden infant death syndrome by county in North Carolina and dwelling unit prices in metropolitan San Diego by zip-code districts. Results indicate that G statistics should be used in conjunction with I in order to identify characteristics of patterns not revealed by the I statistic alone and, specifically, the Gi and Gi* statistics enable us to detect local “pockets” of dependence that may not show up when using global statistics.},
	language = {en},
	number = {3},
	journal = {Geographical Analysis},
	author = {Getis, Arthur and Ord, J. K.},
	year = {1992},
	note = {\_eprint: https://onlinelibrary.wiley.com/doi/pdf/10.1111/j.1538-4632.1992.tb00261.x},
	pages = {189--206},
}

@article{ord_local_1995,
	title = {Local {Spatial} {Autocorrelation} {Statistics}: {Distributional} {Issues} and an {Application}},
	volume = {27},
	issn = {1538-4632},
	shorttitle = {Local {Spatial} {Autocorrelation} {Statistics}},
	abstract = {The statistics Gi(d) and Gi*(d), introduced in Getis and Ord (1992) for the study of local pattern in spatial data, are extended and their properties further explored. In particular, nonbinary weights are allowed and the statistics are related to Moran's autocorrelation statistic, I. The correlations between nearby values of the statistics are derived and verified by simulation. A Bonferroni criterion is used to approximate significance levels when testing extreme values from the set of statistics. An example of the use of the statistics is given using spatial-temporal data on the AIDS epidemic centering on San Francisco. Results indicate that in recent years the disease is intensifying in the counties surrounding the city.},
	language = {en},
	number = {4},
	journal = {Geographical Analysis},
	author = {Ord, J. K. and Getis, Arthur},
	year = {1995},
	note = {\_eprint: https://onlinelibrary.wiley.com/doi/pdf/10.1111/j.1538-4632.1995.tb00912.x},
	pages = {286--306},
	file = {Full Text PDF:/home/aritrah-dsph/Zotero/storage/H2WDZ7CH/Ord and Getis - 1995 - Local Spatial Autocorrelation Statistics Distributional Issues and an Application.pdf:application/pdf;Snapshot:/home/aritrah-dsph/Zotero/storage/ALES569V/j.1538-4632.1995.tb00912.html:text/html},
}

@article{bivand_comparing_2018,
	title = {Comparing implementations of global and local indicators of spatial association},
	volume = {27},
	issn = {1863-8260},
	abstract = {Functions to calculate measures of spatial association, especially measures of spatial autocorrelation, have been made available in many software applications. Measures may be global, applying to the whole data set under consideration, or local, applying to each observation in the data set. Methods of statistical inference may also be provided, but these will, like the measures themselves, depend on the support of the observations, chosen assumptions, and the way in which spatial association is represented; spatial weights are often used as a representational technique. In addition, assumptions may be made about the underlying mean model, and about error distributions. Different software implementations may choose to expose these choices to the analyst, but the sets of choices available may vary between these implementations, as may default settings. This comparison will consider the implementations of global Moran’s I, Getis–Ord G and Geary’s C, local \$\$I\_i\$\$and \$\$G\_i\$\$, available in a range of software including Crimestat, GeoDa, ArcGIS, PySAL and R contributed packages.},
	language = {en},
	number = {3},
	journal = {TEST},
	author = {Bivand, Roger S. and Wong, David W. S.},
	month = sep,
	year = {2018},
	keywords = {62P12, 62P20, 62P25, Global spatial autocorrelation, Lattice data, Local spatial autocorrelation, Software implementations},
	pages = {716--748},
	file = {Full Text PDF:/home/aritrah-dsph/Zotero/storage/EJQHFM7I/Bivand and Wong - 2018 - Comparing implementations of global and local indicators of spatial association.pdf:application/pdf},
}

@article{butt2020spatio,
	title={Spatio-temporal crime hotspot detection and prediction: a systematic literature review},
	author={Butt, Umair Muneer and Letchmunan, Sukumar and Hassan, Fadratul Hafinaz and Ali, Mubashir and Baqir, Anees and Sherazi, Hafiz Husnain Raza},
	journal={IEEE access},
	volume={8},
	pages={166553--166574},
	year={2020},
	publisher={IEEE}
}

@article{di2018spatiotemporal,
	title={Spatiotemporal extended fuzzy C-means clustering algorithm for hotspots detection and prediction},
	author={Di Martino, Ferdinando and Pedrycz, Witold and Sessa, Salvatore},
	journal={Fuzzy sets and systems},
	volume={340},
	pages={109--126},
	year={2018},
	publisher={Elsevier}
}

@article{fanaee2015eigenspace,
	title={Eigenspace method for spatiotemporal hotspot detection},
	author={Fanaee-T, Hadi and Gama, Jo{\~a}o},
	journal={Expert systems},
	volume={32},
	number={3},
	pages={454--464},
	year={2015},
	publisher={Wiley Online Library}
}

@article{akkus_deep_2017,
	title = {Deep {Learning} for {Brain} {MRI} {Segmentation}: {State} of the {Art} and {Future} {Directions}},
	volume = {30},
	issn = {1618-727X},
	shorttitle = {Deep {Learning} for {Brain} {MRI} {Segmentation}},
	abstract = {Quantitative analysis of brain MRI is routine for many neurological diseases and conditions and relies on accurate segmentation of structures of interest. Deep learning-based segmentation approaches for brain MRI are gaining interest due to their self-learning and generalization ability over large amounts of data. As the deep learning architectures are becoming more mature, they gradually outperform previous state-of-the-art classical machine learning algorithms. This review aims to provide an overview of current deep learning-based segmentation approaches for quantitative brain MRI. First we review the current deep learning architectures used for segmentation of anatomical brain structures and brain lesions. Next, the performance, speed, and properties of deep learning approaches are summarized and discussed. Finally, we provide a critical assessment of the current state and identify likely future developments and trends.},
	language = {en},
	number = {4},
	journal = {Journal of Digital Imaging},
	author = {Akkus, Zeynettin and Galimzianova, Alfiia and Hoogi, Assaf and Rubin, Daniel L. and Erickson, Bradley J.},
	month = aug,
	year = {2017},
	keywords = {Brain lesion segmentation, Convolutional neural network, Deep learning, Quantitative brain MRI},
	pages = {449--459},
}

@article{neal_markov_2000,
	title = {Markov {Chain} {Sampling} {Methods} for {Dirichlet} {Process} {Mixture} {Models}},
	volume = {9},
	issn = {1061-8600},
	abstract = {This article reviews Markov chain methods for sampling from the posterior distribution of a Dirichlet process mixture model and presents two new classes of methods. One new approach is to make Metropolis—Hastings updates of the indicators specifying which mixture component is associated with each observation, perhaps supplemented with a partial form of Gibbs sampling. The other new approach extends Gibbs sampling for these indicators by using a set of auxiliary parameters. These methods are simple to implement and are more efficient than previous ways of handling general Dirichlet process mixture models with non-conjugate priors.},
	number = {2},
	urldate = {2025-03-13},
	journal = {Journal of Computational and Graphical Statistics},
	author = {Neal, Radford M.},
	month = jun,
	year = {2000},
	keywords = {Auxiliary variable methods, Density estimation, Latent class models, Metropolis—Hasting algorithm, Monte Carlo},
	pages = {249--265},
}

@article{malladi_shape_1995,
	title = {Shape modeling with front propagation: a level set approach},
	volume = {17},
	issn = {1939-3539},
	shorttitle = {Shape modeling with front propagation},
	abstract = {Shape modeling is an important constituent of computer vision as well as computer graphics research. Shape models aid the tasks of object representation and recognition. This paper presents a new approach to shape modeling which retains some of the attractive features of existing methods and overcomes some of their limitations. The authors' techniques can be applied to model arbitrarily complex shapes, which include shapes with significant protrusions, and to situations where no a priori assumption about the object's topology is made. A single instance of the authors' model, when presented with an image having more than one object of interest, has the ability to split freely to represent each object. This method is based on the ideas developed by Osher and Sethian (1988) to model propagating solid/liquid interfaces with curvature-dependent speeds. The interface (front) is a closed, nonintersecting, hypersurface flowing along its gradient field with constant speed or a speed that depends on the curvature. It is moved by solving a "Hamilton-Jacobi" type equation written for a function in which the interface is a particular level set. A speed term synthesized from the image is used to stop the interface in the vicinity of object boundaries. The resulting equation of motion is solved by employing entropy-satisfying upwind finite difference schemes. The authors present a variety of ways of computing the evolving front, including narrow bands, reinitializations, and different stopping criteria. The efficacy of the scheme is demonstrated with numerical experiments on some synthesized images and some low contrast medical images.{\textless}{\textgreater}},
	number = {2},
	urldate = {2025-08-20},
	journal = {IEEE Transactions on Pattern Analysis and Machine Intelligence},
	author = {Malladi, R. and Sethian, J.A. and Vemuri, B.C.},
	month = feb,
	year = {1995},
	keywords = {Biomedical imaging, Computer graphics, Computer vision, Difference equations, Finite difference methods, Level set, Narrowband, Shape, Solid modeling, Topology},
	pages = {158--175},
}

@article{osher_fronts_1988,
	title = {Fronts propagating with curvature-dependent speed: {Algorithms} based on {Hamilton}-{Jacobi} formulations},
	volume = {79},
	issn = {0021-9991},
	shorttitle = {Fronts propagating with curvature-dependent speed},
	abstract = {We devise new numerical algorithms, called PSC algorithms, for following fronts propagating with curvature-dependent speed. The speed may be an arbitrary function of curvature, and the front also can be passively advected by an underlying flow. These algorithms approximate the equations of motion, which resemble Hamilton-Jacobi equations with parabolic right-hand sides, by using techniques from hyperbolic conservation laws. Non-oscillatory schemes of various orders of accuracy are used to solve the equations, providing methods that accurately capture the formation of sharp gradients and cusps in the moving fronts. The algorithms handle topological merging and breaking naturally, work in any number of space dimensions, and do not require that the moving surface be written as a function. The methods can be also used for more general Hamilton-Jacobi-type problems. We demonstrate our algorithms by computing the solution to a variety of surface motion problems.},
	number = {1},
	urldate = {2025-08-21},
	journal = {Journal of Computational Physics},
	author = {Osher, Stanley and Sethian, James A},
	month = nov,
	year = {1988},
	pages = {12--49},
}

@article{osher_level_2001,
	title = {Level {Set} {Methods}: {An} {Overview} and {Some} {Recent} {Results}},
	volume = {169},
	issn = {0021-9991},
	shorttitle = {Level {Set} {Methods}},
	abstract = {The level set method was devised by S. Osher and J. A. Sethian (1988, J. Comput. Phys.79, 12–49) as a simple and versatile method for computing and analyzing the motion of an interface Γ in two or three dimensions. Γ bounds a (possibly multiply connected) region Ω. The goal is to compute and analyze the subsequent motion of Γ under a velocity field v. This velocity can depend on position, time, the geometry of the interface, and the external physics. The interface is captured for later time as the zero level set of a smooth (at least Lipschitz continuous) function ϕ (x, t); i.e., Γ(t)=\{x{\textbar}ϕ(x, t)=0\}. ϕ is positive inside Ω, negative outside Ω, and is zero on Γ(t). Topological merging and breaking are well defined and easily performed. In this review article we discuss recent variants and extensions, including the motion of curves in three dimensions, the dynamic surface extension method, fast methods for steady state problems, diffusion generated motion, and the variational level set approach. We also give a user's guide to the level set dictionary and technology and couple the method to a wide variety of problems involving external physics, such as compressible and incompressible (possibly reacting) flow, Stefan problems, kinetic crystal growth, epitaxial growth of thin films, vortex-dominated flows, and extensions to multiphase motion. We conclude with a discussion of applications to computer vision and image processing.},
	number = {2},
	urldate = {2025-09-26},
	journal = {Journal of Computational Physics},
	author = {Osher, Stanley and Fedkiw, Ronald P.},
	month = may,
	year = {2001},
	pages = {463--502},
}

@article{biswas_state---art_2022,
	title = {State-of-the-{Art} {Level} {Set} {Models} and {Their} {Performances} in {Image} {Segmentation}: {A} {Decade} {Review}},
	volume = {29},
	issn = {1886-1784},
	shorttitle = {State-of-the-{Art} {Level} {Set} {Models} and {Their} {Performances} in {Image} {Segmentation}},
	abstract = {In modern days, image segmentation is one of the most important processing step in the field of computer vision and image processing. It helps to identify object, reconstruct shape, classify and estimate volume of an object. In the last few decades, many algorithms have been developed to eradicate the various segmentation problems such as weak edge detection, inhomogeneous image segmentation, accurate object shape identification and classification. Among them, one of the popular active contour models namely level set model is extensively used to eliminate the problem of topological changes during curve evolution. Earlier, the active contour models were unable to deal with sudden topological changes which led to poor segmentation results. Thus, the paper investigates several level set models in various applications of modern imaging. Therefore, it is necessary to understand the formulation of various level set models with their characteristics before applying them to solve the segmentation problem. In this paper, authors have extensively studied the formulation of various level set models and their application in different types of images. Further, the authors have discussed their contributions to level set framework and open research challenges for researchers.},
	language = {en},
	number = {4},
	urldate = {2025-09-26},
	journal = {Archives of Computational Methods in Engineering},
	author = {Biswas, Soumen and Hazra, Ranjay},
	month = jun,
	year = {2022},
	pages = {2019--2042},
}

@article{lie2006binary,
	title={A binary level set model and some applications to Mumford-Shah image segmentation},
	author={Lie, Johan and Lysaker, Marius and Tai, Xue-Cheng},
	journal={IEEE transactions on image processing},
	volume={15},
	number={5},
	pages={1171--1181},
	year={2006},
	publisher={IEEE}
}

@article{sussman1998improved,
	title={An improved level set method for incompressible two-phase flows},
	author={Sussman, Mark and Fatemi, Emad and Smereka, Peter and Osher, Stanley},
	journal={Computers \& Fluids},
	volume={27},
	number={5-6},
	pages={663--680},
	year={1998},
	publisher={Elsevier}
}

@article{sussman1994level,
	title={A level set approach for computing solutions to incompressible two-phase flow},
	author={Sussman, Mark and Smereka, Peter and Osher, Stanley},
	journal={Journal of Computational physics},
	volume={114},
	number={1},
	pages={146--159},
	year={1994},
	publisher={Elsevier}
}

@article{boledi_level-set_2022,
	title = {A level-set based space-time finite element approach to the modelling of solidification and melting processes},
	volume = {457},
	issn = {0021-9991},
	abstract = {We present a strategy for the numerical solution of convection-coupled phase-transition problems, with focus on solidification and melting. We solve for the temperature and flow fields over time. The position of the phase-change interface is tracked with a level-set method, which requires knowledge of the heat-flux discontinuity at the interface. In order to compute the heat-flux jump, we build upon the ghost-cell approach and extend it to the space-time finite element method. This technique does not require a local enrichment of the basis functions, such as methods like extended finite elements, and it can be easily implemented in already existing finite element codes. Verification cases for the 1D Stefan problem and the lid-driven cavity melting problem are provided. Furthermore, we show a more elaborate 2D case in view of complex applications.},
	urldate = {2025-10-18},
	journal = {Journal of Computational Physics},
	author = {Boledi, Leonardo and Terschanski, Benjamin and Elgeti, Stefanie and Kowalski, Julia},
	month = may,
	year = {2022},
	keywords = {Ghost cells, Level-set, Phase change, Space-time finite elements, Stefan problem},
	pages = {111047},
}

@article{saad-falcon_level_2024,
	title = {Level set methods for gradient-free optimization of metasurface arrays},
	volume = {14},
	copyright = {2024 The Author(s)},
	issn = {2045-2322},
	abstract = {Global optimization techniques are increasingly preferred over human-driven methods in the design of electromagnetic structures such as metasurfaces, and careful construction and parameterization of the physical structure is critical in ensuring computational efficiency and convergence of the optimization algorithm to a globally optimal solution. While many design variables in physical systems take discrete values, optimization algorithms often benefit from a continuous design space. This work demonstrates the use of level set functions as a continuous basis for designing material distributions for metasurface arrays and introduces an improved parameterization which is termed the periodic level set function. We explore the use of alternate norms in the definition of the level set function and define a new pseudo-inverse technique for upsampling basis coefficients with these norms. The level set method is compared to the fragmented parameterization and shows improved electromagnetic responses for two dissimilar cost functions: a narrowband objective and a broadband objective. Finally, we manufacture an optimized level set metasurface and measure its scattering parameters to demonstrate real-world performance.},
	language = {en},
	number = {1},
	urldate = {2025-10-18},
	journal = {Scientific Reports},
	author = {Saad-Falcon, Alex and Howard, Christopher and Romberg, Justin and Allen, Kenneth},
	month = jul,
	year = {2024},
	note = {Publisher: Nature Publishing Group},
	keywords = {Electrical and electronic engineering, Electronic devices, Metamaterials},
	pages = {16674},
}

@article{lew2020altered,
	title={Altered neural dynamics in occipital cortices serving visual-spatial processing in heavy alcohol users},
	author={Lew, Brandon J and Wiesman, Alex I and Rezich, Michael T and Wilson, Tony W},
	journal={Journal of Psychopharmacology},
	volume={34},
	number={2},
	pages={245--253},
	year={2020},
	publisher={Sage Publications Sage UK: London, England}
}

@article{lenartowicz2014electroencephalography,
	title={Electroencephalography correlates of spatial working memory deficits in attention-deficit/hyperactivity disorder: vigilance, encoding, and maintenance},
	author={Lenartowicz, Agatha and Delorme, Arnaud and Walshaw, Patricia D and Cho, Alex L and Bilder, Robert M and McGough, James J and McCracken, James T and Makeig, Scott and Loo, Sandra K},
	journal={Journal of Neuroscience},
	volume={34},
	number={4},
	pages={1171--1182},
	year={2014},
	publisher={Society for Neuroscience}
}

@misc{li_statistical_2026,
	title = {On {Statistical} {Inference} for {Rates} of {Change} in {Spatial} {Processes} over {Riemannian} {Manifolds}},
	abstract = {Statistical inference for spatial processes from partially realized or scattered data has seen voluminous developments in diverse areas ranging from environmental sciences to business and economics. Inference on the associated rates of change has seen some recent developments. The literature has been restricted to Euclidean domains, where inference is sought on directional derivatives, rates along a chosen direction of interest, at arbitrary locations. Inference for higher order rates, particularly directional curvature has also proved useful in these settings. Modern spatial data often arise from non-Euclidean domains. This manuscript particularly considers spatial processes defined over compact Riemannian manifolds. We develop a comprehensive inferential framework for spatial rates of change for such processes over vector fields. In doing so, we formalize smoothness of process realizations and construct differential processes -- the derivative and curvature processes. We derive conditions for kernels that ensure the existence of these processes and establish validity of the joint multivariate process consisting of the ``parent'' Gaussian process (GP) over the manifold and the associated differential processes. Predictive inference on these rates is devised conditioned on the realized process over the manifold. Manifolds arise as polyhedral meshes in practice. The success of our simulation experiments for assessing derivatives for processes observed over such meshes validate our theoretical findings. By enhancing our understanding of GPs on manifolds, this manuscript unlocks a variety of potential applications in machine learning and statistics where GPs have seen wide usage. We propose a fully model-based approach to inference on the differential processes arising from a spatial process from partially observed or realized data across scattered location on a manifold.},
	urldate = {2026-03-30},
	publisher = {arXiv},
	author = {Li, Didong and Halder, Aritra and Banerjee, Sudipto},
	month = jan,
	year = {2026},
	note = {arXiv:2601.02305 [math]},
	keywords = {Mathematics - Differential Geometry, Mathematics - Statistics Theory, Statistics - Computation, Statistics - Methodology}
}

@article{zhang_bayesian_2025,
	title = {Bayesian {Geostatistics} {Using} {Predictive} {Stacking}},
	volume = {0},
	abstract = {We develop Bayesian predictive stacking for geostatistical models, where the primary inferential objective is to provide inference on the latent spatial random field and conduct spatial predictions at arbitrary locations. We exploit analytically tractable posterior distributions for regression coefficients of predictors and the realizations of the spatial process conditional upon process parameters. We subsequently combine such inference by stacking these models across the range of values of the hyper-parameters. We devise stacking of means and posterior densities in a manner that is computationally efficient without resorting to iterative algorithms such as Markov chain Monte Carlo (MCMC) and can exploit the benefits of parallel computations. We offer novel theoretical insights into the resulting inference within an infill asymptotic paradigm and through empirical results showing that stacked inference is comparable to full sampling-based Bayesian inference at a significantly lower computational cost. Supplementary materials for this article are available online, including a standardized description of the materials available for reproducing the work.},
	number = {0},
	urldate = {2026-03-30},
	journal = {Journal of the American Statistical Association},
	publisher = {Taylor \& Francis},
	author = {Zhang, Lu and Tang, Wenpin and Banerjee, Sudipto},
	month = dec,
	year = {2025},
	keywords = {Bayesian inference, Gaussian processes, Geostatistics, Stacking},
	pages = {1--13},
}

@article{durazzo_regional_2024,
	title = {Regional cortical thickness recovery with extended abstinence after treatment in those with alcohol use disorder},
	volume = {114},
	abstract = {Several cross-sectional investigations reported widespread cortical thinning in those with alcohol use disorder (AUD). The few longitudinal studies investigating cortical thickness changes during abstinence are limited to the first month of sobriety. Consequently, cortical thickness changes during extended abstinence in those with AUD is unclear. In this study, AUD participants were studied at approximately 1 week (n=68), 1 month (n=88), and 7.3 months (n=40) of abstinence. Forty-five never-smoking controls (CON) completed a baseline study, and 15 were reassessed after approximately 9.6 months. Participants completed magnetic resonance imaging studies at 1.5T, and cortical thickness for 34 bilateral regions of interest (ROI) was quantitated with FreeSurfer. AUD participants demonstrated significant linear thickness increases in 25/34 ROI over 7.3 months of abstinence. The rate of change from 1 week to 1 month was greater than 1 month to 7.3 months in 19/34 ROIs. Proatherogenic conditions were associated with lower thickness recovery in anterior frontal, inferior parietal, and lateral/mesial temporal regions. After 7.3 months of abstinence, AUD participants were statistically equivalent to CON on cortical thickness in 24/34 ROIs; the cortical thickness differences between AUD and CON in the banks superior temporal gyrus, post central, posterior cingulate, superior parietal, supramarginal, and superior frontal cortices were driven by thinner cortices in AUD with proatherogenic conditions relative to CON. In actively smoking AUD, increasing pack-years was associated with decreasing thickness recovery primarily in the anterior frontal ROIs. Widespread bilateral cortical thickness recovery over 7.3 months of abstinence was the central finding for this AUD cohort. The longitudinal and cross-sectional findings for AUD with proatherogenic suggests alterations in perfusion or vascular integrity may relate to structural recovery in those with AUD. These results support the adaptive and beneficial effects of sustained sobriety on brain structural recovery in people with AUD.},
	urldate = {2026-03-30},
	journal = {Alcohol},
	author = {Durazzo, Timothy C. and Stephens, Lauren H. and Meyerhoff, Dieter J.},
	month = feb,
	year = {2024},
	keywords = {Abstinence, Alcohol use disorder, Cigarette smoking, Cortical thickness, Recovery},
	pages = {51--60},
}

@article{halder2025nimblewomble,
	title={nimblewomble: {A}n {R} package for {B}ayesian {W}ombling with {NIMBLE}},
	author={Halder, Aritra and Banerjee, Sudipto},
	journal={arXiv preprint arXiv:2504.07673},
	year={2025}
}

@article{wu2026model,
	title={Model-based Detection of Spatial Disease Boundaries Using Amortized Bayesian Inference},
	author={Wu, Kyle Lin and Banerjee, Sudipto},
	journal={medRxiv},
	pages={2026--06},
	year={2026},
	publisher={Cold Spring Harbor Laboratory Press}
}

\end{document}